\documentclass[twoside,12pt]{article} 
\usepackage[margin=1.0in]{geometry}

\usepackage{amssymb,amsmath,amsthm}
\usepackage{mathtools}
\usepackage{bm,lscape}
\usepackage{bbm}
\usepackage{graphicx}
\usepackage{array}
\usepackage{caption}
\usepackage{algpseudocode}
\usepackage[ruled,vlined]{algorithm2e}
\usepackage{mathrsfs,dsfont}
\usepackage{wasysym}
\RequirePackage[
    colorlinks=true,
    linkcolor=blue,
    urlcolor=blue,
    citecolor=blue]{hyperref}
\RequirePackage{natbib}
\usepackage[utf8]{inputenc}
\usepackage[english]{babel}

\usepackage{graphicx}
\usepackage{color}
\usepackage{rotating}
\usepackage{authblk}
\usepackage{appendix}

\allowdisplaybreaks

\def\simind{\stackrel{\mbox{\scriptsize{ind}}}{\sim}}
\def\simiid{\stackrel{\mbox{\scriptsize{iid}}}{\sim}}

\allowdisplaybreaks

\newcommand{\E}{\mathds{E}}

\renewcommand{\P}{\mathds{P}}

\newcommand{\PP}{\mathds{P}}

\newtheorem{thm}{Theorem}[section]
\newtheorem{lem}[thm]{Lemma}

\newtheorem{prp}[thm]{Proposition}

\providecommand{\keywords}[1]
{
  \small	
  \textbf{\textit{Keywords:}} #1
}

\begin{document}

\title{Merging of Bayes and quasi-Bayes empirical Bayes procedures for Poisson compound decisions}

% \author{
% Emanuele Dolera\\
% \texttt{emanuele.dolera@unipv.it}\\
% Department of Mathematics\\ 
% University of Pavia, Italy\\
% \and
% Stefano Favaro\\
% \texttt{stefano.favaro@unito.it}\\
% Department of Economics and Statistics\\ 
% University of Torino and Collegio Carlo Alberto, Italy\\
% \and
% Stefano Peluchetti\\
% \texttt{speluchetti@cogent.co.jp}\\
% Cogent Labs, Tokyo, Japan
% }

\author[1]{Stefano Favaro\thanks{stefano.favaro@unito.it}}
\author[2]{Sandra Fortini\thanks{sandra.fortini@unibocconi.it}}
\affil[1]{\small{Department of Economics and Statistics, University of Torino and Collegio Carlo Alberto, Italy}}
\affil[2]{\small{Department of Decision Sciences, Bocconi University, Italy}}

\maketitle

\begin{abstract}
The Poisson compound decision problem is a long-standing problem in statistics, in which empirical Bayes methods are used to estimate Poisson means under a mixture model. We study this problem from the viewpoint of $g$-modeling, comparing two nonparametric strategies for estimating the unknown mixing distribution: a Bayesian empirical Bayes strategy, based on the Dirichlet process posterior, and a quasi-Bayesian empirical Bayes strategy, based on Newton's algorithm. The latter is computationally attractive, but its relationship with the Bayesian strategy requires theoretical justification. Under a Poisson mixture model with a ``true'', or oracle, mixing distribution, we establish concentration rates for the marginal probability mass functions induced by the Bayesian and quasi-Bayesian estimates. These rates are then translated into rates of decay for the corresponding regrets, interpreted as excess Bayes risks, and used to prove a frequentist merging result between the Bayesian and quasi-Bayesian empirical Bayes strategies. We also extend the analysis to the multidimensional Poisson compound decision problem. Numerical experiments on synthetic data illustrate that the quasi-Bayesian
strategy achieves accuracy comparable to the Bayesian strategy, while requiring substantially fewer computational resources, especially in the multidimensional setting.
\end{abstract}

\keywords{Dirichlet process prior; empirical Bayes; frequentist merging; $g$-modeling; Newton's algorithm; regret}

%%%%%%%%%%%%%%%%%%%%%%%%%%%%%%%%
%%%%%%%%%%%%%%%%%%%%%%%%%%%%%%%%
%%%%%%%%%%%%%%%%%%%%%%%%%%%%%%%%
%%%%%%%%%%%%%%%%%%%%%%%%%%%%%%%%

\tableofcontents

\section{Introduction}

\subsection{Background and motivations}

Given $n\geq1$ observations modeled as independent Poisson random variables $Y_{1},\ldots,Y_{n}$ with means $\theta_{1},\ldots,\theta_{n}$, respectively, the Poisson compound decision problem concerns the estimation of $(\theta_{1},\ldots,\theta_{n})$ under the squared-error loss. Empirical Bayes provides a general approach to compound decision problems \citep{Rob(51),Rob(56),Zha(03)}. Specifically, assuming that the parameters $\theta_{i}$'s are i.i.d. as a prior $G$ on $\Theta\subseteq\mathbb{R}_{+}$, the best estimate of $\theta$ is the posterior mean or Bayes estimate, namely
\begin{equation}\label{eq:robbins}
\hat{\theta}_{G}(y)=E_{G}[\theta\mid Y_{i}=y]=\frac{\int_{\Theta}\theta\frac{\text{e}^{-\theta}\theta^{y}}{y!}G(d\theta)}{\int_{\Theta}\frac{\text{e}^{-\theta}\theta^{y}}{y!}G(d\theta)}=(y+1)\frac{p_{G}(y+1)}{p_{G}(y)}\qquad y\in\mathbb{N}_{0},
\end{equation}
where
\begin{displaymath}
p_{G}(y)=\int_\Theta \text{Poisson}(y\mid\theta)G(d\theta),
\end{displaymath}
where $\text{Poisson}(\cdot\mid\theta)$ denotes the Poisson probability mass function with mean $\theta\in\mathbb{R}_{+}$, i.e. the Poisson kernel. The Bayes estimate \eqref{eq:robbins} is referred to as Robbins' formula. Since $G$ is unknown in practice, the empirical Bayes approach proceeds by estimating either $p_G$, in the so-called \(f\)-modeling strategy, or $G$, in the so-called $g$-modeling strategy; see \citet{Efr(14),Efr(19)} and references therein.

A prominent nonparametric $f$-modeling strategy is Robbins' method \citep{Rob(56)}, which replaces $p_{G}$ in \eqref{eq:robbins} with the empirical distribution of the $Y_{i}$'s. Despite its conceptual appeal and computational simplicity, $f$-modeling can be numerically unstable and lacks robustness. In particular, it is sensitive to outlying observations, or more precisely to counts that occur rarely, which may yield exceptionally small or large estimates \citep{Bro(13),She(24)}. Examples of nonparametric $g$-modeling include estimates of $G$ in \eqref{eq:robbins} based on maximum likelihood and minimum distance methods \citep{Jan(24)}. Compared with $f$-modeling, $g$-modeling typically produces more accurate estimates and allows prior information to be incorporated more naturally. Its main drawback is computational cost, especially in  high-dimensional settings \citep{Jan(23),Teh(25)}.

A Bayesian $g$-modeling strategy relies on assigning a prior distribution to $G$, yielding what is commonly termed Bayes empirical Bayes \citep{Dee(81),Efr(19)}. A recent contribution in this direction is \citet{Can(26)}, where the existence of universal nonparametric priors with optimality guarantees for empirical Bayes estimation in the Poisson compound decision problem is established. In the related but distinct setting of the Gaussian compound decision problem, \citet{Ign(26)} propose a Bayesian nonparametric $g$-modeling strategy based on a Dirichlet process prior for $G$ \citep{Fer(73)}. The Dirichlet process prior is arguably one of the most widely used nonparametric priors, leading to the Dirichlet process mixture model \citep{Lo(84)}. Beyond the work of \citet{Can(26)}, however, Bayesian nonparametric $g$-modeling for the Poisson compound decision problem remains largely unexplored, especially in the $d$-dimensional setting \citep{Bro(85),Joh(86)}, where it does not appear to have been considered in the literature. Under the Dirichlet process mixture model, the mixing distribution $G$ is estimated from its posterior distribution given $Y_{1},\ldots,Y_{n}$, whose computation can rely on a wide range of numerical methods developed in the literature over the past three decades. These include MCMC methods based on auxiliary-variable, stick-breaking, slice-sampling and retrospective representations \citep{Nea(00),IshJam(01),Wal(07),PapRob(08)}, as well as numerous variational approximations \citep{BleJor(06)}. Although broadly applicable, these methods can be computationally demanding, particularly in the multidimensional setting.

\subsection{Preview of our contributions}

We consider a quasi-Bayesian $g$-modeling strategy for the Poisson compound decision problem, proposed by \citet{Fav(24)}, which estimates the unknown mixing distribution $G$ through the recursive procedure of \citet{Smi(78)}, referred to as Newton's algorithm \citep{New(98),Mar(08)}. This strategy is known to be computationally attractive relative to the Bayesian strategy based on a Dirichlet process mixture model: it replaces posterior inference for $G$ given $Y_{1},\ldots,Y_{n}$ by a sequence of straightforward recursive updates of $G$ \citep{For(20),Fav(24)}. We support this computational advantage with a rigorous frequentist validation for viewing the quasi-Bayesian strategy as a fast approximation to the Bayesian strategy. Under the assumption that $Y_1,\ldots,Y_n$ are i.i.d. from a Poisson mixture model with ``true''  mixing distribution \(G^\ast\), referred to as the oracle prior, we prove that the Bayesian and quasi-Bayesian strategies merge asymptotically by quantifying the rate at which the corresponding plug-in Bayes and quasi-Bayes estimates become equivalent as $n\rightarrow+\infty$. We further extend this frequentist merging result to the $d$-dimensional Poisson compound decision problem. This extension is especially relevant because posterior computation for the Bayesian strategy becomes substantially more demanding in higher dimensions. By contrast, the quasi-Bayesian strategy continues to rely on recursive updates of the mixing distribution $G$, making it particularly appealing in multidimensional settings.

We present numerical experiments on synthetic data that complement the theoretical analysis and illustrate the computational advantage of the quasi-Bayesian strategy over the Bayesian strategy. For each strategy, we compare the accuracy of the resulting empirical Bayes estimate with the computational effort required to obtain it, measured in terms of computational units and CPU time for processing $Y_{1},\ldots,Y_{n}$. The experiments show that the quasi-Bayesian strategy can attain an accuracy comparable to that of the Bayesian strategy, while requiring substantially fewer computational resources, with the advantage becoming more pronounced in multidimensional settings.

\subsection{Organization of the paper}

The paper is organized as follows. In Section~\ref{sec2}, we introduce the Bayesian and quasi-Bayesian $g$-modeling strategies and prove their frequentist merging. In Section~\ref{sec3}, we present numerical experiments comparing the two strategies. In Section~\ref{sec4}, we conclude with some remarks and directions for future work. Proofs and additional numerical illustrations are deferred to the Appendix \ref{app1}-\ref{app4}.

%%%%%%%%%%%%%%%%%%%%%%%%%%%%%%%%
%%%%%%%%%%%%%%%%%%%%%%%%%%%%%%%%
%%%%%%%%%%%%%%%%%%%%%%%%%%%%%%%%
%%%%%%%%%%%%%%%%%%%%%%%%%%%%%%%%

\section{Merging of Bayes and quasi-Bayes empirical Bayes procedures}\label{sec2}
\subsection{The $1$-dimensional setting}

We study the merging of two $g$-modeling empirical Bayes strategies for estimating the unknown mixing distribution $G$ in Robbins' formula \eqref{eq:robbins}: a Bayesian strategy and a quasi-Bayesian strategy. The notion of merging considered here is frequentist in nature, and therefore we evaluate both strategies under a ``true'' Poisson mixture model.  Specifically, we assume that $Y_{1:n}=(Y_{1},\ldots,Y_{n})$ are i.i.d. from a mixture of Poisson distributions with the ``true'' mixing distribution $G^{\ast}$ on $\Theta^{\ast}\subset\mathbb{R}_{+}$, or oracle prior. If $G^\ast$ were known, then Robbins' formula \eqref{eq:robbins} would yield the oracle Bayes estimate
\begin{equation}\label{eq:oracle}
\hat{\theta}^{\ast}(y):=\hat{\theta}_{G^{\ast}}(y)=(y+1)\frac{p_{G^{\ast}}(y+1)}{p_{G^{\ast}}(y)}\qquad  y\in\mathbb{N}_{0}.
\end{equation}
Since $G^{\ast}$ is unknown, the empirical Bayes approach proceeds by replacing the mixing distribution $G$ in Robbins' formula \eqref{eq:robbins} by an estimate, say $\hat G_n$, computed from $Y_{1:n}$. This gives the plug-in empirical Bayes estimate
\begin{equation}\label{eq:estimate}
\hat{\theta}_{n}(y):=\hat{\theta}_{\hat{G}_{n}}=(y+1)\frac{p_{\hat{G}_{n}}(y+1)}{p_{\hat{G}_{n}}(y)}\qquad  y\in\mathbb{N}_{0}.
\end{equation}
The regret incurring by using $\hat{\theta}_{n}$ in place of $\hat{\theta}^{\ast}$, also known as excess Bayes risk  \citep[Section 3]{Efr(19)}, is 
\begin{equation}\label{reg}
\text{Regret}(\hat{G}_n,G^*)=\sum_{y\in\mathbb{N}_{0}}(\hat\theta_{n}(y)-\hat\theta^{*}(y))^2p_{G^*}(y).
\end{equation}
A standard goal in the analysis of empirical Bayes estimates, consists in proving that the regret goes to zero at the minimax rate, as $n\rightarrow+\infty$ \citep{Pol(21)}; see also \citet{Efr(14),Efr(19)}.

In the following, we provide a regret analysis of the plug-in Bayes and quasi-Bayes estimates. Under the ``true'' Poisson mixture model, this analysis establishes the frequentist merging of the Bayesian and quasi-Bayesian strategies, quantifying the rate at which they become equivalent as $n\to+\infty$. Throughout our analysis, we assume that the parameter spaces $\Theta$ and $\Theta^{\ast}$ are compact subsets of $\mathbb{R}_{+}$.

\subsubsection{Bayes empirical Bayes}
We consider a nonparametric version of the Bayesian $g$-modeling strategy of \citet{Dee(81)}. The strategy estimates the mixing distribution $G$ in Robbins' formula \eqref{eq:robbins} by assigning a nonparametric prior distribution to $G$, namely a distribution on the space $\mathcal P(\Theta)$ of probability measures on $\Theta$. Specifically, we place on $G$ a Dirichlet process prior with strength parameter $c>0$ and base probability measure $H$ on $\Theta$ \citep{Fer(73)}; that is, $G\sim \mathrm{DP}(c,H)$. This leads to model $Y_{1:n}$ as follows:
\begin{equation}\label{eq:bayes_model}
\begin{array}{r@{\;}c@{\;}l@{\quad}c@{\quad}l}
Y_i& \mid &\theta_i& \simind &\mathrm{Poisson}(\cdot\mid \theta_i),\qquad i=1,\ldots,n,\\[0.2cm]
\theta_i& \mid &G& \simiid &G,\\[0.2cm]
&&G& \sim &\mathrm{DP}(c,H).
\end{array}
\end{equation}
Because of the conjugacy of the Dirichlet process prior \citep{Fer(73)}, the posterior distribution of $G$ given $Y_{1:n}$ is 
\begin{equation}\label{post_dp}
\Pi(dG\mid Y_{1:n})=\int_{\Theta^n}\mathrm{DP}\left(dG;\,c+n,\,\frac{cH+\sum_{i=1}^n\delta_{\theta_i}}{c+n}\right)\pi(d\theta_{1:n}\mid Y_{1:n}),
\end{equation}
where $\pi(\cdot\mid Y_{1:n})$ denotes the posterior distribution of the $\theta_{i}$'s under the model \eqref{eq:bayes_model}. See \citet{Lo(84)} for details.

Under the Bayesian model \eqref{eq:bayes_model}, and adopting a quadratic loss function for the evaluation functionals $G(A)$, with $A$ being a Borel set of $\Theta$, the optimal estimate of $G$ is given by the posterior mean
\begin{equation}\label{post_mean}
\hat G_n^{\text{\tiny{[B]}}}(\cdot)=\int_{\mathcal P(\Theta)}G(\cdot)\,\Pi(dG\mid Y_{1:n})=\frac{c}{c+n}H(\cdot)+\frac{1}{c+n}\sum_{i=1}^n\pi(\theta_i\in \cdot\mid Y_{1:n}).
\end{equation}
A Bayes estimate of $\theta_{i}$, $i=1,\ldots,n$, is then obtained by plugging \eqref{post_mean} into Robbin's formula \eqref{eq:robbins}, i.e., 
\begin{equation}\label{eq:bayes_eb_est}
\hat{\theta}^{\text{\tiny{[B]}}}_{n}(y):=\hat{\theta}_{\hat G_n^{\text{\tiny{[B]}}}}(y)=(y+1)\frac{p_{\hat G_n^{\text{\tiny{[B]}}}}(y+1)}{p_{\hat G_n^{\text{\tiny{[B]}}}}(y)}\qquad y\in\mathbb{N}_{0}
\end{equation}
Since $G\mapsto \hat{\theta}_G(y)$ is nonlinear, the estimate \eqref{eq:bayes_eb_est} does not, in general, coincide with the posterior mean of $\hat\theta_G(y)$.

We next provide a frequentist validation of the Bayes estimate $\hat{\theta}^{\text{\tiny{[B]}}}_{n}$. Since Robbins' formula depends on $G$ only through the probability mass function $p_G$, we study the posterior concentration of $p_{\hat G_n^{\text{\tiny{[B]}}}}$ around the ``true'' $p_{G^{\ast}}$, as $n\rightarrow+\infty$, namely the consistency of $p_{\hat G_n^{\text{\tiny{[B]}}}}$ under $G^{\ast}$. Although this posterior concentration result does not appear to be stated explicitly in the literature for Dirichlet process mixtures with Poisson kernel, it follows by adapting the arguments developed by \citet[Theorem~5.1]{Gho(01)} for Dirichlet process mixtures with Gaussian kernel; see also \citet{Ign(26)}. Assuming $\Theta=[{h_0},h]$ and $\Theta^{\ast}=[{h_0^\ast},h^{\ast}]$ for ${h_0,\;h_0^\ast,h}$ and $h^{\ast}$ such that  $0<h_0\leq h_0^*<h^\ast\leq h<+\infty$, the next proposition establishes a contraction rate of $p_{\hat G_n^{\text{\tiny{[B]}}}}$ toward $p_{G^\ast}$ under $G^{\ast}$.

\begin{prp}\label{pcrdp_1d}
Consider the Bayesian model \eqref{eq:bayes_model}, where $G\sim \operatorname{DP}(1,H)$ and $H$ is a probability measure supported on the set $[h_{0},h]$ with a continuous and strictly positive density. Assume that $Y_{1:n}\simiid p_{G^\ast}$, where $G^{\ast}$ is supported on $[h_0^\ast,h^\ast]$ with $0<h_0\leq h_0^\ast<h^\ast\leq h<+\infty$. If
\begin{equation}\label{rate_dp}
\varepsilon_n^{\text{\tiny{[B]}}}:=\frac{\log n}{\sqrt n},
\end{equation}
then there exist constants $C,c>0$, depending only on $(h_0,h,h_0^\ast,h^\ast,G^\ast,H)$, such that, for all sufficiently large $n$,
\begin{equation}\label{pcr_dp_1}
\P_{G^\ast}^n\!\left[\Pi\!\left(G\in\mathcal P([h_{0},h]):d_{\mathrm H}(p_G,p_{G^\ast})\ge C\varepsilon_n^{\text{\tiny{[B]}}}\,\middle|\, Y_{1:n}\right)\le\exp\{-c\log^2 n\}\right]\ge1-\frac{1}{n}.
\end{equation}
Moreover, if
\begin{displaymath}
\bar p_n(y):=\int_{\mathcal P([h_{0},h])}p_G(y)\,\Pi(dG\mid Y_{1:n}),\qquad y\in\mathbb N_{0},
\end{displaymath}
then there exists a constant $C'>0$, depending only on $(h_0,h,h_0^\ast,h^\ast,G^\ast,H)$, such that, for all sufficiently large $n$,
\begin{equation}\label{pcr_dp_2}
\P_{G^\ast}^n\!\left[d_{\mathrm H}(\bar p_n,p_{G^\ast})\ge C'\varepsilon_n^{\text{\tiny{[B]}}}\right]\le\frac{1}{n}.
\end{equation}
\end{prp}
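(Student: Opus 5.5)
We follow the general posterior contraction theorem of Ghosal, Ghosh and van der Vaart, in the form used by \citet[Theorem~5.1]{Gho(01)} for Gaussian location mixtures, adapted to the Poisson kernel on the compact interval $[h_0,h]$. Three ingredients are required, for the sequence $\varepsilon_n=\log n/\sqrt n$: (i) a prior mass condition $\Pi(B_{\mathrm{KL}}(p_{G^\ast},\varepsilon_n))\ge \exp\{-c_1 n\varepsilon_n^2\}=\exp\{-c_1\log^2 n\}$ for a Kullback--Leibler neighbourhood (KL divergence and second moment of the log-ratio both at most $\varepsilon_n^2$); (ii) an entropy bound $\log N(\varepsilon_n,\mathcal F,d_{\mathrm H})\lesssim n\varepsilon_n^2$ for $\mathcal F=\{p_G:G\in\mathcal P([h_0,h])\}$; (iii) no sieve is needed, since $\mathcal F$ itself satisfies (ii). The theorem, in its exponential-tail form (Hellinger tests of Le Cam--Birgé combined with a lower bound on the denominator that holds with probability at least $1-1/n$ when the KL neighbourhood controls the variance of the log-likelihood ratio), then yields \eqref{pcr_dp_1} with posterior mass outside the ball bounded by $\exp\{-c\log^2 n\}$.

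\emph{Step 1: approximation by discrete mixtures.} By a moment-matching argument (Carathéodory / Lemma 3.1 of Ghosal--van der Vaart), for any $G$ on $[h_0,h]$ there is a discrete $G'$ with at most $N\asymp\log(1/\varepsilon)$ atoms matching the first $N$ moments. Since $e^{-\theta}\theta^y/y!$ is a polynomial of degree $y$ times $e^{-\theta}$, expanding $e^{-\theta}$ in Taylor series to order $\asymp\log(1/\varepsilon)$ on the compact interval and truncating $y$ at $y_0\asymp\log(1/\varepsilon)$ (Poisson tails beyond $y_0$ are $\le\varepsilon$ uniformly in $\theta\le h$) gives $\|p_G-p_{G'}\|_1\lesssim\varepsilon$. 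This is the Poisson analogue of the Gaussian lemma and is, I expect, the main technical step, though it is easier than in the Gaussian case since the support is compact and the kernel is entire in $\theta$.

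\emph{Step 2: entropy.} With $N\asymp\log(1/\varepsilon)$ atoms, discretising atom locations to an $\varepsilon$-grid on $[h_0,h]$ (the kernel is Lipschitz in $\theta$ in $\ell_1$ uniformly on the compact set) and weights to an $\varepsilon$-net of the simplex gives $\log N(\varepsilon,\mathcal F,\|\cdot\|_1)\lesssim\log^2(1/\varepsilon)$; the Hellinger entropy at $\varepsilon$ is bounded by $\ell_1$ entropy at $\varepsilon^2$, which changes only constants. At $\varepsilon_n$ this is $\lesssim\log^2 n=n\varepsilon_n^2$. \emph{Step 3: prior mass.} Take $G'=\sum_{j\le N}w_j\delta_{\theta_j}$ approximating $G^\ast$ as in Step 1, with separated atoms in $[h_0^\ast,h^\ast]\subset[h_0,h]$. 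For $G\sim\mathrm{DP}(1,H)$, the event $\sum_j|G(U_j)-w_j|\le\varepsilon$, with $U_j$ intervals of length $\varepsilon$ around $\theta_j$, has probability $\ge\exp\{-cN\log(1/\varepsilon)\}=\exp\{-c\log^2(1/\varepsilon)\}$ by the standard Dirichlet small-ball lemma, using that $H$ has a density bounded below so $H(U_j)\gtrsim\varepsilon$. On this event $\|p_G-p_{G^\ast}\|_1\lesssim\varepsilon$. To pass to KL, note that on $[h_0,h]$ with $h_0>0$ the ratio $p_G(y)/p_{G^\ast}(y)$ is bounded between $(h_0/h)^y e^{-(h-h_0)}$ and its inverse, so $|\log(p_G/p_{G^\ast})|\lesssim 1+y$, and $p_{G^\ast}$ has exponential tails; the standard bound (Lemma 8 of Ghosal--van der Vaart type) converts Hellinger distance $\varepsilon$ into KL and second-moment bounds $\lesssim\varepsilon^2\log^2(1/\varepsilon)$. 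This loss of logarithms is absorbed by choosing $\varepsilon$ a constant times $\varepsilon_n/\log n$ in the approximation, still costing only $\log^2 n$ in the exponent. Here $h_0>0$ is exactly what makes the likelihood ratio well behaved.

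\emph{Step 4: the posterior mean.} For \eqref{pcr_dp_2}, use convexity of $d_{\mathrm H}^2$ (indeed $p\mapsto d_{\mathrm H}^2(p,p_{G^\ast})$ is convex) and Jensen: on the event of \eqref{pcr_dp_1},
\begin{displaymath}
d_{\mathrm H}^2(\bar p_n,p_{G^\ast})\le\int d_{\mathrm H}^2(p_G,p_{G^\ast})\,\Pi(dG\mid Y_{1:n})\le C^2\varepsilon_n^2+2\exp\{-c\log^2 n\},
\end{displaymath}
since $d_{\mathrm H}^2\le 2$. The last term is $o(\varepsilon_n^2)$, giving \eqref{pcr_dp_2} with $C'=2C$, say, for large $n$.
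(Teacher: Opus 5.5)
Your outline (prior mass via moment matching and the Dirichlet small-ball lemma, Le Cam--Birg\'e tests, convexity for $\bar p_n$) is sound in structure. However, the claimed probability $1-1/n$ does not follow from what you invoke for the denominator. Write $R_n(G)=\prod_i p_G(Y_i)/p_{G^\ast}(Y_i)$. You assert that $\int R_n\,d\Pi\ge\Pi(B_n)e^{-(L^2+D)n\varepsilon_n^2}$ holds with probability at least $1-1/n$ because the KL neighbourhood controls the variance of the log-likelihood ratio.

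Variance control only yields the Jensen-plus-Chebyshev bound. Its failure probability is at most $1/(D^2n\varepsilon_n^2)=1/(D^2\log^2 n)$, which is not $\le 1/n$. Forcing $1/n$ would need $D\gtrsim\sqrt n/\log n$, which destroys the rate, and any fixed-order moment bound likewise gives only a power of $1/\log n$. Since $n\varepsilon_n^2=\log^2 n$, you need exponential concentration.

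Concretely, by Jensen you can lower bound the log of the integral by $W_n=\sum_i\xi_{n,i}$, where $\xi_{n,i}=\int_{B_n}\log\{p_G(Y_i)/p_{G^\ast}(Y_i)\}\,\widetilde\Pi_n(dG)$ and $\widetilde\Pi_n$ is the prior renormalised to $B_n$. The $\xi_{n,i}$ are i.i.d. with mean $\ge -L^2\varepsilon_n^2$, variance $\le L^2\varepsilon_n^2$, and a uniformly bounded exponential moment. Bernstein's inequality then gives failure probability $e^{-c_D\log^2 n}\le 1/n^2$; this is exactly Lemmas~\ref{lem:Poisson_HP_exp_moment}--\ref{lem:Poisson_HP_Dn}.

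The needed exponential moment is already in your Step~3. For every $G\in\mathcal P([h_0,h])$, $|\log\{p_G(y)/p_{G^\ast}(y)\}|\le (h-h_0)+y\log(h/h_0)$, and $\E_{G^\ast}e^{tY_1}\le\exp\{h^\ast(e^t-1)\}$. This bound is simpler than the paper's restriction to $G([b,h])\ge\alpha$. You must use it in the denominator step, not only to convert Hellinger into KL.

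Second, Step~3 conflates $\ell_1$ and Hellinger distance. The small-ball event gives $\|p_G-p_{G^\ast}\|_1\lesssim\varepsilon$, hence only $d_{\mathrm H}(p_G,p_{G^\ast})\lesssim\varepsilon^{1/2}$, because $d_{\mathrm H}^2\le\|\cdot\|_1$. The Wong--Shen-type conversion then gives $K,V\lesssim\varepsilon\log^2(1/\varepsilon)$, not $\varepsilon^2\log^2(1/\varepsilon)$. With your choice $\varepsilon\asymp\varepsilon_n/\log n=n^{-1/2}$, this is of order $n^{-1/2}\log^2 n\gg\varepsilon_n^2$. As written, you verify the prior-mass condition only for a KL neighbourhood of radius $\asymp n^{-1/4}\log n$. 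The repair is to take $\varepsilon\asymp 1/n$, as the paper does with $B_n=\mathcal E_{1/n}$ in Lemma~\ref{lem:Poisson_local_sets}. Then $K,V\lesssim n^{-1}\log^2 n=\varepsilon_n^2$, while the prior mass is still $\exp\{-cN\log n\}=\exp\{-c\log^2 n\}$.

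With both repairs the proof goes through, and it differs from the paper only on the numerator side. You use tests built from a Hellinger covering of order $\log^2 n=n\varepsilon_n^2$, which suffices once $C$ is large. The paper instead bounds $\int_{U_n}R_n\,d\Pi$ directly through $\E_{G^\ast}\sqrt{R_n(u_k)}$ over upper brackets obtained by coordinatewise bracketing of the pmf (Lemmas~\ref{lem:Poisson_bracketing_entropy}--\ref{lem:Poisson_HP_En}). That avoids constructing tests but requires bracketing rather than covering entropy.
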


See Appendix~\ref{app1} for the proof of Proposition~\ref{pcrdp_1d}. By linearity of $G\mapsto p_G$, the average of $p_{G}$ with respect to the posterior distribution \eqref{post_dp} is equivalent to evaluate $p_{G}$ at the posterior mean \eqref{post_mean}. Hence,
\begin{displaymath}
\bar p_n(y)=p_{\hat G_n^{\text{\tiny{[B]}}}}(y),\qquad y\in\mathbb N_0.
\end{displaymath}
Therefore, Proposition~\ref{pcrdp_1d} implies that $p_{\hat G_n^{\text{\tiny{[B]}}}}$ is consistent, in Hellinger distance, for $p_{G^\ast}$, with rate $\log n/\sqrt n$.

An alternative route to prove Proposition \ref{pcrdp_1d} could be based on the work of \citet{Can(26)}. In particular, \citet[Lemma 2.4]{Can(26)} yields the optimal posterior contraction rate $\varepsilon_n^{\text{\tiny{[B]}}}=\frac{\log n}{n\sqrt{\log\log n}}$ under a suitable thickness condition on the prior over the mixing distribution $G$; see \citet[Defintion 2]{Can(26)} for details. To use this approach in the Dirichlet process mixture model, one would need to verify that the Dirichlet process prior satisfies the corresponding thickness condition. We have instead followed the approach of \citet[Theorem~5.1]{Gho(01)}, since the same arguments extend naturally to the $d$-dimensional Poisson compound decision problem, whereas the approach of \citet{Can(26)} is developed for the $1$-dimensional setting.

\subsubsection{Quasi-Bayes empirical Bayes}

We consider the quasi-Bayesian $g$-modeling strategy of \citet{Fav(24)}. The strategy estimates the mixing distribution $G$ in Robbins' formula \eqref{eq:robbins} via Newton's algorithm \citep{New(98)}. Specifically, let $G_{0}$ be a probability measure on $\Theta$, and $\theta_{1}\sim G_{0}$. The $Y_{n}$'s are then modeled as follows:
\begin{equation}\label{eq:qbayes_model}
\begin{array}{r@{\;}c@{\;}l@{\quad}c@{\quad}l}
Y_n& \mid &\theta_n& \simind &\mathrm{Poisson}(\cdot\mid \theta_n),\qquad n\geq 1,\\[0.2cm]
\theta_{n+1}& \mid &Y_{1:n}& \sim &G_n,
\end{array}
\end{equation}
where $G_{n}$ is defined recursively through Newton's algorithm
\begin{equation}\label{eq:newton}
G_{n}(d\theta)=(1-\alpha_{n})G_{n-1}(d\theta)+\alpha_{n}\frac{\text{Poisson}(Y_{n}\mid\theta)G_{n-1}(d\theta)}{\int_\Theta \text{Poisson}(Y_{n}\mid\theta)G_{n-1}(d\theta)},
\end{equation}
with the $\alpha_{n}$'s in $(0,1)$ are such that  $\sum_{n\geq1}\alpha_{n}=+\infty$ and $\sum_{n\geq1}\alpha^{2}_{n}<+\infty$. According to \eqref{eq:newton}, after observing \(Y_{n+1}\), the model updates \(G_n\) by taking a weighted average of \(G_n\) and its posterior distribution based on \(Y_{n+1}\), with weight \(\alpha_{n+1}\). The sequence $(\alpha_{n})_{n\geq1}$ is referred to as the learning rate. A standard choice is $\alpha_{n}=(\alpha+n)^{-\gamma}$, with $\alpha>0$ and $\gamma\in (1/2,1]$; see \citet{For(20)}.

Under the quasi-Bayesian model \eqref{eq:qbayes_model}, an estimate of $G$ is given by $\hat{G}_{\gamma,n}^{\text{\tiny{[Q-B]}}}=G_{n}$. The quasi-Bayes estimate of $\theta_{i}$, $i=1,\ldots,n$, is then obtained by plugging $\hat{G}_{n}^{\text{\tiny{[Q-B]}}}$ into Robbin's formula \eqref{eq:robbins}, i.e., 
\begin{equation}\label{eq:qbayes_eb_est}
\hat{\theta}^{\text{\tiny{[Q-B]}}}_{\gamma,n}(y):=\hat{\theta}_{\hat{G}_{\gamma,n}^{\text{\tiny{[Q-B]}}}}(y)=(y+1)\frac{p_{\hat{G}_{\gamma,n}^{\text{\tiny{[Q-B]}}}}(y+1)}{p_{\hat{G}_{\gamma,n}^{\text{\tiny{[Q-B]}}}}(y)}\qquad y\in\mathbb{N}_{0}
\end{equation}
Since $G\mapsto \hat{\theta}_G(y)$ is nonlinear, the estimate \eqref{eq:qbayes_eb_est} does not, in general, coincide with the posterior mean of $\hat\theta_G(y)$.

A frequentist validation of the quasi-Bayes estimate $\hat{\theta}^{\text{\tiny{[Q-B]}}}_{\gamma,n}$, analogue to Proposition~\ref{pcrdp_1d}, follows from  \citet[{Theorem 4.8 and }Corollary 4.10]{MarTok(09)}. We state the result in the form needed here, namely with $\Theta=[{h_0},h]$ and $\Theta^{\ast}=[{h_0^\ast},h^{\ast}]$ for some ${h_0,\;h_0^\ast,h}$ and $h^{\ast}$ such that  $0<h_0\leq h_0^*<h^\ast\leq h<+\infty$.

\begin{prp}\label{pcrnew_1d}
Consider the quasi-Bayesian model \eqref{eq:qbayes_model}-\eqref{eq:newton}, where: i) $G_0$ is a probability measure supported on the set $[h_0,h]$, with a strictly positive density; ii)  for every $n\geq 1$, the learning rate is $\alpha_n\asymp n^{-\gamma}$ with $\gamma\in(2/3,1]$. Assume that $Y_{1:n}\simiid p_{G^\ast}$, where $G^\ast$ is absolutely continuous with respect to the Lebesgue measure and supported on $[h_0^\ast,h^\ast]$, with $0<h_0\leq h_0^*<h^\ast\leq h<+\infty$. If
\begin{equation}\label{rate_new}
\varepsilon_{\gamma,n}^{\text{\tiny{[Q-B]}}}:=\left\{
\begin{array}{ll}
\dfrac{1}{\sqrt{n^{1-\gamma}}} &\gamma \in (2/3,1)\\
\\
\dfrac{1}{\sqrt{\log n}}&\gamma=1
\end{array}
\right.
\end{equation}
then, for every $\delta>0$
\begin{equation}\label{pcr_newton}
\P_{G^\ast}^n\!\left[d_{\mathrm H}(p_{\hat{G}_{\gamma,n}^{\text{\tiny{[Q-B]}}}},p_{G^\ast})\ge \delta\, \varepsilon_{\gamma,n}^{\text{\tiny{[Q-B]}}}\right]\le\delta
\end{equation}
for all sufficiently large $n$.
\end{prp}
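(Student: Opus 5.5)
The plan is to reduce the statement to the Kullback--Leibler convergence rates for Newton's algorithm in \citet[Theorem~4.8 and Corollary~4.10]{MarTok(09)}, and then pass from Kullback--Leibler divergence to Hellinger distance. That reference establishes, for the recursive estimate $G_n$ of \eqref{eq:newton}, that $K(p_{G^\ast},p_{G_n})=O_{\P}(\varepsilon_n^2)$ under conditions on the kernel, the parameter space, $G_0$ and the learning rate. Here $\varepsilon_n$ is given by $n^{-(1-\gamma)/2}$ for $\gamma\in(2/3,1)$ and by $(\log n)^{-1/2}$ for $\gamma=1$. The first step is therefore to check those conditions for the Poisson kernel on $\Theta=[h_0,h]$.

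The key kernel condition is a uniform bound on likelihood ratios of the form
\begin{displaymath}
\sup_{\theta_1,\theta_2\in\Theta}\sum_{y\in\mathbb N_0}\Big(\frac{\mathrm{Poisson}(y\mid\theta_1)}{\mathrm{Poisson}(y\mid\theta_2)}\Big)^2 p_{G^\ast}(y)<+\infty .
\end{displaymath}
For the Poisson kernel the ratio equals $e^{\theta_2-\theta_1}(\theta_1/\theta_2)^y\le e^{h}(h/h_0)^y$. Moreover $p_{G^\ast}(y)\le (h^\ast)^y/y!$, so the series is dominated by $\sum_y (h^2h^\ast/h_0^2)^y/y!<\infty$. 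This is where $h_0>0$ and compactness are used. The remaining conditions are continuity of $\theta\mapsto\mathrm{Poisson}(y\mid\theta)$, identifiability of Poisson mixtures, $G_0$ having positive density on $\Theta$, and $G^\ast$ being absolutely continuous with support inside $\Theta$. Under these, the Kullback--Leibler projection of $p_{G^\ast}$ onto $\{p_G:G\in\mathcal P(\Theta)\}$ is $p_{G^\ast}$ itself, so the limit divergence is zero. Finally, $\alpha_n\asymp n^{-\gamma}$ with $\gamma\in(2/3,1]$ is exactly the range covered by their corollary.

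Next I convert to Hellinger distance using $d_{\mathrm H}^2(p,q)\le K(p,q)$. This gives $d_{\mathrm H}(p_{G_n},p_{G^\ast})=O_{\P}(\varepsilon_{\gamma,n}^{\text{\tiny{[Q-B]}}})$ directly, which is a tightness statement: for every $\eta>0$ there is $M_\eta$ such that $\P[d_{\mathrm H}\ge M_\eta\varepsilon_{\gamma,n}^{\text{\tiny{[Q-B]}}}]\le\eta$ for large $n$.

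The main obstacle is the form of \eqref{pcr_newton}, which uses the same $\delta$ both as the multiplier of the rate and as the probability bound. For small $\delta$ this is stronger than tightness: it amounts to $d_{\mathrm H}/\varepsilon_n\to0$ in probability. My plan for this is to look inside the Martin--Tokdar argument. Their bound is obtained from a supermartingale decomposition of $K_n=K(p_{G^\ast},p_{G_n})$: a drift term of order $\alpha_n K_n$, a martingale term, and a remainder of order $\sum\alpha_k^2$. I would show that the normalized quantity $K_n/\varepsilon_n^2$ has vanishing mean, or at least vanishing stochastic order, by using the slack between $\gamma$ and the boundary $2/3$ together with the Robbins--Siegmund lemma, which should give $\varepsilon_n^{-2}K_n\to0$ in probability. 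Applying Markov's inequality to $d_{\mathrm H}^2\le K_n$ would then yield \eqref{pcr_newton} for every $\delta>0$ and all sufficiently large $n$.

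If this refinement fails, I would read $\delta$ as a generic tightness constant, taking $M_\delta$ in place of $\delta$ in the rate multiplier, and cite the corollary directly. This is the intended use downstream in the paper, where only the order of the rate is needed for the regret and merging bounds.
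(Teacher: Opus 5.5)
Your overall route coincides with the paper's. The paper verifies assumptions A1--A6 of \citet{MarTok(09)} for the Poisson kernel on $[h_0,h]$; your second-moment likelihood-ratio bound is the substantive one, and your computation of it is correct. It then notes that $p_{G^\ast}$ is its own Kullback--Leibler projection, applies Corollary~4.10 for $\gamma\in(2/3,1)$ and Theorem~4.8 with $a_n=\sum_{k\le n}\alpha_k\asymp\log n$ for $\gamma=1$, and concludes via $d_{\mathrm H}^2\le K$.

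\textbf{The gap is at the step you yourself single out.} You credit \citet{MarTok(09)} only with $K(p_{G^\ast},p_{G_n})=O_{\P}(\varepsilon_n^2)$. You then correctly observe that the proposition, which uses the same $\delta$ as rate multiplier and as probability bound, is an $o_{\P}$ statement. But the upgrade is left as an unexecuted plan whose stated mechanism is not the operative one. The slack between $\gamma$ and $2/3$ is not what produces a little-o, and a vanishing-mean bound for $K_n/\varepsilon_n^2$ is neither needed nor what this kind of stochastic-approximation argument delivers.

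Your fallback, with a tightness constant $M_\delta$ in place of $\delta$, proves a strictly weaker statement. It would also not suffice downstream: Lemma~\ref{lemma:Jana}(b) and Theorem~\ref{merge_1d} assert $o_{\mathbb P^\ast}$ rates, and these require $d_{\mathrm H}(p_{G_n},p_{G^\ast})=o_{\P}(\varepsilon_{\gamma,n}^{\text{\tiny{[Q-B]}}})$.

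\textbf{The missing point is that the cited result is already an almost-sure little-o statement.} According to the paper's use of it, Theorem~4.8 of \citet{MarTok(09)} gives $a_nK(p_{G^\ast},p_{G_n})\to0$ almost surely for $a_n=\sum_{k\le n}\alpha_k$, provided $\sum_na_n\alpha_n^2<\infty$. With $\alpha_n\asymp n^{-\gamma}$ one has $a_n\asymp n^{1-\gamma}$ for $\gamma<1$ and $a_n\asymp\log n$ for $\gamma=1$. The condition $\sum_na_n\alpha_n^2<\infty$ then holds exactly when $\gamma>2/3$, and this is where the threshold enters.

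This is also the Robbins--Siegmund mechanism you were reaching for. Applied to $a_nK_n$, it yields an almost-sure finite limit. Since also $\sum_n\alpha_nK_{n-1}<\infty$ while $\sum_n\alpha_n/a_n=\infty$, that limit must be $0$.

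Because $(\varepsilon_{\gamma,n}^{\text{\tiny{[Q-B]}}})^{-2}\lesssim a_n$, you get $d_{\mathrm H}^2(p_{G_n},p_{G^\ast})/(\varepsilon_{\gamma,n}^{\text{\tiny{[Q-B]}}})^{2}\lesssim a_nK(p_{G^\ast},p_{G_n})\to0$ almost surely, hence in probability. That is precisely the claim for every $\delta>0$, with no need to reopen the argument.
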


The proof is in Section~\ref{proof:pcrnew_1d}. The concentration rate in Proposition~\ref{pcrnew_1d} reflects the choice of the learning rate $\alpha_n\asymp n^{-\gamma}$. Larger values of $\gamma$ make $\alpha_n$ decay faster to zero, thereby assigning less weight to new observations in \eqref{eq:newton}. This comes at the
cost of a slower concentration rate: for $\gamma\in(2/3,1)$, $\varepsilon_{\gamma,n}^{\text{\tiny{[Q-B]}}}=n^{-(1-\gamma)/2}$, whose exponent decreases to zero as $\gamma$ approaches one, while at the endpoint $\gamma=1$ the rate becomes $1/\sqrt{\log n}$. Thus, compared with the Bayesian rate $\varepsilon_n^{\text{\tiny{[B]}}}=\log n/\sqrt n$ in
Proposition~\ref{pcrdp_1d}, Proposition~\ref{pcrnew_1d} yields a slower concentration rate. Indeed, $\varepsilon_n^{\text{\tiny{[B]}}}=o(\varepsilon_{\gamma,n}^{\text{\tiny{[Q-B]}}})
$ for every \(\gamma\in(2/3,1]\).

\subsubsection{Merging of Bayes and quasi-Bayes empirical Bayes}

We now turn Propositions~\ref{pcrdp_1d} and~\ref{pcrnew_1d} into regret bounds for the plug-in Bayes and quasi-Bayes empirical Bayes estimates. Since Robbins' formula depends on the mixing distribution $G$ only through the marginal probability mass function $p_{G}$, the Hellinger concentration rates obtained above can be used to control the regret incurred by replacing the oracle prior $G^\ast$ with an estimated mixing distribution. The next lemma provides this control for the Bayesian and quasi-Bayesian estimates.

\begin{lem}\label{lemma:Jana}
Let $Y_{1:n}\simiid p_{G^\ast}$, where $G^\ast$ is absolutely continuous with respect to the Lebesgue measure and supported on $[h_0^\ast,h^\ast]$, with $0<h_0^\ast<h^\ast<+\infty$.  
\begin{itemize}
    \item[a)]  Consider the Bayesian model \eqref{eq:bayes_model}, where $G\sim \operatorname{DP}(1,H)$ and $H$ is a probability measure on the set $[h_{0},h]$, for $0<h_{0}\leq h_{0}^{\ast}<h^{\ast}\leq h<+\infty$, with a continuous and strictly positive density. Then 
    %there exist constants $C_1$ and $C_2$, depending only on $h$ and $h^\ast$, such that for sufficiently large $n$
\begin{align*}
\mathrm{Regret}(\hat G_{n}^{\text{\tiny{[B]}}},G^*)=O_{\mathbb P^\ast}\left(\frac{(\log n)^3}{n\log\log n}\right)
%  &\leq C_1\frac{\log n}{\log\log n} d_H^2(p_{\hat G_n^{\tiny{[B]}}},p_{G^*})+C_2\frac{2}{n^5}.
\end{align*}
\item[b)] Consider the quasi-Bayesian model \eqref{eq:qbayes_model}-\eqref{eq:newton}, where: i) $G_0$ is a probability measure on the set $[h_0,h]$, for $0<h_0\leq h_0^\ast<h^\ast\leq h<+\infty$, with a strictly positive density; ii)  for every $n\geq 1$, the learning rate is $\alpha_n\asymp n^{-\gamma}$ with $\gamma\in(2/3,1]$. Then,
\begin{align}\label{eq:regret_rate_qb_1d}
\mathrm{Regret}(\hat G_{\gamma,n}^{\text{\tiny{[Q-B]}}},G^*)=
\left\{
\begin{array}{ll}
o_{\,\mathbb{P}^{\ast}}\left(\dfrac{\log n}{n^{1-\gamma}\log\log n}\right)&\gamma <1\\
\\
o_{\,\mathbb{P}^{\ast}}\left(\dfrac{1}{\log\log n}\right)&\gamma =1.
\end{array}
\right.
%&\leq C_1\frac{\log n}{\log\log n} d_H^2(p_{\hat G_n^{\tiny{[B]}}},p_{G^*})+C_2\frac{2}{n^5}.
\end{align}
\end{itemize}
\end{lem}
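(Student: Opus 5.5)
The plan is to reduce the regret to a Hellinger distance between marginals via a deterministic inequality of the type in \citet{Jan(24)} and \citet{Pol(21)}, and then plug in the rates of Propositions~\ref{pcrdp_1d} and~\ref{pcrnew_1d}. The deterministic step is the following. For any $G$ supported on $[h_0,h]$, any $G^\ast$ supported on $[h_0^\ast,h^\ast]$, and any truncation level $K\ge1$,
\begin{equation*}
\mathrm{Regret}(G,G^\ast)\le C_1\,\max\Big\{\frac{\log(1/\epsilon)}{\log\log(1/\epsilon)},\,K\Big\}\, d_{\mathrm H}^2(p_G,p_{G^\ast})+C_2\sum_{y>K}(y+1)^2 p_{G^\ast}(y),
\end{equation*}
where $\epsilon$ stands for $d_{\mathrm H}(p_G,p_{G^\ast})$.

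To prove this, split the sum defining the regret at $K$. On the tail $y>K$, use that $\hat\theta_G(y)\in[h_0,h]$ and $\hat\theta^\ast(y)\in[h_0^\ast,h^\ast]$ are bounded, because both are posterior means supported in compact sets. The tail is therefore at most $4h^2\,p_{G^\ast}(\{y>K\})$. By Poisson tail bounds with $\theta\le h^\ast$, this is $\le \exp\{-cK\log K\}$. The bulk $y\le K$ is the main term. Write
\begin{equation*}
\hat\theta_G(y)-\hat\theta^\ast(y)=(y+1)\Big[\frac{p_G(y+1)-p_{G^\ast}(y+1)}{p_{G^\ast}(y)}+p_G(y+1)\Big(\frac{1}{p_G(y)}-\frac{1}{p_{G^\ast}(y)}\Big)\Big].
\end{equation*}
Then use the following facts:
\begin{itemize}
\item[(i)] $(y+1)p_G(y+1)/p_G(y)\le h$;
\item[(ii)] $(y+1)p_{G^\ast}(y+1)\le h^\ast p_{G^\ast}(y)$;
\item[(iii)] the standard trick of comparing $p_G$ and $p_{G^\ast}$ through $\big(\sqrt{p_G}-\sqrt{p_{G^\ast}}\big)^2$ after multiplying by $p_{G^\ast}(y)$.
\end{itemize}
This bounds each summand by a constant times $(y+1)\big[(\sqrt{p_G(y+1)}-\sqrt{p_{G^\ast}(y+1)})^2+(\sqrt{p_G(y)}-\sqrt{p_{G^\ast}(y)})^2\big]$, up to a cross term that has to be handled by Cauchy--Schwarz. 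Summing over $y\le K$ gives $O(K)\,d_{\mathrm H}^2$. The optimized choice is $K\asymp \log n/\log\log n$, which makes the tail $O(n^{-a})$ for any fixed $a$; this is where the factor $\log n/\log\log n$ comes from. I expect this step to be the main obstacle. The difficulty is controlling $|p_G(y)-p_{G^\ast}(y)|/p_{G^\ast}(y)$ when $p_{G^\ast}(y)$ is small. Here the lower bound $h_0^\ast>0$ helps, since it gives $p_{G^\ast}(y)\ge e^{-h^\ast}(h_0^\ast)^y/y!$. I would follow the argument of \citet[Theorem~3]{Jan(24)}, which handles this with a regularized denominator $\max\{p_G(y),\rho\}$. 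Since the denominators here involve $p_{G^\ast}$ and both estimates are genuine posterior means, I expect that route to go through directly.

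For part a), apply the display with $G=\hat G_n^{\text{\tiny{[B]}}}$ and use $p_{\hat G_n^{\text{\tiny{[B]}}}}=\bar p_n$. By \eqref{pcr_dp_2}, $d_{\mathrm H}^2\le C'^2(\log n)^2/n$ with probability at least $1-1/n$. Combining this with the factor $\log n/\log\log n$ and the tail $o(n^{-1})$ gives $O_{\mathbb P^\ast}\big((\log n)^3/(n\log\log n)\big)$.

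For part b), take $G=\hat G_{\gamma,n}^{\text{\tiny{[Q-B]}}}$ and the same choice of $K$, and use \eqref{pcr_newton}. Since $\delta>0$ is arbitrary there, that statement says exactly $d_{\mathrm H}=o_{\mathbb P^\ast}(\varepsilon^{\text{\tiny{[Q-B]}}}_{\gamma,n})$. Hence $d_{\mathrm H}^2=o_{\mathbb P^\ast}(n^{-(1-\gamma)})$ for $\gamma<1$, and $d_{\mathrm H}^2=o_{\mathbb P^\ast}(1/\log n)$ for $\gamma=1$. Multiplying by $\log n/\log\log n$ yields the two cases of \eqref{eq:regret_rate_qb_1d}, and the tail term is negligible in both.
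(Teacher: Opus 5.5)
Your architecture matches the paper's. You use a deterministic inequality $\mathrm{Regret}(\hat G,G^\ast)\le C_1K\,d_{\mathrm H}^2(p_{\hat G},p_{G^\ast})+C_2\sum_{y>K}p_{G^\ast}(y)$, choose $K\asymp\log n/\log\log n$ to make the tail polynomially small, and then plug in Propositions~\ref{pcrdp_1d} and~\ref{pcrnew_1d}. Your deductions of (a) and (b) from that inequality are correct and coincide with the paper's. The paper does not derive the inequality; it quotes \citet[Lemmas 4 and 11]{Jan(24)}, and reproves the $d$-dimensional version in Lemma~\ref{lem:regret-density-d}. Your own derivation does not close, exactly at the step you flag.

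\textbf{Why your derivation fails.} After multiplying your decomposition by $p_{G^\ast}(y)$, you are left with two chi-square type terms:
\[
(y+1)^2\{p_G(y+1)-p_{G^\ast}(y+1)\}^2/p_{G^\ast}(y)\quad\text{and}\quad\hat\theta_G(y)^2\{p_G(y)-p_{G^\ast}(y)\}^2/p_{G^\ast}(y).
\]
Writing $(u-v)^2=(\sqrt u-\sqrt v)^2(\sqrt u+\sqrt v)^2$, turning these into Hellinger terms requires $p_G(y)/p_{G^\ast}(y)=O(1)$. Note that $(y+1)p_G(y+1)/p_{G^\ast}(y)=\hat\theta_G(y)\,p_G(y)/p_{G^\ast}(y)$, and nothing in (i)--(iii) bounds this ratio.

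The lower bound from $h_0^\ast>0$ only gives $p_G(y)/p_{G^\ast}(y)\le e^{h^\ast}(h/h_0^\ast)^y$. Over $y\le K\asymp\log n/\log\log n$, your constant is then inflated by $\exp\{cK\}=n^{c/\log\log n}$, which exceeds every power of $\log n$. This destroys the rate in (a), and in (b) for $\gamma<1$.

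You cannot shrink $K$ to compensate. Since $p_{G^\ast}(y)\ge e^{-h^\ast}(h_0^\ast)^y/y!$, a tail decaying like a negative power of $n$ forces $K\gtrsim\log n/\log\log n$. A lower regularization $\max\{p_G(y),\rho\}$ does not help either: the trouble is $p_G(y)$ being large relative to $p_{G^\ast}(y)$, not $p_G(y)$ being small.

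\textbf{The fix.} The missing idea is the symmetric denominator used in \citet{Jan(24)}. Set $a=p_G(y)$, $b=p_{G^\ast}(y)$, $a'=p_G(y+1)$, $b'=p_{G^\ast}(y+1)$. Then one has the exact identity
\[
\hat\theta_G(y)-\hat\theta^\ast(y)=\{\hat\theta_G(y)+\hat\theta^\ast(y)\}\frac{b-a}{a+b}+\frac{2(y+1)(a'-b')}{a+b}.
\]
Now use three facts: $b\le a+b$; $(u-v)^2/(u+v)\le2(\sqrt u-\sqrt v)^2$; and $(y+1)(a'+b')/(a+b)=\{a\hat\theta_G(y)+b\hat\theta^\ast(y)\}/(a+b)\le h$. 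These give
\[
\{\hat\theta_G(y)-\hat\theta^\ast(y)\}^2p_{G^\ast}(y)\le4(h+h^\ast)^2(\sqrt a-\sqrt b)^2+16h\,(y+1)(\sqrt{a'}-\sqrt{b'})^2 .
\]
Summing over $y<K$ gives $O(K)\,d_{\mathrm H}^2$ with constants depending only on $(h,h^\ast)$. No lower bound on $p_{G^\ast}$, no regularization and no cross term are needed.

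Alternatively, you can keep your decomposition and split cases. If $p_G(y)\le4p_{G^\ast}(y)$, the ratio is bounded and your argument goes through. If $p_G(y)>4p_{G^\ast}(y)$, then directly $\{\hat\theta_G(y)-\hat\theta^\ast(y)\}^2p_{G^\ast}(y)\le h^2p_{G^\ast}(y)\le h^2(\sqrt a-\sqrt b)^2$.
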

\begin{proof} By \citet[Lemma 4]{Jan(24)}, for any $\hat G$ satisfying   $\hat G[0,h]=1$ with $0<h^\ast<h<+\infty$ and for any $K\geq 1$,
\begin{displaymath}
   \operatorname{Regret}(\widehat G,G^\ast)\leq \left\{12(h^2+{h^\ast}^2)+48(h+h^\ast)K\right\}d_H^2(p_{G^\ast},p_{\hat G})+2(h+h^\ast)^2\sum_{y>K} p_{G^*}(y).
\end{displaymath}
Furthermore, by \citet[ Lemma 11]{Jan(24)}, if 
$$
K=\min\left\{\left\lceil \frac{5(he^2+2)\log n}{\log\log n}\right\rceil,he^2+5\log n\right\},
$$
then, for any $n\geq 3$,  $\sum_{y>K} p_{G^\ast}(y)\leq \frac{2}{n^5}$. Hence, for any estimate $\hat G_n$ of the mixing distribution $G$,  there exist constants $C_1$ and $C_2$, depending only on $h$ and $h^\ast$, such that for sufficiently large $n$ it holds
\begin{align*}
  \text{Regret}(\hat G_n,G^\ast)&\leq C_1\frac{\log n}{\log\log n} d_H^2(p_{\hat G_n},p_{G^*})+C_2\frac{2}{n^5}.
\end{align*}
Claim (a) follows from the above inequality applied to the estimate
$\hat G_n^{\text{\tiny{[B]}}}$, together with Proposition
\ref{pcrdp_1d}. Claim (b) follows analogously, by applying the same
inequality to the estimate $\hat G_{\gamma,n}^{\text{\tiny{[Q-B]}}}$ and using
Proposition \ref{pcrnew_1d}.
\end{proof}

The next theorem establishes the frequentist merging of the Bayesian and quasi-Bayesian empirical Bayes strategies. In particular, under the ``true'' mixing distribution $G^{\ast}$ on $\Theta^{\ast}\subseteq\mathbb R_+$, we define the regret incurred by using the quasi-Bayes estimate $\hat\theta_{\gamma,n}^{\text{\tiny{[Q-B]}}}$ in place of the Bayes estimate $\hat\theta_{n}^{\text{\tiny{[B]}}}$ as
\begin{equation}\label{merging_regret_1d}
\text{Regret}(\hat G_{\gamma,n}^{\text{\tiny{[Q-B]}}},\hat G_{n}^{\text{\tiny{[B]}}};G^{\ast})=\sum_{y\in\mathbb N_0} (\hat\theta_{n}^{\text{\tiny{[B]}}}(y)-\hat\theta_{\gamma,n}^{\text{\tiny{[Q-B]}}}(y))^2p_{G^*}(y),
\end{equation}
and show that, as $n\rightarrow+\infty$, it vanishes at the same rate as that of $\text{Regret}(\hat G_{\gamma,n}^{\text{\tiny{[Q-B]}}},G^{\ast})$, displayed in \eqref{eq:regret_rate_qb_1d}.

\begin{thm}\label{merge_1d}
Let $Y_{1:n}\simiid p_{G^\ast}$, where $G^\ast$ is absolutely continuous with respect to the Lebesgue measure and supported on the set $[h_0^\ast,h^\ast]$, with $0<h_0^\ast<h^\ast<+\infty$. Consider the Bayesian model \eqref{eq:bayes_model}, where $G\sim \operatorname{DP}(1,H)$ and $H$ is a probability measure on the set $[h_{0},h]$, for $0<h_{0}\leq h_{0}^{\ast}<h^{\ast}\leq h<+\infty$, with a continuous and strictly positive density. Further, consider the quasi-Bayesian model \eqref{eq:qbayes_model}-\eqref{eq:newton}, where: i) $G_0$ is a probability measure on the set $[h_0,h]$, for $0<h_0\leq h_0^\ast<h^\ast\leq h<+\infty$, with a strictly positive density; ii)  for every $n\geq 1$, the learning rate is $\alpha_n\asymp n^{-\gamma}$ with $\gamma\in(2/3,1]$. %Let $G\sim \Pi= \operatorname{DP}(1,\alpha)$, where $\alpha$ is a probability measure on $[0,h]$ with  a continuous density bounded away from zero and infinity, and let $\hat G_n^{\text{\tiny{[B]}}}(\cdot)=\int_{\mathcal P(\Theta)}G(\cdot)\,\Pi(dG\mid Y_{1:n})$.
%Let $G_0$ be a probability measure supported on  $[h_0,h]$, and with strictly positive density, and for every $n\geq 1$, let $\hat{G}_{\gamma,n}^{\text{\tiny{[Q-B]}}}=G_n$, with $(G_n)$ defined recursively as \eqref{eq:newton} with $\alpha_n\asymp n^{-\gamma}$ for some $\gamma\in (2/3,1]$. 
\begin{itemize}
    \item[a)] The estimates  $p_{n}^{\text{\tiny{[B]}}}$ and $p_{n}^{\text{\tiny{[Q-B]}}}$ of $p_{G^\ast}$ merge in Hellinger distance, as $n\rightarrow\infty$, and
    \begin{displaymath}
d_H(p_{n}^{\text{\tiny{[B]}}},p_{\gamma,n}^{\text{\tiny{[Q-B]}}})=\left\{
\begin{array}{ll}
o_{\,\mathbb{P}^{\ast}}\left(\dfrac {1}{\sqrt{n^{1-\gamma}}}\right)&\gamma<1\\
\\
o_{\,\mathbb{P}^{\ast}}\left(\dfrac{1}{\sqrt{\log n}}\right)&\gamma=1,
\end{array}\right.
\end{displaymath}
\item[b)] The plug-in estimates $\hat\theta_{n}^{\text{\tiny{[B]}}}$ and $\hat\theta_{\gamma,n}^{\text{\tiny{[Q-B]}}}$ merge in $L^2(p_{G^\ast})$, as $n\rightarrow+\infty$, and
\begin{displaymath}
\mathrm{Regret}(\hat G_{\gamma,n}^{\text{\tiny{[Q-B]}}},\hat G_{n}^{\text{\tiny{[B]}}};G^{\ast})=
\left\{
\begin{array}{ll}
o_{\,\mathbb{P}^{\ast}}\left(\dfrac{\log n}{n^{1-\gamma}\log\log n}\right)&\gamma <1\\
\\
o_{\,\mathbb{P}^{\ast}}\left(\dfrac{1}{\log\log n}\right)&\gamma =1.
\end{array}
\right.
\end{displaymath}
\end{itemize}
\end{thm}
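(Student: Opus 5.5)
The plan is to derive both claims from the triangle inequality, using the two concentration results already established. The Bayesian estimate concentrates faster than the quasi-Bayesian one, so the merging rate is the quasi-Bayesian rate. Throughout, write $p_n^{\text{\tiny{[B]}}}=p_{\hat G_n^{\text{\tiny{[B]}}}}=\bar p_n$ and $p_{\gamma,n}^{\text{\tiny{[Q-B]}}}=p_{\hat G_{\gamma,n}^{\text{\tiny{[Q-B]}}}}$.

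For part (a), I would use the bound
\begin{displaymath}
d_H(p_n^{\text{\tiny{[B]}}},p_{\gamma,n}^{\text{\tiny{[Q-B]}}})\le d_H(p_n^{\text{\tiny{[B]}}},p_{G^\ast})+d_H(p_{G^\ast},p_{\gamma,n}^{\text{\tiny{[Q-B]}}}).
\end{displaymath}
Take the two terms in turn.
\begin{itemize}
\item[(i)] By \eqref{pcr_dp_2} in Proposition~\ref{pcrdp_1d}, the first term is $O_{\mathbb P^\ast}(\log n/\sqrt n)$. Since $\varepsilon_n^{\text{\tiny{[B]}}}=o(\varepsilon_{\gamma,n}^{\text{\tiny{[Q-B]}}})$ for every $\gamma\in(2/3,1]$, as noted after Proposition~\ref{pcrnew_1d}, this term is $o_{\mathbb P^\ast}(\varepsilon_{\gamma,n}^{\text{\tiny{[Q-B]}}})$.
\item[(ii)] The statement \eqref{pcr_newton}, which holds for every $\delta>0$, is exactly $d_H(p_{\gamma,n}^{\text{\tiny{[Q-B]}}},p_{G^\ast})=o_{\mathbb P^\ast}(\varepsilon_{\gamma,n}^{\text{\tiny{[Q-B]}}})$.
\end{itemize}
Adding the two gives (a). The rates are $n^{-(1-\gamma)/2}$ for $\gamma<1$ and $(\log n)^{-1/2}$ for $\gamma=1$.

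For part (b), I would use $(a+b)^2\le 2a^2+2b^2$ pointwise in $y$, with $a=\hat\theta_n^{\text{\tiny{[B]}}}(y)-\hat\theta^\ast(y)$ and $b=\hat\theta^\ast(y)-\hat\theta_{\gamma,n}^{\text{\tiny{[Q-B]}}}(y)$. Summing against $p_{G^\ast}$ gives
\begin{displaymath}
\mathrm{Regret}(\hat G_{\gamma,n}^{\text{\tiny{[Q-B]}}},\hat G_{n}^{\text{\tiny{[B]}}};G^{\ast})\le 2\,\mathrm{Regret}(\hat G_{n}^{\text{\tiny{[B]}}},G^\ast)+2\,\mathrm{Regret}(\hat G_{\gamma,n}^{\text{\tiny{[Q-B]}}},G^\ast).
\end{displaymath}
Lemma~\ref{lemma:Jana}(b) gives the second term directly at the claimed $o_{\mathbb P^\ast}$ rate. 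By Lemma~\ref{lemma:Jana}(a), the first term is $O_{\mathbb P^\ast}((\log n)^3/(n\log\log n))$, and it remains to check that this is negligible against the target rate.
\begin{itemize}
\item[(i)] For $\gamma<1$, the ratio is $(\log n)^2 n^{-\gamma}\to0$.
\item[(ii)] For $\gamma=1$, the ratio is $(\log n)^3/n\to0$.
\end{itemize}
So the first term is also $o_{\mathbb P^\ast}$ of the target rate, which proves (b).

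There is no real obstacle; the only points needing care are bookkeeping. First, \eqref{pcr_newton} must be read correctly as a little-$o$ in probability. Second, the $O_{\mathbb P^\ast}$ bound from \eqref{pcr_dp_2}, which holds with probability at least $1-1/n$, converts into $o_{\mathbb P^\ast}$ of the slower rate. Third, the hypotheses of Theorem~\ref{merge_1d} must match those of Propositions~\ref{pcrdp_1d} and~\ref{pcrnew_1d} and of Lemma~\ref{lemma:Jana}, in particular $h_0\le h_0^\ast$ and $h^\ast\le h$ and the absolute continuity of $G^\ast$. They do, so the argument is a direct combination.
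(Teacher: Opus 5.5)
Your proposal is correct and follows essentially the same route as the paper: part (a) by the triangle inequality through $p_{G^\ast}$, combining Propositions~\ref{pcrdp_1d} and~\ref{pcrnew_1d}, and part (b) by $(a+b)^2\le 2a^2+2b^2$ together with Lemma~\ref{lemma:Jana}. You also supply bookkeeping the paper leaves implicit: that the Newton concentration statement is exactly a little-$o$ in probability, and that the Bayesian regret $O_{\mathbb P^\ast}\bigl((\log n)^3/(n\log\log n)\bigr)$ is negligible against the quasi-Bayesian rate for every $\gamma\in(2/3,1]$.
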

\begin{proof} 
With regards to a), by Proposition \ref{pcrdp_1d}, for every $\gamma\in (2/3,1]$, $d_H(p_n^{\text{\tiny{[B]}}},p_{G^\ast})=o_{\PP^\ast}(\epsilon_{\gamma,n}^{\text{\tiny{[Q-B]}}})$, with $\varepsilon_{\gamma,n}^{\text{\tiny{[Q-B]}}}$ as in \eqref{rate_new}. The claim follows by Proposition \ref{pcrnew_1d} and the triangular inequality. With regards to b),
%By Lemma 11 in \cite{Jan(24)}, if $$K=\min\left\{\left\lceil \frac{5(he^2+2)\log n}{\log\log n}\right\rceil,he^2+5\log n\right\},$$then, for any $n\geq 3$,  $\sum_{y>K} p_{G^\ast}(y)\leq \frac{2}{n^5}$.Applying Lemma \ref{lemma:Jana} to $\hat G_n^{\text{\tiny{[B]}}}$ and to $\hat G_n^{\text{\tiny{[Q-B]}}}$, we can ensure the existence of a constants $C_1$ and $C_2$, depending only on $h$ and $h^\ast$, such that for sufficiently large $n$\begin{align*}    \sum_{y\in\mathbb{N}_{0}}(\hat\theta_{n}^{\text{\tiny{[B]}}}(y)-\hat\theta^{*}(y))^2p_{G^*}(y)&\leq C_1\frac{\log n}{\log\log n} d_H^2(p_{\hat G_n^{\tiny{[B]}}},p_{G^*})+C_2\frac{2}{n^5}\end{align*} and \begin{align*}     \sum_{y\in\mathbb{N}_{0}}(\hat\theta_{n}^{\text{\tiny{[Q-B]}}}(y)-\hat\theta^{*}(y))^2p_{G^*}(y)&\leq C_1\frac{\log n}{\log\log n} d_H^2(p_{\hat G_n^{\tiny{[B]}}},p_{G^*})+C_2\frac{2}{n^5}. \end{align*} Thus,
 \begin{align*}
     \text{Regret}(\hat G_{\gamma,n}^{\text{\tiny{[Q-B]}}},\hat G_{n}^{\text{\tiny{[B]}}};G^{\ast})
      &\leq 2\text{Regret}(\hat G_{n}^{\text{\tiny{[B]}}},G^\ast)+2\text{Regret}(\hat G_{n}^{\text{\tiny{[Q-B]}}},G^\ast)\\
      %&\quad\quad  \leq 2 C_1\frac{\log n}{\log\log n}      \left\{ d_H^2(p_{\hat G_n^{\tiny{[B]}}},p_{G^*})+d_H^2(p_{\hat G_{\gamma,n}^{\tiny{[Q-B]}}},p_{G^*})\right\}+2C_2\frac{2}{n^5}\\
    & =
      \left\{
\begin{array}{ll}
o_{\,\mathbb{P}^{\ast}}(\frac{\log n}{n^{1-\gamma}\log\log n})&\gamma <1\\
\\
o_{\,\mathbb{P}^{\ast}}(\frac{1}{\log\log n})&\gamma =1,
\end{array}
\right.
\end{align*}
where the last equality comes from Lemma \ref{lemma:Jana}.
\end{proof}

\subsection{The $d$-dimensional setting, $d>1$}

Let $d\geq1$ be fixed. Given $n\geq1$ observations modeled as independent $d$-dimensional Poisson random vectors $\boldsymbol Y_1,\ldots,\boldsymbol Y_n$, with corresponding mean vectors $\boldsymbol\theta_1,\ldots,\boldsymbol\theta_n\in\mathbb R_+^d$, the $d$-dimensional Poisson compound decision problem concerns the estimation of $(\boldsymbol\theta_1,\ldots,\boldsymbol\theta_n)$ under squared-error loss. Assuming that the $\boldsymbol\theta_i$'s are i.i.d. from a prior $G$ on $\boldsymbol{\Theta}\subseteq\mathbb R_+^d$, the Bayes estimate of $\boldsymbol\theta$ is the posterior mean
\begin{equation}\label{eq:robbins_d_vector}
\hat{\boldsymbol\theta}_{G}(\boldsymbol y)=\E_{G}[\boldsymbol\theta\mid \boldsymbol Y_{i}=\boldsymbol y]=\frac{\int_{\boldsymbol{\Theta}}\boldsymbol\theta\, \prod_{\ell=1}^{d}\text{Poisson}(y_{\ell}\mid\theta_{\ell})\,G(d\boldsymbol\theta)}{\int_{\boldsymbol{\Theta}}\prod_{\ell=1}^{d}\text{Poisson}(y_{\ell}\mid\theta_{\ell})\,G(d\boldsymbol\theta)}\qquad\boldsymbol y\in\mathbb N_0^d.
\end{equation}
Equivalently, if $\boldsymbol e_\ell$ denotes the $\ell$-th coordinate vector in $\mathbb R^d$, then the $\ell$-th coordinate of the Bayes estimate \eqref{eq:robbins_d_vector} is
\begin{displaymath}
\hat\theta_{G,\ell}(\boldsymbol y)=(y_\ell+1)\frac{p_G(\boldsymbol y+\boldsymbol e_\ell)}{p_G(\boldsymbol y)},\qquad \ell=1,\ldots,d,
\end{displaymath}
where
\begin{displaymath}
p_{G}(\boldsymbol y)=\int_{\boldsymbol{\Theta}}\prod_{\ell=1}^{d}\text{Poisson}(y_{\ell}\mid\theta_{\ell})\,G(d\boldsymbol\theta).
\end{displaymath}
The Bayes estimate \eqref{eq:robbins_d_vector} is referred to as multidimensional Robbins' formula; see, e.g., \citet{Bro(85)}.

As in the $1$-dimensional setting, we study the frequentist merging of two $g$-modeling empirical Bayes strategies for estimating the unknown mixing distribution $G$ in the multidimensional Robbins' formula \eqref{eq:robbins_d_vector}: a Bayesian strategy and a quasi-Bayesian strategy. With regards to the ``true'' $d$-dimensional Poisson mixture model, we assume that $\boldsymbol Y_{1:n}=(\boldsymbol Y_1,\ldots,\boldsymbol Y_n)$ are i.i.d. from a $d$-dimensional Poisson mixture distribution with ``true'' mixing distribution $G^\ast$ on $\boldsymbol\Theta^\ast\subset\mathbb R_+^d$, or oracle prior. If $G^\ast$ were known, then the multidimensional Robbins's formula \eqref{eq:robbins_d_vector} would yield the oracle Bayes estimate
\begin{equation}\label{eq:oracle_d}
\hat{\boldsymbol\theta}^{\ast}(\boldsymbol y):=\hat{\boldsymbol\theta}_{G^\ast}(\boldsymbol y),\qquad\boldsymbol y\in\mathbb N_0^d,
\end{equation}
whose $\ell$-th coordinate is
\begin{displaymath}
\hat\theta^{\ast}_{\ell}(\boldsymbol y)=(y_\ell+1)\frac{p_{G^\ast}(\boldsymbol y+\boldsymbol e_\ell)}{p_{G^\ast}(\boldsymbol y)},\qquad \ell=1,\ldots,d.
\end{displaymath}
Since $G^\ast$ is unknown, the empirical Bayes approach proceeds by replacing the mixing distribution $G$ in \eqref{eq:robbins_d_vector} by an estimate, say $\hat G_n$, computed from $\boldsymbol Y_{1:n}$. This gives the plug-in empirical Bayes estimate
\begin{equation}\label{eq:estimate_d}
\hat{\boldsymbol\theta}_{n}(\boldsymbol y):=\hat{\boldsymbol\theta}_{\hat G_n}(\boldsymbol y),\qquad\boldsymbol y\in\mathbb N_0^d,
\end{equation}
whose $\ell$-th coordinate is
\begin{displaymath}
\hat\theta_{n,\ell}(\boldsymbol y)=(y_\ell+1)\frac{p_{\hat G_n}(\boldsymbol y+\boldsymbol e_\ell)}{p_{\hat G_n}(\boldsymbol y)},\qquad \ell=1,\ldots,d.
\end{displaymath}
In particular, in analogy with the $1$-dimensional setting, the regret incurred by using \(\hat{\boldsymbol\theta}_{n}\) in place of \(\hat{\boldsymbol\theta}^{\ast}\) is
\begin{equation}\label{reg_d}
\operatorname{Regret}(\hat G_n,G^\ast)=\sum_{\boldsymbol y\in\mathbb N_0^d}\left\|\hat{\boldsymbol\theta}_{n}(\boldsymbol y)-\hat{\boldsymbol\theta}^{\ast}(\boldsymbol y)
\right\|_2^2p_{G^\ast}(\boldsymbol y).
\end{equation}
This is precisely the $d$-dimensional counterpart of the regret \eqref{reg} in the $1$-dimensional; see \citet{Efr(14),Efr(19)}.

In the following, we provide a regret analysis of the plug-in Bayes and quasi-Bayes estimates. Under the ``true'' Poisson mixture model, this analysis establishes the frequentist merging of the Bayesian and quasi-Bayesian strategies, quantifying the rate at which they become equivalent as $n\to+\infty$. Throughout our analysis, we assume that the parameter spaces $\boldsymbol{\Theta}$ and $\boldsymbol{\Theta}^{\ast}$ are compact subsets of $\mathbb R_+^d$.

\subsubsection{Bayes empirical Bayes}

The Bayes empirical Bayes strategy estimates the mixing distribution $G$ in the multidimensional Robbins' formula \eqref{eq:robbins_d_vector} by assigning a nonparametric prior distribution to $G$, namely a distribution on the space $\mathcal P(\boldsymbol\Theta)$ of probability measures on $\boldsymbol\Theta\subseteq\mathbb R_+^d$. We place on $G$ a Dirichlet process prior with strength parameter $c>0$ and base probability measure $H$ on $\boldsymbol\Theta$. This leads to model $\boldsymbol Y_{1:n}$ as follows:
\begin{equation}\label{eq:bayes_model_d}
\begin{array}{r@{\;}c@{\;}l@{\quad}c@{\quad}l}
Y_{i,\ell}& \mid &\boldsymbol\theta_i& \simind &\mathrm{Poisson}(\cdot\mid \theta_{i,\ell}),\qquad \ell=1,\ldots,d,\quad i=1,\ldots,n,\\[0.2cm]
\boldsymbol\theta_i& \mid &G& \simiid &G,\\[0.2cm]
&&G& \sim &\mathrm{DP}(c,H).
\end{array}
\end{equation}
Because of the conjugacy of the Dirichlet process prior \citep{Fer(73)}, the posterior distribution of $G$ given $\boldsymbol Y_{1:n}$ is
\begin{equation}\label{post_dp_d}\Pi(dG\mid \boldsymbol Y_{1:n})=\int_{\boldsymbol\Theta^n}\mathrm{DP}\left(dG;\,c+n,\,\frac{cH+\sum_{i=1}^n\delta_{\boldsymbol\theta_i}}{c+n}\right)\pi(d\boldsymbol\theta_{1:n}\mid \boldsymbol Y_{1:n}),
\end{equation}
where \(\pi(\cdot\mid \boldsymbol Y_{1:n})\) denotes the posterior distribution of the \(\boldsymbol\theta_i\)'s under the model \eqref{eq:bayes_model_d}. See \citet{Lo(84)} for details.

Under the Bayesian model \eqref{eq:bayes_model_d}, and adopting a quadratic loss function for the evaluation functionals \(G(A)\), with \(A\) being a Borel set of \(\boldsymbol\Theta\), the optimal estimate of \(G\) is given by the posterior mean
\begin{equation}\label{post_mean_d}
\hat G_n^{\text{\tiny{[B]}}}(\cdot)=\int_{\mathcal P(\boldsymbol\Theta)}G(\cdot)\,\Pi(dG\mid \boldsymbol Y_{1:n})=\frac{c}{c+n}H(\cdot)+\frac{1}{c+n}\sum_{i=1}^n
\pi(\boldsymbol\theta_i\in \cdot\mid \boldsymbol Y_{1:n}).
\end{equation}
A Bayes estimate of \(\boldsymbol\theta_i\), \(i=1,\ldots,n\), is then obtained by plugging \eqref{post_mean_d} into the multidimensional Robbins' formula \eqref{eq:robbins_d_vector}, i.e.,
\begin{equation}\label{eq:bayes_eb_est_d}
\hat{\boldsymbol\theta}^{\text{\tiny{[B]}}}_{n}(\boldsymbol y):=\hat{\boldsymbol\theta}_{\hat G_n^{\text{\tiny{[B]}}}}(\boldsymbol y),\qquad\boldsymbol y\in\mathbb N_0^d.
\end{equation}
Equivalently, the $\ell$-th coordinate of \eqref{eq:bayes_eb_est_d} is
\begin{displaymath}
\hat\theta^{\text{\tiny{[B]}}}_{n,\ell}(\boldsymbol y)=(y_\ell+1)\frac{p_{\hat G_n^{\text{\tiny{[B]}}}}(\boldsymbol y+\boldsymbol e_\ell)}{p_{\hat G_n^{\text{\tiny{[B]}}}}(\boldsymbol y)},\qquad \ell=1,\ldots,d.
\end{displaymath}
Since \(G\mapsto \hat{\boldsymbol\theta}_G(\boldsymbol y)\) is nonlinear, the estimate \eqref{eq:bayes_eb_est_d} does not, in general, coincide with the posterior mean of \(\hat{\boldsymbol\theta}_G(\boldsymbol y)\).

We next provide a frequentist validation of the Bayes estimate $\hat{\boldsymbol\theta}^{\text{\tiny{[B]}}}_{n}$, in analogy with Proposition \ref{pcrdp_1d}. Since the multidimensional Robbins' formula \eqref{eq:robbins_d_vector} depends on the mixing distribution $G$ only through the probability mass function $p_G$, we study the posterior concentration of $p_{\hat G_n^{\text{\tiny{[B]}}}}$ around the ``true'' $p_{G^\ast}$, as $n\rightarrow+\infty$. Assuming $\boldsymbol{\Theta}=[h_0,h]^d$ and $\boldsymbol{\Theta}^\ast=[h_0^\ast,h^\ast]^d$ for $h_0,h_0^\ast,h$ and $h^\ast$ such that $0<h_0\leq h_0^\ast<h^\ast\leq h<+\infty$, the next proposition establishes a contraction rate of $p_{\hat G_n^{\text{\tiny{[B]}}}}$ toward $p_{G^\ast}$ under $G^{\ast}$.

\begin{prp}\label{pcrdp_d}
Consider the Bayesian model \eqref{eq:bayes_model_d}, where $G\sim\operatorname{DP}(1,H)$ and $H$ is a probability measure supported on $[h_{0},h]^d$ with a continuous and strictly positive density. Assume that $\boldsymbol{Y}_{1:n}\simiid p_{G^\ast}$, where $G^{\ast}$ is supported on $[h_0^\ast,h^\ast]^d$ with $0<h_0\leq h_0^\ast<h^\ast\leq h<+\infty$. If \begin{equation}\label{rate_dp_d} \varepsilon_{n,d}^{\text{\tiny{[B]}}}:= \frac{(\log n)^{(d+1)/2}}{\sqrt n}, \end{equation} then there exist constants $C,c>0$, depending only on $(d,h_0,h,h_0^\ast,h^\ast,G^\ast,H)$, such that, for all sufficiently large $n$,
\begin{equation}\label{pcr_dp_d_1}
\P_{G^\ast}^n\!\left[\Pi\!\left(G\in\mathcal P([h_{0},h]^d):d_{\mathrm H}(p_G,p_{G^\ast})\ge C\varepsilon_{n,d}^{\text{\tiny{[B]}}}\,\middle|\,\boldsymbol Y_{1:n}\right)\le\exp\!\left\{-c(\log n)^{d+1}\right\}\right]\ge 1-\frac{1}{n}.
\end{equation}
Moreover, if
\begin{displaymath}
\bar p_n(\boldsymbol y):=\int_{\mathcal P([h_{0},h]^d)}p_G(\boldsymbol y)\,\Pi(dG\mid \boldsymbol Y_{1:n}),\qquad\boldsymbol y\in\mathbb N_0^d,
\end{displaymath}
then there exists a constant $C'>0$, depending only on $(d,h_0,h,h_0^\ast,h^\ast,G^\ast,H)$, such that, for all sufficiently large $n$,
\begin{equation}\label{pcr_dp_d_2}
\P_{G^\ast}^n\!\left[d_{\mathrm H}(\bar p_n,p_{G^\ast})\ge C'\varepsilon_{n,d}^{\text{\tiny{[B]}}}\right]\le\frac{1}{n}.
\end{equation}
\end{prp}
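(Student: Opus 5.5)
We would follow the general posterior contraction theorem of Ghosal, Ghosh and van der Vaart, adapted as in \citet[Theorem~5.1]{Gho(01)}, now with the product Poisson kernel on $\mathbb N_0^d$. With $\varepsilon_n=\varepsilon_{n,d}^{\text{\tiny{[B]}}}$, we must verify two conditions. First, a \emph{prior mass} condition: $\Pi(B_{KL}(p_{G^\ast},\varepsilon_n))\ge \exp\{-c_1 n\varepsilon_n^2\}=\exp\{-c_1(\log n)^{d+1}\}$ for a Kullback--Leibler neighbourhood (with second moment). Second, an \emph{entropy} condition: $\log N(\varepsilon_n,\{p_G:G\in\mathcal P([h_0,h]^d)\},d_{\mathrm H})\lesssim n\varepsilon_n^2$. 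Because the parameter space is compact, the whole model can serve as sieve, so no sieve or remaining-mass condition is needed. The general theorem, in the exponential-inequality version, then yields \eqref{pcr_dp_d_1}. Specifically, on an event of probability at least $1-1/n$ the denominator is bounded below by $\Pi(B_{KL})e^{-2n\varepsilon_n^2}$, where $1/n$ comes from Chebyshev applied to the log-likelihood ratio, using the second-moment part of the KL ball. Tests with error $e^{-c n\varepsilon_n^2}$ control the numerator.

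The central technical tool is a \textbf{moment-matching approximation}. Fix a truncation $K\asymp \log n$ (or $\log n/\log\log n$) in each coordinate. Since $\boldsymbol\theta\in[h_0,h]^d$, the Poisson tails give $\sum_{\boldsymbol y\notin\{0,\dots,K\}^d}p_G(\boldsymbol y)\le d\,e^{-cK\log K}\ll \varepsilon_n^2$ uniformly in $G$. Each $\prod_\ell\mathrm{Poisson}(y_\ell\mid\theta_\ell)$ with $y_\ell\le K$ is a product of $e^{-\theta_\ell}\theta_\ell^{y_\ell}/y_\ell!$. Taylor-expanding $e^{-\theta_\ell}$ to order $k\asymp\log n$ shows that any two $G,G'$ sharing all mixed moments $\int\boldsymbol\theta^{\boldsymbol m}$ with $|m_\ell|\le K+k$ satisfy $\|p_G-p_{G'}\|_\infty\le \varepsilon_n^2$ on the truncated box. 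By Carathéodory, such a $G'$ can be chosen discrete with $N\lesssim(\log n)^d$ atoms.

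This yields both conditions.

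\emph{Entropy.} Approximate $G$ by a discrete $G'$ with $N\asymp(\log n)^d$ atoms. Then discretize the atom locations on a grid of mesh $\varepsilon_n^2$ in $[h_0,h]^d$, using Lipschitz continuity of the kernel in $\boldsymbol\theta$, and discretize the weights on an $\varepsilon_n^2$-net of the simplex. This gives log covering number $\lesssim N\log(1/\varepsilon_n)\asymp(\log n)^{d+1}=n\varepsilon_n^2$. Passing from $L_1$/sup closeness to Hellinger costs a square root, which is why the nets are taken at level $\varepsilon_n^2$.

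\emph{Prior mass.} Take the discrete $G'=\sum_{j\le N}w_j\delta_{\boldsymbol\theta_j}$ approximating $G^\ast$, with the atoms separated by at least $\varepsilon_n^2$, which is possible after merging. Use the standard Dirichlet lemma (\citet{Gho(01)}, Lemma~A.2): under $\mathrm{DP}(1,H)$, with $H$ having a continuous positive density on $[h_0,h]^d$, the event $\sum_j|G(U_j)-w_j|\le\varepsilon_n^2$ with $U_j$ the $\varepsilon_n^2$-balls around the atoms has probability at least $\exp\{-cN\log(1/\varepsilon_n)\}=\exp\{-c(\log n)^{d+1}\}$. On this event $\|p_G-p_{G^\ast}\|_{1}\lesssim\varepsilon_n^2$. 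To convert this into KL and second-moment bounds, we will bound the likelihood ratio $p_{G^\ast}/p_G$. Because $\boldsymbol\theta\ge h_0>0$ and $h_0\le h_0^\ast$, $h^\ast\le h$, we have $p_G(\boldsymbol y)\ge (h_0/h)^{|\boldsymbol y|}e^{-d(h-h_0)}p_{G^\ast}(\boldsymbol y)$ roughly, so $\log(p_{G^\ast}/p_G)\lesssim 1+|\boldsymbol y|$. This ratio has exponential moments under $p_{G^\ast}$, and the standard Hellinger-to-KL lemma (Ghosal--van der Vaart, Lemma~B.2/8) gives KL $\lesssim \varepsilon_n^2\log^2(1/\varepsilon_n)$. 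Absorbing the logarithm by adjusting constants or the truncation level, we obtain the required KL ball.

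For \eqref{pcr_dp_d_2}, we use convexity of $d_{\mathrm H}^2$ in its first argument together with $d_{\mathrm H}\le\sqrt2$. This gives $d_{\mathrm H}^2(\bar p_n,p_{G^\ast})\le\int d_{\mathrm H}^2(p_G,p_{G^\ast})\Pi(dG\mid\boldsymbol Y_{1:n})\le C^2\varepsilon_n^2+2e^{-c(\log n)^{d+1}}$ on the event of \eqref{pcr_dp_d_1}, and hence the claim with $C'=2C$.

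The main obstacle will be the moment-matching step in $d$ dimensions: counting the mixed monomials up to degree $\asymp\log n$ per coordinate, and controlling the Taylor remainder uniformly, so that the atom count is $(\log n)^d$ and not larger. That count is exactly what fixes the exponent $(d+1)/2$ in the rate. We will first try to handle this via the tensor structure of the kernel, reducing the problem to one-dimensional Chebyshev/Taylor bounds per coordinate.
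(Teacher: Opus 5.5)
Your architecture is sound and close to the paper's. You build a Kullback--Leibler neighbourhood by moment matching plus the Dirichlet lemma, bound the entropy by $(\log n)^{d+1}$, control the numerator, and finish with convexity for $\bar p_n$. For the numerator you use Hellinger tests from covering numbers, where the paper uses a union bound over Hellinger brackets; the two routes are equivalent. Your global bound $p_G(\boldsymbol y)/p_{G^\ast}(\boldsymbol y)\ge e^{-d(h-h_0)}(h_0/h)^{|\boldsymbol y|}$, valid for every $G\in\mathcal P([h_0,h]^d)$, actually simplifies the paper's detour through $G([b/2,h]^d)\ge\alpha_0$.

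The genuine gap is the denominator step. Chebyshev applied to the prior-averaged log-likelihood ratio, using only the $V$-part of the KL ball (Lemma~8.1 of Ghosal, Ghosh and van der Vaart), gives
\[
\P_{G^\ast}^n\Big(\int R_n(G)\,\Pi(dG)\le \Pi(B_n)\,e^{-(1+D)n\varepsilon_n^2}\Big)\lesssim\frac{1}{D^2n\varepsilon_n^2}=\frac{1}{D^2(\log n)^{d+1}}.
\]
This is $1/(n\varepsilon_n^2)$, not $1/n$. Because $n\varepsilon_n^2$ is only polylogarithmic, pushing this below $1/n$ would require $D\gtrsim\sqrt n(\log n)^{-(d+1)/2}$. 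The denominator lower bound $e^{-Dn\varepsilon_n^2}$ would then be far smaller than the numerator bound $e^{-cC^2(\log n)^{d+1}}$ that tests at radius $C\varepsilon_n$ deliver. As written, your argument therefore gives \eqref{pcr_dp_d_1} and \eqref{pcr_dp_d_2} only with probability $1-O((\log n)^{-(d+1)})$.

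The missing ingredient is an exponential (Bernstein-type) concentration inequality in place of Chebyshev. This is what the paper does, via the uniform exponential-moment lemma on $B_n$ and the denominator-event lemma. You already have what is needed:

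\begin{itemize}
\item For all $G\in\mathcal P([h_0,h]^d)$, both $p_{G^\ast}/p_G$ and $p_G/p_{G^\ast}$ are at most $e^{a_0+a_1|\boldsymbol y|}$, and $|\boldsymbol Y_1|$ has exponential moments of all orders under $p_{G^\ast}$.
\item By Jensen with the prior renormalised to $B_n$, the denominator is at least $\Pi(B_n)e^{W_n}$, where $W_n=\sum_i\xi_{n,i}$ and $\xi_{n,i}=\int_{B_n}\log\{p_G(\boldsymbol Y_i)/p_{G^\ast}(\boldsymbol Y_i)\}\,\tilde\Pi_n(dG)$.
\item The $\xi_{n,i}$ are i.i.d., with mean $\ge -L^2\varepsilon_n^2$, variance $\le L^2\varepsilon_n^2$, and uniformly bounded sub-exponential norm.
\item Bernstein then gives $\P_{G^\ast}^n\big(W_n<-(L^2+D)n\varepsilon_n^2\big)\le e^{-c_D(\log n)^{d+1}}\le n^{-2}$.
\end{itemize}

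Combined with the test errors, which are of order $e^{-cn\varepsilon_n^2}$ once converted to probabilities via Markov, this yields the required $1-1/n$.

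A smaller point: the extra $\log^2(1/\varepsilon_n)$ in the $V$ bound cannot be absorbed into constants. You must use your second option and run the approximation at a smaller level; the paper uses $L^1$-level $1/n$. This changes $N\log(1/\text{level})$ only by a constant factor, so the prior-mass exponent stays of order $(\log n)^{d+1}$.
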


%\begin{center}
%\vdots
%\end{center}
%
%\begin{remark}
%In dimension $d=1$, the condition $G^\ast([3b,h^\ast])>4\alpha_0$ follows automatically from $G^\ast\neq\delta_0$ by choosing $b>0$ small enough. In dimension $d>1$, this is no longer automatic: $G^\ast$ may concentrate on a lower-dimensional face of $[0,h^\ast]^d$---for example, on the hyperplane $\{\theta_1=0\}\cap[0,h^\ast]^d$---while still being nondegenerate in the remaining coordinates. On such a face, the lower bound
%\begin{displaymath}
%p_G(\boldsymbol y)\ge \alpha e^{-dh}\frac{b^{|\boldsymbol y|}}{\boldsymbol y!}
%\end{displaymath}
%used in Lemma~\ref{lem:Poisson_WS_d} would fail, because no mass of $G$ is guaranteed to lie in the interior $[b,h]^d$. The explicit assumption
%\begin{displaymath}
%G^\ast([3b,h^\ast]^d)>4\alpha_0
%\end{displaymath}
%rules out this boundary concentration and ensures that the ratio bound
%\begin{displaymath}
%\frac{p_{G^\ast}(\boldsymbol y)}{p_G(\boldsymbol y)}\le C_0 r^{|\boldsymbol y|}
%\end{displaymath}
%holds uniformly. Extending the result to mixing distributions $G^\ast$ supported on lower-dimensional faces would require a separate analysis tailored to the geometry of the
%corresponding face.
%\end{remark}
%
%\begin{center}
%\vdots
%\end{center}

\subsubsection{Quasi-Bayes empirical Bayes}

The quasi-Bayesian $g$-modeling strategy of \citet{Fav(24)} estimates the mixing distribution $G$ in the multidimensional Robbins' formula \eqref{eq:robbins_d_vector} via Newton's algorithm \citep{New(98)}. Specifically, let $G_0$ be a probability measure on $\boldsymbol\Theta$, and let $\boldsymbol\theta_1\sim G_0$. The $\boldsymbol{Y}_{n}$'s are then modeled as follows:
\begin{equation}\label{eq:qbayes_model_d}
\begin{array}{r@{\;}c@{\;}l@{\quad}c@{\quad}l}
Y_{n,\ell}& \mid &\boldsymbol\theta_n& \simind &\mathrm{Poisson}(\cdot\mid \theta_{n,\ell}),\qquad \ell=1,\ldots,d,\quad n\geq 1,\\[0.2cm]
\boldsymbol\theta_{n+1}& \mid &\boldsymbol Y_{1:n}& \sim &G_n,
\end{array}
\end{equation}
where $G_n$ is defined recursively through Newton's algorithm
\begin{equation}\label{eq:newton_d}
G_n(d\boldsymbol\theta)=(1-\alpha_n)G_{n-1}(d\boldsymbol\theta)+\alpha_n\frac{\prod_{\ell=1}^{d}\mathrm{Poisson}(Y_{n,\ell}\mid\theta_{\ell})G_{n-1}(d\boldsymbol\theta)
}{\int_{\boldsymbol\Theta}\prod_{\ell=1}^{d}\mathrm{Poisson}(Y_{n,\ell}\mid\theta_{\ell})G_{n-1}(d\boldsymbol\theta)},
\end{equation}
with the $\alpha_{n}$'s in $(0,1)$ such that  $\sum_{n\geq1}\alpha_{n}=+\infty$ and $\sum_{n\geq1}\alpha^{2}_{n}<+\infty$. According to \eqref{eq:newton_d}, after observing $\boldsymbol Y_{n+1}$, the model updates $G_n$ by taking a weighted average of $G_n$ and its posterior distribution based on $\boldsymbol Y_{n+1}$, with weight $\alpha_{n+1}$. A standard choice is \(\alpha_n=(\alpha+n)^{-\gamma}\), with \(\alpha>0\) and \(\gamma\in(1/2,1]\).

Under the quasi-Bayesian model \eqref{eq:qbayes_model_d}, an estimate of $G$ is given by $\hat G_{\gamma,n}^{\text{\tiny{[Q-B]}}}=G_n$. The quasi-Bayes estimate of $\boldsymbol\theta_i$, $i=1,\ldots,n$, is then obtained by plugging \(\hat G_{\gamma,n}^{\text{\tiny{[Q-B]}}}\) into the multidimensional Robbins formula \eqref{eq:robbins_d_vector}, i.e.,
\begin{equation}\label{eq:qbayes_eb_est_d}
\hat{\boldsymbol\theta}^{\text{\tiny{[Q-B]}}}_{\gamma,n}(\boldsymbol y):=\hat{\boldsymbol\theta}_{\hat G_{\gamma,n}^{\text{\tiny{[Q-B]}}}}(\boldsymbol y),\qquad\boldsymbol y\in\mathbb N_0^d.
\end{equation}
Equivalently, the $\ell$-th coordinate of \eqref{eq:qbayes_eb_est_d} is
\begin{displaymath}
\hat\theta^{\text{\tiny{[Q-B]}}}_{\gamma,n,\ell}(\boldsymbol y)=(y_\ell+1)\frac{p_{\hat G_{\gamma,n}^{\text{\tiny{[Q-B]}}}}(\boldsymbol y+\boldsymbol e_\ell)}{p_{\hat G_{\gamma,n}^{\text{\tiny{[Q-B]}}}}(\boldsymbol y)},\qquad \ell=1,\ldots,d.
\end{displaymath}
Since \(G\mapsto\hat{\boldsymbol\theta}_G(\boldsymbol y)\) is nonlinear, the estimate \eqref{eq:qbayes_eb_est_d} does not, in general, coincide with the posterior mean of \(\hat{\boldsymbol\theta}_G(\boldsymbol y)\).

A frequentist validation of the quasi-Bayes estimate \(\hat{\boldsymbol\theta}^{\text{\tiny{[Q-B]}}}_{n}\), analogue to Proposition~\ref{pcrdp_d}, follows by a straightforward adaptation to the $d$-dimensional Poisson kernel of the arguments developed in  \citet[{Theorem 4.8 and }Corollary 4.10]{MarTok(09)}.  We state the result in the form needed here, with $\Theta=[{h_0},h]^d$ and $\Theta^{\ast}=[{h_0^\ast},h^{\ast}]^d$ for some ${h_0,\;h_0^\ast,h}$ and $h^{\ast}$ such that  $0<h_0\leq h_0^*<h^\ast\leq h<+\infty$.

\begin{prp}\label{pcrnew_d}
Consider the quasi-Bayesian model \eqref{eq:qbayes_model_d}-\eqref{eq:newton_d}, where: i) $G_0$ be a probability measure supported on the set $[h_0,h]^d$, with a strictly positive density; ii) for every $n\geq 1$, the learning rate is $\alpha_n\asymp n^{-\gamma}$ with $\gamma\in(2/3,1]$. Assume that $\boldsymbol{Y}_{1:n}\simiid p_{G^\ast}$, where $G^{\ast}$  is absolutely continuous with respect to the Lebesgue measure and supported on $[h_0^\ast,h^\ast]^d$, with $0<h_0\leq h_0^\ast<h^\ast\leq h<+\infty$. If
\begin{displaymath}
\varepsilon_{\gamma,n}^{\text{\tiny{[Q-B]}}}:=\left\{
\begin{array}{ll}
\dfrac{1}{\sqrt{n^{1-\gamma}}} &\gamma \in (2/3,1)\\
\\
\dfrac{1}{\sqrt{\log n}}&\gamma=1
\end{array}
\right.
\end{displaymath}
then, for every $\delta>0$
\begin{equation}\label{pcr_newton}
\P_{G^\ast}^n\!\left[d_{\mathrm H}(p_{\hat{G}_{\gamma,n}^{\text{\tiny{[Q-B]}}}},p_{G^\ast})\ge \delta\, \varepsilon_{\gamma,n}^{\text{\tiny{[Q-B]}}}\right]\le\delta
\end{equation}
for all sufficiently large $n$.
\end{prp}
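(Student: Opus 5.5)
The plan is to follow the route the paper describes for Proposition~\ref{pcrnew_1d}: adapt to the product Poisson kernel on $[h_0,h]^d$ the Kullback--Leibler martingale analysis of \citet[Theorem 4.8 and Corollary 4.10]{MarTok(09)}. The idea is that $d$ enters only through constants in kernel bounds, not through the structure of the recursion. Write $p_n:=p_{G_n}$ and $p^\ast:=p_{G^\ast}$, and take as Lyapunov functional
\[
K_n:=\mathrm{KL}(p^\ast\,\|\,p_n)=\sum_{\boldsymbol y}p^\ast(\boldsymbol y)\log\frac{p^\ast(\boldsymbol y)}{p_n(\boldsymbol y)}.
\]
Expanding $\log p_n=\log p_{n-1}+\log(1+\alpha_n(h_{n}(\boldsymbol y)-1))$, with
\[
h_n(\boldsymbol y)=\frac{1}{p_{n-1}(\boldsymbol Y_n)}\int k(\boldsymbol y\mid\boldsymbol\theta)\,k(\boldsymbol Y_n\mid\boldsymbol\theta)\,G_{n-1}(d\boldsymbol\theta),
\]
gives the one-step decomposition
\[
K_n=K_{n-1}-\alpha_n T(G_{n-1})+\alpha_n M_n+\alpha_n^2 R_n .
\]
Here $M_n$ is a martingale difference. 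The drift $T(G)=\sum_{\boldsymbol y}p^\ast(\boldsymbol y)\,\bigl(p^\ast(\boldsymbol y)/p_G(\boldsymbol y)\bigr)-1$ is nonnegative, since $\sum p^\ast/p_G\cdot p^\ast\ge(\sum p^\ast)^2/\sum p_G=1$ by Cauchy--Schwarz. It also dominates $d_H^2(p_G,p^\ast)$ up to constants, because $\chi^2$ dominates $\mathrm{KL}$, which dominates $d_H^2$.

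The key technical input is a uniform ratio bound for all $G\in\mathcal P([h_0,h]^d)$:
\[
\frac{k(\boldsymbol y\mid\boldsymbol\theta)}{p_G(\boldsymbol y)}\le C\,(h/h_0)^{|\boldsymbol y|},\qquad \boldsymbol\theta\in[h_0,h]^d .
\]
This holds because each coordinate kernel satisfies $e^{-\theta}\theta^{y}\le e^{h-h_0}(h/h_0)^{y}\,e^{-\theta'}\theta'^{y}$ for $\theta,\theta'\in[h_0,h]$; taking the product gives the factor $e^{d(h-h_0)}(h/h_0)^{|\boldsymbol y|}$. Since $p^\ast$ has Poisson tails with means at most $h^\ast$, every moment $\sum_{\boldsymbol y} p^\ast(\boldsymbol y)\,(h/h_0)^{m|\boldsymbol y|}$ is finite. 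So $h_n$, $R_n$ and the conditional variance of $M_n$ are bounded uniformly in $n$ and in $G_{n-1}$, which is exactly the input that replaces the one-dimensional kernel bounds in \citet{MarTok(09)}.

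Summing the decomposition, $\sum_n\alpha_n^2<\infty$ controls the remainder, and the martingale $\sum\alpha_n M_n$ converges in $L^2$. Taking expectations yields $\sum_n\alpha_n\E[T(G_{n-1})]<\infty$. To turn this summability into a rate at the last iterate, use the absolute continuity of $G^\ast$ and the positive density of $G_0$ as in Corollary 4.10: they give a bound $K_0<\infty$ and ensure that $p^\ast$ lies in the closure of the mixture class, so the drift term has no spurious minimizer. Next, bound the increments $|K_n-K_{n-1}|\lesssim\alpha_n$ in $L^1$. Combined with the summability this gives $\E[d_H^2(p_n,p^\ast)]=o\bigl(1/\sum_{k\le n}\alpha_k\bigr)$, and since $\sum_{k\le n}\alpha_k\asymp n^{1-\gamma}$ for $\gamma<1$ and $\asymp\log n$ for $\gamma=1$, this is $o\bigl((\varepsilon^{\text{\tiny{[Q-B]}}}_{\gamma,n})^2\bigr)$. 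Markov's inequality then gives \eqref{pcr_newton} for every $\delta>0$.

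The main obstacle is this last step, passing from summability of $\alpha_nT(G_{n-1})$ to an $o(\cdot)$ bound for the final iterate. It needs near-monotonicity of $\E K_n$ up to $O(\alpha_n^2)$ terms. I would first try $\E K_n\le\E K_m+C\sum_{k>m}\alpha_k^2$ together with an averaging over $k\in(m,n]$, and check that the restriction $\gamma>2/3$ makes $\sum_{k>m}\alpha_k^2$ negligible relative to the target rate. That restriction is inherited from \citet{MarTok(09)}, and I expect it to carry over to the $d$-dimensional product kernel without change.
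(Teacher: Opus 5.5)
Your route is the paper's: its proof only checks the hypotheses of \citet[Theorem 4.8 and Corollary 4.10]{MarTok(09)} and invokes those results. Your product bound $k(\boldsymbol y\mid\boldsymbol\theta)/p_G(\boldsymbol y)\le e^{d(h-h_0)}(h/h_0)^{|\boldsymbol y|}$ is exactly the dimension-dependent input they need.

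Your reconstruction breaks at the drift. For $K_n=\mathrm{KL}(p^\ast\|p_n)$, the conditional mean of the first-order term is
\[
T(G)=\int\Big(\sum_{\boldsymbol y}\frac{p^\ast(\boldsymbol y)}{p_G(\boldsymbol y)}\,k(\boldsymbol y\mid\boldsymbol\theta)\Big)^2G(d\boldsymbol\theta)-1=\mathrm{Var}_G(g_G),\qquad g_G(\boldsymbol\theta):=\sum_{\boldsymbol y}\frac{p^\ast(\boldsymbol y)}{p_G(\boldsymbol y)}\,k(\boldsymbol y\mid\boldsymbol\theta),
\]
not $\chi^2(p^\ast\|p_G)$. (Your $h_n$ also needs to be divided by $p_{n-1}(\boldsymbol y)$ for the expansion to hold.) This $T$ is nonnegative, but it vanishes at every point mass $G=\delta_{\boldsymbol\theta_0}$, whatever $p^\ast$ is. So no bound $T(G)\gtrsim\mathrm{KL}(p^\ast\|p_G)$ can hold, and $\sum_n\alpha_n\E[T(G_{n-1})]<\infty$ says nothing about $\sum_n\alpha_n\E[K_{n-1}]$.

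The $\chi^2$ drift you wrote belongs to a different functional, $\kappa_n:=\mathrm{KL}(G^\ast\|G_n)$. Since $dG_n/dG_{n-1}(\boldsymbol\theta)=1+\alpha_n\{k(\boldsymbol Y_n\mid\boldsymbol\theta)/p_{n-1}(\boldsymbol Y_n)-1\}$, its first-order conditional drift is $-\alpha_n\{\sum_{\boldsymbol y}p^\ast(\boldsymbol y)^2/p_{n-1}(\boldsymbol y)-1\}$. This second Lyapunov function is the missing idea. It is also where absolute continuity of $G^\ast$ is genuinely used: you need $\kappa_0=\mathrm{KL}(G^\ast\|G_0)<\infty$, whereas $K_0<\infty$ already follows from the ratio bound.

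With both functionals, the argument closes essentially as in your last paragraph.
\begin{itemize}
\item The $\kappa_n$ recursion, with remainder controlled by the ratio bound, gives $\sum_n\alpha_n\E[K_{n-1}]\le\sum_n\alpha_n\E[\chi^2(p^\ast\|p_{n-1})]<\infty$.
\item The $K_n$ recursion, whose true drift has the right sign, gives $\E K_n\le\E K_k+C\sum_{k<j\le n}\alpha_j^2$ for $k\le n$.
\end{itemize}
Multiplying the second bound by $\alpha_k$ and summing over $k\in(m,n]$ gives
\[
\E K_n\le\frac{\sum_{k>m}\alpha_k\E K_k}{A_n-A_m}+C\sum_{j>m}\alpha_j^2,\qquad A_n:=\sum_{k\le n}\alpha_k .
\]
For $\gamma<1$ take $m=\lfloor n/2\rfloor$; the last term is then of order $n^{1-2\gamma}=o(n^{\gamma-1})$ precisely because $\gamma>2/3$. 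For $\gamma=1$ take $m=\lfloor\sqrt n\rfloor$, since $A_n-A_{\lfloor n/2\rfloor}$ stays bounded and $m=\lfloor n/2\rfloor$ does not work. This yields $\E K_n=o(1/A_n)$, and Markov's inequality together with $d_H^2\le\mathrm{KL}$ concludes.

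The step you proposed first, summability plus $|K_n-K_{n-1}|\lesssim\alpha_n$ in $L^1$, cannot replace this. A nonnegative sequence can carry spikes of height $1/A_{n_j}$ at sparse times $n_j$ while keeping $\alpha_n$-Lipschitz increments and $\sum_n\alpha_nx_n<\infty$. Near-monotonicity is indispensable.

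Once repaired, your proof is an expectation-based version of the almost-sure argument of \citet{MarTok(09)} that the paper cites. That is enough for the in-probability statement of the proposition.
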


The proof is in Section \ref{proof:pcrnew_d}. The interpretation of the rate in Proposition~\ref{pcrnew_d} is the same as in the $1$-dimensional case discussed after Proposition~\ref{pcrnew_1d}. The rate is driven by the decay of the learning rate $\alpha_n\asymp n^{-\gamma}$, and larger values of $\gamma$ lead to slower concentration. For fixed dimension $d$, this rate is slower than the Bayesian rate in Proposition~\ref{pcrdp_d}, since $\varepsilon_{n,d}^{\text{\tiny{[B]}}} = o(\varepsilon_{\gamma,n}^{\text{\tiny{[Q-B]}}})$ for every $\gamma\in(2/3,1]$.

\subsubsection{Merging of Bayes and quasi-Bayes empirical Bayes}\label{sec:merging_d}

We now turn Propositions~\ref{pcrdp_d} and~\ref{pcrnew_d} into regret bounds for the plug-in Bayes and quasi-Bayes empirical Bayes estimates. Since Robbins' formula depends on the mixing distribution $G$ only through the marginal probability mass function $p_{G}$, the Hellinger concentration rates obtained above can be used to control the regret incurred by replacing the oracle prior $G^\ast$ with an estimated mixing distribution. The next lemma provides this control for the Bayesian and quasi-Bayesian estimates.

\begin{lem}\label{lemma:Janad}
Let $\boldsymbol Y_{1:n}\simiid p_{G^\ast}$, where $G^\ast$ is absolutely continuous with respect to the Lebesgue measure and supported on $[h_0^\ast,h^\ast]^d$, with $0<h_0^\ast<h^\ast<+\infty$.  
\begin{itemize}
    \item[a)]  Consider the Bayesian model \eqref{eq:bayes_model_d}, where  $G\sim\operatorname{DP}(1,H)$ and $H$ is a probability measure on the set $[h_{0},h]^d$, for $0<h_{0}\leq h_{0}^{\ast}<h^{\ast}\leq h<+\infty$, with a continuous and strictly positive density. Then 
    %there exist constants $C_1$ and $C_2$, depending only on $h$ and $h^\ast$, such that for sufficiently large $n$
\begin{align*}
\mathrm{Regret}(\hat G_{n}^{\text{\tiny{[B]}}},G^*)=O_{\mathbb P^\ast}\left(\frac{(\log n)^{d+2}}{n\log\log n}\right)
%  &\leq C_1\frac{\log n}{\log\log n} d_H^2(p_{\hat G_n^{\tiny{[B]}}},p_{G^*})+C_2\frac{2}{n^5}.
\end{align*}
\item[b)] Consider the quasi-Bayesian model \eqref{eq:qbayes_model_d}-\eqref{eq:newton_d}, where: i)  $G_0$ is a probability measure on the set $[h_0,h]^d$, for  $0<h_0\leq h_0^\ast< h^\ast\leq h<+\infty$, with a strictly positive density; ii) for every $n\geq1$, the learning rate is $\alpha_n\asymp n^{-\gamma}$ for some $\gamma\in (2/3,1]$. Then,
\begin{align}\label{eq:regret_rate_qb_d}
\mathrm{Regret}(\hat G_{\gamma,n}^{\text{\tiny{[Q-B]}}},G^*)  =
\left\{
\begin{array}{ll}
o_{\,\mathbb{P}^{\ast}}\left(\dfrac{\log n}{n^{1-\gamma}\log\log n}\right)&\gamma <1\\
\\
o_{\,\mathbb{P}^{\ast}}\left(\dfrac{1}{\log\log n}\right)&\gamma =1.
\end{array}
\right.
%&\leq C_1\frac{\log n}{\log\log n} d_H^2(p_{\hat G_n^{\tiny{[B]}}},p_{G^*})+C_2\frac{2}{n^5}.
\end{align}
\end{itemize}
\end{lem}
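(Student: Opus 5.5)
The plan mirrors the proof of Lemma~\ref{lemma:Jana}. First I will establish a deterministic regret--Hellinger inequality in dimension $d$, and then combine it with Propositions~\ref{pcrdp_d} and~\ref{pcrnew_d}.

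\textbf{Step 1: coordinate-wise regret inequality.} Since $\|\cdot\|_2^2$ splits into coordinates, we have
\begin{displaymath}
\operatorname{Regret}(\hat G,G^\ast)=\sum_{\ell=1}^d\sum_{\boldsymbol y}(\hat\theta_{\hat G,\ell}(\boldsymbol y)-\hat\theta^\ast_\ell(\boldsymbol y))^2p_{G^\ast}(\boldsymbol y).
\end{displaymath}
For each $\ell$ I will adapt the argument behind \citet[Lemma 4]{Jan(24)}. The key facts it uses are:
\begin{itemize}
\item $\hat\theta_{G,\ell}\in[0,h]$ for $G$ supported in $[0,h]^d$, since it is a posterior mean of $\theta_\ell$.
\item Robbins' ratio structure along the direction $\boldsymbol e_\ell$.
\item The bound $(y_\ell+1)p_G(\boldsymbol y+\boldsymbol e_\ell)\le h\,p_G(\boldsymbol y)$-type controls.
\end{itemize}
These transfer verbatim because the kernel factorizes and the identity $(y_\ell+1)\mathrm{Poisson}(y_\ell+1\mid\theta_\ell)=\theta_\ell\,\mathrm{Poisson}(y_\ell\mid\theta_\ell)$ only touches coordinate $\ell$. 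The expected outcome, for any $K\ge1$, is
\begin{displaymath}
\operatorname{Regret}(\hat G,G^\ast)\le d\left\{12(h^2+{h^\ast}^2)+48(h+h^\ast)K\right\}d_H^2(p_{G^\ast},p_{\hat G})+2d(h+h^\ast)^2\sum_{\boldsymbol y:\max_\ell y_\ell>K}p_{G^\ast}(\boldsymbol y).
\end{displaymath}
In the one-dimensional proof the truncation is on $\{y\le K\}$. Here I truncate on the box $\{\max_\ell y_\ell\le K\}$, and the Hellinger sums over the box are bounded by the full Hellinger distance.

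\textbf{Step 2: tail bound.} A union bound gives
\begin{displaymath}
\sum_{\max_\ell y_\ell>K}p_{G^\ast}(\boldsymbol y)\le\sum_{\ell}\P(Y_\ell>K).
\end{displaymath}
Each marginal $Y_\ell$ is a one-dimensional Poisson mixture with mixing law supported in $[h_0^\ast,h^\ast]$. So \citet[Lemma 11]{Jan(24)}, with the same $K\asymp\log n/\log\log n$, bounds each term by $2/n^5$. This yields
\begin{displaymath}
\operatorname{Regret}(\hat G_n,G^\ast)\le C_1\frac{\log n}{\log\log n}\,d_H^2(p_{\hat G_n},p_{G^\ast})+\frac{C_2}{n^5},
\end{displaymath}
with $C_1,C_2$ depending on $d,h,h^\ast$.

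\textbf{Step 3: plug in the rates.} For (a), I use $p_{\hat G_n^{\text{\tiny{[B]}}}}=\bar p_n$ (by linearity of $G\mapsto p_G$) together with \eqref{pcr_dp_d_2}. This gives $d_H^2=O_{\P^\ast}((\log n)^{d+1}/n)$, so the regret is $O_{\P^\ast}((\log n)^{d+2}/(n\log\log n))$. The remainder $n^{-5}$ is negligible.

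For (b), Proposition~\ref{pcrnew_d} holds for every $\delta>0$, which says exactly that $d_H(p_{\hat G_{\gamma,n}^{\text{\tiny{[Q-B]}}}},p_{G^\ast})=o_{\P^\ast}(\varepsilon_{\gamma,n}^{\text{\tiny{[Q-B]}}})$. Squaring and multiplying by $\log n/\log\log n$ gives the claimed rates. In the case $\gamma=1$, the factor $\log n$ cancels against $1/\log n$.

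The main obstacle is Step~1: checking that the constants in Jana et al.'s Lemma~4 do not pick up hidden dimension dependence beyond the factor $d$. In particular, their argument bounds $\hat\theta_{\hat G}(y)$ and relates $p(\boldsymbol y+\boldsymbol e_\ell)$ to $p(\boldsymbol y)$ via Hellinger over neighboring cells. I would redo their Cauchy--Schwarz step on pairs $(\boldsymbol y,\boldsymbol y+\boldsymbol e_\ell)$ inside the box. Each point belongs to at most two such pairs per direction, so the summation only inflates constants.
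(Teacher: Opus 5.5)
Your proposal is correct and follows the paper's proof: a $d$-dimensional version of Jana et al.'s Lemma~4 (the paper's Lemma~\ref{lem:regret-density-d}, proved coordinatewise using the ingredients you list: $\hat\theta_{G,\ell}\in[0,h]$, the ratio structure along $\boldsymbol e_\ell$, and $(y_\ell+1)p_G(\boldsymbol y+\boldsymbol e_\ell)\le h\,p_G(\boldsymbol y)$), then the tail bound from their Lemma~11, then Propositions~\ref{pcrdp_d} and~\ref{pcrnew_d}. The differences are cosmetic. The paper truncates each coordinate term on $\{y_\ell<K\}$ rather than on the box, and it bounds the two Hellinger sums (at $\boldsymbol y$ and at $\boldsymbol y+\boldsymbol e_\ell$) separately by $d_H^2$, so no pair-counting is needed; your explicit reduction of the tail to the one-dimensional marginals is a slightly cleaner justification of the Lemma~11 step than the paper gives.
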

\begin{proof} By the multivariate extension of \citet[Lemma 4]{Jan(24)} in Lemma \ref{lem:regret-density-d}, for any $\hat G$ satisfying   $\hat G[0,h]^d=1$ with $0<h^\ast<h<+\infty$ and for any $K\geq 1$,
\[
\operatorname{Regret}(\widehat G,G^\ast)
\leq
d\Big\{
6(h^2+(h^\ast)^2)+24(h+h^\ast)K
\Big\}
d_H^2(p_{\widehat G},p_{G^\ast})
+
(h+h^\ast)^2
\sum_{\ell=1}^d
\sum_{\boldsymbol y\in\mathbb N_0^d:\,y_\ell\geq K}
p_{G^\ast}(\boldsymbol y).
\]
Furthermore, by \citet[Lemma 11]{Jan(24)}, if 
$$
K=\min\left\{\left\lceil \frac{5(he^2+2)\log n}{\log\log n}\right\rceil,he^2+5\log n\right\},
$$
then, for any $n\geq 3$,  $\sum_{\boldsymbol y\in\mathbb N_0^d:\,y_\ell\geq K}
p_{G^\ast}(\boldsymbol y)\leq \frac{2}{n^5}$. Hence, for any estimate $\hat G_n$ of the mixing distribution $G$,  there exist constants $C_1$ and $C_2$, depending only on$d$,  $h$ and $h^\ast$, such that for sufficiently large $n$ it holds
\begin{align*}
  \text{Regret}(\hat G_n,G^\ast)&\leq C_1\frac{\log n}{\log\log n} d_H^2(p_{\hat G_n},p_{G^*})+C_2\frac{2}{n^5}.
\end{align*}
Claim (a) follows from the above inequality applied to the estimate
$\hat G_n^{\text{\tiny{[B]}}}$, together with Proposition
\ref{pcrdp_d}. Claim (b) follows analogously, by applying the same
inequality to the estimate $\hat G_{\gamma,n}^{\text{\tiny{[Q-B]}}}$ and using
Proposition \ref{pcrnew_d}.
\end{proof}

The next theorem establishes the frequentist merging of the Bayesian and quasi-Bayesian empirical Bayes strategies. In particular, under the ``true'' mixing distribution $G^{\ast}$ on $\boldsymbol{\Theta}^{\ast}\subseteq\mathbb R_+^d$, we define the regret incurred by using the quasi-Bayes estimate $\hat{\boldsymbol\theta}^{\text{\tiny{[Q-B]}}}_{\gamma,n}$ in place of  the Bayes estimate $\hat{\boldsymbol\theta}^{\text{\tiny{[B]}}}_{n}$ as
\begin{equation}\label{merging_regret_d}
\text{Regret}(\hat G_{\gamma,n}^{\text{\tiny{[Q-B]}}},\hat G_{n}^{\text{\tiny{[B]}}};G^{\ast})=\sum_{\boldsymbol y\in\mathbb N_0^d} ||\hat{\boldsymbol \theta}_{n}^{\text{\tiny{[B]}}}(\boldsymbol y)-\hat{\boldsymbol \theta}_{\gamma,n}^{\text{\tiny{[Q-B]}}}(\boldsymbol y)||^2p_{G^*}(\boldsymbol y),
\end{equation}
and show that, as $n\rightarrow+\infty$, it vanishes at the same rate as that of $\text{Regret}(\hat G_{\gamma,n}^{\text{\tiny{[Q-B]}}},G^{\ast})$, displayed in \eqref{eq:regret_rate_qb_d}.

\begin{thm}\label{merge_d}
Let $\boldsymbol Y_{1:n}\simiid p_{G^\ast}$, where $G^\ast$ is absolutely continuous with respect to the Lebesgue measure and on $[h_0^\ast,h^\ast]^d$, with  $0<h_0^\ast<h^\ast<+\infty$. Consider the Bayesian model \eqref{eq:bayes_model_d}, where  $G\sim\operatorname{DP}(1,H)$ and $H$ is a probability measure on the set $[h_{0},h]^d$, for $0<h_{0}\leq h_{0}^{\ast}<h^{\ast}\leq h<+\infty$, with a continuous and strictly positive density. Further, consider the quasi-Bayesian model \eqref{eq:qbayes_model_d}-\eqref{eq:newton_d}, where: i)  $G_0$ is a probability measure supported on the set $[h_0,h]^d$, for  $0<h_0\leq h_0^\ast< h^\ast\leq h<+\infty$, with a strictly positive density; ii) for every $n\geq1$, the learning rate is $\alpha_n\asymp n^{-\gamma}$ for some $\gamma\in (2/3,1]$. 
%Let $G\sim \Pi= \operatorname{DP}(1,\alpha)$, where $\alpha$ is a probability measure on $[0,h]^d$ with a continuous density bounded away from zero and infinity and $h^\ast<h<+\infty$, and  let $\hat G_n^{\text{\tiny{[B]}}}(\cdot)=\int_{\mathcal P(\Theta)}G(\cdot)\,\Pi(dG\mid Y_{1:n})$. Let $G_0$ be a probability measure on $[h_0,h]^d$ with strictly positive density and $0<h_0\leq h_0^\ast$, and for every $n\geq 1$, let $\hat{G}_{\gamma,n}^{\text{\tiny{[Q-B]}}}=G_n$, with $(G_n)$ defined recursively as \eqref{eq:newton} with $\alpha_n\asymp n^{-\gamma}$ for some $\gamma\in (2/3,1]$. 
\begin{itemize}
    \item[a)] The estimates  $p_{n}^{\text{\tiny{[B]}}}$ and $p_{n}^{\text{\tiny{[Q-B]}}}$ of $p_{G^\ast}$ merge in Hellinger distance, as $n\rightarrow\infty$, and
    \begin{displaymath}
d_H(p_{n}^{\text{\tiny{[B]}}},p_{\gamma,n}^{\text{\tiny{[Q-B]}}})=\left\{
\begin{array}{ll}
o_{\,\mathbb{P}^{\ast}}\left(\dfrac {1}{\sqrt{n^{1-\gamma}}}\right)&\gamma<1\\
\\
o_{\,\mathbb{P}^{\ast}}\left(\dfrac{1}{\sqrt{\log n}}\right)&\gamma=1,
\end{array}\right.
\end{displaymath}
\item[b)] The plug-in estimates $\hat{\boldsymbol \theta}_{n}^{\text{\tiny{[B]}}}$ and $\hat{\boldsymbol\theta}_{\gamma,n}^{\text{\tiny{[Q-B]}}}$ merge in $L^2(p_{G^*})$, as $n\rightarrow\infty$ in $L^2(p_{G^*})$, and
\begin{displaymath}
\mathrm{Regret}(\hat G_{\gamma,n}^{\text{\tiny{[Q-B]}}},\hat G_{n}^{\text{\tiny{[B]}}};G^{\ast})
=
\left\{
\begin{array}{ll}
o_{\,\mathbb{P}^{\ast}}\left(\dfrac{\log n}{n^{1-\gamma}\log\log n}\right)&\gamma <1\\
\\
o_{\,\mathbb{P}^{\ast}}\left(\dfrac{1}{\log\log n}\right)&\gamma =1.
\end{array}
\right.
\end{displaymath}
\end{itemize}
\end{thm}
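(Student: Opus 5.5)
The argument will mirror the proof of Theorem~\ref{merge_1d}, with Propositions~\ref{pcrdp_d} and~\ref{pcrnew_d} and Lemma~\ref{lemma:Janad} taking the place of their one-dimensional counterparts. Throughout, $p_n^{\text{\tiny{[B]}}}=p_{\hat G_n^{\text{\tiny{[B]}}}}$ and $p_{\gamma,n}^{\text{\tiny{[Q-B]}}}=p_{\hat G_{\gamma,n}^{\text{\tiny{[Q-B]}}}}$, and by linearity of $G\mapsto p_G$ we have $p_n^{\text{\tiny{[B]}}}=\bar p_n$.

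\emph{Part a).} Start from the triangle inequality for the Hellinger distance:
\[
d_H(p_n^{\text{\tiny{[B]}}},p_{\gamma,n}^{\text{\tiny{[Q-B]}}})\le d_H(\bar p_n,p_{G^\ast})+d_H(p_{\gamma,n}^{\text{\tiny{[Q-B]}}},p_{G^\ast}).
\]
For the first term, \eqref{pcr_dp_d_2} gives $d_H(\bar p_n,p_{G^\ast})=O_{\PP^\ast}((\log n)^{(d+1)/2}n^{-1/2})$. This rate is $o(\varepsilon_{\gamma,n}^{\text{\tiny{[Q-B]}}})$ for every $\gamma\in(2/3,1]$: if $\gamma<1$, then $(\log n)^{(d+1)/2}n^{-1/2}\,n^{(1-\gamma)/2}=(\log n)^{(d+1)/2}n^{-\gamma/2}\to0$; if $\gamma=1$, then $(\log n)^{(d+2)/2}n^{-1/2}\to0$. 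So the first term is $o_{\PP^\ast}(\varepsilon_{\gamma,n}^{\text{\tiny{[Q-B]}}})$. For the second term, Proposition~\ref{pcrnew_d} gives exactly the statement $d_H(p_{\gamma,n}^{\text{\tiny{[Q-B]}}},p_{G^\ast})=o_{\PP^\ast}(\varepsilon_{\gamma,n}^{\text{\tiny{[Q-B]}}})$. Indeed, the bound $\PP^\ast[d_H\ge\delta\varepsilon]\le\delta$ for all large $n$ and every $\delta>0$ means that for any fixed $\eta>0$ and any $\delta\le\eta$,
\[
\PP^\ast[d_H\ge\eta\varepsilon]\le\PP^\ast[d_H\ge\delta\varepsilon]\le\delta
\]
eventually, so $\PP^\ast[d_H\ge\eta\varepsilon]\to0$. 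Adding the two terms proves a).

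\emph{Part b).} Use the pointwise vector inequality $\|a-b\|^2\le2\|a-c\|^2+2\|c-b\|^2$ with $c=\hat{\boldsymbol\theta}^\ast(\boldsymbol y)$, then sum against $p_{G^\ast}$:
\[
\mathrm{Regret}(\hat G_{\gamma,n}^{\text{\tiny{[Q-B]}}},\hat G_n^{\text{\tiny{[B]}}};G^\ast)\le2\,\mathrm{Regret}(\hat G_n^{\text{\tiny{[B]}}},G^\ast)+2\,\mathrm{Regret}(\hat G_{\gamma,n}^{\text{\tiny{[Q-B]}}},G^\ast).
\]
The second term is handled directly by Lemma~\ref{lemma:Janad}(b). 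For the first term, Lemma~\ref{lemma:Janad}(a) gives $O_{\PP^\ast}\big((\log n)^{d+2}/(n\log\log n)\big)$. It remains to check that this is $o$ of the target rate. If $\gamma<1$, the ratio to $\log n/(n^{1-\gamma}\log\log n)$ is $(\log n)^{d+1}n^{-\gamma}\to0$. If $\gamma=1$, the ratio to $1/\log\log n$ is $(\log n)^{d+2}/n\to0$. Since a product $O_{\PP^\ast}(a_n)$ with $a_n=o(b_n)$ is $o_{\PP^\ast}(b_n)$, the first term is also of the required order, and summing the two terms gives b).

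\emph{Main obstacle.} Once the earlier results are granted, the only delicate point is the bookkeeping between the $O_{\PP^\ast}$ and $o_{\PP^\ast}$ notions. Specifically, one must read the ``for every $\delta$'' form of Proposition~\ref{pcrnew_d} as genuine $o_{\PP^\ast}$ convergence, and then check that the dimension-dependent logarithmic factors in the Bayesian rate $(\log n)^{(d+1)/2}$ are still dominated by the polynomial or $\sqrt{\log n}$ gaps. For fixed $d$ both verifications are elementary.
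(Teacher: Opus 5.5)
Your proposal is correct and follows the paper's own proof essentially step for step. Part a) uses the triangle inequality together with Propositions~\ref{pcrdp_d} and~\ref{pcrnew_d}, and part b) uses the bound $\mathrm{Regret}(\hat G_{\gamma,n}^{\text{\tiny{[Q-B]}}},\hat G_{n}^{\text{\tiny{[B]}}};G^{\ast})\le 2\,\mathrm{Regret}(\hat G_{n}^{\text{\tiny{[B]}}},G^\ast)+2\,\mathrm{Regret}(\hat G_{\gamma,n}^{\text{\tiny{[Q-B]}}},G^\ast)$ combined with Lemma~\ref{lemma:Janad}. The paper states without detail that the Bayesian rates are negligible and that Proposition~\ref{pcrnew_d} amounts to $o_{\PP^\ast}$ convergence; your explicit rate comparisons and your reading of Proposition~\ref{pcrnew_d} make both points precise.
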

\begin{proof} 
With regards to a), by Proposition \ref{pcrdp_d}, for every $\gamma\in (2/3,1]$, $d_H(p_n^{\text{\tiny{[B]}}},p_{G^\ast})=o_{\PP^\ast}(\varepsilon_{\gamma,n}^{\text{\tiny{[Q-B]}}})$  with $\varepsilon_{\gamma,n}^{\text{\tiny{[Q-B]}}}$ as in \eqref{rate_new}. The claim follows by Proposition \ref{pcrnew_d} and the triangular inequality. With regards to b),
%By Lemma 11 in \cite{Jan(24)}, if $$K=\min\left\{\left\lceil \frac{5(he^2+2)\log n}{\log\log n}\right\rceil,he^2+5\log n\right\},$$then, for any $n\geq 3$,  $\sum_{y>K} p_{G^\ast}(y)\leq \frac{2}{n^5}$.Applying Lemma \ref{lemma:Jana} to $\hat G_n^{\text{\tiny{[B]}}}$ and to $\hat G_n^{\text{\tiny{[Q-B]}}}$, we can ensure the existence of a constants $C_1$ and $C_2$, depending only on $h$ and $h^\ast$, such that for sufficiently large $n$\begin{align*}    \sum_{y\in\mathbb{N}_{0}}(\hat\theta_{n}^{\text{\tiny{[B]}}}(y)-\hat\theta^{*}(y))^2p_{G^*}(y)&\leq C_1\frac{\log n}{\log\log n} d_H^2(p_{\hat G_n^{\tiny{[B]}}},p_{G^*})+C_2\frac{2}{n^5}\end{align*} and \begin{align*}     \sum_{y\in\mathbb{N}_{0}}(\hat\theta_{n}^{\text{\tiny{[Q-B]}}}(y)-\hat\theta^{*}(y))^2p_{G^*}(y)&\leq C_1\frac{\log n}{\log\log n} d_H^2(p_{\hat G_n^{\tiny{[B]}}},p_{G^*})+C_2\frac{2}{n^5}. \end{align*} Thus,
 \begin{align*}
     \text{Regret}(\hat G_{\gamma,n}^{\text{\tiny{[Q-B]}}},\hat G_{n}^{\text{\tiny{[B]}}};G^{\ast})
      &\leq 2\text{Regret}(\hat G_{n}^{\text{\tiny{[B]}}},G^\ast)+2\;\text{Regret}(\hat G_{n}^{\text{\tiny{[Q-B]}}},G^\ast)\\
      %&\quad\quad  \leq 2 C_1\frac{\log n}{\log\log n}      \left\{ d_H^2(p_{\hat G_n^{\tiny{[B]}}},p_{G^*})+d_H^2(p_{\hat G_{\gamma,n}^{\tiny{[Q-B]}}},p_{G^*})\right\}+2C_2\frac{2}{n^5}\\
  &=
      \left\{
\begin{array}{ll}
o_{\,\mathbb{P}^{\ast}}\left(\dfrac{\log n}{n^{1-\gamma}\log\log n}\right)&\gamma <1\\
\\
o_{\,\mathbb{P}^{\ast}}\left(\dfrac{1}{\log\log n}\right)&\gamma =1,
\end{array}
\right.
\end{align*}
where the last equality comes from Lemma \ref{lemma:Janad}.
\end{proof}

%%%%%%%%%%%%%%%%%%%%%%%%%%%%%%%%
%%%%%%%%%%%%%%%%%%%%%%%%%%%%%%%%
%%%%%%%%%%%%%%%%%%%%%%%%%%%%%%%%
%%%%%%%%%%%%%%%%%%%%%%%%%%%%%%%%

\section{Synthetic-data illustrations}\label{sec3}

\subsection{The $1$-dimensional setting}

For $n\in\{50,\,100,\,200,\,400,\,1,000,\,2,000,\,4,000,\,8,000\}$, generate i.i.d. data $Y_{1:n}=(Y_{1},\ldots,Y_{n})$ from a Poisson mixture model with Weibull prior $G$ of scale parameter $5$ and shape parameter $3$. We compare quasi-Bayes estimate $\hat{\theta}^{\text{\tiny{[Q-B]}}}_{n}$ and the Bayes estimate $\hat{\theta}^{\text{\tiny{[B]}}}_{n}$ with the corresponding oracle Bayes estimate $\hat{\theta}^{\ast}$. The oracle $\hat{\theta}^{\ast}$ is obtained from \eqref{eq:oracle} with $G^{\ast}$ being the Weibull prior distribution that generates the $\theta_{i}$'s, and evaluating the marginal likelihood $p_{G^\ast}$ numerically through the trapezoidal rule.

As a measure of accuracy of the plug-in empirical Bayes estimates, we consider the empirical mean squared error (E-mse). For $n\in\mathbb{N}$, let $(\theta_1,\ldots,\theta_n)$ be the values generated from the Weibull prior distribution, and let $\hat{\theta}_{n}$ be the plug-in empirical Bayes estimate, as defined in \eqref{eq:estimate}. The E-mse is defined as
\begin{displaymath}
\mathrm{E\text{-}mse}(\hat{G}_{n})=\frac{1}{n}\sum_{i=1}^n \bigl(\hat\theta_{n}(y_i)-\theta_i\bigr)^2.
\end{displaymath}
For the oracle Bayes estimate $\hat{\theta}^{\ast}$, the E-mse is referred to as the empirical minimum mean squared error (E-mmse), i.e.
\begin{displaymath}
\mathrm{E\text{-}mmse}=\frac{1}{n}\sum_{i=1}^n \bigl(\hat\theta^{\ast}(y_i)-\theta_i\bigr)^2.
\end{displaymath}
Finally, we define the empirical regret (E-regret) as $\mathrm{E\text{-}regret}(\hat{G}_{n})=\mathrm{E\text{-}mse}(\hat{G}_{n})-\mathrm{E\text{-}mmse}$, namely the excess empirical squared error of $\hat{\theta}_{n}$ with respect to the benchmark $\hat{\theta}^{\ast}$; see \citet{Efr(14),Efr(19)} for details.

With regards to the quasi-Bayes estimate $\hat{\theta}^{\text{\tiny{[Q-B]}}}_{n}$,  Newton's algorithm \eqref{eq:newton} requires the numerical evaluation of an integral, which we approximate by the trapezoidal rule. To perform this evaluation, the density function of the mixing distribution $G_{n}$ is represented through its values on a fixed uniform grid of $d\in\{5,000,\,1,000,\,500,\,100,\,50,\,10\}$ quadrature points over $\Theta=(0,U_{\Theta})$, where $U_\Theta=\max\{\max\{Y_{1:n}\},\lceil Q_{n,0.99}+4\sqrt{\max\{Q_{n,0.99},1\}}\rceil\}$, with $Q_{n,0.99}=\text{Quantile}(Y_{1:n};0.99)$. This representation is used only for numerical evaluation and imposes no modeling restriction on $\Theta$. Further, we set the initial guess $G_{0}$ to be Uniform over $\Theta$, and take the learning rate to be $\alpha_{n}=(1+n)^{-0.99}$. Table \ref{weib_tab_sens} reports the $\mathrm{E\text{-}mse}(\hat{G}_{\gamma,n}^{\text{\tiny{[Q-B]}}})$ and $\mathrm{E\text{-}regret}(\hat{G}_{\gamma,n}^{\text{\tiny{[Q-B]}}})$ as the sample size $n$ and the grid resolution $d$ vary.

\begin{table}[ht]
\centering
\caption{\footnotesize{Weibull prior: $\mathrm{E\text{-}mse}(\hat{G}_{\gamma,n}^{\text{\tiny{[Q-B]}}})$ and $\mathrm{E\text{-}regret}(G_{\gamma,n}^{\text{\tiny{[Q-B]}}})$ as $n$ and $d$ vary.}}
{
\setlength{\tabcolsep}{0pt}
\begin{tabular}{@{}l@{\hspace{0.5cm}}*{6}{>{\centering\arraybackslash}p{2.1cm}}@{}}
\hline
\hline
 & $d=5{,}000$ & $d=1{,}000$ & $d=500$ & $d=100$ & $d=50$ & $d=10$ \\[0.1cm]
\hline
\multicolumn{7}{@{}l}{\underline{$n=50$}} \\[0.05cm]
$\mathrm{E\text{-}mse}(\hat{G}_{\gamma,n}^{\text{\tiny{[Q-B]}}})$    & 2.485 & 2.485 & 2.485 & 2.484 & 2.483 & 2.472 \\
$\mathrm{E\text{-}regret}(G_{\gamma,n}^{\text{\tiny{[Q-B]}}})$ & 0.100 & 0.100 & 0.100 & 0.100 & 0.099 & 0.088 \\[0.4cm]

\multicolumn{7}{@{}l}{\underline{$n=100$}} \\[0.05cm]
$\mathrm{E\text{-}mse}(\hat{G}_{\gamma,n}^{\text{\tiny{[Q-B]}}})$  & 1.989 & 1.989 & 1.989 & 1.988 & 1.986 & 1.971 \\
$\mathrm{E\text{-}regret}(G_{\gamma,n}^{\text{\tiny{[Q-B]}}})$ & 0.193 & 0.193 & 0.193 & 0.192 & 0.190 & 0.175 \\[0.4cm]

\multicolumn{7}{@{}l}{\underline{$n=200$}} \\[0.05cm]
$\mathrm{E\text{-}mse}(\hat{G}_{\gamma,n}^{\text{\tiny{[Q-B]}}})$   & 2.024 & 2.024 & 2.024 & 2.023 & 2.023 & 2.063 \\
$\mathrm{E\text{-}regret}(G_{\gamma,n}^{\text{\tiny{[Q-B]}}})$ & -0.043 & -0.043 & -0.043 & -0.043 & -0.043 & -0.003 \\[0.4cm]

\multicolumn{7}{@{}l}{\underline{$n=400$}} \\[0.05cm]
$\mathrm{E\text{-}mse}(\hat{G}_{\gamma,n}^{\text{\tiny{[Q-B]}}})$    & 1.947 & 1.947 & 1.947 & 1.947 & 1.946 & 1.963 \\
$\mathrm{E\text{-}regret}(G_{\gamma,n}^{\text{\tiny{[Q-B]}}})$ & 0.362 & 0.362 & 0.362 & 0.361 & 0.361 & 0.378 \\[0.1cm]
\hline
\hline
\end{tabular}
}
\label{weib_tab_sens}
\end{table}

Table \ref{weib_tab_sens} provides a sensitivity analysis of Newton's algorithm with respect to the number of quadrature points $d\in\{5,000,\,1,000,\,500,\,100,\,50,\,10\}$; in particular, it shows that the empirical performance of Newton's algorithm is robust to the choice of $d$. For the next evaluations of $\hat{\theta}^{\text{\tiny{[Q-B]}}}_{n}$ we set $d=1,000$.

With regards to the Bayes estimate $\hat{\theta}^{\text{\tiny{[B]}}}_{n}$, we adopt Algorithm 8 of \citet{Nea(00)} to evaluate the posterior distribution \eqref{post_dp} of the Dirichlet process mixture model. We set the strength parameter $c=1$, and the base probability measure $H$ to be a Gamma of shape $3$ and scale $(3n)^{-1}\sum_{1\leq i\leq n}Y_{i}$, so that the prior mean of $H$ matches the empirical mean of the data $Y_{1:n}$. We consider Algorithm 8 with $m=5$ auxiliary components; we refer to  \citet[Section 6]{Nea(00)} for details on the specification of $m$.  At each iteration of the MCMC scheme, the cluster or mixture-component allocation of each observation is updated by comparing the likelihood function under the currently occupied clusters with the likelihood function under the auxiliary clusters  drawn from $H$; conditionally on the allocations, the occupied atoms are updated from their Gamma full conditional distributions. We run the MCMC for  $T=5,000$ iterations, discard the first $1,000$ iterations as burn-in, and retain one draw every $5$ iterations. The evaluation of $\hat{\theta}^{\text{\tiny{[B]}}}_{n}$ is done by averaging over the retained posterior draws.

Results are reported in Figure \ref{weib_compare1}-\ref{weib_compare2}-\ref{weib_cpu}. Figure \ref{weib_compare1}-\ref{weib_compare2} display the quasi-Bayes, Bayes and oracle Bayes estimates. Figure \ref{weib_cpu} compares the quasi-Bayes and Bayes estimates in terms of empirical performance and computational cost. Their empirical performances are measured by the corresponding E-regrets, i.e. $\mathrm{E\text{-}regret}(\hat{G}_{\gamma,n}^{\text{\tiny{[Q-B]}}})$ and $\mathrm{E\text{-}regret}(\hat{G}_{n}^{\text{\tiny{[B]}}})$, whereas computational cost is measured by the number of computational units and by CPU time. For the quasi-Bayes estimate, one computational unit is defined as one likelihood evaluation at one quadrature point of the grid used to represent the density of the mixing distribution $G_{n}$; thus, if the grid contains $d$ quadrature points, then the total number of computational units is $nd$. For the Bayes estimate, one computational unit is defined as one likelihood evaluation at one candidate atom of the mixing distribution during the MCMC update of a cluster allocation. If $k_{-i}(t)$ denotes the number of occupied clusters after removing observation $i$ at MCMC iteration $t$, then the total number of computational units is $\sum_{1\leq t\leq T}\sum_{1\leq i\leq n}(k_{-i}(t)+m)$. The CPU time refers to the wall-clock time, in seconds, required to estimate $G$.

Figure \ref{regret_comparison_1d} reports the empirical regret incurred by using the quasi-Bayes estimate $\hat{\theta}^{\text{\tiny{[Q-B]}}}_{\gamma,n}$ in place of the Bayes estimate $\hat{\theta}^{\text{\tiny{[B]}}}_{n}$, i.e.,
\begin{displaymath}
\text{E-regret}(\hat{G}_{\gamma,n}^{\text{\tiny{[Q-B]}}},\hat{G}_{n}^{\text{\tiny{[B]}}};G^{\ast})=\frac{1}{n}\sum_{i=1}^{n}(\hat{\theta}^{\text{\tiny{[Q-B]}}}_{\gamma,n}(y_{i})-\hat{\theta}^{\text{\tiny{[B]}}}_{\gamma,n}(y_{i}))^{2};
\end{displaymath}
 this is the empirical counterpart of \eqref{merging_regret_1d}. Figure \ref{regret_comparison_1d} provides an empirical evidence that $\hat{\theta}^{\text{\tiny{[Q-B]}}}_{\gamma,n}$ and $\hat{\theta}^{\text{\tiny{[B]}}}_{n}$ merge as $n\rightarrow+\infty$.

\begin{figure}
\centering
\includegraphics[width=.75\textwidth,height=.55\textheight,keepaspectratio]{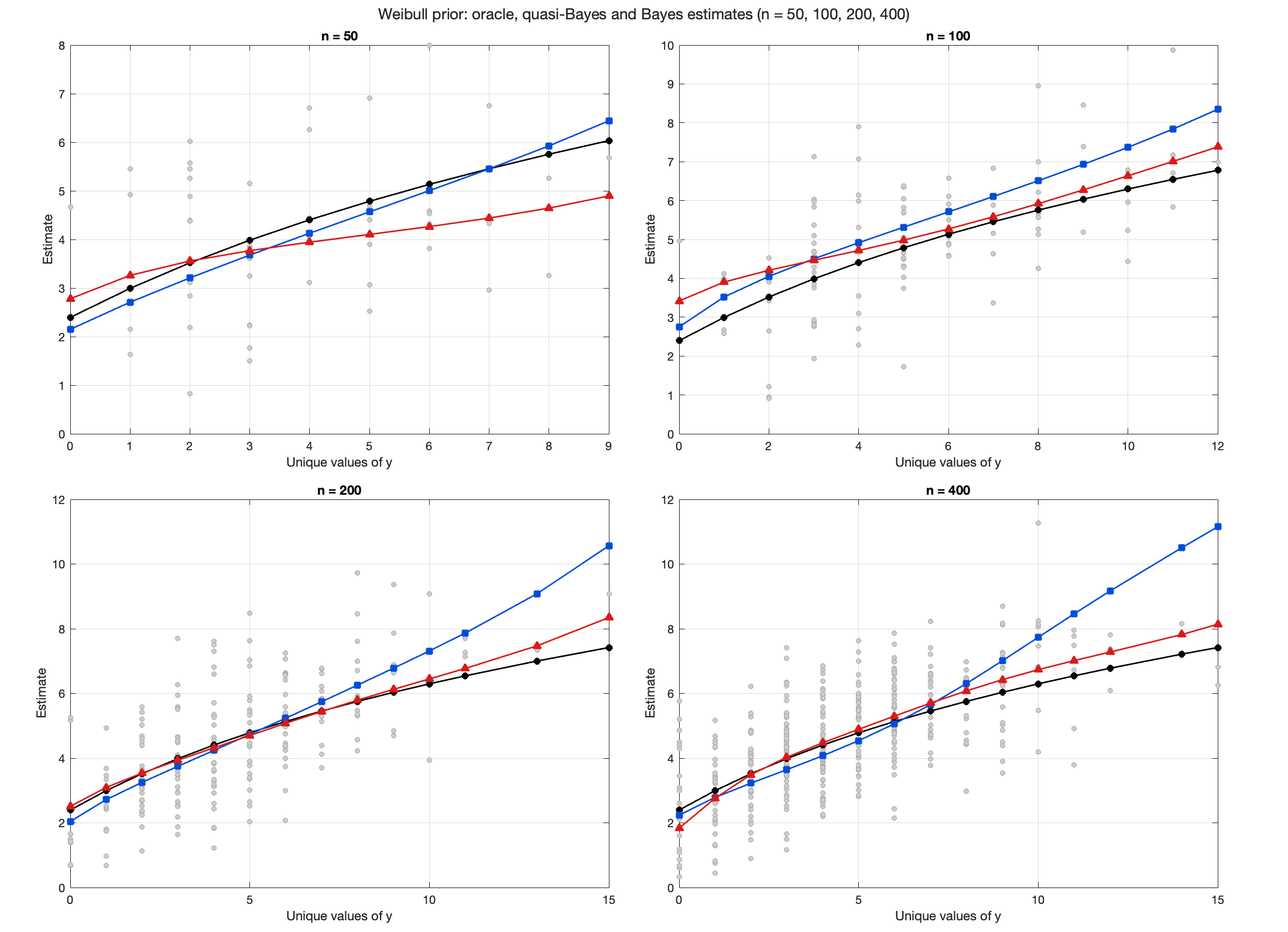}
\caption{\footnotesize{Weibull prior, $n\in\{50,\,100,\,200,\,400\}$: data points plotted against the ``true'' parameters (grey), together with the corresponding oracle Bayes (black), Bayes (red), and quasi-Bayes (blue) estimates.}}
\label{weib_compare1}
\end{figure}

\begin{figure}
\centering
\includegraphics[width=.75\textwidth,height=.55\textheight,keepaspectratio]{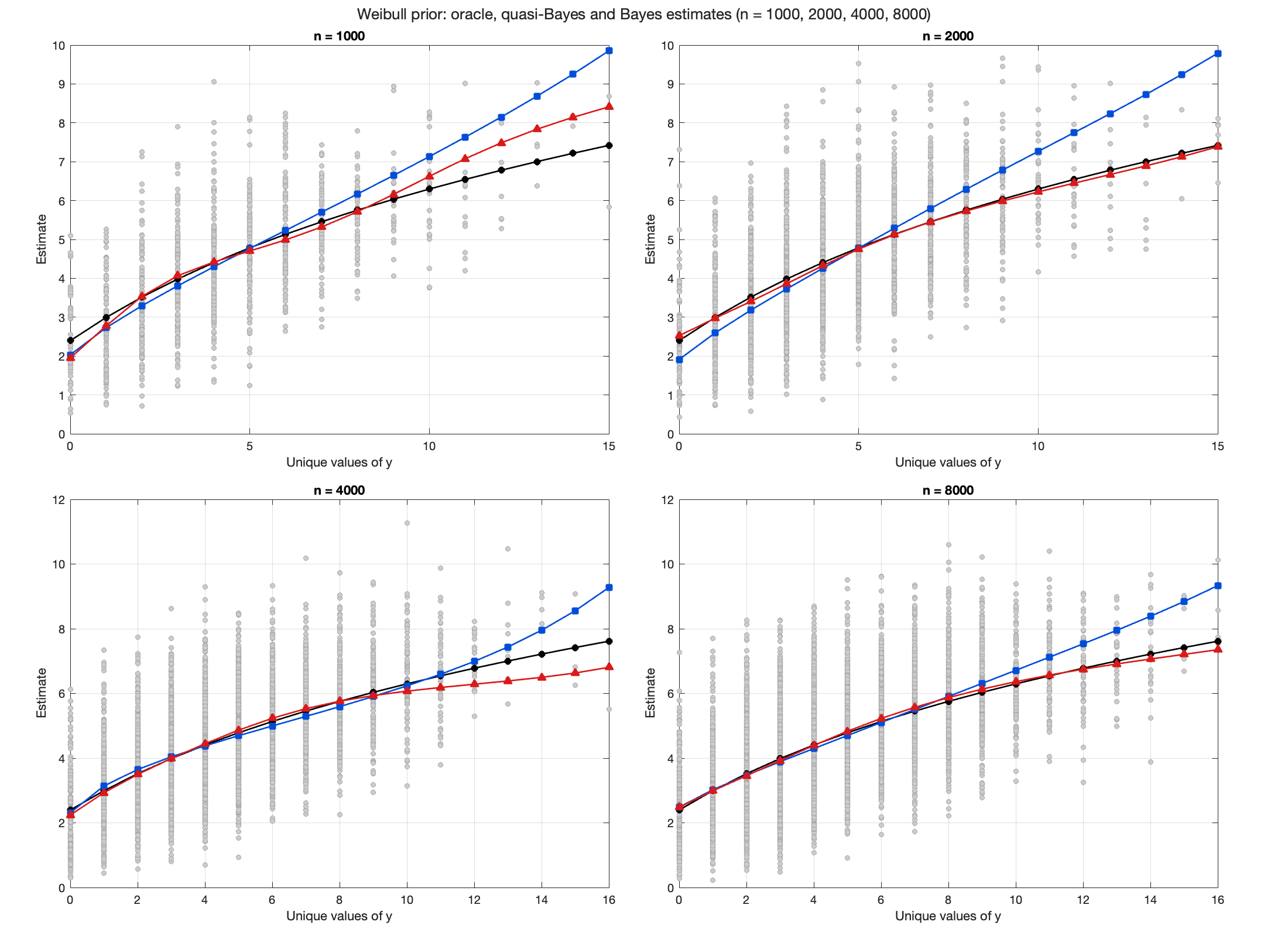}
\caption{\footnotesize{Weibull prior, $n\in\{1,000,\,2,000,\,4,000,\,8,000\}$: data points plotted against the ``true'' parameters (grey), together with the corresponding oracle Bayes (black), Bayes (red), and quasi-Bayes (blue) estimates.}}
\label{weib_compare2}
\end{figure}

\begin{figure}
\centering
\includegraphics[width=.95\textwidth,height=.85\textheight,keepaspectratio]{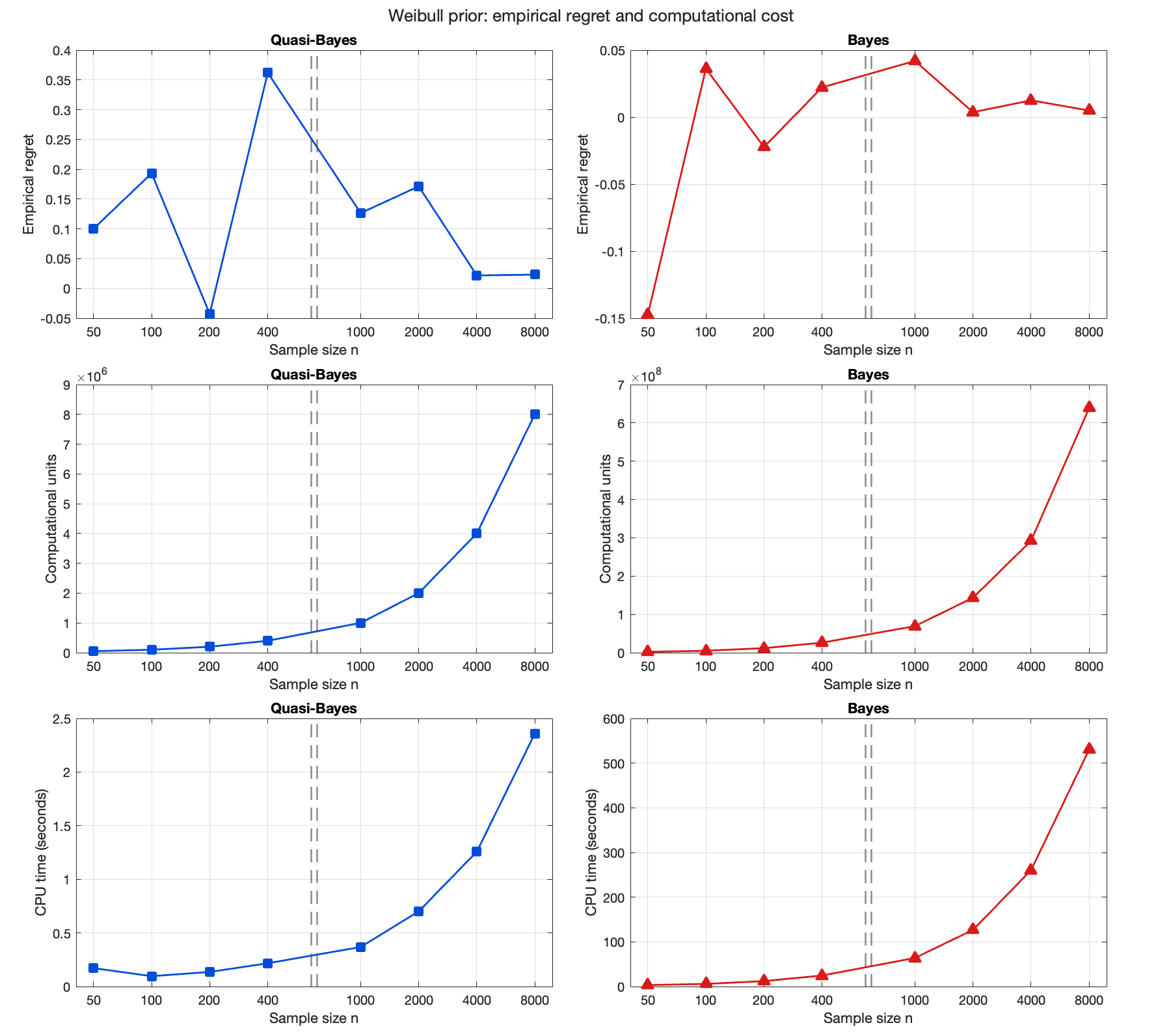}
\caption{\footnotesize{Weibull prior: quasi-Bayes (blue) and Bayes (red) estimates compared by E-regret (top panels), computational units (middle panels), and CPU time (bottom panels).}}
\label{weib_cpu}
\end{figure}

\begin{figure}
\centering
\includegraphics[width=.95\textwidth,height=.85\textheight,keepaspectratio]{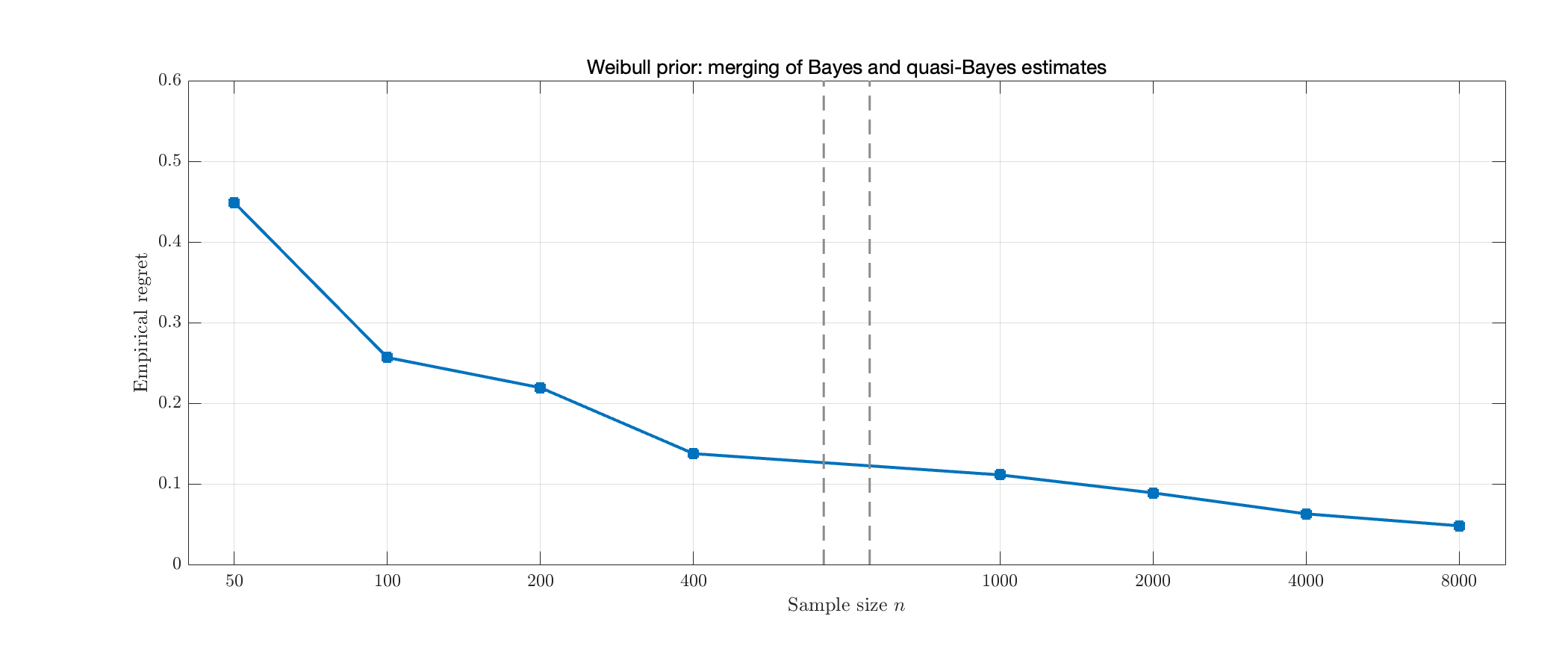}
\caption{\footnotesize{Weibull prior: E-regret incurred by using the quasi-Bayes estimate in place of the Bayes estimate.}}
\label{regret_comparison_1d}
\end{figure}

\subsection{The $d$-dimensional setting, $d=2$}

For $n\in\{50,\,100,\,200,\,400,\,1,000,\,2,000,\,4,000,\,8,000\}$, we generate i.i.d. data $\boldsymbol Y_{1:n}=(\boldsymbol Y_1,\ldots,\boldsymbol Y_n)$, with $\boldsymbol Y_i=(Y_{i,1},Y_{i,2})\in\mathbb N_0^2$ from a $2$-dimensional Poisson mixture model with a product Weibull prior $G=G_1\otimes G_2$, where $G_{\ell}$ is the Weibull distribution of scale parameter $5$ and shape parameter $3$. We compare quasi-Bayes estimate $\hat{\boldsymbol{\theta}}^{\text{\tiny{[Q-B]}}}_{\gamma,n}$ and the Bayes estimate $\hat{\boldsymbol{\theta}}^{\text{\tiny{[B]}}}_{n}$ with the oracle Bayes estimate $\hat{\boldsymbol{\theta}}^{\ast}$. The oracle $\hat{\boldsymbol{\theta}}^{\ast}$ is obtained from \eqref{eq:oracle_d} with $G^{\ast}=G_1^{\ast}\otimes G_2^{\ast}$ being the product Weibull prior distribution that generates the $\boldsymbol{\theta}_{i}$'s, and evaluating the marginal likelihood $p_{G^\ast}$ numerically through the trapezoidal rule.

As a measure of accuracy of the plug-in empirical Bayes estimates, we consider the empirical mean squared error (E-mse). For $n\in\mathbb{N}$, let $(\boldsymbol{\theta}_1,\ldots,\boldsymbol{\theta}_n)$ be the values generated from the Weibull prior distribution, and let $\hat{\boldsymbol{\theta}}_{n}$ be the plug-in empirical Bayes estimate, as defined in \eqref{eq:estimate}. The E-mse is defined as
\begin{displaymath}
\mathrm{E\text{-}mse}(\hat{G}_{n})=\frac{1}{2n}\sum_{i=1}^{n}\sum_{\ell=1}^{2}\left\{\hat\theta_{n,\ell}(\boldsymbol y_i)-\theta_{i,\ell}\right\}^{2}.
\end{displaymath}
For the oracle Bayes estimate $\hat{\boldsymbol{\theta}}^{\ast}$, the E-mse is referred to as the empirical minimum mean squared error (E-mmse), i.e.
\begin{displaymath}
\mathrm{E\text{-}mmse}=\frac{1}{2n}\sum_{i=1}^{n}\sum_{\ell=1}^{2}\left\{\hat{\theta}^{\ast}_{\ell}(\boldsymbol y_i)-\theta_{i,\ell}\right\}^{2}.
\end{displaymath}
Finally, we define the empirical regret (E-regret) as $\mathrm{E\text{-}regret}(\hat{G}_{n})=\mathrm{E\text{-}mse}(\hat{G}_{n})-\mathrm{E\text{-}mmse}$, namely the excess empirical squared error of $\hat{\boldsymbol{\theta}}_{n}$ with respect to the benchmark $\hat{\boldsymbol{\theta}}^{\ast}$; see \citet{Efr(14),Efr(19)} for details.

With regards to the quasi-Bayes estimate $\hat{\boldsymbol\theta}^{\text{\tiny{[Q-B]}}}_{n}$, the $2$-dimensional version of Newton's algorithm \eqref{eq:newton} requires the numerical evaluation of an integral with respect to the mixing distribution, which we approximate by the tensor-product trapezoidal rule. To perform this evaluation, the density function of the mixing distribution $G_n$ is represented through its values on a fixed tensor-product grid over $\boldsymbol{\Theta}=(0,U_{\Theta_{1}})\times(0,U_{\Theta_{2}})$ with $201$ quadrature points per coordinate, yielding $201^2$ grid points in total. For each coordinate $\ell=1,2$, we set $U_{\Theta_{\ell}}=\max\{\max\{Y_{1:n,\ell}\},\lceil Q_{n,0.99}((\ell))+4\sqrt{\max\{Q_{n,0.99}(\ell),1\}}\rceil\}$,  where $Q_{n,0.99}(\ell)=\operatorname{Quantile}(Y_{1:n,\ell};0.99)$. As for the $1$-dimensional setting, this representation is used only for numerical evaluation and imposes no modeling restriction on $\boldsymbol{\Theta}$. Further, we set the initial guess $G_0$ to be Uniform over $\boldsymbol{\Theta}$, and take the learning rate to be $\alpha_n=(1+n)^{-1}$. 

With regards to the Bayes estimate  $\hat{\boldsymbol{\theta}}^{\text{\tiny{[B]}}}_{n}$, we adopt the $2$-dimensional version of Algorithm 8 of \citet{Nea(00)} to evaluate the  posterior distribution \eqref{post_dp_d} of the $2$-dimensional  Dirichlet process mixture model. We set the strength parameter $c=1$, and take the base  probability measure $H$ to be a product Gamma distribution $\boldsymbol{H}=H_1\otimes H_2$, with independent marginals. For each coordinate $\ell=1,2$, $H_\ell$ is a Gamma distribution of shape $2$ and scale $(2n)^{-1}\sum_{1\leq i\leq n}Y_{i,\ell}$ so that the prior mean of $H_\ell$ matches the empirical mean of $Y_{1:n,\ell}$. We consider Algorithm 8 with $m=5$ auxiliary components.  At each iteration of the MCMC scheme, the cluster allocation of each observation $\boldsymbol Y_i$ is updated by comparing the multivariate likelihood under the currently occupied clusters with the multivariate likelihood under the auxiliary clusters drawn from $\boldsymbol{H}$. Conditionally on the allocations, the occupied atoms are updated coordinate-wise from their Gamma full conditional distributions. We run the MCMC for $T=5,000$ iterations, discard the first $1,000$ iterations as burn-in, and retain one draw every $5$ iterations. The evaluation of $\hat{\boldsymbol{\theta}}^{\text{\tiny{[B]}}}_{n}$ is done by averaging over the retained posterior draws.

Results are reported in Figure \ref{weib_compare1_d}-\ref{weib_compare2_d}-\ref{weib_cpu_d}. Figure \ref{weib_compare1_d}-\ref{weib_compare2_d} display the quasi-Bayes, Bayes and oracle Bayes estimates. Figure \ref{weib_cpu_d} compares the quasi-Bayes and Bayes estimates in terms of empirical performance, through E-regret, and computational cost, through the number of computational units and CPU time. Computational units and by CPU time are defined as in the $1$-dimensional setting. Figure \ref{regret_comparison_d} reports the empirical regret incurred by using the quasi-Bayes estimate  $\hat{\boldsymbol\theta}^{\text{\tiny{[Q-B]}}}_{\gamma,n}$ in place of the Bayes estimate $\hat{\boldsymbol\theta}^{\text{\tiny{[B]}}}_{n}$, i.e.
\begin{displaymath}
\text{E-regret}(\hat{G}_{\gamma,n}^{\text{\tiny{[Q-B]}}},\hat{G}_{n}^{\text{\tiny{[B]}}};G^{\ast})=\frac{1}{2n}\sum_{i=1}^{n}\sum_{\ell=1}^{2}(\hat{\theta}_{\gamma,n,l}^{\text{\tiny{[Q-B]}}}(\boldsymbol{y}_{i})-\hat{\theta}_{n,l}^{\text{\tiny{[B]}}}(\boldsymbol{y}_{i}))^{2}
\end{displaymath}
 this is the empirical counterpart of \eqref{merging_regret_d}. Figure \ref{regret_comparison_d} provides an empirical evidence that $\hat{\boldsymbol\theta}^{\text{\tiny{[Q-B]}}}_{\gamma,n}$ and $\hat{\boldsymbol\theta}^{\text{\tiny{[B]}}}_{n}$ merge as $n\rightarrow+\infty$.

\begin{figure}
\centering
\includegraphics[width=.75\textwidth,height=.55\textheight,keepaspectratio]{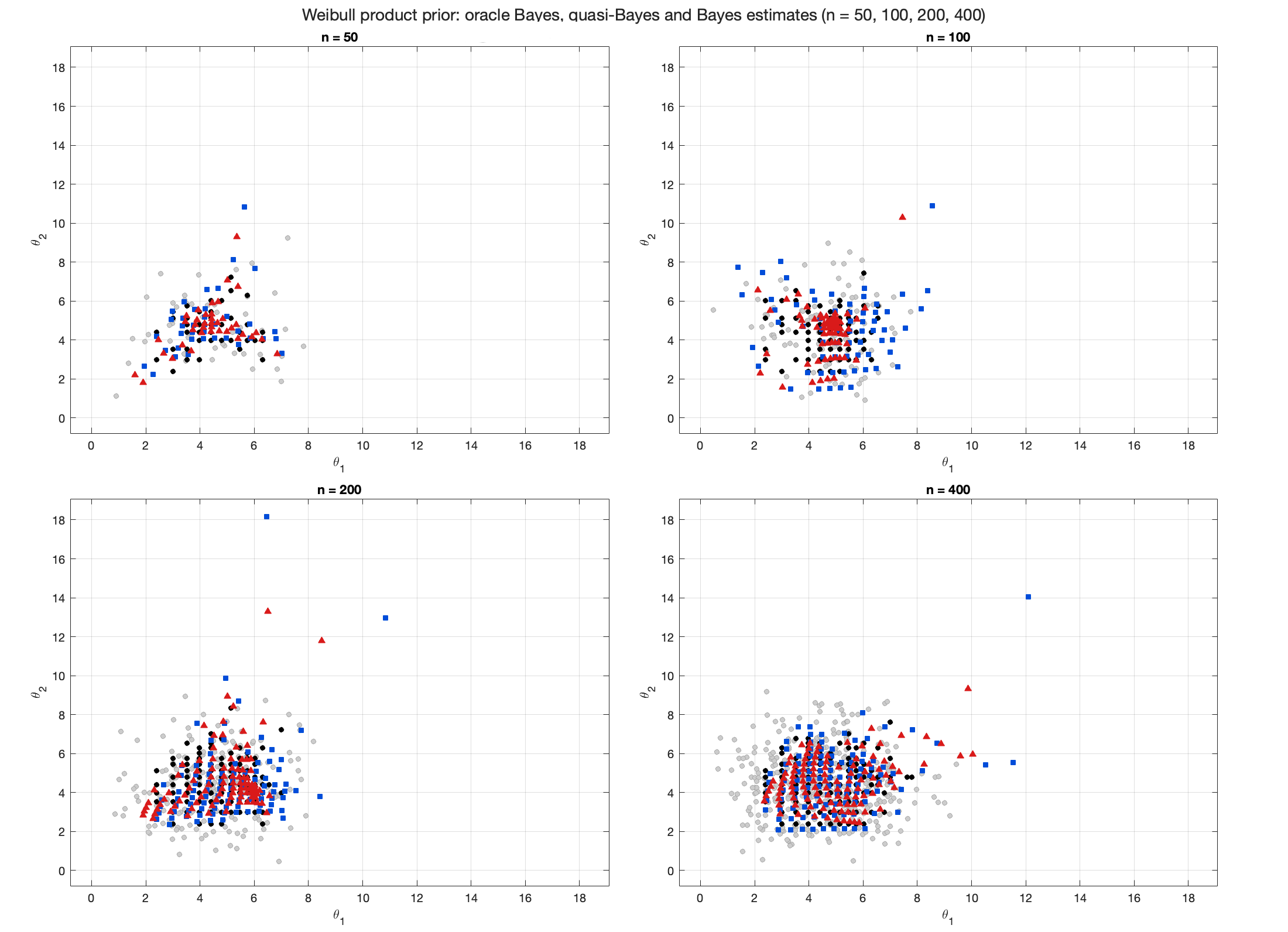}
\caption{\footnotesize{Weibull product prior, $n\in\{50,\,100,\,200,\,400\}$: data points plotted against the ``true'' parameters (grey), together with the corresponding oracle Bayes (black), Bayes (red), and quasi-Bayes (blue) estimates.}}
\label{weib_compare1_d}
\end{figure}

\begin{figure}
\centering
\includegraphics[width=.75\textwidth,height=.55\textheight,keepaspectratio]{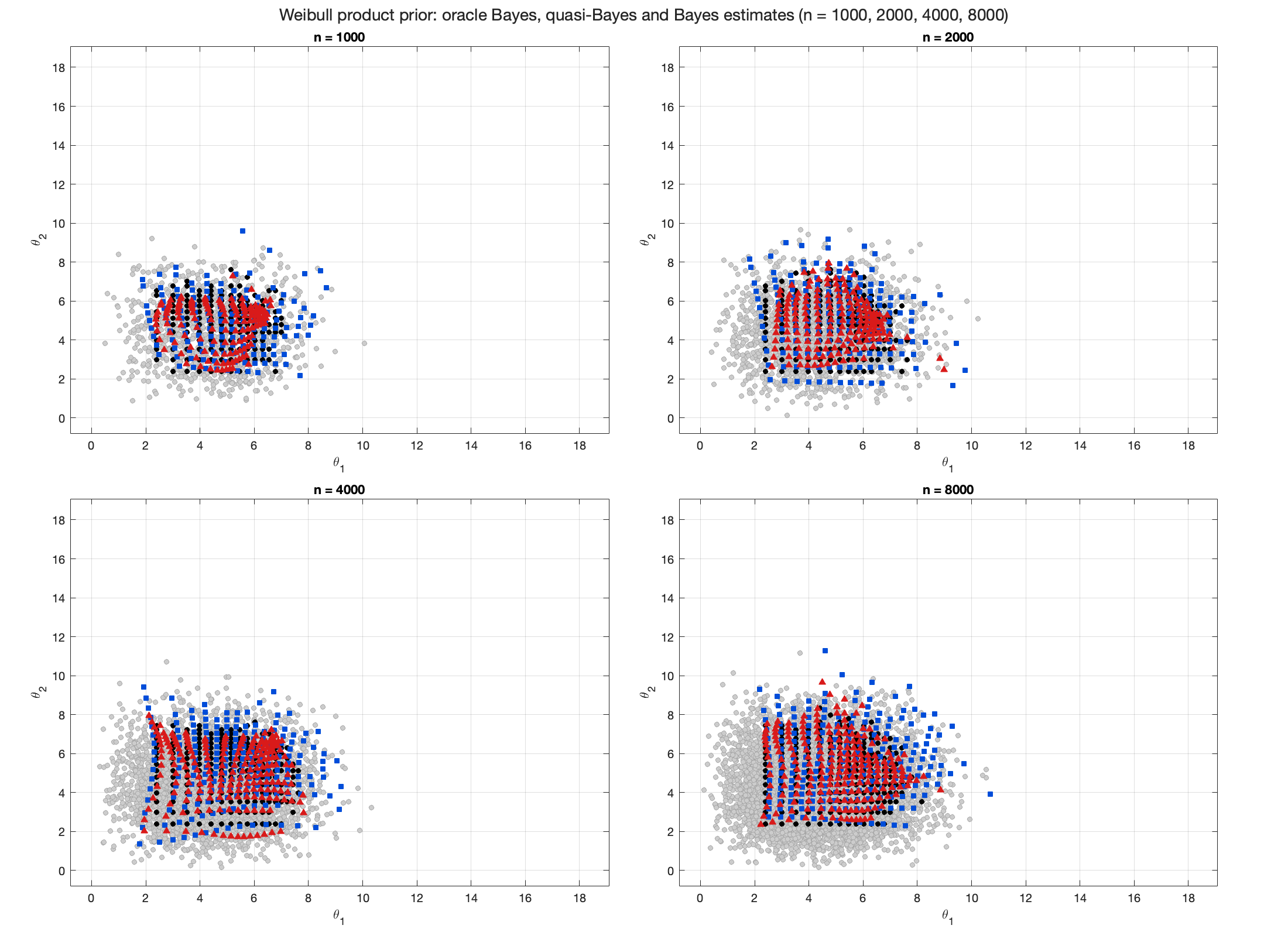}
\caption{\footnotesize{Weibull product prior, $n\in\{1,000,\,2,000,\,4,000,\,8,000\}$: data points plotted against the ``true'' parameters (grey), together with the corresponding oracle Bayes (black), Bayes (red), and quasi-Bayes (blue) estimates.}}
\label{weib_compare2_d}
\end{figure}

\begin{figure}
\centering
\includegraphics[width=.95\textwidth,height=.85\textheight,keepaspectratio]{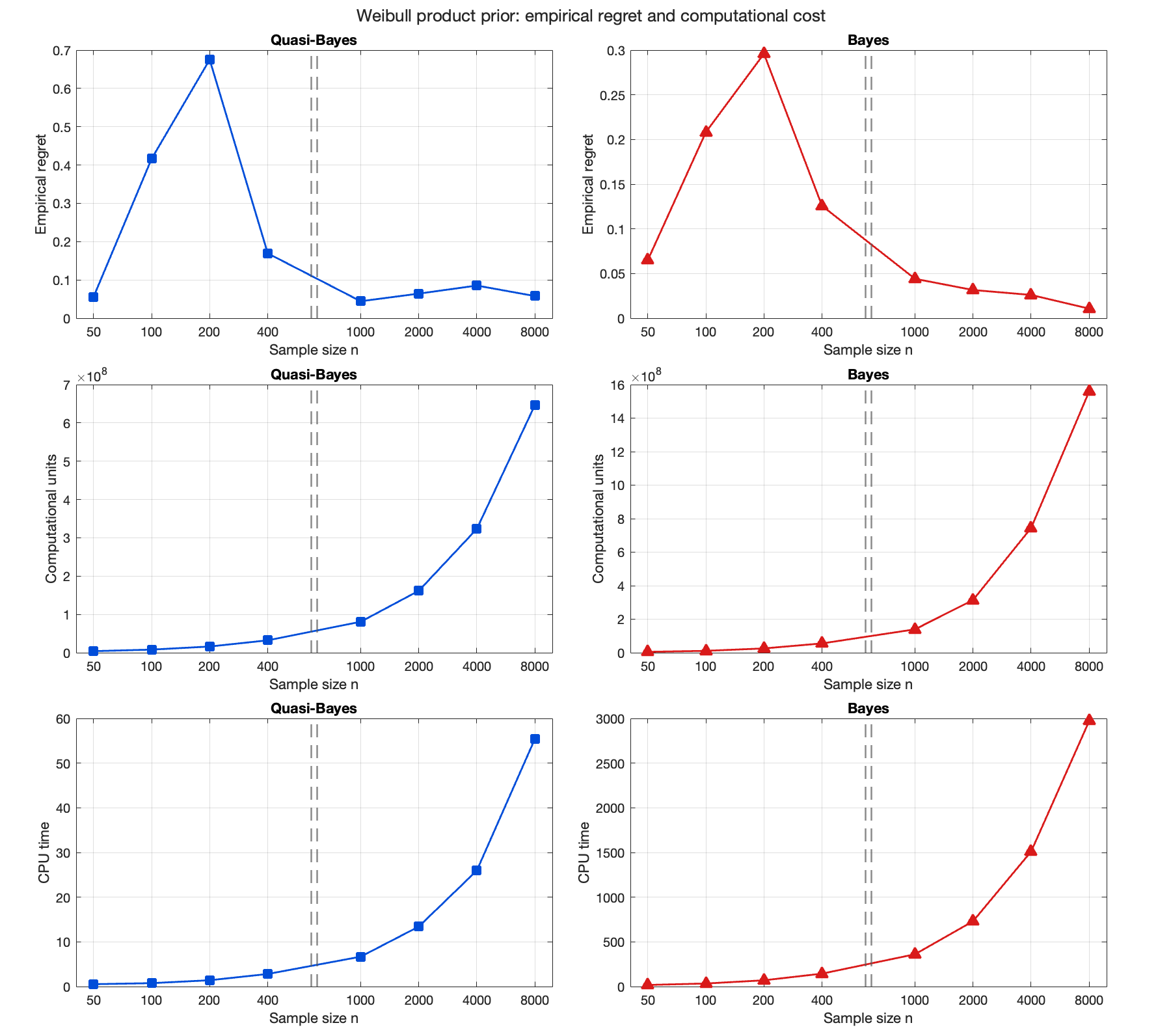}
\caption{\footnotesize{Weibull product prior: quasi-Bayes (blue) and Bayes (red) estimates compared by E-regret (top panels), computational units (middle panels), and CPU time (bottom panels).}}
\label{weib_cpu_d}
\end{figure}

\begin{figure}
\centering
\includegraphics[width=.95\textwidth,height=.85\textheight,keepaspectratio]{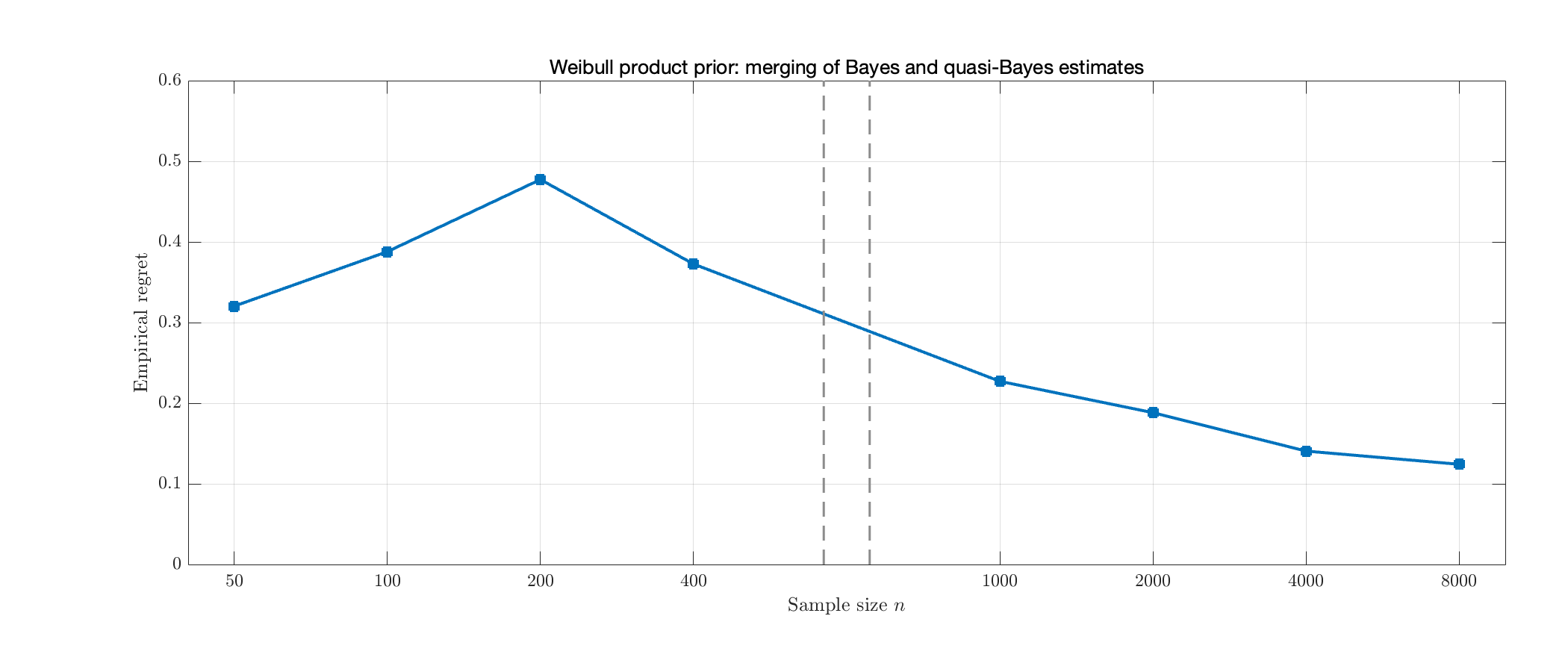}
\caption{\footnotesize{Weibull product prior: E-regret incurred by using the quasi-Bayes estimate in place of the Bayes estimate.}}
\label{regret_comparison_d}
\end{figure}

%%%%%%%%%%%%%%%%%%%%%%%%%%%%%%%%
%%%%%%%%%%%%%%%%%%%%%%%%%%%%%%%%
%%%%%%%%%%%%%%%%%%%%%%%%%%%%%%%%
%%%%%%%%%%%%%%%%%%%%%%%%%%%%%%%%

\section{Concluding remarks}\label{sec4}

Several directions remain open. First, the frequentist merging developed here is not specific to the Poisson kernel. The most natural next case is the Gaussian compound decision problem, where empirical Bayes estimates can be expressed in terms of the marginal density through Tweedie's formula. In the univariate Gaussian case, the main ingredients required for our analysis are available in \citet{MarTok(09)} and \citet{Ign(26)}. These results suggest that Gaussian analogue of our merging results could be obtained by means of an appropriate stability argument for Tweedie's formula. Extending this programme to other mixture kernels would require, in each case, analogous concentration results for the induced marginal distribution and a bound translating marginal concentration into regret control for the corresponding plug-in empirical Bayes rule.

A second question concerns the sharpness of the rates. The Bayesian rates obtained here are consistent with the known minimax benchmarks for the Poisson empirical Bayes problem. By contrast, the rates available for Newton's algorithm are those provided by existing stochastic-approximation theory, and we have not established their optimality. In particular, the dependence on the learning-rate exponent $\gamma$ reflects the specific convergence theory for the recursive update, rather than a matching lower bound for quasi-Bayesian empirical Bayes estimation. It would therefore be interesting to determine whether these rates are intrinsic to Newton's algorithm, or whether sharper
rates can be obtained either by a refined analysis or by alternative choices of the learning-rate sequence.

A third direction is to relax the compact-support assumptions on the oracle mixing distribution and on the parameter space. For the Bayesian strategy, such an extension should be possible by combining posterior concentration arguments with suitable sieve constructions and tail conditions on the base measure and on the oracle prior. For the quasi-Bayesian strategy, the issue is more delicate; see \citet{MarTok(09)}.  The convergence theory used here for Newton's algorithm relies on compactness assumptions, and in the Poisson case also on support bounded away from the boundary point zero. Extending the quasi-Bayesian analysis to unbounded parameter spaces, or to oracle priors with mass arbitrarily close to zero, would require new stochastic-approximation arguments.

Finally, an important extension concerns the estimation of sums of random variables and related functionals \citep{Zha(05)} The present paper focuses on estimating the individual Poisson means through plug-in empirical Bayes rules. In many applications, however, the target is not the collection of individual means itself, but a functional such as a sum, a thresholded sum, or a predictive aggregate depending on both the observations and future random variables. Developing Bayesian and quasi-Bayesian $g$-modeling procedures for such functionals, and studying whether the corresponding estimators merge in regret or in predictive risk, would provide a
natural continuation of the present work.

%%%%%%%%%%%%%%%%%%%%%%%%%%%%%%%%
%%%%%%%%%%%%%%%%%%%%%%%%%%%%%%%%
%%%%%%%%%%%%%%%%%%%%%%%%%%%%%%%%
%%%%%%%%%%%%%%%%%%%%%%%%%%%%%%%%

\phantomsection
\addcontentsline{toc}{section}{Acknowledgment}
\section*{Acknowledgment}
The authors are grateful to Sid Kankanala for bringing the work of \citet{Can(26)} to their attention.

%%%%%%%%%%%%%%%%%%%%%%%%%%%%%%%%
%%%%%%%%%%%%%%%%%%%%%%%%%%%%%%%%
%%%%%%%%%%%%%%%%%%%%%%%%%%%%%%%%
%%%%%%%%%%%%%%%%%%%%%%%%%%%%%%%%
\phantomsection

%%%%%%%%%%%%%%%%%%%%%%%%%%%%%%%%
%%%%%%%%%%%%%%%%%%%%%%%%%%%%%%%%
%%%%%%%%%%%%%%%%%%%%%%%%%%%%%%%%
%%%%%%%%%%%%%%%%%%%%%%%%%%%%%%%%
\newpage

\phantomsection
\addcontentsline{toc}{section}{Appendix}
\section*{Appendix}

\appendix

\renewcommand{\thesection}{\Alph{section}}
\renewcommand{\theequation}{\thesection.\arabic{equation}}
\renewcommand{\thefigure}{\thesection.\arabic{figure}}
\renewcommand{\thetable}{\thesection.\arabic{table}}

\setcounter{section}{0}
\setcounter{equation}{0}
\setcounter{figure}{0}
\setcounter{table}{0}
\setcounter{thm}{0}

%%%%%%%%%%%%%%%%%%%%%%%%%%%%%%%%%%%%%%%%%%%%%%%%%%%%%%%%%%%%%%%%%%%%%%%%%%%%%%%%%%%%%%%%%%%%%%%%%%

\section{Proofs: $1$-dimensional setting}\label{app1}

\subsection{Auxiliary lemmas}

\begin{lem}\label{lem:Poisson_kernel}
For $c\ge 0$ and $y\in\mathbb N_0$, let $q_c(y)=\operatorname{Poisson}(y\mid c)$. For every $\theta,\lambda\ge 0$,
\begin{displaymath}
\sum_{y\ge 0}\sqrt{q_\theta(y)q_\lambda(y)}=\exp\!\left\{-\frac{(\sqrt{\theta}-\sqrt{\lambda})^2}{2}\right\}.
\end{displaymath}
Hence
\begin{displaymath}
d_{\mathrm H}^2(q_\theta,q_\lambda)=2\left(1-\exp\!\left\{-\frac{(\sqrt{\theta}-\sqrt{\lambda})^2}{2}\right\}\right)\le(\sqrt{\theta}-\sqrt{\lambda})^2,
\end{displaymath}
and therefore
\begin{displaymath}
\|q_\theta-q_\lambda\|_1\le 2\,|\sqrt{\theta}-\sqrt{\lambda}|.
\end{displaymath}
\end{lem}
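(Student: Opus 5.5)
We compute the Bhattacharyya coefficient directly from the explicit form of the Poisson mass function, and then deduce the two inequalities from elementary facts about Hellinger and total variation distances. Throughout we use the convention $d_{\mathrm H}^2(p,q)=\sum_y(\sqrt{p(y)}-\sqrt{q(y)})^2$. This is the normalization under which $d_{\mathrm H}^2=2(1-\mathrm{BC})$, with $\mathrm{BC}=\sum_y\sqrt{p(y)q(y)}$ the Bhattacharyya coefficient.

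\textbf{Step 1: the Bhattacharyya coefficient.} For $\theta,\lambda\ge 0$ write
\begin{displaymath}
\sqrt{q_\theta(y)q_\lambda(y)}=e^{-(\theta+\lambda)/2}\frac{(\sqrt{\theta\lambda})^{y}}{y!}.
\end{displaymath}
Summing over $y\in\mathbb N_0$ with the exponential series gives
\begin{displaymath}
\sum_{y\ge0}\sqrt{q_\theta(y)q_\lambda(y)}=e^{-(\theta+\lambda)/2+\sqrt{\theta\lambda}}=e^{-(\sqrt\theta-\sqrt\lambda)^2/2}.
\end{displaymath}
The boundary case $\theta=0$ or $\lambda=0$ is covered by the convention $0^0=1$.

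\textbf{Step 2: the Hellinger identity and bound.} Expand the square and use $\sum_y q_\theta(y)=\sum_y q_\lambda(y)=1$. This gives
\begin{displaymath}
d_{\mathrm H}^2(q_\theta,q_\lambda)=2-2\sum_{y\ge0}\sqrt{q_\theta(y)q_\lambda(y)}=2\bigl(1-e^{-(\sqrt\theta-\sqrt\lambda)^2/2}\bigr).
\end{displaymath}
Now apply $1-e^{-x}\le x$ for $x\ge0$ with $x=(\sqrt\theta-\sqrt\lambda)^2/2$. This yields $d_{\mathrm H}^2(q_\theta,q_\lambda)\le(\sqrt\theta-\sqrt\lambda)^2$.

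\textbf{Step 3: the $L^1$ bound.} Factor $|p-q|=|\sqrt p-\sqrt q|\,(\sqrt p+\sqrt q)$ and apply Cauchy--Schwarz:
\begin{displaymath}
\|p-q\|_1\le d_{\mathrm H}(p,q)\Bigl(\sum_y(\sqrt p+\sqrt q)^2\Bigr)^{1/2}.
\end{displaymath}
The second factor is $\sqrt{2+2\,\mathrm{BC}}\le 2$, since $\mathrm{BC}\le 1$. Hence $\|p-q\|_1\le 2\,d_{\mathrm H}(p,q)$, and combining this with Step 2 gives $\|q_\theta-q_\lambda\|_1\le 2|\sqrt\theta-\sqrt\lambda|$.

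No step is a genuine obstacle. The only point requiring care is the normalization of $d_{\mathrm H}$: it must be the one without a factor $1/2$, so that $d_{\mathrm H}^2=2(1-\mathrm{BC})$ matches the stated identity.
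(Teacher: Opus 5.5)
Your proposal is correct and follows the paper's proof essentially step by step: you sum the exponential series to get the Bhattacharyya coefficient, use $d_{\mathrm H}^2=2(1-\mathrm{BC})$ together with $1-e^{-x}\le x$, and apply Cauchy--Schwarz to get $\|p-q\|_1\le 2d_{\mathrm H}(p,q)$. The only cosmetic differences are that you handle the boundary case $\theta=0$ or $\lambda=0$ through the convention $0^0=1$ rather than separately, and you make explicit the bound $\sqrt{2+2\,\mathrm{BC}}\le 2$ that the paper leaves implicit.
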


\begin{proof}
If $\theta=0$ or $\lambda=0$, the identity follows directly from the definition of $q_0$. Thus assume first that $\theta,\lambda>0$. For every $y\ge 0$,
\begin{displaymath}
\sqrt{q_\theta(y)q_\lambda(y)}=e^{-(\theta+\lambda)/2}\frac{(\theta\lambda)^{y/2}}{y!}.
\end{displaymath}
Therefore
\begin{displaymath}
\sum_{y\ge 0}\sqrt{q_\theta(y)q_\lambda(y)}=e^{-(\theta+\lambda)/2}\sum_{y\ge 0}\frac{(\sqrt{\theta\lambda})^y}{y!}=e^{-(\theta+\lambda)/2}e^{\sqrt{\theta\lambda}}=
\exp\!\left\{-\frac{(\sqrt{\theta}-\sqrt{\lambda})^2}{2}\right\}.
\end{displaymath}
It follows that
\begin{displaymath}
d_{\mathrm H}^2(q_\theta,q_\lambda)=2-2\sum_{y\ge 0}\sqrt{q_\theta(y)q_\lambda(y)}=2\left(1-\exp\!\left\{-\frac{(\sqrt{\theta}-\sqrt{\lambda})^2}{2}\right\}\right).
\end{displaymath}
Since $1-e^{-x}\le x$ for $x\ge 0$, we obtain
\begin{displaymath}
d_{\mathrm H}^2(q_\theta,q_\lambda)\le(\sqrt{\theta}-\sqrt{\lambda})^2.
\end{displaymath}
Finally, for any probability mass functions $p$ and $q$,
\begin{displaymath}
\|p-q\|_1=\sum_{y\ge 0}|\sqrt{p(y)}-\sqrt{q(y)}|(\sqrt{p(y)}+\sqrt{q(y)})\le2d_{\mathrm H}(p,q),
\end{displaymath}
by Cauchy--Schwarz. Hence
\begin{displaymath}
\|q_\theta-q_\lambda\|_1\le2d_{\mathrm H}(q_\theta,q_\lambda)\le2\,|\sqrt{\theta}-\sqrt{\lambda}|.
\end{displaymath}
\end{proof}

\begin{lem}
\label{lem:Poisson_moment_matching}
Let $0<h_0\le h<+\infty$. For every $G\in\mathcal P([h_{0},h])$ and every integer $K\ge 0$, there exists a discrete probability measure
\begin{displaymath}
G^{(K)}=\sum_{j=1}^{K+1}p_j\delta_{\theta_j},\qquad \theta_j\in[h_{0},h],
\end{displaymath}
such that
\begin{displaymath}
\int_{[h_{0},h]} \theta^r\,G^{(K)}(d\theta)=\int_{[h_{0},h]} \theta^r\,G(d\theta),\qquad r=0,1,\dots,K.
\end{displaymath}
Moreover,
\begin{displaymath}
\|p_G-p_{G^{(K)}}\|_1\le 2\sum_{r\geq K+1}\frac{(2h)^r}{r!}.
\end{displaymath}
In particular, for every sufficiently small $\varepsilon>0$, there exists a discrete $\tilde G$ with at most
\begin{displaymath}
N_\varepsilon \lesssim\frac{\log(1/\varepsilon)}{\log\log(1/\varepsilon)}
\end{displaymath}
support points such that $\|p_G-p_{\tilde G}\|_1\le\varepsilon$.
\end{lem}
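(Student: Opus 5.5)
The first claim is a finite moment-matching statement, which I will prove by Carathéodory's theorem. Consider the moment curve $\theta\mapsto(\theta,\theta^2,\dots,\theta^K)\in\mathbb R^K$ on the compact interval $[h_0,h]$. The moment vector $m=(\int\theta^r G(d\theta))_{r=1}^K$ lies in the closed convex hull of this curve, because $G$ is a probability measure on $[h_0,h]$. The curve is compact, so its convex hull is already closed. By Carathéodory's theorem in $\mathbb R^K$, $m$ is then a convex combination of at most $K+1$ points of the curve. This gives weights $p_j\ge0$ summing to one and atoms $\theta_j\in[h_0,h]$ whose first $K$ moments match those of $G$. The moment of order zero matches automatically.

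For the $\ell_1$ bound I expand the Poisson kernel in powers of $\theta$. We have
\[
p_G(y)=\int \frac{e^{-\theta}\theta^y}{y!}G(d\theta)=\frac{1}{y!}\sum_{k\ge0}\frac{(-1)^k}{k!}\int\theta^{y+k}G(d\theta).
\]
Writing $\Delta_r=\int\theta^r(G-G^{(K)})(d\theta)$, moment matching gives $\Delta_r=0$ for $r\le K$. Since both measures are probabilities on $[0,h]$, we also have $|\Delta_r|\le 2h^r$. Exchanging sums is justified by absolute convergence, and it yields
\[
\|p_G-p_{G^{(K)}}\|_1\le\sum_{y\ge0}\sum_{k\ge0}\frac{|\Delta_{y+k}|}{y!\,k!}=\sum_{r\ge K+1}\frac{|\Delta_r|}{r!}\sum_{y+k=r}\binom{r}{y}\le\sum_{r\ge K+1}\frac{2h^r2^r}{r!}.
\]
This is exactly $2\sum_{r\ge K+1}(2h)^r/r!$, as claimed.

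For the final claim I choose $K$ so that this tail is at most $\varepsilon$. Take $K+1\ge 2e^2h$. Then the tail is dominated by a geometric series times its first term, so it is at most $4(2h)^{K+1}/(K+1)!\le 4(2eh/(K+1))^{K+1}$ by Stirling. Requiring $(K+1)\log\bigl((K+1)/(2eh)\bigr)\ge\log(4/\varepsilon)$ is therefore sufficient. Setting $K+1=\lceil C\log(1/\varepsilon)/\log\log(1/\varepsilon)\rceil$, with $C$ large, meets this requirement for small $\varepsilon$. The reason is that $\log(K+1)\sim\log\log(1/\varepsilon)$, so the left side is asymptotically at least $C\log(1/\varepsilon)$. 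Then $\tilde G=G^{(K)}$ has $N_\varepsilon=K+1\lesssim\log(1/\varepsilon)/\log\log(1/\varepsilon)$ atoms.

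The only mildly delicate step is this inversion of the Stirling bound; the asymptotic comparison of logarithms just described handles it.
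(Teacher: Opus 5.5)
Your proof is correct and follows the same route as the paper. Your Carath\'eodory argument on the moment curve is exactly the content of the Ghosal--van der Vaart Lemma A.1 that the paper cites, and your expansion of $e^{-\theta}$ with the regrouping $\sum_{y+k=r}1/(y!\,k!)=2^r/r!$ matches the paper's argument, while your geometric-tail bound and the inversion of Stirling's formula to choose $K$ spell out in more detail a step the paper settles in one line.
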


\begin{proof}
If $K=0$, take $G^{(0)}=\delta_{\theta_0}$ for any $\theta_0\in[h_{0},h]$; then the moment identity holds for $r=0$, and the $L^1$ bound follows from the argument below, which
uses only the equality of moments up to order $K$. Thus assume $K\ge 1$ for the application of \citet[Lemma A.1]{Gho(01)}. By \citet[Lemma A.1]{Gho(01)} applied with $\psi_r(\theta)=\theta^r$, $r=1,\dots,K$, there exists a discrete probability measure $G^{(K)}$ with at most $K+1$ support points such that
\begin{displaymath}
\int_{[h_{0},h]} \theta^r\,G^{(K)}(d\theta)=\int_{[h_{0},h]} \theta^r\,G(d\theta),\qquad r=1,\dots,K.
\end{displaymath}
Since $G^{(K)}$ is a probability measure, the same identity also holds for $r=0$. Write $\mu_r(G):=\int_{[h_{0},h]} \theta^r\,G(d\theta)$. For every $y\ge 0$,
\begin{displaymath}
p_G(y)=\frac{1}{y!}\int_{[h_{0},h]} e^{-\theta}\theta^y\,G(d\theta).
\end{displaymath}
Since $G$ is supported on $[h_{0},h]$,
\begin{displaymath}
\sum_{m\ge 0}\int_{[h_{0},h]} \frac{\theta^{y+m}}{y!\,m!}\,G(d\theta)\le\frac{1}{y!}\sum_{m\ge 0}\frac{h^{y+m}}{m!}=\frac{h^y e^h}{y!}<+\infty.
\end{displaymath}
Thus, by Tonelli's theorem,
\begin{displaymath}
p_G(y)=\frac{1}{y!}\sum_{m\ge 0}\frac{(-1)^m}{m!}\mu_{y+m}(G),
\end{displaymath}
and the same expansion holds for $G^{(K)}$. Therefore
\begin{displaymath}
p_G(y)-p_{G^{(K)}}(y)=\frac{1}{y!}\sum_{m\ge 0}\frac{(-1)^m}{m!}\bigl\{\mu_{y+m}(G)-\mu_{y+m}(G^{(K)})\bigr\}.
\end{displaymath}
Because the moments agree up to order $K$, all terms with $y+m\le K$ vanish. Accordingly, we can write
\begin{displaymath}
|p_G(y)-p_{G^{(K)}}(y)|\le\frac{1}{y!}\sum_{m:\,y+m\ge K+1}\frac{1}{m!}\bigl|\mu_{y+m}(G)-\mu_{y+m}(G^{(K)})\bigr|.
\end{displaymath}
Since both $G$ and $G^{(K)}$ are supported on $[h_{0},h]$,
\begin{displaymath}
\bigl|\mu_r(G)-\mu_r(G^{(K)})\bigr|\le2h^r.
\end{displaymath}
Thus
\begin{displaymath}
\|p_G-p_{G^{(K)}}\|_1\le2\sum_{y\ge 0}\sum_{m:\,y+m\ge K+1}\frac{h^{y+m}}{y!\,m!}.
\end{displaymath}
Letting $r=y+m$ and using
\begin{displaymath}
\sum_{y=0}^r\frac{1}{y!(r-y)!}=\frac{2^r}{r!},
\end{displaymath}
we obtain
\begin{displaymath}
\|p_G-p_{G^{(K)}}\|_1\le2\sum_{r\geq K+1}\frac{(2h)^r}{r!}.
\end{displaymath}
Finally, Stirling's formula implies that, for large $K$,
\begin{displaymath}
\sum_{r\geq K+1}\frac{(2h)^r}{r!}\le C\left(\frac{2eh}{K+1}\right)^{K+1},
\end{displaymath}
for some constant $C>0$ depending only on $h$. Therefore one can choose
\begin{displaymath}
K\lesssim\frac{\log(1/\varepsilon)}{\log\log(1/\varepsilon)}
\end{displaymath}
so that $\|p_G-p_{G^{(K)}}\|_1\le\varepsilon$, which proves the last claim.
\end{proof}

The next lemma is the analogue of \citet[Lemma 5.1]{Gho(01)} adapted to Poisson mixtures.

\begin{lem}\label{lem:Poisson_L1_local}
Let $\tilde G=\sum_{j=1}^N p_j\delta_{\theta_j}$ be a discrete probability measure and let $\eta>0$ be such that
\begin{displaymath}
|\sqrt{\theta_j}-\sqrt{\theta_k}|>2\eta,
\qquad j\neq k.
\end{displaymath}
Define
\begin{displaymath}
U_j:=\{\theta\ge 0:\ |\sqrt{\theta}-\sqrt{\theta_j}|\le \eta\},\qquad U:=\bigcup_{j=1}^N U_j.
\end{displaymath}
Then, for every probability measure $G$ on $[0,\infty)$,
\begin{displaymath}
\|p_G-p_{\tilde G}\|_1\le 2\eta+\sum_{j=1}^N |G(U_j)-p_j|+G(U^c).
\end{displaymath}
\end{lem}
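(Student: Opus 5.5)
The idea is to write both $p_G$ and $p_{\tilde G}$ as integrals of Poisson kernels and compare them piece by piece on the partition $U_1,\dots,U_N,U^c$. The separation assumption $|\sqrt{\theta_j}-\sqrt{\theta_k}|>2\eta$ makes the sets $U_j$ pairwise disjoint, since a point in $U_j\cap U_k$ would give $|\sqrt{\theta_j}-\sqrt{\theta_k}|\le 2\eta$. Therefore
\begin{displaymath}
p_G=\sum_{j=1}^N\int_{U_j}q_\theta\,G(d\theta)+\int_{U^c}q_\theta\,G(d\theta),\qquad p_{\tilde G}=\sum_{j=1}^N p_j q_{\theta_j},
\end{displaymath}
with $q_\theta=\operatorname{Poisson}(\cdot\mid\theta)$ as in Lemma~\ref{lem:Poisson_kernel}.

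I would then insert the intermediate terms $G(U_j)q_{\theta_j}$ and write
\begin{align*}
p_G-p_{\tilde G}&=\sum_{j=1}^N\int_{U_j}(q_\theta-q_{\theta_j})\,G(d\theta)+\sum_{j=1}^N\bigl(G(U_j)-p_j\bigr)q_{\theta_j}+\int_{U^c}q_\theta\,G(d\theta).
\end{align*}
Taking $\ell^1$ norms over $y\in\mathbb N_0$ and using the triangle inequality together with Tonelli (all integrands are nonnegative), each term can be bounded separately.

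For the first term, Lemma~\ref{lem:Poisson_kernel} gives $\|q_\theta-q_{\theta_j}\|_1\le 2|\sqrt\theta-\sqrt{\theta_j}|\le 2\eta$ for $\theta\in U_j$. The first term is therefore at most $2\eta\sum_j G(U_j)\le 2\eta$. For the second term, $\|q_{\theta_j}\|_1=1$, so it is bounded by $\sum_j|G(U_j)-p_j|$. For the third term, $\|q_\theta\|_1=1$, so it is bounded by $G(U^c)$. Adding the three bounds gives the claim.

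The only point needing care is measurability and the interchange of sum and integral in $\|\int_{U_j}(q_\theta-q_{\theta_j})G(d\theta)\|_1\le\int_{U_j}\|q_\theta-q_{\theta_j}\|_1G(d\theta)$. This follows from $|\int f|\le\int|f|$ pointwise in $y$ followed by Tonelli, and there is no real obstacle.
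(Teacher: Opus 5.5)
Your proof is correct and follows essentially the same argument as the paper: the same partition $U_1,\dots,U_N,U^c$, the same insertion of $G(U_j)q_{\theta_j}$, and Lemma~\ref{lem:Poisson_kernel} applied on each $U_j$. The only cosmetic difference is that the paper first normalizes the restrictions of $G$ into conditional measures $G_j$, which forces a separate treatment of the case $G(U_j)=0$; working with unnormalized restrictions, as you do, avoids that case.
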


\begin{proof}
The separation assumption implies that the sets $U_1,\dots,U_N$ are pairwise disjoint. Write
\begin{displaymath}
w_j:=G(U_j),\qquad r:=G(U^c),
\end{displaymath}
so that $\sum_{j=1}^N w_j+r=1$. For each $j$ with $w_j>0$, define
\begin{displaymath}
G_j(A):=\frac{G(A\cap U_j)}{w_j}.
\end{displaymath}
If $w_j=0$, let $G_j$ be any fixed probability measure on $[0,\infty)$; the corresponding term below is then multiplied by $w_j=0$ and is irrelevant. Similarly, if $r>0$, define
\begin{displaymath}
G^c(A):=\frac{G(A\cap U^c)}{r},
\end{displaymath}
while if $r=0$, let $G^c$ be any fixed probability measure on $[0,\infty)$. Then
\begin{displaymath}
p_G=\sum_{j=1}^N w_j p_{G_j}+r p_{G^c}.
\end{displaymath}
Since $p_{\tilde G}=\sum_{j=1}^N p_j q_{\theta_j}$, the triangle inequality gives
\begin{displaymath}
\|p_G-p_{\tilde G}\|_1\le\sum_{j=1}^N \|w_jp_{G_j}-p_jq_{\theta_j}\|_1+r.
\end{displaymath}
Using $\|af-bg\|_1\le|a-b|+a\|f-g\|_1$ for $a,b\ge 0$ and probability mass functions $f$ and $g$, we obtain
\begin{displaymath}
\|p_G-p_{\tilde G}\|_1\le\sum_{j=1}^N |w_j-p_j|+\sum_{j=1}^N w_j\|p_{G_j}-q_{\theta_j}\|_1+r.
\end{displaymath}
By convexity of the $L^1$-norm and Lemma~\ref{lem:Poisson_kernel},
\begin{displaymath}
\|p_{G_j}-q_{\theta_j}\|_1\le\int_{[0,\infty)} \|q_\theta-q_{\theta_j}\|_1\,G_j(d\theta)\le2\eta,
\end{displaymath}
because $G_j(U_j)=1$ and $|\sqrt{\theta}-\sqrt{\theta_j}|\le\eta$ on $U_j$. Hence
\begin{displaymath}
\sum_{j=1}^N w_j\|p_{G_j}-q_{\theta_j}\|_1\le2\eta\sum_{j=1}^N w_j\le2\eta.
\end{displaymath}
Therefore
\begin{displaymath}
\|p_G-p_{\tilde G}\|_1\le\sum_{j=1}^N |G(U_j)-p_j|+2\eta+G(U^c),
\end{displaymath}
as claimed.
\end{proof}

The next lemma is the analogue of \citet[Lemma 4.1]{Gho(01)} adapted to Poisson mixtures.

\begin{lem}\label{lem:Poisson_WS}
For probability mass functions $p$ and $q$ on $\mathbb N_0$, define $K(p,q):=\sum_{y\ge 0}p(y)\log\frac{p(y)}{q(y)}$ and $V(p,q):=\sum_{y\ge 0}p(y)\left(\log\frac{p(y)}{q(y)}\right)^2$, with the usual convention that these quantities are infinite if $q(y)=0$ for some $y$ such that $p(y)>0$. For $\delta>0$, define the Kullback--Leibler type ball
\begin{displaymath}
B(\delta,p_{G^\ast}):=\left\{p:\ K(p_{G^\ast},p)\le \delta^2,\quad V(p_{G^\ast},p)\le \delta^2\right\}.
\end{displaymath}
Let $0<h_0\le h_0^\ast<h^\ast\le h<+\infty$, and assume that $G^\ast([h_0^\ast,h^\ast])=1$. Fix constants $0<b\le h<+\infty$ and $\alpha\in(0,1]$. Then there exist constants $C>0$ and $\eta_0>0$, depending only on $(h_0,h,h_0^\ast,h^\ast,b,\alpha)$, such that for every $G\in\mathcal P([h_0,h])$ satisfying $G([b,h])\ge \alpha$, the implication
$\eta:=d_{\mathrm H}(p_{G^\ast},p_G)<\eta_0$ yields
\begin{displaymath}
K(p_{G^\ast},p_G)\le C\,\eta^2\log\frac{1}{\eta},\qquad V(p_{G^\ast},p_G)\le C\,\eta^2\left(\log\frac{1}{\eta}\right)^2.
\end{displaymath}
Consequently, there exists a constant $A>0$, depending only on $(h_0,h,h_0^\ast,h^\ast,b,\alpha)$, such that for every $G\in\mathcal P([h_0,h])$ satisfying $G([b,h])\ge \alpha$, whenever
\begin{displaymath}
\|p_G-p_{G^\ast}\|_1\le \delta
\end{displaymath}
for some sufficiently small $\delta>0$, one has
\begin{displaymath}
p_G\in B\!\left(A\,\delta^{1/2}\log\frac{1}{\delta},\ p_{G^\ast}\right).
\end{displaymath}
\end{lem}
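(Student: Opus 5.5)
The proof follows the scheme of \citet[Lemma 4.1]{Gho(01)} and rests on one ingredient: a uniform polynomial bound on the likelihood ratio $p_{G^\ast}(y)/p_G(y)$. Once that is in hand, the standard comparison between Kullback--Leibler divergence and Hellinger distance under a ratio bound finishes the argument (\citet[Theorem 5]{WongShen(95)}, or its simplified form in Ghosal--van der Vaart).

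\emph{Step 1: ratio bound.} Since $G^\ast([h_0^\ast,h^\ast])=1$,
\begin{displaymath}
p_{G^\ast}(y)\le \frac{(h^\ast)^y}{y!}\,e^{-h_0^\ast}.
\end{displaymath}
Since $G([b,h])\ge\alpha$, integrating only over $[b,h]$ gives
\begin{displaymath}
p_G(y)\ge \alpha\, e^{-h}\,\frac{b^y}{y!}.
\end{displaymath}
Therefore
\begin{displaymath}
\frac{p_{G^\ast}(y)}{p_G(y)}\le C_0\, r^{y}, \qquad C_0=\frac{e^{h-h_0^\ast}}{\alpha},\quad r=\frac{h^\ast}{b},
\end{displaymath}
with both constants depending only on the admissible parameters. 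This ratio is unbounded in $y$, so the bounded-ratio version of the Wong--Shen lemma does not apply directly. Instead, the growth of the ratio has to be traded against the Poisson tail of $p_{G^\ast}$, which decays superexponentially: $p_{G^\ast}(y)\le e^{-h_0^\ast}(h^\ast)^y/y!$.

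\emph{Step 2: split the sums.} Fix a threshold $M$ and split $K$ and $V$ into the ranges $y\le M$ and $y>M$.

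On $y\le M$, the ratio is bounded by $C_0 r^M$. The elementary inequalities
\begin{displaymath}
p\log\frac{p}{q}\le 2(\sqrt p-\sqrt q)^2\bigl(1+\log_+ \tfrac{p}{q}\bigr)+(p-q), \qquad p\bigl(\log\tfrac{p}{q}\bigr)^2\lesssim (\sqrt p-\sqrt q)^2\bigl(1+\log_+\tfrac pq\bigr)^2,
\end{displaymath}
used exactly as in Wong--Shen, give contributions of order $\eta^2(1+M\log r)$ and $\eta^2(1+M\log r)^2$ respectively. The linear term $\sum(p-q)$ must be handled by summing over all $y$, where it vanishes, so on the truncated range it leaves a tail remainder that is controlled together with the tail part below.

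On $y>M$,
\begin{displaymath}
\bigl|\log\tfrac{p_{G^\ast}}{p_G}\bigr|\le \log C_0+y\log r + \log(\text{poly}),
\end{displaymath}
using also the crude lower bound $p_{G^\ast}(y)\ge e^{-h^\ast}(h_0^\ast)^y/y!$ for the negative part. Hence the tail terms are bounded by
\begin{displaymath}
\sum_{y>M}(1+y)^2\, \frac{(h^\ast)^y}{y!}\lesssim \Bigl(\frac{e h^\ast}{M}\Bigr)^{M}.
\end{displaymath}

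\emph{Step 3: choose $M$.} Take $M\asymp \log(1/\eta)$. Since $(eh^\ast/M)^M$ decays faster than any power of $\eta$, the tail is $o(\eta^2)$, while the bulk gives
\begin{displaymath}
K\lesssim \eta^2\log(1/\eta),\qquad V\lesssim \eta^2\log^2(1/\eta).
\end{displaymath}
The threshold $\eta_0$ is chosen so that $M\ge 2eh^\ast$ and the various logarithmic factors are dominated. I expect the main obstacle to be the careful bookkeeping here: the linear term $\sum(p-q)$ over $y\le M$ and the negative part of the log-ratio. A cleaner alternative is to apply the Wong--Shen theorem in its moment form, which only requires
\begin{displaymath}
\sum_y p_{G^\ast}(y)\,\bigl(p_{G^\ast}(y)/p_G(y)\bigr)^{\delta}<\infty
\end{displaymath}
for some $\delta\in(0,1]$. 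This holds here uniformly, since $\sum_y \frac{(h^\ast)^y}{y!}C_0 r^{y}<\infty$. Its conclusion $K\lesssim \eta^2\log(1/\eta)$ and $V\lesssim\eta^2\log^2(1/\eta)$ is exactly the claimed bound, so I would invoke that form directly and present the truncation argument above as its justification.

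\emph{Step 4: the consequence.} By the inequality $d_H^2(p,q)\le\|p-q\|_1$ (with $d_H^2=\sum(\sqrt p-\sqrt q)^2$, as in Lemma~\ref{lem:Poisson_kernel}), we have $\eta\le\delta^{1/2}$. Since $x\mapsto x^2\log^2(1/x)$ is increasing near $0$,
\begin{displaymath}
V\le C\delta\log^2(\delta^{-1/2})\le C\delta\log^2(1/\delta),
\end{displaymath}
and similarly $K\le C\delta\log(1/\delta)\le C\delta\log^2(1/\delta)$ for small $\delta$. Hence $p_G\in B(A\delta^{1/2}\log(1/\delta),p_{G^\ast})$ with $A=\sqrt C$.
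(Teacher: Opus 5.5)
Your proof is correct and follows the paper's route: the ratio bound $p_{G^\ast}(y)/p_G(y)\le C_0 r^y$, the resulting uniform Wong--Shen moment condition, an appeal to Wong--Shen's Theorem 5 as in Ghosal--van der Vaart's Lemma 4.1, and then $d_{\mathrm H}^2\le\|p_G-p_{G^\ast}\|_1$ for the consequence. Your truncation sketch with $M\asymp\log(1/\eta)$ is a sound self-contained substitute for that citation, which the paper simply invokes; the one detail to fix is that the leftover linear term $\sum_{y\le M}(p_{G^\ast}-p_G)=\sum_{y>M}(p_G-p_{G^\ast})$ is controlled by the tail $\sum_{y>M}h^y/y!$ of $p_G$, not by the tail of $p_{G^\ast}$.
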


\begin{proof}
For every $y\ge 0$,
\begin{displaymath}
p_{G^\ast}(y)=\frac{1}{y!}\int_{[h_{0}^{\ast},h^\ast]} e^{-\theta}\theta^y\,G^\ast(d\theta)\le\frac{(h^\ast)^y}{y!}.
\end{displaymath}
Also, since $G([b,h])\ge \alpha$,
\begin{displaymath}
p_G(y)=\frac{1}{y!}\int_{[h_{0},h]} e^{-\theta}\theta^y\,G(d\theta)\ge\frac{1}{y!}\int_{[b,h]}e^{-\theta}\theta^y\,G(d\theta)\ge\alpha e^{-h}\frac{b^y}{y!}.
\end{displaymath}
Therefore
\begin{displaymath}
\frac{p_{G^\ast}(y)}{p_G(y)}\le\alpha^{-1}e^h\left(\frac{h^\ast}{b}\right)^y,\qquad y\ge 0.
\end{displaymath}
Let $C_0:=\alpha^{-1}e^h$, $r:=\frac{h^\ast}{b}$. Fix any $\delta_\ast\in(0,1]$. Then
\begin{displaymath}
\sum_{y\ge 0}\left(\frac{p_{G^\ast}(y)}{p_G(y)}\right)^{\delta_\ast}p_{G^\ast}(y)\le C_0^{\delta_\ast}\sum_{y\ge 0}r^{\delta_\ast y}p_{G^\ast}(y).
\end{displaymath}
If $Y\sim p_{G^\ast}$, then, for every $t>0$,
\begin{displaymath}
\mathbb E_{G^\ast}(t^Y)=\int_{[h_{0}^{\ast},h^\ast]}e^{\theta(t-1)}\,G^\ast(d\theta)\le e^{h^\ast|t-1|}.
\end{displaymath}
Taking $t=r^{\delta_\ast}$ gives
\begin{displaymath}
\sum_{y\ge 0}\left(\frac{p_{G^\ast}(y)}{p_G(y)}\right)^{\delta_\ast}p_{G^\ast}(y)\le C_0^{\delta_\ast}e^{h^\ast|r^{\delta_\ast}-1|}<+\infty.
\end{displaymath}
Thus the integrability condition needed to apply \citet[Theorem 5]{Won(95)} holds uniformly over all $G\in\mathcal P([h_{0},h])$ satisfying $G([b,h])\ge\alpha$. Arguing as in the proof of \citet[Lemma 4.1]{Gho(01)}, we obtain constants $C>0$ and $\eta_0>0$, depending only on $(h_0,h,h_0^\ast,h^\ast,b,\alpha)$, such that, for $\eta=d_{\mathrm H}(p_{G^\ast},p_G)<\eta_0$,
\begin{displaymath}
K(p_{G^\ast},p_G)\le C\,\eta^2\log\frac{1}{\eta},\qquad V(p_{G^\ast},p_G)\le C\,\eta^2\left(\log\frac{1}{\eta}\right)^2.
\end{displaymath}
It remains to prove the final claim. If $\|p_G-p_{G^\ast}\|_1\le\delta$, then
\begin{displaymath}
d_{\mathrm H}^2(p_{G^\ast},p_G)=\sum_{y\ge 0}\{\sqrt{p_{G^\ast}(y)}-\sqrt{p_G(y)}\}^2\le\sum_{y\ge 0}|p_{G^\ast}(y)-p_G(y)|\le\delta.
\end{displaymath}
Thus $d_{\mathrm H}(p_{G^\ast},p_G)\le\delta^{1/2}$. For all sufficiently small $\delta$, this is smaller than $\eta_0$. Therefore
\begin{displaymath}
K(p_{G^\ast},p_G)\le C\,\delta\log\frac{1}{\delta^{1/2}}\lesssim\delta\log\frac{1}{\delta},
\end{displaymath}
and
\begin{displaymath}
V(p_{G^\ast},p_G)\le C\,\delta\left(\log\frac{1}{\delta^{1/2}}\right)^2\lesssim\delta\left(\log\frac{1}{\delta}\right)^2.
\end{displaymath}
Choosing $A>0$ sufficiently large, we obtain
\begin{displaymath}
K(p_{G^\ast},p_G)\le A^2\delta\left(\log\frac{1}{\delta}\right)^2,\qquad V(p_{G^\ast},p_G)\le A^2\delta\left(\log\frac{1}{\delta}\right)^2.
\end{displaymath}
Equivalently,
\begin{displaymath}
p_G\in B\!\left(A\,\delta^{1/2}\log\frac{1}{\delta},\ p_{G^\ast}\right).
\end{displaymath}
This proves the lemma.
\end{proof}

\begin{lem}\label{lem:Poisson_local_sets}
Let $0<h_0\le h_0^\ast<h^\ast\le h<+\infty$ and assume that $G^\ast([h_0^\ast,h^\ast])=1$. Let $\Pi$ denote the law of a Dirichlet process $\operatorname{DP}(1,H)$, where $H$ is a finite measure on $[h_0,h]$ with density $a$ satisfying $0<m_H\le a(\theta)\le M_H<+\infty$ and $h_0\le \theta\le h$. Let $\varepsilon_n:=\frac{\log n}{\sqrt n}$. The Kullback--Leibler type ball $B(\delta,p_{G^\ast})$ is as defined in Lemma~\ref{lem:Poisson_WS}. Then there exist constants $L>0$ and $C_{\mathrm{KL}}>0$, depending only on $(h_0,h,h_0^\ast,h^\ast,G^\ast,\alpha)$, and sets $B_n\subset\mathcal P([h_0,h])$ such that
\begin{displaymath}
\sup_{G\in B_n}K(p_{G^\ast},p_G)\le L^2\varepsilon_n^2,\qquad\sup_{G\in B_n}V(p_{G^\ast},p_G)\le L^2\varepsilon_n^2,
\end{displaymath}
and
\begin{displaymath}
\Pi(B_n)\ge \exp(-C_{\mathrm{KL}}n\varepsilon_n^2)
\end{displaymath}
for all sufficiently large $n$.
\end{lem}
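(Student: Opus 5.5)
We follow the prior-mass argument of \citet[Theorem~5.1]{Gho(01)}, now adapted to the Poisson kernel via Lemmas~\ref{lem:Poisson_moment_matching}--\ref{lem:Poisson_WS}. The first step is an approximation of $G^\ast$. Fix $\delta=\varepsilon_n^2$, up to constants (recall $\varepsilon_n=\log n/\sqrt n$). By Lemma~\ref{lem:Poisson_moment_matching} there is a discrete $\tilde G=\sum_{j=1}^N p_j\delta_{\theta_j}$ with $\theta_j\in[h_0^\ast,h^\ast]$ such that
\begin{displaymath}
\|p_{G^\ast}-p_{\tilde G}\|_1\le \delta, \qquad N\lesssim \log(1/\delta)/\log\log(1/\delta)\lesssim \log n.
\end{displaymath}
We need the atoms separated in the $\sqrt{\theta}$ scale. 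To get this, we merge atoms greedily whenever $|\sqrt{\theta_j}-\sqrt{\theta_k}|\le 2\eta$, with $\eta=\delta$. By Lemma~\ref{lem:Poisson_kernel}, moving mass between such atoms costs at most order $\eta$ in $L^1$, summed over at most $N$ merges. So after rescaling the constant in $\delta$ we may assume $2\eta$-separation and still have $\|p_{G^\ast}-p_{\tilde G}\|_1\lesssim\delta$. Intervals $U_j$ with $\sqrt{\cdot}$-radius $\eta$ around $\theta_j\in[h_0^\ast,h^\ast]\subset(0,h]$ are genuine intervals of Lebesgue length $\asymp\eta$, since $\sqrt{\theta}$ is bi-Lipschitz on $[h_0,h]$. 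Complete $\{U_j\}$ to a partition $\{U_1,\dots,U_M\}$ of $[h_0,h]$ by adding at most $M-N\lesssim 1/\eta$ further cells of $H$-mass at most of order $\eta$. This is the standard device ensuring the residual cells have small total base mass in the Dirichlet small-ball estimate.

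Next we define the sets $B_n$. Let
\begin{displaymath}
B_n:=\Bigl\{G:\ \sum_{j=1}^M|G(U_j)-p_j|\le \eta\Bigr\}, \qquad p_j:=0 \text{ for } j>N.
\end{displaymath}
For $G\in B_n$, Lemma~\ref{lem:Poisson_L1_local} gives $\|p_G-p_{\tilde G}\|_1\le 2\eta+\sum_{j\le N}|G(U_j)-p_j|+G(U^c)\lesssim\eta$. Hence $\|p_G-p_{G^\ast}\|_1\lesssim\delta$. Moreover $G([b,h])\ge\alpha$ for a fixed $b<h_0^\ast$ and, say, $\alpha=1/2$, because nearly all the mass sits in the $U_j$, $j\le N$, which lie inside $[h_0^\ast-O(\eta),h]$. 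Lemma~\ref{lem:Poisson_WS} then applies and yields
\begin{displaymath}
K,V\le A^2\delta(\log 1/\delta)^2\lesssim \varepsilon_n^2(\log n)^2/ n\cdot n .
\end{displaymath}
Here is the point needing care: we must pick $\delta$ so that $A^2\delta(\log(1/\delta))^2\le L^2\varepsilon_n^2$. The choice $\delta=\varepsilon_n^2/(\log n)^2=1/n$ works, giving $K,V\lesssim (\log n)^2/n=\varepsilon_n^2$. So we take $\delta=\eta=1/n$. With this choice $N\lesssim\log n/\log\log n$, but the completion of the partition must then be handled differently than with $M\asymp 1/\eta=n$ cells, as explained next.

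The final step is the Dirichlet prior mass, which we expect to be the main obstacle. Since $(G(U_1),\dots,G(U_M))\sim\mathrm{Dir}(H(U_1),\dots,H(U_M))$, we invoke \citet[Lemma~A.2]{Gho(01)}. That lemma gives
\begin{displaymath}
\Pi\Bigl(\sum_{j}|G(U_j)-p_j|\le\eta\Bigr)\ge C\exp\{-cM\log(1/\eta)\},
\end{displaymath}
provided each $H(U_j)\ge\eta^{b}$ for some fixed $b$ and $\sum_j H(U_j)\le1$; these hold because $H$ has density bounded between $m_H$ and $M_H$ and $|U_j|\asymp\eta$. The difficulty is that $M$ must be $O(\log n/\log\log n)$, not $1/\eta$. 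So instead of small residual cells we take the complement $U^c$ as a few (at most $N+1$) intervals, i.e.\ the gaps between the $U_j$, as single cells. These may have large $H$-mass, but the Lemma~A.2 bound only needs $p_j=0$ on them, and it still holds for such cells, with exponent depending on $M$ and $\log(1/\eta)$ only. Then $M\le 2N+1\lesssim \log n/\log\log n$ and the bound reads
\begin{displaymath}
\Pi(B_n)\ge\exp\{-c'\,(\log n/\log\log n)\log n\}\ge\exp\{-C_{\mathrm{KL}}(\log n)^2\}=\exp\{-C_{\mathrm{KL}}n\varepsilon_n^2\}.
\end{displaymath}
The constants depend only on $(h_0,h,h_0^\ast,h^\ast,G^\ast)$ and on the bounds $m_H,M_H$ of the density of $H$.
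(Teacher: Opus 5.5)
Your proposal is correct and follows essentially the same route as the paper. Both arguments use a moment-matching discrete approximant of $G^\ast$ with $O(\log n/\log\log n)$ atoms, separated in the $\sqrt{\theta}$-scale, with neighbourhoods of $\sqrt{\theta}$-radius of order $1/n$; Lemmas~\ref{lem:Poisson_L1_local} and~\ref{lem:Poisson_WS} then place $B_n$ in a KL-type ball of radius of order $\varepsilon_n$, and \citet[Lemma~A.2]{Gho(01)} gives the Dirichlet mass, with two details to tighten. First, your greedy merging should move each atom only once, to a cluster centre (as the paper's net-snapping does), so the total $L^1$ cost is $O(\eta)$ rather than $O(N\eta)$, which would lose a logarithmic factor in the KL radius; second, the zero-target gap cells should be aggregated into the single cell $U^c$, whose $H$-mass is bounded below, because the hypotheses of Lemma~A.2 include a lower bound on every Dirichlet parameter and an individual gap can have arbitrarily small base mass.
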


\begin{proof}
Choose $b>0$ such that $3b<h_0^\ast$, and choose $\alpha_0\in(0,1/4)$. Then $G^\ast([3b,h^\ast])=1>4\alpha_0$. Fix a sufficiently small $\varepsilon>0$. Choose an integer
$K=K_\varepsilon$ such that
\begin{displaymath}
2\sum_{r\geq K+1}\frac{(2h)^r}{r!}\le\frac{\varepsilon}{2}.
\end{displaymath}
Let $\varphi:[h_0^\ast,h^\ast]\to[0,1]$ be continuous and satisfy $\varphi(\theta)=1$ for all $\theta\in[h_0^\ast,h^\ast]$. Apply \citet[Lemma A.1]{Gho(01)} on the compact interval
$[h_0^\ast,h^\ast]$ to the functions $\psi_r(\theta)=\theta^r$, $r=1,\dots,K$, and $\psi_{K+1}(\theta)=\varphi(\theta)$. Then there exists a discrete
probability measure
\begin{displaymath}
\tilde{\tilde G}^\ast=\sum_{j=1}^{M_\varepsilon}\widetilde p_j\delta_{\widetilde\theta_j},\qquad\widetilde\theta_j\in[h_0^\ast,h^\ast],\qquad M_\varepsilon\le K+2,
\end{displaymath}
such that
\begin{displaymath}
\int_{[h_0^\ast,h^\ast]}\theta^r\,\tilde{\tilde G}^\ast(d\theta)=\int_{[h_0^\ast,h^\ast]}\theta^r\,G^\ast(d\theta),\qquad r=1,\dots,K,
\end{displaymath}
and
\begin{displaymath}
\int_{[h_0^\ast,h^\ast]}\varphi(\theta)\,\tilde{\tilde G}^\ast(d\theta)=\int_{[h_0^\ast,h^\ast]}\varphi(\theta)\,G^\ast(d\theta)=1.
\end{displaymath}
In particular, $\tilde{\tilde G}^\ast$ is supported on $[h_0^\ast,h^\ast]\subseteq[h_0,h]$ and $\tilde{\tilde G}^\ast([3b,h^\ast])=1>4\alpha_0$. Since $\tilde{\tilde G}^\ast$ is a probability measure, the moment identity also holds for $r=0$. Therefore, by the proof of Lemma~\ref{lem:Poisson_moment_matching},
\begin{displaymath}
\|p_{G^\ast}-p_{\tilde{\tilde G}^\ast}\|_1\le\frac{\varepsilon}{2}.
\end{displaymath}
Now move each support point of $\tilde{\tilde G}^\ast$ to a point of an $\varepsilon$-net in the $\sqrt{\theta}$-scale such that every point of $[\sqrt{h_0},\sqrt h]$ is within distance $\varepsilon/2$ of the net and distinct net points are separated by more than $\varepsilon/2$. Merge duplicate atoms. This yields a discrete probability measure
\begin{displaymath}
\tilde G^\ast=\sum_{j=1}^{N_\varepsilon}p_j\delta_{\theta_j}
\end{displaymath}
such that
\begin{displaymath}
N_\varepsilon\le M_\varepsilon\lesssim\frac{\log(1/\varepsilon)}{\log\log(1/\varepsilon)},
\end{displaymath}
and
\begin{displaymath}
|\sqrt{\theta_j}-\sqrt{\theta_k}|>\frac{\varepsilon}{2},\qquad j\neq k.
\end{displaymath}
Before merging duplicate atoms, each atom has been moved by at most $\varepsilon/2$ in the $\sqrt{\theta}$-scale. Hence, by Lemma~\ref{lem:Poisson_kernel},
\begin{displaymath}
\|p_{\tilde{\tilde G}^\ast}-p_{\tilde G^\ast}\|_1\le\varepsilon.
\end{displaymath}
Therefore
\begin{displaymath}
\|p_{G^\ast}-p_{\tilde G^\ast}\|_1\le\|p_{G^\ast}-p_{\tilde{\tilde G}^\ast}\|_1+\|p_{\tilde{\tilde G}^\ast}-p_{\tilde G^\ast}\|_1\le\frac{3\varepsilon}{2}.
\end{displaymath}
Since $\tilde{\tilde G}^\ast([2b,h])>4\alpha_0$, and each support point is moved by at most $\varepsilon/2$ in the $\sqrt{\theta}$-scale, every atom of $\tilde{\tilde G}^\ast$ lying in $[2b,h]$ is moved into $[b,h]$ for all sufficiently small $\varepsilon$. Merging duplicate atoms does not change the total mass. Hence
\begin{displaymath}
\tilde G^\ast([b,h])>4\alpha_0
\end{displaymath}
for all sufficiently small $\varepsilon$. Define
\begin{displaymath}
U_j:=\{\theta\in[h_{0},h]:\ |\sqrt{\theta}-\sqrt{\theta_j}|\le \varepsilon/4\},\qquad U:=\bigcup_{j=1}^{N_\varepsilon}U_j.
\end{displaymath}
Because the support points are separated by more than $\varepsilon/2$ in the $\sqrt{\theta}$-scale, the sets $U_1,\dots,U_{N_\varepsilon}$ are pairwise disjoint. Let
$U^c=[h_{0},h]\setminus U$ and define
\begin{displaymath}
\mathcal E_\varepsilon:=\left\{G\in\mathcal P([h_{0},h]):\sum_{j=1}^{N_\varepsilon}|G(U_j)-p_j|\le \varepsilon,\ G(U^c)\le \varepsilon\right\}.
\end{displaymath}
If $G\in\mathcal E_\varepsilon$, Lemma~\ref{lem:Poisson_L1_local} gives
\begin{displaymath}
\|p_G-p_{\tilde G^\ast}\|_1\le 2(\varepsilon/4)+\varepsilon+\varepsilon=\frac{5\varepsilon}{2}.
\end{displaymath}
Therefore
\begin{displaymath}
\|p_G-p_{G^\ast}\|_1\le\|p_G-p_{\tilde G^\ast}\|_1+\|p_{\tilde G^\ast}-p_{G^\ast}\|_1\le\frac{5\varepsilon}{2}+\frac{3\varepsilon}{2}=4\varepsilon.
\end{displaymath}
We next show that $G\in\mathcal E_\varepsilon$ implies $G([b/2,h])\ge\alpha_0$ for all sufficiently small $\varepsilon$. If $\theta_j\ge b$, then, for sufficiently small $\varepsilon$,
\begin{displaymath}
U_j\subset [b/2,h].
\end{displaymath}
Hence, for $G\in\mathcal E_\varepsilon$,
\begin{displaymath}
G([b/2,h])\ge\sum_{\theta_j\ge b}G(U_j)\ge\sum_{\theta_j\ge b}p_j-\varepsilon=\tilde G^\ast([b,h])-\varepsilon>4\alpha_0-\varepsilon>\alpha_0,
\end{displaymath}
provided $\varepsilon$ is sufficiently small. Now apply Lemma~\ref{lem:Poisson_WS} with $b/2$ in place of $b$ and $\alpha_0$ in place of $\alpha$. Since $\|p_G-p_{G^\ast}\|_1\le 4\varepsilon$, that lemma yields, for all sufficiently
small $\varepsilon$,
\begin{displaymath}
p_G\in B\!\left(A(4\varepsilon)^{1/2}\log\frac{1}{4\varepsilon},\ p_{G^\ast}\right),
\end{displaymath}
for a constant $A>0$ depending only on $(h_0,h,h_0^\ast,h^\ast,b,\alpha_0)$. Absorbing numerical constants into $A_1$, we obtain
\begin{displaymath}
\mathcal E_\varepsilon\subset B\!\left(A_1\,\varepsilon^{1/2}\log\frac{1}{\varepsilon},\ p_{G^\ast}\right).
\end{displaymath}
Now set $B_n:=\mathcal E_{1/n}$. Since
\begin{displaymath}
\left(\frac{1}{n}\right)^{1/2}\log n=\varepsilon_n,
\end{displaymath}
we have
\begin{displaymath}
B_n\subset B(A_1\varepsilon_n,p_{G^\ast}).
\end{displaymath}
By the definition of the KL-type ball,
\begin{displaymath}
\sup_{G\in B_n}K(p_{G^\ast},p_G)\le A_1^2\varepsilon_n^2,\qquad\sup_{G\in B_n}V(p_{G^\ast},p_G)\le A_1^2\varepsilon_n^2.
\end{displaymath}
It remains to lower bound $\Pi(B_n)$. Since $H$ has density bounded below on $[h_{0},h]$, there exists $c>0$ such that, for $\varepsilon=1/n$,
\begin{displaymath}
H(U_j)\ge c n^{-2},\qquad j=1,\dots,N_{1/n},
\end{displaymath}
for all sufficiently large $n$. Apply \citet[Lemma A.2]{Gho(01)} to the partition $U_1,\dots,U_{N_{1/n}},U^c$, after subdividing $U^c$ into finitely many pieces, if necessary, so that each Dirichlet parameter is at most $1$. This yields
\begin{displaymath}
\Pi(B_n)=\Pi(\mathcal E_{1/n})\ge\exp\{-c_1N_{1/n}\log n\}
\end{displaymath}
for some constant $c_1>0$. Since
\begin{displaymath}
N_{1/n}\lesssim\frac{\log n}{\log\log n},
\end{displaymath}
we get
\begin{displaymath}
N_{1/n}\log n\lesssim\frac{(\log n)^2}{\log\log n}\le C(\log n)^2=C n\varepsilon_n^2
\end{displaymath}
for all sufficiently large $n$, where the second inequality uses $(\log\log n)^{-1}\le 1$ for $n\ge e^e$. Hence there exists $C_{\mathrm{KL},1}>0$ such that
\begin{displaymath}
\Pi(B_n)\ge \exp(-C_{\mathrm{KL},1}n\varepsilon_n^2)
\end{displaymath}
for all sufficiently large $n$. The desired conclusion follows by taking $L:=A_1$ and $C_{\mathrm{KL}}:=C_{\mathrm{KL},1}$. This proves the lemma.
\end{proof}

\begin{lem}\label{lem:Poisson_bracketing_entropy}
Let $0<h_0\le h<+\infty$ and define $\mathcal G_h:=\{p_G:\ G\in\mathcal P([h_0,h])\}$. There exists a constant $C>0$, depending only on $h$, such that, for all
sufficiently small $\varepsilon>0$,
\begin{displaymath}
\log N_{[]}(\varepsilon,\mathcal G_h,d_{\mathrm H})\le C\,\frac{(\log(1/\varepsilon))^2}{\log\log(1/\varepsilon)}.
\end{displaymath}
\end{lem}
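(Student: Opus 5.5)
The plan is to build explicit Hellinger brackets for $\mathcal G_h$ from two ingredients. The first is a finite approximation of $p_G$ on a truncated range $\{0,\dots,K\}$, using the moment-matching Lemma~\ref{lem:Poisson_moment_matching}. The second is a uniform envelope for the tail $y>K$. Throughout, set $K=K_\varepsilon\asymp \log(1/\varepsilon)/\log\log(1/\varepsilon)$, chosen large enough that
\begin{displaymath}
\sum_{y>K}\frac{h^y}{y!}\le \varepsilon^2 .
\end{displaymath}
Since $p_G(y)\le h^y/y!$ for every $G\in\mathcal P([h_0,h])$, the tail contributes at most $\varepsilon^2$ in $L^1$. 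It therefore contributes at most $\varepsilon$ in Hellinger distance once we use the envelope $u(y)=h^y/y!$ and lower function $0$ for $y>K$.

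For the range $y\le K$, I would approximate $G$ in a finite-dimensional way. By Lemma~\ref{lem:Poisson_moment_matching}, every $p_G$ is within $L^1$ distance $\varepsilon^2$ of some $p_{G'}$, where $G'$ has at most $N\lesssim K$ atoms. Discretize the atom locations on a grid of mesh $\varepsilon^2$ in the $\sqrt\theta$-scale, so that Lemma~\ref{lem:Poisson_kernel} controls the resulting error. Discretize the weights on a grid of mesh $\varepsilon^2/N$ in the simplex. This gives a finite family $\{p_{G_i}\}$ whose cardinality $M$ satisfies
\begin{displaymath}
\log M\lesssim N\log(1/\varepsilon)\lesssim \frac{(\log(1/\varepsilon))^2}{\log\log(1/\varepsilon)} .
\end{displaymath}
Every $p_G$ is then within $\varepsilon^2$ in $L^1$ of some $p_{G_i}$. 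For each $i$, define the bracket
\begin{align*}
l_i(y) &= \max\{p_{G_i}(y)-\eta,\,0\}\,\mathbbm 1\{y\le K\},\\
u_i(y) &= \bigl(p_{G_i}(y)+\eta\bigr)\mathbbm 1\{y\le K\}+\frac{h^y}{y!}\,\mathbbm 1\{y>K\},
\end{align*}
with $\eta=\varepsilon^2$. Because $|p_G(y)-p_{G_i}(y)|\le \|p_G-p_{G_i}\|_1\le\eta$ pointwise, every $p_G$ approximated by $p_{G_i}$ lies in $[l_i,u_i]$.

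It remains to bound the Hellinger size of each bracket. Using $(\sqrt u-\sqrt l)^2\le u-l$, we get
\begin{displaymath}
d_{\mathrm H}^2(u_i,l_i)\le \sum_{y\le K}2\eta+\varepsilon^2\lesssim K\varepsilon^2 .
\end{displaymath}
This yields brackets of Hellinger size $\lesssim \varepsilon\sqrt K$. Replacing $\varepsilon$ by $\varepsilon/\sqrt{\log(1/\varepsilon)}$ throughout changes $\log(1/\varepsilon)$ only by constant factors, so the bound $\log N_{[]}\le C(\log(1/\varepsilon))^2/\log\log(1/\varepsilon)$ survives.

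The main obstacle I expect is the counting step. One has to check that discretizing the atoms and weights costs only $\log(1/\varepsilon)$ per atom, and that $N$ is genuinely of order $\log(1/\varepsilon)/\log\log(1/\varepsilon)$, not $\log(1/\varepsilon)$. The saving comes from the factorial decay in Lemma~\ref{lem:Poisson_moment_matching} applied at accuracy $\varepsilon^2$, since $\log(1/\varepsilon^2)\asymp\log(1/\varepsilon)$. A subsidiary point is that the brackets need not be probability mass functions; this is standard in bracketing for Hellinger entropy, as in the Gaussian case of \citet{Gho(01)}, whose Theorem~3.1 argument I would follow.
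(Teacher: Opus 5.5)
Your argument is correct, but it takes a different route from the paper. You follow the Ghosal--van der Vaart template: moment matching, then a finite $L^1$-net of discrete mixtures, then brackets around the net. The paper's proof uses nothing about the mixture structure beyond the envelope $p_G(y)\le h^y/y!$. It truncates at $T\asymp\log(1/\varepsilon)/\log\log(1/\varepsilon)$ with tail mass at most $\varepsilon^2/8$. It then brackets the vector $(p_G(0),\dots,p_G(T))\in[0,1]^{T+1}$ coordinatewise on a grid of mesh $\delta=\varepsilon^2/(8(T+1))$, with upper bracket $h^y/y!$ beyond $T$. This gives $\log N_{[]}\lesssim (T+1)\log((T+1)/\varepsilon^2)$ directly, with Hellinger width at most $\varepsilon/2$ and no rescaling of $\varepsilon$.

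For the Poisson kernel the effective sample space $\{0,\dots,T\}$ already has the same order of size as the number $N$ of atoms given by Lemma~\ref{lem:Poisson_moment_matching}. So the moment-matching step buys nothing for this bound: your count $N\log(1/\varepsilon)$ and the paper's $T\log(T/\varepsilon^2)$ agree up to constants.

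Your route is heavier. It needs Lemma~\ref{lem:Poisson_moment_matching}, Lemma~\ref{lem:Poisson_kernel} and a simplex discretization. The three $L^1$ errors add up to a constant multiple of $\varepsilon^2$ rather than $\varepsilon^2$, and the rounded weights need not sum to one; both points are harmless for brackets. In exchange it yields an explicit finite $L^1$-net of genuine mixtures. It is also the argument that survives for kernels on a continuous sample space, such as the Gaussian case of \citet{Gho(01)}, where coordinatewise bracketing is unavailable.

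Finally, you can avoid the $\sqrt K$ loss in the bracket width, as the paper does, by taking $\eta\asymp\varepsilon^2/K$ from the start instead of rescaling $\varepsilon$ afterwards. Either way the logarithmic bookkeeping is unchanged.
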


\begin{proof}
For every $G\in\mathcal P([h_{0},h])$ and every $y\ge 0$,
\begin{displaymath}
p_G(y)=\int_{[h_{0},h]}\frac{e^{-\theta}\theta^y}{y!}\,G(d\theta)\le\frac{h^y}{y!}.
\end{displaymath}
Let $b_h(y):=\frac{h^y}{y!}$ for $y\ge 0$. Choose an integer $T=T_\varepsilon$ such that $\sum_{y>T} b_h(y)\le \frac{\varepsilon^2}{8}$. By Stirling's formula, one can choose $T_\varepsilon$ so that
\begin{displaymath}
T_\varepsilon
\lesssim
\frac{\log(1/\varepsilon)}{\log\log(1/\varepsilon)}.
\end{displaymath}
We first bracket the coordinates $y=0,\dots,T$. Set $m:=T+1$, and $\delta:=\frac{\varepsilon^2}{8m}$. For every vector $x=(x_0,\dots,x_T)\in[0,1]^m$, define
\begin{displaymath}
a_y:=\delta\left\lfloor \frac{x_y}{\delta}\right\rfloor,
\qquad y=0,\dots,T.
\end{displaymath}
Then
\begin{displaymath}
a_y\le x_y\le a_y+\delta,\qquad y=0,\dots,T.
\end{displaymath}
Thus each $x\in[0,1]^m$ belongs to a bracket $[l,u]$ on $\{0,\dots,T\}$ of the form
\begin{displaymath}
l(y)=a_y,\qquad u(y)=a_y+\delta,\qquad y=0,\dots,T.
\end{displaymath}
The number of such brackets is at most $\left(\frac{1}{\delta}+1\right)^m$, and hence, for sufficiently small $\varepsilon$,
\begin{displaymath}
\log N_T\le m\log\left(\frac{2}{\delta}\right)\lesssim m\log\left(\frac{m}{\varepsilon^2}\right).
\end{displaymath}
Now let $p_G\in\mathcal G_h$. Apply the preceding construction to the vector $(p_G(0),\dots,p_G(T))$. This yields functions $l$ and $u$ on $\mathbb N_0$ defined by
\begin{displaymath}
l(y)=
\begin{cases}
a_y, & 0\le y\le T,\\
0, & y>T,
\end{cases}
\qquad
u(y)=
\begin{cases}
a_y+\delta, & 0\le y\le T,\\
b_h(y), & y>T.
\end{cases}
\end{displaymath}
Then $l(y)\le p_G(y)\le u(y)$ for all $y\ge 0$. It remains to bound the bracket width in Hellinger distance. Since
\begin{displaymath}
(\sqrt{u}-\sqrt{l})^2\le u-l,\qquad 0\le l\le u,
\end{displaymath}
we have
\begin{displaymath}
d_{\mathrm H}^2(l,u)=\sum_{y\ge 0}(\sqrt{u(y)}-\sqrt{l(y)})^2\le\sum_{y=0}^T\{u(y)-l(y)\}+\sum_{y>T}b_h(y).
\end{displaymath}
By construction,
\begin{displaymath}
\sum_{y=0}^T\{u(y)-l(y)\}=(T+1)\delta=\frac{\varepsilon^2}{8},
\end{displaymath}
and
\begin{displaymath}
\sum_{y>T}b_h(y)\le \frac{\varepsilon^2}{8}.
\end{displaymath}
Therefore
\begin{displaymath}
d_{\mathrm H}^2(l,u)\le \frac{\varepsilon^2}{4},\qquad d_{\mathrm H}(l,u)\le \frac{\varepsilon}{2}.
\end{displaymath}
Hence these brackets form an $\varepsilon$-bracketing of $\mathcal G_h$. Thus
\begin{displaymath}
\log N_{[]}(\varepsilon,\mathcal G_h,d_{\mathrm H})\lesssim (T_\varepsilon+1)\log\left(\frac{T_\varepsilon+1}{\varepsilon^2}\right).
\end{displaymath}
Using
\begin{displaymath}
T_\varepsilon\lesssim\frac{\log(1/\varepsilon)}{\log\log(1/\varepsilon)},
\end{displaymath}
we conclude that
\begin{displaymath}
\log N_{[]}(\varepsilon,\mathcal G_h,d_{\mathrm H})\le C\,\frac{(\log(1/\varepsilon))^2}{\log\log(1/\varepsilon)}
\end{displaymath}
for some constant $C>0$ depending only on $h$. Indeed, by Stirling's formula,
\begin{displaymath}
\sum_{y>T}\frac{h^y}{y!}\le C(h)\left(\frac{eh}{T+1}\right)^{T+1}
\end{displaymath}
for all sufficiently large $T$, which yields the stated order of $T_\varepsilon$.
\end{proof}

\begin{lem}\label{lem:Poisson_HP_En}
Let $0<h_0\le h_0^\ast<h^\ast\le h<+\infty$ and assume that $G^\ast([h_0^\ast,h^\ast])=1$. Let $\Pi$ be any probability measure on $\mathcal P([h_0,h])$. Define
$\varepsilon_n:=\frac{\log n}{\sqrt n}$, $r_n:=C\varepsilon_n$ and $U_n:=\{G\in\mathcal P([h_0,h]):d_{\mathrm H}(p_G,p_{G^\ast})\ge r_n\}$, where $C>0$ is a constant to be chosen. Define
\begin{displaymath}
R_n(G):=\prod_{i=1}^n\frac{p_G(Y_i)}{p_{G^\ast}(Y_i)}.
\end{displaymath}
Then, for every $D>0$, there exists a sufficiently large $C>0$ such that, for all sufficiently large $n$, there is an event $E_n$ satisfying
\begin{displaymath}
\P_{G^\ast}^n(E_n^c)\le \frac{1}{n^2},
\end{displaymath}
and, on $E_n$,
\begin{displaymath}
\int_{U_n}R_n(G)\,\Pi(dG)\le\exp(-Dn\varepsilon_n^2).
\end{displaymath}
\end{lem}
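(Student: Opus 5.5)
We follow the classical testing-based argument of \citet[Theorem~2.1]{Gho(01)} (equivalently the Ghosal--Ghosh--van der Vaart construction), using the bracketing entropy bound of Lemma~\ref{lem:Poisson_bracketing_entropy} to build exponentially powerful tests. Since $\Pi$ is arbitrary and $U_n$ is the whole complement of a Hellinger ball inside $\mathcal P([h_0,h])$, no sieve is needed: the model $\mathcal G_h$ itself has small entropy. The plan is to bound the numerator $\int_{U_n}R_n(G)\Pi(dG)$ \emph{uniformly} over $G\in U_n$ by a likelihood-ratio inequality, rather than via tests whose complement event depends on $\Pi$.

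First we cover $\mathcal G_h$ by Hellinger brackets. Set $\eta_n:=r_n/8$ (say) and take $N_n:=N_{[]}(\eta_n,\mathcal G_h,d_{\mathrm H})$ brackets $[l_k,u_k]$. By Lemma~\ref{lem:Poisson_bracketing_entropy}, since $\log(1/\eta_n)\asymp\log n$,
\begin{displaymath}
\log N_n\le C_0\frac{(\log n)^2}{\log\log n}\le C_0\,n\varepsilon_n^2 .
\end{displaymath}
Keep only brackets containing some $p_G$ with $G\in U_n$. For $G$ in the $k$-th bracket, $R_n(G)\le\prod_i u_k(Y_i)/p_{G^\ast}(Y_i)$, so
\begin{displaymath}
\int_{U_n}R_n(G)\Pi(dG)\le\max_{k}\prod_{i=1}^n\frac{u_k(Y_i)}{p_{G^\ast}(Y_i)} .
\end{displaymath}
For each $k$, Markov's inequality applied to the square root gives
\begin{displaymath}
\P^n_{G^\ast}\Bigl[\prod_{i}\tfrac{u_k(Y_i)}{p_{G^\ast}(Y_i)}\ge e^{-2t}\Bigr]\le e^{t}\Bigl(\sum_y\sqrt{u_k(y)p_{G^\ast}(y)}\Bigr)^n .
\end{displaymath}

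The key estimate is the affinity bound. Writing $\rho(u,p)=\sum_y\sqrt{up}$, one has $\rho(u_k,p_{G^\ast})\le\rho(p_G,p_{G^\ast})+\sum_y\sqrt{p_{G^\ast}}(\sqrt{u_k}-\sqrt{p_G})\le 1-\tfrac12 d_{\mathrm H}^2(p_G,p_{G^\ast})+\eta_n$ by Cauchy--Schwarz. This is the delicate point: the additive $\eta_n$ dominates $r_n^2$. I therefore expect the main obstacle here, and will use the standard fix of brackets at the finer Hellinger scale $\eta_n\asymp r_n^2$, i.e.\ $\eta_n:=r_n^2/8$. Lemma~\ref{lem:Poisson_bracketing_entropy} still gives $\log N_n\lesssim(\log n)^2/\log\log n$, because $\log(1/r_n^2)\asymp\log n$, and $d_{\mathrm H}(p_G,p_{G^\ast})\ge r_n$ then yields $\rho(u_k,p_{G^\ast})\le 1-r_n^2/4$. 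Alternatively one can use the $L^1$ bracket width $\sum(u_k-l_k)$, which is how the bracket is constructed, with total mass $\int u_k\le1+\eta_n$. With $t=Dn\varepsilon_n^2$ this gives
\begin{displaymath}
\P\le\exp\{Dn\varepsilon_n^2-nC^2\varepsilon_n^2/4\},
\end{displaymath}
and on the complement event the numerator is at most $e^{-2Dn\varepsilon_n^2}$.

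A union bound over the $N_n$ brackets gives the failure probability
\begin{displaymath}
\exp\{C_0n\varepsilon_n^2+Dn\varepsilon_n^2-C^2n\varepsilon_n^2/4\}\le n^{-2}
\end{displaymath}
once $C^2/4>C_0+D+1$, since $n\varepsilon_n^2=(\log n)^2\gg2\log n$. Define $E_n$ as the intersection of the complementary events. Then on $E_n$ the bound holds with $\exp(-2Dn\varepsilon_n^2)\le\exp(-Dn\varepsilon_n^2)$. The constant $C_0$ from Lemma~\ref{lem:Poisson_bracketing_entropy} depends only on $h$, so $C$ depends on $(h,D)$ as required.
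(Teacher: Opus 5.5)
Your proposal is correct and is essentially the paper's argument: Hellinger brackets at the finer scale $\kappa r_n^2$ (your $\eta_n=r_n^2/8$, which is the right choice; drop the initial $\eta_n=r_n/8$), the Cauchy--Schwarz affinity bound $\sum_y\sqrt{u_k(y)p_{G^\ast}(y)}\le 1-\tfrac12 d_{\mathrm H}^2(p_G,p_{G^\ast})+d_{\mathrm H}(u_k,l_k)$, Markov's inequality on $\sqrt{R_n(u_k)}$, and a union bound using $\log N_n=o(n\varepsilon_n^2)$. The only cosmetic difference is in the last step: you bound the numerator by $\max_k R_n(u_k)$ using $\Pi(U_n)\le 1$ and set the threshold directly at $e^{-2Dn\varepsilon_n^2}$, whereas the paper bounds it by $\sum_k R_n(u_k)\le N_n\max_k R_n(u_k)$ and absorbs the factor $N_n$ at the end.
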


\begin{proof}
Fix $0<\kappa<1/2$ and set $\delta_n:=\kappa r_n^2$. Let $\mathcal U_n:=\left\{p_G\in\mathcal G_h:\ d_{\mathrm H}(p_G,p_{G^\ast})\ge r_n\right\}$. By Lemma~\ref{lem:Poisson_bracketing_entropy}, there exists a collection of
$\delta_n$-brackets $[l_k,u_k]$, $k=1,\dots,N_n$, covering $\mathcal G_h$ such that
\begin{displaymath}
\log N_n
\lesssim
\frac{(\log(1/\delta_n))^2}{\log\log(1/\delta_n)}\asymp\frac{(\log n)^2}{\log\log n}=o(\log^2 n)=o(n\varepsilon_n^2).
\end{displaymath}
By the proof of Lemma~\ref{lem:Poisson_bracketing_entropy}, these brackets may be chosen so that each $l_k$ and $u_k$ is nonnegative and $\sum_{y\ge 0}u_k(y)<+\infty$. Retain only those brackets that intersect $\mathcal U_n$. For each retained bracket, choose
$p_k\in\mathcal U_n$ such that
\begin{displaymath}
l_k\le p_k\le u_k.
\end{displaymath}
Then
\begin{displaymath}
d_{\mathrm H}(p_k,p_{G^\ast})\ge r_n.
\end{displaymath}
Moreover, since $l_k\le p_k\le u_k$, we have pointwise
\begin{displaymath}
0\le \sqrt{u_k}-\sqrt{p_k}\le \sqrt{u_k}-\sqrt{l_k}.
\end{displaymath}
Therefore
\begin{displaymath}
d_{\mathrm H}(u_k,p_k)\le d_{\mathrm H}(u_k,l_k)\le \delta_n.
\end{displaymath}
For every retained bracket,
\begin{displaymath}
\sum_{y\ge 0}\sqrt{u_k(y)p_{G^\ast}(y)}=\sum_{y\ge 0}\sqrt{p_k(y)p_{G^\ast}(y)}+\sum_{y\ge 0}\left\{\sqrt{u_k(y)}-\sqrt{p_k(y)}\right\}\sqrt{p_{G^\ast}(y)}.
\end{displaymath}
By Cauchy--Schwarz,
\begin{displaymath}
\sum_{y\ge 0}\left\{\sqrt{u_k(y)}-\sqrt{p_k(y)}\right\}\sqrt{p_{G^\ast}(y)}\le d_{\mathrm H}(u_k,p_k).
\end{displaymath}
Hence
\begin{displaymath}
\sum_{y\ge 0}\sqrt{u_k(y)p_{G^\ast}(y)}\le 1-\frac{1}{2}d_{\mathrm H}^2(p_k,p_{G^\ast})+d_{\mathrm H}(u_k,p_k)\le 1-\left(\frac{1}{2}-\kappa\right)r_n^2.
\end{displaymath}
Let $b:=\frac{1}{2}-\kappa>0$. Then
\begin{displaymath}
\E_{G^\ast}\!\left[\sqrt{R_n(u_k)}\right]=\left(\sum_{y\ge 0}\sqrt{u_k(y)p_{G^\ast}(y)}\right)^n\le(1-br_n^2)^n\le\exp(-bnr_n^2),
\end{displaymath}
where, for a nonnegative summable function $u$ on $\mathbb N_0$, we write
\begin{displaymath}
R_n(u):=\prod_{i=1}^n\frac{u(Y_i)}{p_{G^\ast}(Y_i)}.
\end{displaymath}
Define
\begin{displaymath}
E_n:=\left\{\max_{1\le k\le N_n}R_n(u_k)\le\exp\!\left(-\frac{b}{2}nr_n^2\right)\right\}.
\end{displaymath}
By Markov's inequality applied to $\sqrt{R_n(u_k)}$,
\begin{displaymath}
\P_{G^\ast}^n\!\left(R_n(u_k)>\exp\!\left(-\frac{b}{2}nr_n^2\right)\right)\le\exp\!\left(-\frac{b}{4}nr_n^2\right).
\end{displaymath}
Therefore, by the union bound,
\begin{displaymath}
\P_{G^\ast}^n(E_n^c)\le N_n\exp\!\left(-\frac{b}{4}nr_n^2\right)\le\exp\!\left\{\log N_n-\frac{b}{4}C^2n\varepsilon_n^2\right\}.
\end{displaymath}
Since $\log N_n=o(n\varepsilon_n^2)$, choosing $C$ sufficiently large gives
\begin{displaymath}
\P_{G^\ast}^n(E_n^c)\le \frac{1}{n^2}
\end{displaymath}
for all sufficiently large $n$. Finally, on $E_n$, for every $G\in U_n$, choose a retained bracket $[l_k,u_k]$ containing
$p_G$. Since $p_G\le u_k$ pointwise,
\begin{displaymath}
R_n(G)\le R_n(u_k).
\end{displaymath}
Thus
\begin{displaymath}
\int_{U_n}R_n(G)\,\Pi(dG)\le\sum_{k=1}^{N_n}R_n(u_k)\Pi\!\left(\{G:\ p_G\in [l_k,u_k]\cap\mathcal U_n\}\right)\le\sum_{k=1}^{N_n}R_n(u_k).
\end{displaymath}
Therefore, on $E_n$,
\begin{displaymath}
\int_{U_n}R_n(G)\,\Pi(dG)\le N_n\exp\!\left(-\frac{b}{2}nr_n^2\right).
\end{displaymath}
Again using $\log N_n=o(n\varepsilon_n^2)$, a sufficiently large choice of $C$ yields
\begin{displaymath}
N_n\exp\!\left(-\frac{b}{2}nr_n^2\right)\le\exp(-Dn\varepsilon_n^2),
\end{displaymath}
which proves the lemma.
\end{proof}

\begin{lem}\label{lem:Poisson_HP_exp_moment}
Assume the setup of Lemma~\ref{lem:Poisson_local_sets}, and let $B_n$ be the local set constructed therein. Then there exist constants $\lambda_0>0$ and $C_0<+\infty$, depending only on $(h_0,h,h_0^\ast,h^\ast,G^\ast)$, such that, for all sufficiently large $n$, 
\begin{displaymath}
 \sup_{G\in B_n} \E_{G^\ast}\!\left[ \exp\!\left\{ \lambda_0 \left| \log \frac{p_G(Y_1)}{p_{G^\ast}(Y_1)} \right| \right\} \right] \le C_0 .
 \end{displaymath}
\end{lem}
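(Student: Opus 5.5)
The plan is to bound the log-likelihood ratio pointwise in $y$, uniformly over $G\in B_n$, by an affine function of $y$, and then to use the exponential moments of $Y_1\sim p_{G^\ast}$. The key fact from the proof of Lemma~\ref{lem:Poisson_local_sets} is that every $G\in B_n=\mathcal E_{1/n}$ satisfies $G([b/2,h])\ge\alpha_0$ for all sufficiently large $n$. Here $b$ and $\alpha_0$ are the fixed constants chosen there and depend only on $h_0^\ast$.

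The upper bound on $\log(p_{G^\ast}/p_G)$ is obtained exactly as in the proof of Lemma~\ref{lem:Poisson_WS}, with $b/2$ in place of $b$:
\begin{displaymath}
\log\frac{p_{G^\ast}(y)}{p_G(y)}\le \log(\alpha_0^{-1}e^h)+y\log\frac{2h^\ast}{b}.
\end{displaymath}
For the reverse direction, I use that $G$ is supported on $[h_0,h]$ and $G^\ast$ on $[h_0^\ast,h^\ast]$. This gives $p_G(y)\le h^y/y!$ and $p_{G^\ast}(y)\ge e^{-h^\ast}(h_0^\ast)^y/y!$, since $e^{-\theta}\theta^y\ge e^{-h^\ast}(h_0^\ast)^y$ on the support of $G^\ast$. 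Hence
\begin{displaymath}
\log\frac{p_G(y)}{p_{G^\ast}(y)}\le h^\ast+y\log\frac{h}{h_0^\ast}.
\end{displaymath}
This second bound holds for every $G\in\mathcal P([h_0,h])$ and needs no property of $B_n$. It is precisely where the positivity $h_0^\ast>0$ enters.

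Combining the two bounds, there are constants $a_0,a_1\ge 0$, depending only on $(h_0,h,h_0^\ast,h^\ast)$ through $b$ and $\alpha_0$, such that $|\log(p_G(y)/p_{G^\ast}(y))|\le a_0+a_1y$ for all $y\in\mathbb N_0$ and all $G\in B_n$, for $n$ large. Taking $\lambda_0=1$ and $t=e^{a_1}$ in the probability generating function bound $\E_{G^\ast}(t^{Y})\le e^{h^\ast|t-1|}$ from the proof of Lemma~\ref{lem:Poisson_WS} gives
\begin{displaymath}
\sup_{G\in B_n}\E_{G^\ast}\!\left[\exp\left\{\left|\log\frac{p_G(Y_1)}{p_{G^\ast}(Y_1)}\right|\right\}\right]\le e^{a_0}\exp\{h^\ast(e^{a_1}-1)\}=:C_0.
\end{displaymath}
Any smaller $\lambda_0$ works as well, by Jensen's inequality.

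The only delicate step is checking that the mass lower bound $G([b/2,h])\ge\alpha_0$ holds uniformly on $B_n$ with constants independent of $n$. I would re-trace the argument in Lemma~\ref{lem:Poisson_local_sets}: we have $\tilde G^\ast([b,h])>4\alpha_0$, each $U_j$ with $\theta_j\ge b$ is contained in $[b/2,h]$ once $1/n$ is small, and $\sum_j|G(U_j)-p_j|\le 1/n$. This yields $G([b/2,h])>4\alpha_0-1/n>\alpha_0$. In fact, since $h_0>0$, one could bypass this entirely by using $p_G(y)\ge e^{-h}h_0^y/y!$, which holds for every $G\in\mathcal P([h_0,h])$. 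That alternative gives the bound on all of $\mathcal P([h_0,h])$, not just on $B_n$.
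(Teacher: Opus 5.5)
Your proof is correct and follows the paper's argument: the paper also uses $G([b/2,h])\ge\alpha_0$ on $B_n$ to bound $p_{G^\ast}/p_G$, a support-based lower bound on $p_{G^\ast}$ (with $3b$ where you use $h_0^\ast$) for the reverse ratio, and the exponential moments of $Y_1\sim p_{G^\ast}$. Your closing remark is also right: because $h_0>0$, the bound $p_G(y)\ge e^{-h}h_0^y/y!$ holds for every $G\in\mathcal P([h_0,h])$, so the restriction to $B_n$ is unnecessary and the estimate holds uniformly on all of $\mathcal P([h_0,h])$.
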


\begin{proof}
Choose $b>0$ such that $3b<h_0^\ast$, and choose $\alpha_0\in(0,1/4)$. Then $G^\ast([3b,h^\ast])=1>4\alpha_0$. By the construction in the proof of Lemma~\ref{lem:Poisson_local_sets}, every $G\in B_n$ satisfies
\begin{displaymath}
G([b/2,h])\ge \alpha_0
 \end{displaymath}
for all sufficiently large $n$. Hence, by the proof of Lemma~\ref{lem:Poisson_WS}, there exist constants $C_1,C_2>0$, depending only on $(h_0,h,h_0^\ast,h^\ast,b,\alpha_0)$, such that
\begin{displaymath}
\frac{p_{G^\ast}(y)}{p_G(y)}\le C_1C_2^y,\qquad y\ge 0,
 \end{displaymath}
uniformly over $G\in B_n$. Moreover, since $G^\ast([3b,h^\ast])=1$, for every $y\ge 0$,
\begin{displaymath}
p_{G^\ast}(y)=\frac{1}{y!}\int_{[h_0^\ast,h^\ast]}e^{-\theta}\theta^y\,G^\ast(d\theta)\ge e^{-h^\ast}\frac{(3b)^y}{y!}.
 \end{displaymath}
On the other hand, for every $G\in\mathcal P([h_0,h])$,
\begin{displaymath}
p_G(y)=\frac{1}{y!}\int_{[h_0,h]}e^{-\theta}\theta^y\,G(d\theta)\le\frac{h^y}{y!}.
\end{displaymath}
Therefore
\begin{displaymath}
\frac{p_G(y)}{p_{G^\ast}(y)}\le e^{h^\ast}\left(\frac{h}{3b}\right)^y,\qquad y\ge 0.
\end{displaymath}
Combining the preceding two bounds, there exist constants $a_0,a_1>0$, depending only on $(h_0,h,h_0^\ast,h^\ast)$, such that
\begin{displaymath}
\left|\log\frac{p_G(y)}{p_{G^\ast}(y)}\right|\le a_0+a_1y,\qquad y\ge 0,
\end{displaymath}
uniformly over $G\in B_n$ and all sufficiently large $n$. Since $Y_1\sim p_{G^\ast}$, for every $t>0$,
\begin{displaymath}
\E_{G^\ast}(e^{tY_1})=\int_{[h_0^\ast,h^\ast]}\exp\{\theta(e^t-1)\}\,G^\ast(d\theta)\le\exp\{h^\ast(e^t-1)\}<+\infty.
\end{displaymath}
Thus, for any fixed $\lambda_0>0$,
\begin{displaymath}
\sup_{G\in B_n}\E_{G^\ast}\!\left[\exp\!\left\{\lambda_0\left|\log\frac{p_G(Y_1)}{p_{G^\ast}(Y_1)}\right|\right\}\right]\le e^{\lambda_0 a_0}\E_{G^\ast}(e^{\lambda_0a_1Y_1})<+\infty .
\end{displaymath}
This proves the lemma.
\end{proof}

\begin{lem}\label{lem:Poisson_HP_Dn}
Let $0<h_0\le h<+\infty$, let $\varepsilon_n:=\frac{\log n}{\sqrt n}$, and define $R_n(G):=\prod_{i=1}^n \frac{p_G(Y_i)}{p_{G^\ast}(Y_i)}$. Let $B_n\subset \mathcal P([h_{0},h])$ satisfy $\Pi(B_n)\ge \exp(-C_{\mathrm{KL}}n\varepsilon_n^2)$ for some constant $C_{\mathrm{KL}}>0$. Assume further that there exists a constant $L>0$ such that
\begin{displaymath}
\sup_{G\in B_n}K(p_{G^\ast},p_G)\le L^2\varepsilon_n^2,\qquad\sup_{G\in B_n}V(p_{G^\ast},p_G)\le L^2\varepsilon_n^2,
\end{displaymath}
and that $B_n$ satisfies the uniform exponential moment condition of Lemma~\ref{lem:Poisson_HP_exp_moment}. For $D>0$, define
\begin{displaymath}
D_n:=\left\{\int_{\mathcal P([h_{0},h])} R_n(G)\,\Pi(dG)\ge\exp\!\left\{-(L^2+D)n\varepsilon_n^2\right\}\Pi(B_n)\right\}.
\end{displaymath}
Then there exists a constant $c_D>0$ such that, for all sufficiently large $n$,
\begin{displaymath}
\P_{G^\ast}^n(D_n^c)\le\exp(-c_D n\varepsilon_n^2)\le\frac{1}{n^2}.
\end{displaymath}
In particular, on $D_n$,
\begin{displaymath}
\int_{\mathcal P([h_{0},h])} R_n(G)\,\Pi(dG)\ge\exp(-C_{\mathrm{den}}n\varepsilon_n^2)
\end{displaymath}
for some constant $C_{\mathrm{den}}>0$.
\end{lem}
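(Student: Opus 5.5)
Following the denominator bound in \citet[Theorem~2.1]{Gho(01)}, I will restrict the integral to $B_n$ and then bound the log-likelihood ratios from below by a Bernstein-type concentration argument. Set $\Pi_n:=\Pi(\cdot\cap B_n)/\Pi(B_n)$. Since $R_n\ge0$,
\begin{displaymath}
\int R_n(G)\,\Pi(dG)\ge\Pi(B_n)\int_{B_n}R_n(G)\,\Pi_n(dG)\ge\Pi(B_n)\exp\Big\{\int_{B_n}\log R_n(G)\,\Pi_n(dG)\Big\},
\end{displaymath}
where the second inequality is Jensen's inequality. Write $Z_i:=\int_{B_n}\log\frac{p_G(Y_i)}{p_{G^\ast}(Y_i)}\,\Pi_n(dG)$. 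The $Z_i$ are i.i.d. under $\P_{G^\ast}^n$, and by Fubini and the KL bound on $B_n$,
\begin{displaymath}
\E_{G^\ast}Z_1=-\int_{B_n}K(p_{G^\ast},p_G)\,\Pi_n(dG)\ge-L^2\varepsilon_n^2.
\end{displaymath}
Hence $D_n^c\subseteq\{\sum_{i=1}^n(Z_i-\E Z_i)<-Dn\varepsilon_n^2\}$, so it suffices to bound the probability of this deviation event.

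Next I control the moments of $W_i:=Z_i-\E Z_i$. By Jensen and the $V$ bound, $\mathrm{Var}(Z_1)\le\E Z_1^2\le\int_{B_n}V(p_{G^\ast},p_G)\,\Pi_n(dG)\le L^2\varepsilon_n^2$. For the exponential tail, convexity of $x\mapsto e^{\lambda_0|x|}$ together with Lemma~\ref{lem:Poisson_HP_exp_moment} gives $\E e^{\lambda_0|Z_1|}\le C_0$. In fact the proof of that lemma gives the stronger pointwise bound $|Z_1|\le a_0+a_1Y_1$, with $Y_1$ having Poisson-mixture tails. Consequently $\E|W_1|^k\le k!\,M^{k-2}\sigma^2$ for a fixed $M$ and $\sigma^2\asymp\varepsilon_n^2$. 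To obtain the variance factor $\varepsilon_n^2$ in the higher moments (not only a constant), I interpolate with Cauchy--Schwarz:
\begin{displaymath}
\E|W_1|^k\le(\E W_1^2)^{1/2}(\E W_1^{2k-2})^{1/2}.
\end{displaymath}
This yields $\sigma$ rather than $\sigma^2$. Bernstein with $\sigma$ in place of $\sigma^2$ still produces an exponent of order $n\varepsilon_n^2/\max(\varepsilon_n,1)$, which exceeds $n\varepsilon_n^2$.

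The simpler route uses Bernstein's inequality in the form $\P(\sum W_i\le-t)\le\exp\{-t^2/(2(n\sigma^2+Mt))\}$ with $t=Dn\varepsilon_n^2$ and moment-growth constant $\sigma^2\le$ const, giving
\begin{displaymath}
\frac{D^2n^2\varepsilon_n^4}{2(n\,\mathrm{const}+MDn\varepsilon_n^2)}\asymp n\varepsilon_n^4=\frac{(\log n)^4}{n},
\end{displaymath}
which is too weak. This is the main obstacle: the naive Bernstein bound loses a factor of $\varepsilon_n^2$ in the exponent. I would handle it with the sharper sub-exponential Bernstein bound, using variance $L^2\varepsilon_n^2$ and scale $M$:
\begin{displaymath}
\exp\Big\{-\frac{D^2n^2\varepsilon_n^4}{2(nL^2\varepsilon_n^2+MDn\varepsilon_n^2)}\Big\}=\exp\{-c_Dn\varepsilon_n^2\}.
\end{displaymath}
This requires the moment condition $\E|W_1|^k\le\frac{k!}{2}\,v\,M^{k-2}$ with $v\lesssim\varepsilon_n^2$. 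To verify it, I split on $\{|W_1|\le\tau\}$ and its complement, using the variance bound on the first part and the exponential moment with $\tau\asymp\log(1/\varepsilon_n)$ on the second. This costs at most a $\log n$ factor in $M$, i.e. $M\lesssim\log n$, and the resulting exponent $n\varepsilon_n^2/\log n=\log n$ still gives $\le n^{-2}$ after enlarging $D$ if needed. If that loss proves fatal, the fallback is the truncation argument of \citet{Gho(01)}: truncate $Y_i\le c\log n$, at a cost of $n^{-2}$ by the Poisson tails, and apply Bernstein with the bounded range $M\asymp\log n$ on the truncated event.

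Finally, $n\varepsilon_n^2=(\log n)^2$, so $\exp(-c_Dn\varepsilon_n^2)\le n^{-2}$ for large $n$. On $D_n$, the lower bound $\Pi(B_n)\ge\exp(-C_{\mathrm{KL}}n\varepsilon_n^2)$ gives the final claim with $C_{\mathrm{den}}=L^2+D+C_{\mathrm{KL}}$.
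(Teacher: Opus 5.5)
Your skeleton is the paper's. Both use Jensen under $\Pi(\cdot\cap B_n)/\Pi(B_n)$, i.i.d.\ averaged log-ratios with mean at least $-L^2\varepsilon_n^2$ and variance at most $L^2\varepsilon_n^2$, the uniform exponential moment via convexity, a lower-tail Bernstein bound at $t=Dn\varepsilon_n^2$, and $C_{\mathrm{den}}=L^2+D+C_{\mathrm{KL}}$. The gap is in the concentration step, where what you actually establish falls short of the statement.

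Done carefully, with a split at $\tau\asymp\lambda_0^{-1}\log(1/\varepsilon_n)$, your moment condition holds with $v\asymp\varepsilon_n^2$ but only with scale $M\asymp\log n$. The Bernstein exponent is then $D^2n\varepsilon_n^2/\{2(c_1+MD)\}\asymp D\log n$ for fixed $D$, i.e.\ $\P_{G^\ast}^n(D_n^c)\le n^{-cD}$.
\begin{itemize}
\item This is not $\exp(-c_Dn\varepsilon_n^2)=\exp(-c_D\log^2 n)$.
\item It is not $\le n^{-2}$ for every $D>0$. ``Enlarging $D$'' changes the event $D_n$, whereas the lemma is asserted for arbitrary $D$.
\item Your last paragraph quotes $\exp(-c_Dn\varepsilon_n^2)\le n^{-2}$ as though the strong rate had been proved.
\end{itemize}
Two intermediate claims are also wrong. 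First, $\E|W_1|^k\le k!\,M^{k-2}\sigma^2$ with fixed $M$ and $\sigma^2\asymp\varepsilon_n^2$ does not follow from $|Z_1|\le a_0+a_1Y_1$ and the exponential moment; it is exactly what fails. Second, the Cauchy--Schwarz variant gives an exponent of order $n\varepsilon_n^3\to0$, not one exceeding $n\varepsilon_n^2$. Truncation cannot rescue the strong rate either: $\P_{G^\ast}^n(\max_iY_i>c\log n)$ is itself of order $\exp\{-c'\log n\,\log\log n\}$, far larger than $\exp(-c_D\log^2 n)$.

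Your suspicion about this step is nonetheless well founded, and the paper's proof does not resolve it. The paper gets the strong rate by invoking a Bernstein inequality for mean-zero variables ``with uniformly bounded sub-exponential norms and variance proxy $\sum_i\E_{G^\ast}(Z_{n,i}^2)$''. This is the bound $\exp[-c\min\{t^2/\sum_i\E_{G^\ast}(Z_{n,i}^2),\,t\}]$ with $c$ depending only on $(\lambda_0,C_0)$. The standard sub-exponential Bernstein inequality instead uses $\sum_i\|Z_{n,i}\|_{\psi_1}^2\asymp n$ as variance proxy, which is your $n\varepsilon_n^4$ computation. The version with $\sum_i\E Z_{n,i}^2$ needs a Bernstein moment condition with bounded scale, which these hypotheses do not supply.

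In fact the displayed inequality is false. Let $Z=\pm a$ with probability $1/(2n)$ each and $Z=0$ otherwise, with $a=\lambda_0^{-1}\log n$. Then $\E e^{\lambda_0|Z|}\le 2$ and $\E Z^2=\lambda_0^{-2}\varepsilon_n^2$. Yet for an integer $k\le a$ with $k\asymp a$, one has $\P(\sum_{i=1}^nZ_i\le-ka)\ge e^{-k\log k-O(k)}$, much larger than the claimed $e^{-ck^2}$. Since $ka\asymp n\varepsilon_n^2$, this shows that the mean, variance and exponential-moment bounds used by both arguments cannot by themselves give $\exp(-c_Dn\varepsilon_n^2)$. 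The $\log n$ loss you found is intrinsic to this line of argument. The lemma as stated would need additional, $B_n$-specific control of the log-likelihood ratio at large counts.

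For its actual use, your weaker bound is enough. The proof of Proposition~\ref{pcrdp_1d} only needs $\P_{G^\ast}^n(D_n^c)\le n^{-2}$ and the denominator lower bound. Since $D_0$ there is a free choice, taking $D_0$ large (and then $D_1$ and $C$ accordingly) keeps that proof intact. State the lemma in that weakened form rather than claiming $\exp(-c_Dn\varepsilon_n^2)$.
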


\begin{proof}
Let $\widetilde\Pi_n(\cdot):=\frac{\Pi(\cdot\cap B_n)}{\Pi(B_n)}$. By Jensen's inequality,
\begin{displaymath}
\int_{\mathcal P([h_{0},h])} R_n(G)\,\Pi(dG)\ge\Pi(B_n)\exp\!\left\{\int_{B_n}\log R_n(G)\,\widetilde\Pi_n(dG)\right\}.
\end{displaymath}
Hence it suffices to control
\begin{displaymath}
W_n:=\int_{B_n}\log R_n(G)\,\widetilde\Pi_n(dG).
\end{displaymath}
Write
\begin{displaymath}
W_n=\sum_{i=1}^n \xi_{n,i},\qquad\xi_{n,i}:=\int_{B_n}\log\frac{p_G(Y_i)}{p_{G^\ast}(Y_i)}\,\widetilde\Pi_n(dG).
\end{displaymath}
Since $Y_{1:n}$ are i.i.d. under $P_{G^\ast}^n$, the variables $\xi_{n,1},\dots,\xi_{n,n}$ are i.i.d. By Fubini's theorem,
\begin{displaymath}
\E_{G^\ast}(\xi_{n,1})=\int_{B_n}\E_{G^\ast}\!\left[\log\frac{p_G(Y_1)}{p_{G^\ast}(Y_1)}\right]\,\widetilde\Pi_n(dG)=-\int_{B_n}K(p_{G^\ast},p_G)\,\widetilde\Pi_n(dG)\ge
-L^2\varepsilon_n^2.
\end{displaymath}
Moreover, by Jensen's inequality,
\begin{displaymath}
\xi_{n,1}^2\le\int_{B_n}\left(\log\frac{p_G(Y_1)}{p_{G^\ast}(Y_1)}\right)^2\,\widetilde\Pi_n(dG).
\end{displaymath}
Taking expectations gives
\begin{displaymath}
\E_{G^\ast}(\xi_{n,1}^2)\le\int_{B_n}V(p_{G^\ast},p_G)\,\widetilde\Pi_n(dG)\le L^2\varepsilon_n^2.
\end{displaymath}
Thus
\begin{displaymath}
\operatorname{Var}_{G^\ast}(\xi_{n,1})\le L^2\varepsilon_n^2.
\end{displaymath}
Next, by Jensen's inequality and the uniform exponential moment condition of Lemma~\ref{lem:Poisson_HP_exp_moment},
\begin{displaymath}
\begin{split}
\E_{G^\ast}\!\left[e^{\lambda_0|\xi_{n,1}|}\right]
&\le\E_{G^\ast}\!\left[\int_{B_n}\exp\!\left\{\lambda_0\left|\log\frac{p_G(Y_1)}{p_{G^\ast}(Y_1)}\right|\right\}\,\widetilde\Pi_n(dG)\right]  \\
&\le\sup_{G\in B_n}\E_{G^\ast}\!\left[\exp\!\left\{\lambda_0\left|\log\frac{p_G(Y_1)}{p_{G^\ast}(Y_1)}\right|\right\}\right]\le C_0.
\end{split}
\end{displaymath}
Define $Z_{n,i}:=\xi_{n,i}-\E_{G^\ast}(\xi_{n,1})$, $i=1,\dots,n$. Then $Z_{n,1},\dots,Z_{n,n}$ are i.i.d., mean zero, and $W_n-\E_{G^\ast}(W_n)=
\sum_{i=1}^n Z_{n,i}$. Moreover,
\begin{displaymath}
\sum_{i=1}^n \E_{G^\ast}(Z_{n,i}^2)=n\,\operatorname{Var}_{G^\ast}(\xi_{n,1})\le L^2n\varepsilon_n^2.
\end{displaymath}
The preceding exponential moment bound implies that the centered variables $Z_{n,i}$ have uniformly bounded sub-exponential norms. Indeed,
\begin{displaymath}
\E_{G^\ast}|\xi_{n,1}|\le\lambda_0^{-1}\log \E_{G^\ast}\!\left[e^{\lambda_0|\xi_{n,1}|}\right]\le\lambda_0^{-1}\log C_0,
\end{displaymath}
and hence
\begin{displaymath}
\E_{G^\ast}\!\left[e^{(\lambda_0/2)|Z_{n,1}|}\right]\le\exp\!\left\{\frac{\lambda_0}{2}\E_{G^\ast}|\xi_{n,1}|\right\}\E_{G^\ast}\!\left[e^{(\lambda_0/2)|\xi_{n,1}|}\right]\le C_1
\end{displaymath}
for some constant $C_1<+\infty$ independent of $n$. We now apply Bernstein's inequality for independent mean-zero random variables with uniformly bounded sub-exponential norms and variance proxy $\sum_{i=1}^n\E_{G^\ast}(Z_{n,i}^2)$. There exists a constant $c>0$, depending only on
$(\lambda_0,C_0)$, such that, for every $t>0$,
\begin{displaymath}
\P_{G^\ast}^n\!\left(\sum_{i=1}^nZ_{n,i}\le -t\right)\le\exp\!\left[-c\,\min\!\left\{\frac{t^2}{\sum_{i=1}^n\E_{G^\ast}(Z_{n,i}^2)},\, t\right\}\right].
\end{displaymath}
Taking $t=Dn\varepsilon_n^2$ and using $\sum_{i=1}^n\E_{G^\ast}(Z_{n,i}^2)\le L^2n\varepsilon_n^2$, we obtain
\begin{displaymath}
\P_{G^\ast}^n\!\left(W_n-\E_{G^\ast}(W_n)<-Dn\varepsilon_n^2\right)\le\exp\!\left[-c\min\!\left\{\frac{D^2}{L^2}n\varepsilon_n^2,\,Dn\varepsilon_n^2\right\}\right].
\end{displaymath}
Therefore there exists $c_D>0$ such that
\begin{displaymath}
\P_{G^\ast}^n\!\left(W_n-\E_{G^\ast}(W_n)<-Dn\varepsilon_n^2\right)\le\exp(-c_Dn\varepsilon_n^2).
\end{displaymath}
Since $\E_{G^\ast}(W_n)=n\E_{G^\ast}(\xi_{n,1})\ge -L^2n\varepsilon_n^2$, we have
\begin{displaymath}
\left\{W_n<-(L^2+D)n\varepsilon_n^2\right\}\subset\left\{W_n-\E_{G^\ast}(W_n)<-Dn\varepsilon_n^2\right\}.
\end{displaymath}
Thus
\begin{displaymath}
\P_{G^\ast}^n\!\left(W_n<-(L^2+D)n\varepsilon_n^2\right)\le\exp(-c_Dn\varepsilon_n^2).
\end{displaymath}
By the Jensen lower bound,
\begin{displaymath}
\int_{\mathcal{P}([h_{0},h])} R_n(G)\,\Pi(dG)\ge\Pi(B_n)\exp(W_n).
\end{displaymath}
Hence
\begin{displaymath}
\left\{W_n\ge -(L^2+D)n\varepsilon_n^2\right\}\subset D_n.
\end{displaymath}
Consequently,
\begin{displaymath}
\P_{G^\ast}^n(D_n^c)\le\P_{G^\ast}^n\!\left(W_n<-(L^2+D)n\varepsilon_n^2\right)\le\exp(-c_Dn\varepsilon_n^2).
\end{displaymath}
Since $n\varepsilon_n^2=\log^2 n$, the last quantity is bounded by $1/n^2$ for all sufficiently large $n$. Finally, on $D_n$,
\begin{displaymath}
\int_{\mathcal{P}([h_{0},h])} R_n(G)\,\Pi(dG)\ge\exp\!\left\{-(L^2+D)n\varepsilon_n^2\right\}\Pi(B_n)\ge\exp\!\left\{-(L^2+D+C_{\mathrm{KL}})n\varepsilon_n^2\right\}.
\end{displaymath}
Thus the final claim holds with
\begin{displaymath}
C_{\mathrm{den}}:=L^2+D+C_{\mathrm{KL}}.
\end{displaymath}
\end{proof}

\subsection{Proof of Proposition~\ref{pcrdp_1d}}

For notational simplicity, write $\varepsilon_n:=\frac{\log n}{\sqrt n}$. For a constant $C>0$ to be chosen later, set $U_n:=\{G\in\mathcal P([h_0,h]):d_{\mathrm H}(p_G,p_{G^\ast})\ge C\varepsilon_n\}$, and define $R_n(G):=\prod_{i=1}^n\frac{p_G(Y_i)}{p_{G^\ast}(Y_i)}$. Then
\begin{displaymath}
\Pi(U_n\mid Y_{1:n})=\frac{\int_{U_n}R_n(G)\,\Pi(dG)}{\int_{\mathcal P([h_0,h])}R_n(G)\,\Pi(dG)}.
\end{displaymath}
By Lemma~\ref{lem:Poisson_local_sets}, there exists a local set $B_n\subset\mathcal P([h_0,h])$ and constants $L>0$ and $C_{\mathrm{KL}}>0$, depending only on $(h_0,h,h_0^\ast,h^\ast,G^\ast,\alpha)$, such that, for all sufficiently large $n$,
\begin{displaymath}
\sup_{G\in B_n}K(p_{G^\ast},p_G)\le L^2\varepsilon_n^2,\qquad\sup_{G\in B_n}V(p_{G^\ast},p_G)\le L^2\varepsilon_n^2,
\end{displaymath}
and
\begin{displaymath}
\Pi(B_n)\ge \exp(-C_{\mathrm{KL}}n\varepsilon_n^2).
\end{displaymath}
Moreover, by Lemma~\ref{lem:Poisson_HP_exp_moment}, the sets $B_n$ satisfy the uniform exponential moment condition required in Lemma~\ref{lem:Poisson_HP_Dn}. Fix, for instance, $D_0=1$. Define
\begin{displaymath}
D_n:=\left\{\int_{\mathcal P([h_0,h])}R_n(G)\,\Pi(dG)\ge\exp\{-(L^2+D_0)n\varepsilon_n^2\}\Pi(B_n)\right\}.
\end{displaymath}
By Lemma~\ref{lem:Poisson_HP_Dn}, there exists a constant $C_{\mathrm{den}}>0$, depending only on $(h_0,h,h_0^\ast,h^\ast,G^\ast,\alpha)$, such that, for all sufficiently
large $n$,
\begin{displaymath}
\P_{G^\ast}^n(D_n^c)\le \frac{1}{n^2},
\end{displaymath}
and, on $D_n$,
\begin{displaymath}
\int_{\mathcal P([h_0,h])}R_n(G)\,\Pi(dG)\ge\exp(-C_{\mathrm{den}}n\varepsilon_n^2).
\end{displaymath}
Now choose $D_1>C_{\mathrm{den}}$. By Lemma~\ref{lem:Poisson_HP_En}, there exists a sufficiently large constant $C>0$, depending only on $(h_0,h,h_0^\ast,h^\ast,G^\ast,\alpha)$, and, for all sufficiently large $n$, an event $E_n$ such that
\begin{displaymath}
\P_{G^\ast}^n(E_n^c)\le \frac{1}{n^2},
\end{displaymath}
and, on $E_n$,
\begin{displaymath}
\int_{U_n}R_n(G)\,\Pi(dG)\le\exp(-D_1n\varepsilon_n^2).
\end{displaymath}
Therefore, on $D_n\cap E_n$,
\begin{displaymath}
\Pi(U_n\mid Y_{1:n})=\frac{\int_{U_n}R_n(G)\,\Pi(dG)}{\int_{\mathcal P([h_0,h])}R_n(G)\,\Pi(dG)}\le\exp\{(C_{\mathrm{den}}-D_1)n\varepsilon_n^2\}.
\end{displaymath}
Let $c:=D_1-C_{\mathrm{den}}>0$. Since $n\varepsilon_n^2=\log^2 n$, on $D_n\cap E_n$ we obtain $\Pi(U_n\mid Y_{1:n})\le\exp(-c\log^2 n)$. Hence, for all sufficiently large $n$,
\begin{displaymath}
\P_{G^\ast}^n\!\left[\Pi(U_n\mid Y_{1:n})>\exp(-c\log^2 n)\right]\le\P_{G^\ast}^n(D_n^c)+\P_{G^\ast}^n(E_n^c)\le\frac{2}{n^2}\le\frac{1}{n}.
\end{displaymath}
This proves the first assertion \eqref{pcr_dp_1}. Now define the posterior mean marginal probability mass function
\begin{displaymath}
\bar p_n(y):=\int_{\mathcal P([h_0,h])}p_G(y)\,\Pi(dG\mid Y_{1:n}),\qquad y\in\mathbb N_0.
\end{displaymath}
Let $A_n:=\{\Pi(U_n\mid Y_{1:n})\le \exp(-c\log^2 n)\}$. By the first part,
\begin{displaymath}
\P_{G^\ast}^n(A_n)\ge 1-\frac{1}{n}
\end{displaymath}
for all sufficiently large $n$. On $A_n$, by convexity of squared Hellinger distance in its first argument,
\begin{displaymath}
d_{\mathrm H}^2(\bar p_n,p_{G^\ast})\le\int_{\mathcal P([h_0,h])}d_{\mathrm H}^2(p_G,p_{G^\ast})\,\Pi(dG\mid Y_{1:n}).
\end{displaymath}
Splitting the integral over $U_n^c$ and $U_n$, and using $d_{\mathrm H}^2(p,q)\le 2$ for probability mass functions under the present convention, we obtain
\begin{displaymath}
d_{\mathrm H}^2(\bar p_n,p_{G^\ast})\le C^2\varepsilon_n^2+2\,\Pi(U_n\mid Y_{1:n})\le C^2\varepsilon_n^2+2e^{-c\log^2 n}.
\end{displaymath}
Since $e^{-c\log^2 n}=o(\varepsilon_n^2)$, there exists a constant $C'>0$ such that, on $A_n$, $d_{\mathrm H}(\bar p_n,p_{G^\ast})\le C'\varepsilon_n$. Therefore
\begin{displaymath}
\P_{G^\ast}^n\!\left[d_{\mathrm H}(\bar p_n,p_{G^\ast})\ge C'\varepsilon_n\right]\le\P_{G^\ast}^n(A_n^c)\le\frac{1}{n}
\end{displaymath}
for all sufficiently large $n$. This proves the second assertion \eqref{pcr_dp_2}.

\subsection{Proof of Proposition \ref{pcrnew_1d}}\label{proof:pcrnew_1d}

Since $\Theta=[h_0,h]$ is a compact subset of $(0,+\infty)$, assumptions A1--A6 of \cite{MarTok(09)} hold. Following \cite{MarTok(09)}, we denote by $\mathbb F$ the class of probability measures on $[h_0,h]$ that are absolutely continuous with respect to Lebesgue measure, and by $\overline{\mathbb F}$ its closure in the weak topology. Then $G^*\in\mathbb F$ and
$$
G^\ast=\text{argmin}_{G\in \overline {\mathbb F}}KL(p_{G^\ast},p_G),
$$
where $KL$ denotes the Kullback--Leibler divergence. The claim for $\gamma\in (2/3,1)$ is then a direct consequence of Corollary 4.10 in \cite{MarTok(09)}. For $\gamma=1$, apply Theorem 4.8 in \cite{MarTok(09)}, with $a_n=\sum_{k=1}^n\alpha_k\asymp \log n$. Then
$\log n \;KL(p_{G^\ast},p_{G_{1,n}^{\tiny{[Q-B]}}} )$ converges to zero almost surely with respect to the probability measure under which the $Y_i$'s are i.i.d. according to $p_{G^\ast}$. Since the squared Hellinger distance is dominated by the Kullback--Leibler divergence, the claim follows.

%%%%%%%%%%%%%%%%%%%%%%%%%%%%%%%%
%%%%%%%%%%%%%%%%%%%%%%%%%%%%%%%%
%%%%%%%%%%%%%%%%%%%%%%%%%%%%%%%%
%%%%%%%%%%%%%%%%%%%%%%%%%%%%%%%%

\section{Proofs: $d$-dimensional setting, $d>1$}\label{app2}

\subsection{Auxiliary lemmas}

For $\boldsymbol y=(y_1,\ldots,y_d)\in\mathbb N_0^d$, write $|\boldsymbol y|:=y_1+\cdots+y_d$ and $\boldsymbol y!:=y_1!\cdots y_d!$. For $\boldsymbol\theta=(\theta_1,\ldots,\theta_d)\in[0,\infty)^d$, write $|\boldsymbol\theta|:=\theta_1+\cdots+\theta_d$ and $\boldsymbol\theta^{\boldsymbol y}:=\prod_{\ell=1}^d \theta_\ell^{y_\ell}$. The square root $\sqrt{\boldsymbol\theta}$ is understood componentwise.

\begin{lem}\label{lem:Poisson_kernel_d}
For $\boldsymbol\theta,\boldsymbol\lambda\in[0,\infty)^d$, let $q_{\boldsymbol\theta}$ and $q_{\boldsymbol\lambda}$ denote the $d$-dimensional Poisson kernels.
Then
\begin{displaymath}
\sum_{\boldsymbol y\in\mathbb N_0^d}\sqrt{q_{\boldsymbol\theta}(\boldsymbol y)q_{\boldsymbol\lambda}(\boldsymbol y)}=\exp\!\left\{-\frac{1}{2}\|\sqrt{\boldsymbol\theta}-\sqrt{\boldsymbol\lambda}\|^2\right\},
\end{displaymath}
where $\|\cdot\|$ denotes the Euclidean norm and the square root is understood componentwise. Consequently,
\begin{displaymath}
d_{\mathrm H}^2(q_{\boldsymbol\theta},q_{\boldsymbol\lambda})=2\left[1-\exp\!\left\{-\frac{1}{2}\|\sqrt{\boldsymbol\theta}-\sqrt{\boldsymbol\lambda}\|^2\right\}\right]\le\|\sqrt{\boldsymbol\theta}-\sqrt{\boldsymbol\lambda}\|^2,
\end{displaymath}
and
\begin{displaymath}
\|q_{\boldsymbol\theta}-q_{\boldsymbol\lambda}\|_1\le 2\|\sqrt{\boldsymbol\theta}-\sqrt{\boldsymbol\lambda}\|.
\end{displaymath}
\end{lem}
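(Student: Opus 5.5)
The plan is to reduce everything to the one-dimensional computation of Lemma~\ref{lem:Poisson_kernel}, using the product structure of the $d$-dimensional Poisson kernel. For $\boldsymbol y\in\mathbb N_0^d$ we have $q_{\boldsymbol\theta}(\boldsymbol y)=\prod_{\ell=1}^d q_{\theta_\ell}(y_\ell)$, and therefore
\begin{displaymath}
\sqrt{q_{\boldsymbol\theta}(\boldsymbol y)q_{\boldsymbol\lambda}(\boldsymbol y)}=\prod_{\ell=1}^d\sqrt{q_{\theta_\ell}(y_\ell)q_{\lambda_\ell}(y_\ell)}.
\end{displaymath}
All terms are nonnegative, so Tonelli's theorem lets us write the sum over $\mathbb N_0^d$ as the product of the $d$ one-dimensional sums. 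Lemma~\ref{lem:Poisson_kernel} evaluates each factor as $\exp\{-(\sqrt{\theta_\ell}-\sqrt{\lambda_\ell})^2/2\}$. Multiplying these factors and summing the exponents gives $\exp\{-\tfrac12\|\sqrt{\boldsymbol\theta}-\sqrt{\boldsymbol\lambda}\|^2\}$. Coordinates equal to zero need no separate treatment, since the one-dimensional lemma already covers them.

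For the Hellinger identity we use $d_{\mathrm H}^2(p,q)=2-2\sum\sqrt{pq}$, as in the proof of Lemma~\ref{lem:Poisson_kernel}, and substitute the product formula. The inequality $1-e^{-x}\le x$ with $x=\tfrac12\|\sqrt{\boldsymbol\theta}-\sqrt{\boldsymbol\lambda}\|^2$ then gives $d_{\mathrm H}^2\le\|\sqrt{\boldsymbol\theta}-\sqrt{\boldsymbol\lambda}\|^2$.

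For the $L^1$ bound we repeat the Cauchy--Schwarz argument from the one-dimensional proof, now on $\mathbb N_0^d$. Factor $|p-q|=|\sqrt p-\sqrt q|(\sqrt p+\sqrt q)$ and use $\sum(\sqrt p+\sqrt q)^2\le 4$. This gives $\|p-q\|_1\le 2d_{\mathrm H}(p,q)$, and combining it with the Hellinger bound yields $\|q_{\boldsymbol\theta}-q_{\boldsymbol\lambda}\|_1\le 2\|\sqrt{\boldsymbol\theta}-\sqrt{\boldsymbol\lambda}\|$.

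There is no real obstacle here. The only point needing care is justifying the exchange of the $d$-fold sum with the product, and nonnegativity of the terms (Tonelli) settles it.
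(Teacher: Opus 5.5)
Your proposal is correct and follows the paper's proof essentially step for step. It factorizes the kernel and applies Tonelli to reduce to Lemma~\ref{lem:Poisson_kernel}, then uses $1-e^{-x}\le x$ for the Hellinger bound and the Cauchy--Schwarz estimate $\|p-q\|_1\le 2d_{\mathrm H}(p,q)$ for the $L^1$ bound.
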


\begin{proof}
By the product structure of the $d$-dimensional Poisson kernel, $q_{\boldsymbol\theta}(\boldsymbol y)=\prod_{\ell=1}^d q_{\theta_\ell}(y_\ell)$. Since all terms are nonnegative, Tonelli's theorem gives
\begin{displaymath}
\sum_{\boldsymbol y\in\mathbb N_0^d}\sqrt{q_{\boldsymbol\theta}(\boldsymbol y)q_{\boldsymbol\lambda}(\boldsymbol y)}=\prod_{\ell=1}^d\sum_{y_\ell\ge 0}\sqrt{q_{\theta_\ell}(y_\ell)q_{\lambda_\ell}(y_\ell)}.
\end{displaymath}
By Lemma~\ref{lem:Poisson_kernel}, the $\ell$-th factor equals $\exp\!\left\{-\frac{(\sqrt{\theta_\ell}-\sqrt{\lambda_\ell})^2}{2}\right\}$. Multiplying over $\ell=1,\ldots,d$ gives
\begin{displaymath}
\sum_{\boldsymbol y\in\mathbb N_0^d}\sqrt{q_{\boldsymbol\theta}(\boldsymbol y)q_{\boldsymbol\lambda}(\boldsymbol y)}=\exp\!\left\{-\frac{1}{2}\sum_{\ell=1}^d
(\sqrt{\theta_\ell}-\sqrt{\lambda_\ell})^2\right\}=\exp\!\left\{-\frac{1}{2}\|\sqrt{\boldsymbol\theta}-\sqrt{\boldsymbol\lambda}\|^2\right\}.
\end{displaymath}
The expression for $d_{\mathrm H}^2(q_{\boldsymbol\theta},q_{\boldsymbol\lambda})$ follows from the definition of squared Hellinger distance. The upper bound follows from $1-e^{-y}\le y$ for $y\ge0$. Finally, as in the 1-dimensional case, $\|p-q\|_1\le 2d_{\mathrm H}(p,q)$ for probability mass functions $p$ and $q$. Hence
\begin{displaymath}
\|q_{\boldsymbol\theta}-q_{\boldsymbol\lambda}\|_1\le2d_{\mathrm H}(q_{\boldsymbol\theta},q_{\boldsymbol\lambda})\le2\|\sqrt{\boldsymbol\theta}-\sqrt{\boldsymbol\lambda}\|.
\end{displaymath}
This proves the claim.
\end{proof}

\begin{lem}\label{lem:Poisson_moment_matching_d}
Let $0<h_0\le h<+\infty$. For every $G\in\mathcal P([h_0,h]^d)$ and every integer $K\ge 0$, there exists a discrete probability measure 
\begin{displaymath}
G^{(K)} = \sum_{j=1}^{N_K}p_j\delta_{\boldsymbol\theta_j}, \qquad \boldsymbol\theta_j\in[h_0,h]^d, 
\end{displaymath}
with
\begin{displaymath}
N_K\le \binom{K+d}{d},
\end{displaymath}
such that
\begin{displaymath}
\int_{[h_0,h]^d}\boldsymbol\theta^{\boldsymbol r}\, G^{(K)}(d\boldsymbol\theta) = \int_{[h_0,h]^d}\boldsymbol\theta^{\boldsymbol r}\, G(d\boldsymbol\theta), \qquad |\boldsymbol r|\le K.
\end{displaymath}
Moreover, 
\begin{displaymath}
\|p_G-p_{G^{(K)}}\|_1 \le 2\sum_{s\ge K+1}\frac{(2dh)^s}{s!}.
\end{displaymath}
In particular, for every sufficiently small $\varepsilon>0$, there exists a discrete $\tilde G$ with at most 
\begin{displaymath}
N_\varepsilon \lesssim \left( \frac{\log(1/\varepsilon)}{\log\log(1/\varepsilon)} \right)^d
\end{displaymath}
support points such that $\|p_G-p_{\tilde G}\|_1\le\varepsilon$.
\end{lem}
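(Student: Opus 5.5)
The plan follows the proof of Lemma~\ref{lem:Poisson_moment_matching}, with the $d$-dimensional multi-index calculus replacing the one-dimensional one.

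\textbf{Existence of $G^{(K)}$.} Consider the monomials $\psi_{\boldsymbol r}(\boldsymbol\theta)=\boldsymbol\theta^{\boldsymbol r}$ with $1\le|\boldsymbol r|\le K$. There are $\binom{K+d}{d}-1$ of them, and they are continuous on the compact set $[h_0,h]^d$. Applying \citet[Lemma A.1]{Gho(01)} (Carathéodory's theorem on the convex hull of the moment curve) to these functions gives a discrete probability measure $G^{(K)}$ with at most $\binom{K+d}{d}$ atoms in $[h_0,h]^d$. It matches all these moments of $G$. The case $\boldsymbol r=0$ holds automatically because both measures are probabilities. For $K=0$, take any Dirac mass.

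\textbf{The $L^1$ bound.} Expand
\begin{displaymath}
p_G(\boldsymbol y)=\frac{1}{\boldsymbol y!}\int e^{-|\boldsymbol\theta|}\boldsymbol\theta^{\boldsymbol y}\,G(d\boldsymbol\theta),
\qquad
e^{-|\boldsymbol\theta|}=\sum_{\boldsymbol m\in\mathbb N_0^d}\frac{(-1)^{|\boldsymbol m|}\boldsymbol\theta^{\boldsymbol m}}{\boldsymbol m!}.
\end{displaymath}
The series is absolutely summable, being bounded by $h^{|\boldsymbol y|}e^{dh}/\boldsymbol y!$. Tonelli/Fubini then gives
\begin{displaymath}
p_G(\boldsymbol y)=\sum_{\boldsymbol m}\frac{(-1)^{|\boldsymbol m|}}{\boldsymbol y!\,\boldsymbol m!}\,\mu_{\boldsymbol y+\boldsymbol m}(G).
\end{displaymath}
In the difference $p_G-p_{G^{(K)}}$, every term with $|\boldsymbol y+\boldsymbol m|\le K$ cancels. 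Each remaining term is controlled by $|\mu_{\boldsymbol r}(G)-\mu_{\boldsymbol r}(G^{(K)})|\le 2h^{|\boldsymbol r|}$. Summing over $\boldsymbol y$ gives
\begin{displaymath}
\|p_G-p_{G^{(K)}}\|_1\le 2\sum_{s\ge K+1}h^s\sum_{|\boldsymbol y+\boldsymbol m|=s}\frac{1}{\boldsymbol y!\,\boldsymbol m!}.
\end{displaymath}
The inner sum is a multinomial identity. Set $\boldsymbol r=\boldsymbol y+\boldsymbol m$. Then
\begin{displaymath}
\sum_{\boldsymbol y\le\boldsymbol r}\frac{1}{\boldsymbol y!(\boldsymbol r-\boldsymbol y)!}=\frac{2^{s}}{\boldsymbol r!},
\qquad
\sum_{|\boldsymbol r|=s}\frac{1}{\boldsymbol r!}=\frac{d^s}{s!}.
\end{displaymath}
Combining these, the inner sum equals $(2d)^s/s!$, which yields the stated bound $2\sum_{s\ge K+1}(2dh)^s/s!$.

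\textbf{The final claim.} By Stirling, the tail satisfies $\sum_{s>K}(2dh)^s/s!\le C(2edh/(K+1))^{K+1}$. It is therefore at most $\varepsilon$ once $K\asymp \log(1/\varepsilon)/\log\log(1/\varepsilon)$. With this $K$, the atom count is
\begin{displaymath}
\binom{K+d}{d}\le (K+d)^d/d!\lesssim K^d,
\end{displaymath}
which gives the stated order for $N_\varepsilon$.

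The only step needing care is the combinatorial identity and the absolute-convergence justification for exchanging sums. Both are routine extensions of the one-dimensional case. I expect no real obstacle, beyond checking that the Lemma A.1 count is $\binom{K+d}{d}$, i.e.\ the number of constraints plus one.
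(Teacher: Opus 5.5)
Your proposal is correct and follows the paper's proof essentially step for step. Both apply Lemma A.1 of Ghosal and van der Vaart to the $\binom{K+d}{d}-1$ nonconstant monomials, then expand $e^{-|\boldsymbol\theta|}$ as an absolutely convergent series, cancel the moments of total degree at most $K$, and use the two identities $\sum_{\boldsymbol y\le\boldsymbol r}1/(\boldsymbol y!(\boldsymbol r-\boldsymbol y)!)=2^{|\boldsymbol r|}/\boldsymbol r!$ and $\sum_{|\boldsymbol r|=s}1/\boldsymbol r!=d^s/s!$, finishing with Stirling and $\binom{K+d}{d}\lesssim K^d$.
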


\begin{proof}
If $K=0$, take $G^{(0)}=\delta_{\boldsymbol\theta_0}$ for any $\boldsymbol\theta_0\in[h_0,h]^d$; then the moment identity holds for $\boldsymbol r=\boldsymbol 0$, and the $L^1$ bound follows from the argument below, which uses only the equality of moments up to total degree $K$. Thus assume $K\ge 1$ for the application of \citet[Lemma A.1]{Gho(01)}. Apply \citet[Lemma A.1]{Gho(01)} to the collection of monomials $\boldsymbol\theta^{\boldsymbol r}$ with $1\le |\boldsymbol r|\le K$. The number of such nonconstant monomials is $\binom{K+d}{d}-1$. Hence \citet[Lemma A.1]{Gho(01)} gives a discrete probability measure $G^{(K)}$ with at most $\binom{K+d}{d}$ support points such that the displayed moment identities hold for $1\le |\boldsymbol r|\le K$. The identity for $\boldsymbol r=\boldsymbol 0$ follows because both measures are probability measures. Write
\begin{displaymath}
\mu_{\boldsymbol r}(G):=\int_{[h_{0},h]^d}\boldsymbol\theta^{\boldsymbol r}\,G(d\boldsymbol\theta).
\end{displaymath}
For $\boldsymbol y\in\mathbb N_0^d$,
\begin{displaymath}
p_G(\boldsymbol y)=\frac{1}{\boldsymbol y!}\int_{[h_{0},h]^d}e^{-|\boldsymbol\theta|}\boldsymbol\theta^{\boldsymbol y}\,G(d\boldsymbol\theta).
\end{displaymath}
Since $G$ is supported on $[h_{0},h]^d$,
\begin{displaymath}
\sum_{\boldsymbol m\in\mathbb N_0^d}\int_{[h_{0},h]^d}\frac{\boldsymbol\theta^{\boldsymbol y+\boldsymbol m}}{\boldsymbol y!\,\boldsymbol m!}\,G(d\boldsymbol\theta)\le
\frac{h^{|\boldsymbol y|}}{\boldsymbol y!}\sum_{\boldsymbol m\in\mathbb N_0^d}\frac{h^{|\boldsymbol m|}}{\boldsymbol m!}=\frac{h^{|\boldsymbol y|}e^{dh}}{\boldsymbol y!}<
+\infty.
\end{displaymath}
Therefore the following expansion is justified by absolute convergence:
\begin{displaymath}
e^{-|\boldsymbol\theta|}=\prod_{\ell=1}^d e^{-\theta_\ell}=\sum_{\boldsymbol m\in\mathbb N_0^d}(-1)^{|\boldsymbol m|}\frac{\boldsymbol\theta^{\boldsymbol m}}{\boldsymbol m!}.
\end{displaymath}
Thus
\begin{displaymath}
p_G(\boldsymbol y)=\frac{1}{\boldsymbol y!}\sum_{\boldsymbol m\in\mathbb N_0^d}\frac{(-1)^{|\boldsymbol m|}}{\boldsymbol m!}\mu_{\boldsymbol y+\boldsymbol m}(G),
\end{displaymath}
and the same expansion holds for $G^{(K)}$. Therefore
\begin{displaymath}
p_G(\boldsymbol y)-p_{G^{(K)}}(\boldsymbol y)=\frac{1}{\boldsymbol y!}\sum_{\boldsymbol m\in\mathbb N_0^d}\frac{(-1)^{|\boldsymbol m|}}{\boldsymbol m!}\left\{
\mu_{\boldsymbol y+\boldsymbol m}(G)-\mu_{\boldsymbol y+\boldsymbol m}(G^{(K)})\right\}.
\end{displaymath}
All terms with $|\boldsymbol y+\boldsymbol m|\le K$ vanish. Since both measures are supported
on $[h_{0},h]^d$,
\begin{displaymath}
\left|\mu_{\boldsymbol r}(G)-\mu_{\boldsymbol r}(G^{(K)})\right|\le2h^{|\boldsymbol r|}.
\end{displaymath}
Hence
\begin{displaymath}
\|p_G-p_{G^{(K)}}\|_1\le2\sum_{\boldsymbol y\in\mathbb N_0^d}\sum_{\boldsymbol m:\,|\boldsymbol y+\boldsymbol m|\ge K+1}\frac{h^{|\boldsymbol y+\boldsymbol m|}}
{\boldsymbol y!\,\boldsymbol m!}.
\end{displaymath}
Let $\boldsymbol r=\boldsymbol y+\boldsymbol m$. Then
\begin{displaymath}
\sum_{\boldsymbol 0\le \boldsymbol y\le\boldsymbol r}\frac{1}{\boldsymbol y!(\boldsymbol r-\boldsymbol y)!}=\frac{2^{|\boldsymbol r|}}{\boldsymbol r!}.
\end{displaymath}
Thus
\begin{displaymath}
\|p_G-p_{G^{(K)}}\|_1\le2\sum_{\boldsymbol r:\,|\boldsymbol r|\ge K+1}\frac{(2h)^{|\boldsymbol r|}}{\boldsymbol r!}.
\end{displaymath}
Using the multinomial identity
\begin{displaymath}
\sum_{\boldsymbol r:\,|\boldsymbol r|=s}\frac{1}{\boldsymbol r!}=\frac{d^s}{s!},
\end{displaymath}
we get
\begin{displaymath}
\|p_G-p_{G^{(K)}}\|_1\le2\sum_{s\geq K+1}\frac{(2dh)^s}{s!}.
\end{displaymath}
Stirling's formula implies that this tail is at most $\varepsilon$ for $K\lesssim\frac{\log(1/\varepsilon)}{\log\log(1/\varepsilon)}$. Since
\begin{displaymath}
N_K\le \binom{K+d}{d}\lesssim K^d,
\end{displaymath}
the final claim follows.
\end{proof}

\begin{lem}\label{lem:Poisson_L1_local_d}
Let $\tilde G=\sum_{j=1}^N p_j\delta_{\boldsymbol\theta_j}$ be a discrete probability measure on $[0,\infty)^d$, and let $\eta>0$ be such that
\begin{displaymath}
\|\sqrt{\boldsymbol\theta_j}-\sqrt{\boldsymbol\theta_k}\|>2\eta,
\qquad j\neq k.
\end{displaymath}
Define
\begin{displaymath}
U_j:=\left\{\boldsymbol\theta\in[0,\infty)^d:\|\sqrt{\boldsymbol\theta}-\sqrt{\boldsymbol\theta_j}\|\le \eta\right\},\qquad U:=\bigcup_{j=1}^N U_j.
\end{displaymath}
Then, for every probability measure $G$ on $[0,\infty)^d$,
\begin{displaymath}
\|p_G-p_{\tilde G}\|_1\le 2\eta+\sum_{j=1}^N |G(U_j)-p_j|+G(U^c).
\end{displaymath}
\end{lem}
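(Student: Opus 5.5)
The argument will follow the proof of Lemma~\ref{lem:Poisson_L1_local} almost verbatim, replacing the $1$-dimensional kernel bound by its $d$-dimensional counterpart in Lemma~\ref{lem:Poisson_kernel_d}.

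First, the separation assumption $\|\sqrt{\boldsymbol\theta_j}-\sqrt{\boldsymbol\theta_k}\|>2\eta$ makes the sets $U_1,\dots,U_N$ pairwise disjoint. Indeed, a common point $\boldsymbol\theta$ would give $\|\sqrt{\boldsymbol\theta_j}-\sqrt{\boldsymbol\theta_k}\|\le 2\eta$ by the triangle inequality in the $\sqrt{\cdot}$-coordinates.

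Next, set $w_j:=G(U_j)$ and $r:=G(U^c)$, so that $\sum_j w_j+r=1$. Let $G_j$ be the normalized restriction of $G$ to $U_j$ and $G^c$ the normalized restriction to $U^c$; when a mass vanishes, take an arbitrary probability measure instead. This gives the decomposition
\begin{displaymath}
p_G=\sum_{j=1}^N w_jp_{G_j}+rp_{G^c},\qquad p_{\tilde G}=\sum_{j=1}^N p_jq_{\boldsymbol\theta_j}.
\end{displaymath}

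The triangle inequality, together with the elementary bound $\|af-bg\|_1\le|a-b|+a\|f-g\|_1$ for $a,b\ge0$ and probability mass functions $f,g$, then yields
\begin{displaymath}
\|p_G-p_{\tilde G}\|_1\le\sum_j|w_j-p_j|+\sum_j w_j\|p_{G_j}-q_{\boldsymbol\theta_j}\|_1+r.
\end{displaymath}

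To bound the middle term, use convexity of the $L^1$ norm, which gives $\|p_{G_j}-q_{\boldsymbol\theta_j}\|_1\le\int\|q_{\boldsymbol\theta}-q_{\boldsymbol\theta_j}\|_1G_j(d\boldsymbol\theta)$. By Lemma~\ref{lem:Poisson_kernel_d}, the integrand is at most $2\|\sqrt{\boldsymbol\theta}-\sqrt{\boldsymbol\theta_j}\|\le2\eta$ on $U_j$. Since $\sum_j w_j\le1$, the middle term is at most $2\eta$, and the claimed bound follows.

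No step is a real obstacle. The only point needing care is measurability of the $U_j$, which holds because they are closed sets, and the convexity step, which follows from Fubini/Tonelli since all terms are nonnegative and summable.
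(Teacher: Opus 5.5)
Your proposal is correct and follows the paper's proof essentially step for step. Both arguments use disjointness of the $U_j$ from the separation condition, the decomposition $p_G=\sum_j w_jp_{G_j}+rp_{G^c}$, the bound $\|af-bg\|_1\le|a-b|+a\|f-g\|_1$, and convexity combined with Lemma~\ref{lem:Poisson_kernel_d} to control each local term by $2\eta$.
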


\begin{proof}
The proof is the $d$-dimensional analogue of Lemma~\ref{lem:Poisson_L1_local}, with $|\sqrt{\theta}-\sqrt{\lambda}|$ replaced by $\|\sqrt{\boldsymbol\theta}-\sqrt{\boldsymbol\lambda}\|$. The separation assumption implies that the sets $U_1,\ldots,U_N$ are pairwise disjoint. Write $w_j:=G(U_j)$ and $r:=G(U^c)$ so that $\sum_{j=1}^N w_j+r=1$. For each $j$ with $w_j>0$, define
\begin{displaymath}
G_j(A):=\frac{G(A\cap U_j)}{w_j}.
\end{displaymath}
If $w_j=0$, let $G_j$ be any fixed probability measure on $[0,\infty)^d$; the corresponding term below is multiplied by $w_j=0$ and is irrelevant. Similarly, if $r>0$, define
\begin{displaymath}
G^c(A):=\frac{G(A\cap U^c)}{r},
\end{displaymath}
while if $r=0$, let $G^c$ be any fixed probability measure on $[0,\infty)^d$. Then
\begin{displaymath}
p_G=\sum_{j=1}^N w_jp_{G_j}+rp_{G^c}.
\end{displaymath}
Since $p_{\tilde G}=\sum_{j=1}^N p_jq_{\boldsymbol\theta_j}$, the triangle inequality gives
\begin{displaymath}
\|p_G-p_{\tilde G}\|_1\le\sum_{j=1}^N\|w_jp_{G_j}-p_jq_{\boldsymbol\theta_j}\|_1+r.
\end{displaymath}
Using $\|af-bg\|_1\le |a-b|+a\|f-g\|_1$ for $a,b\ge0$ and probability mass functions $f$ and $g$, we obtain
\begin{displaymath}
\|p_G-p_{\tilde G}\|_1\le\sum_{j=1}^N |w_j-p_j|+\sum_{j=1}^N w_j\|p_{G_j}-q_{\boldsymbol\theta_j}\|_1+r.
\end{displaymath}
By convexity of the $L^1$ norm and Lemma~\ref{lem:Poisson_kernel_d},
\begin{displaymath}
\|p_{G_j}-q_{\boldsymbol\theta_j}\|_1\le\int_{[0,\infty)^d}\|q_{\boldsymbol\theta}-q_{\boldsymbol\theta_j}\|_1\,G_j(d\boldsymbol\theta)\le2\eta,
\end{displaymath}
because $G_j(U_j)=1$ and $\|\sqrt{\boldsymbol\theta}-\sqrt{\boldsymbol\theta_j}\|\le\eta$ on $U_j$. Hence
\begin{displaymath}
\sum_{j=1}^N w_j\|p_{G_j}-q_{\boldsymbol\theta_j}\|_1\le2\eta\sum_{j=1}^N w_j\le2\eta.
\end{displaymath}
Therefore
\begin{displaymath}
\|p_G-p_{\tilde G}\|_1\le\sum_{j=1}^N |G(U_j)-p_j|+2\eta+G(U^c),
\end{displaymath}
as claimed.
\end{proof}

\begin{lem}\label{lem:Poisson_WS_d}
For probability mass functions $p$ and $q$ on $\mathbb N_0^d$, define $K(p,q):=\sum_{\boldsymbol y\in\mathbb N_0^d} p(\boldsymbol y) \log\frac{p(\boldsymbol y)}{q(\boldsymbol y)}$ and $V(p,q):=\sum_{\boldsymbol y\in\mathbb N_0^d} p(\boldsymbol y) \left(\log\frac{p(\boldsymbol y)}{q(\boldsymbol y)}\right)^2$ with the usual convention that these quantities are infinite if $q(\boldsymbol y)=0$ for some $\boldsymbol y$ such that $p(\boldsymbol y)>0$. For $\delta>0$, define the Kullback-Leibler type ball
\begin{displaymath}
B(\delta,p_{G^\ast}) := \left\{ p:\ K(p_{G^\ast},p)\le\delta^2,\quad V(p_{G^\ast},p)\le\delta^2 \right\}. 
\end{displaymath}
Let $0<h_0\le h_0^\ast<h^\ast\le h<+\infty$, and assume that $G^\ast([h_0^\ast,h^\ast]^d)=1$. Fix constants $0<b\le h<+\infty$ and $\alpha\in(0,1]$. Then there exist constants $C>0$ and $\eta_0>0$, depending only on $(d,h_0,h,h_0^\ast,h^\ast,b,\alpha)$, such that for every $G\in\mathcal P([h_0,h]^d)$ satisfying $G([b,h]^d)\ge\alpha$, the implication $\eta:=d_{\mathrm H}(p_{G^\ast},p_G)<\eta_0$ yields 
\begin{displaymath}
K(p_{G^\ast},p_G)\le C\eta^2\log\frac{1}{\eta}, \qquad V(p_{G^\ast},p_G)\le C\eta^2\left(\log\frac{1}{\eta}\right)^2. 
\end{displaymath}
Consequently, there exists $A>0$, depending only on $(d,h_0,h,h_0^\ast,h^\ast,b,\alpha)$, such that for every $G\in\mathcal P([h_0,h]^d)$ satisfying $G([b,h]^d)\ge\alpha$, whenever $\|p_G-p_{G^\ast}\|_1\le\delta$ for some sufficiently small $\delta>0$, one has 
\begin{displaymath}
p_G\in B\!\left( A\delta^{1/2}\log\frac{1}{\delta}, p_{G^\ast} \right).
\end{displaymath}
\end{lem}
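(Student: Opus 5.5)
The plan is to mirror the proof of Lemma~\ref{lem:Poisson_WS}, replacing the one-dimensional Poisson bounds with their product analogues. The key input is a pointwise likelihood-ratio bound of geometric type in $|\boldsymbol y|$. For every $\boldsymbol y\in\mathbb N_0^d$, since $G^\ast$ is supported on $[h_0^\ast,h^\ast]^d$,
\begin{displaymath}
p_{G^\ast}(\boldsymbol y)\le\frac{(h^\ast)^{|\boldsymbol y|}}{\boldsymbol y!}.
\end{displaymath}
Since $G([b,h]^d)\ge\alpha$, restricting the integral to $[b,h]^d$ and using $e^{-|\boldsymbol\theta|}\ge e^{-dh}$ gives
\begin{displaymath}
p_G(\boldsymbol y)\ge\alpha e^{-dh}\frac{b^{|\boldsymbol y|}}{\boldsymbol y!}.
\end{displaymath}
Combining the two,
\begin{displaymath}
\frac{p_{G^\ast}(\boldsymbol y)}{p_G(\boldsymbol y)}\le C_0r^{|\boldsymbol y|},\qquad C_0:=\alpha^{-1}e^{dh},\quad r:=h^\ast/b,
\end{displaymath}
uniformly over the admissible class of $G$.

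Next I would verify the integrability condition of \citet[Theorem 5]{Won(95)}, uniformly over the class. Fix $\delta_\ast\in(0,1]$. If $\boldsymbol Y\sim p_{G^\ast}$, the product structure gives
\begin{displaymath}
\E_{G^\ast}\bigl(t^{|\boldsymbol Y|}\bigr)=\int\prod_{\ell=1}^d e^{\theta_\ell(t-1)}\,G^\ast(d\boldsymbol\theta)\le e^{dh^\ast|t-1|}.
\end{displaymath}
Taking $t=r^{\delta_\ast}$ then yields
\begin{displaymath}
\sum_{\boldsymbol y}\left(\frac{p_{G^\ast}(\boldsymbol y)}{p_G(\boldsymbol y)}\right)^{\delta_\ast}p_{G^\ast}(\boldsymbol y)\le C_0^{\delta_\ast}e^{dh^\ast|r^{\delta_\ast}-1|}<\infty.
\end{displaymath}
The Wong--Shen theorem is stated for general densities, so it applies verbatim to pmfs on $\mathbb N_0^d$. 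Arguing as in \citet[Lemma 4.1]{Gho(01)}, it yields constants $C,\eta_0$, depending only on $(d,h_0,h,h_0^\ast,h^\ast,b,\alpha)$, such that for $\eta<\eta_0$ one has $K\le C\eta^2\log(1/\eta)$ and $V\le C\eta^2(\log(1/\eta))^2$. The main point to check here is that the Wong--Shen constants depend only on $\delta_\ast$ and on the bound $M_{\delta_\ast}$ on the moment above. Since that bound is uniform over the class, the constants are uniform too. I expect this uniformity check to be the only delicate step, and it is handled exactly as in the one-dimensional case.

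For the final claim, suppose $\|p_G-p_{G^\ast}\|_1\le\delta$. Then
\begin{displaymath}
d_{\mathrm H}^2(p_{G^\ast},p_G)=\sum_{\boldsymbol y}\bigl(\sqrt{p_{G^\ast}(\boldsymbol y)}-\sqrt{p_G(\boldsymbol y)}\bigr)^2\le\sum_{\boldsymbol y}|p_{G^\ast}(\boldsymbol y)-p_G(\boldsymbol y)|\le\delta,
\end{displaymath}
so $\eta\le\delta^{1/2}<\eta_0$ once $\delta$ is small. The functions $x\mapsto x^2\log(1/x)$ and $x\mapsto x^2(\log(1/x))^2$ are increasing near $0$, so we may plug in $\eta\le\delta^{1/2}$. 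This gives $K\lesssim\delta\log(1/\delta)$ and $V\lesssim\delta(\log(1/\delta))^2$. Choosing $A$ large, both quantities are at most $A^2\delta(\log(1/\delta))^2$, using $\log(1/\delta)\ge1$ for small $\delta$. That is exactly $p_G\in B\bigl(A\delta^{1/2}\log(1/\delta),p_{G^\ast}\bigr)$.
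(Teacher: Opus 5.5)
Your proposal is correct and follows the paper's proof essentially step for step: the same pointwise ratio bound $p_{G^\ast}(\boldsymbol y)/p_G(\boldsymbol y)\le \alpha^{-1}e^{dh}(h^\ast/b)^{|\boldsymbol y|}$, the same uniform verification of the Wong--Shen moment condition through the generating-function bound $\E_{G^\ast}(t^{|\boldsymbol Y|})\le e^{dh^\ast|t-1|}$, the same appeal to the argument of Ghosal and van der Vaart's Lemma 4.1, and the same use of $d_{\mathrm H}^2\le\|\cdot\|_1$ for the final claim. Your added remark that $x\mapsto x^2\log(1/x)$ and $x\mapsto x^2(\log(1/x))^2$ are increasing near zero (for instance on $(0,e^{-1}]$, so it suffices to take $\eta_0\le e^{-1}$) is a small improvement in rigor, since the paper substitutes $\eta\le\delta^{1/2}$ without comment.
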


\begin{proof}
For every $\boldsymbol y\in\mathbb N_0^d$,
\begin{displaymath}
p_{G^\ast}(\boldsymbol y)=\frac{1}{\boldsymbol y!}\int_{[h_{0}^{\ast},h^\ast]^d}e^{-|\boldsymbol\theta|}\boldsymbol\theta^{\boldsymbol y}\,G^\ast(d\boldsymbol\theta)\le\frac{(h^\ast)^{|\boldsymbol y|}}{\boldsymbol y!}.
\end{displaymath}
Since $G([b,h]^d)\ge\alpha$,
\begin{displaymath}
p_G(\boldsymbol y)\ge\frac{1}{\boldsymbol y!}\int_{[b,h]^d}e^{-|\boldsymbol\theta|}\boldsymbol\theta^{\boldsymbol y}\,G(d\boldsymbol\theta)\ge\alpha e^{-dh}\frac{b^{|\boldsymbol y|}}{\boldsymbol y!}.
\end{displaymath}
Therefore
\begin{displaymath}
\frac{p_{G^\ast}(\boldsymbol y)}{p_G(\boldsymbol y)}\le\alpha^{-1}e^{dh}\left(\frac{h^\ast}{b}\right)^{|\boldsymbol y|}.
\end{displaymath}
Let $C_0:=\alpha^{-1}e^{dh}$ and $r:=\frac{h^\ast}{b}$. For any $\delta_\ast\in(0,1]$,
\begin{displaymath}
\sum_{\boldsymbol y\in\mathbb N_0^d}\left(\frac{p_{G^\ast}(\boldsymbol y)}{p_G(\boldsymbol y)}\right)^{\delta_\ast}p_{G^\ast}(\boldsymbol y)\le C_0^{\delta_\ast}
\E_{G^\ast}\!\left[r^{\delta_\ast|\boldsymbol Y|}\right].
\end{displaymath}
If $\boldsymbol Y\sim p_{G^\ast}$, then for every $t>0$,
\begin{displaymath}
\E_{G^\ast}(t^{|\boldsymbol Y|})=\int_{[h_{0}^{\ast},h^\ast]^d}\exp\{(t-1)|\boldsymbol\theta|\}\,G^\ast(d\boldsymbol\theta)\le\exp\{dh^\ast |t-1|\}.
\end{displaymath}
Thus the integrability condition in \citet[Theorem 5]{Won(95)} holds uniformly over all $G\in\mathcal P([h_{0},h]^d)$ satisfying $G([b,h]^d)\ge\alpha$. Arguing as in the proof of \citet[Lemma 4.1]{Gho(01)}, we obtain constants $C>0$ and $\eta_0>0$, depending only on $(d,h_0,h,h_0^\ast,h^\ast,b,\alpha)$, such that, for
$\eta=d_{\mathrm H}(p_{G^\ast},p_G)<\eta_0$,
\begin{displaymath}
K(p_{G^\ast},p_G)\le C\eta^2\log\frac{1}{\eta}, \qquad V(p_{G^\ast},p_G)\le C\eta^2\left(\log\frac{1}{\eta}\right)^2.
\end{displaymath}
Finally, if $\|p_G-p_{G^\ast}\|_1\le\delta$, then $d_{\mathrm H}^2(p_G,p_{G^\ast})\le\|p_G-p_{G^\ast}\|_1\le\delta$. Hence $d_{\mathrm H}(p_G,p_{G^\ast})\le\delta^{1/2}$. For all sufficiently small $\delta$, this is smaller than $\eta_0$. Therefore
\begin{displaymath}
K(p_{G^\ast},p_G)\le C\delta\log\frac{1}{\delta^{1/2}}\lesssim\delta\log\frac{1}{\delta},
\end{displaymath}
and
\begin{displaymath}
V(p_{G^\ast},p_G)\le C\delta\left(\log\frac{1}{\delta^{1/2}}\right)^2\lesssim\delta\left(\log\frac{1}{\delta}\right)^2.
\end{displaymath}
Choosing $A>0$ sufficiently large, we obtain
\begin{displaymath}
K(p_{G^\ast},p_G)\le A^2\delta\left(\log\frac{1}{\delta}\right)^2,\qquad V(p_{G^\ast},p_G)\le A^2\delta\left(\log\frac{1}{\delta}\right)^2.
\end{displaymath}
Equivalently,
\begin{displaymath}
p_G\in B\!\left(A\delta^{1/2}\log\frac{1}{\delta},\ p_{G^\ast}\right).
\end{displaymath}
This proves the lemma.
\end{proof}

\begin{lem}\label{lem:Poisson_bracketing_entropy_d}
Let $0<h_0\le h<+\infty$ and define \[ \mathcal G_{h,d}:= \{p_G:\ G\in\mathcal P([h_0,h]^d)\}. \] There exists a constant $C>0$, depending only on $(d,h)$, such that, for all sufficiently small $\varepsilon>0$,
\begin{displaymath}
\log N_{[]}(\varepsilon,\mathcal G_{h,d},d_{\mathrm H}) \le C\frac{(\log(1/\varepsilon))^{d+1}} {(\log\log(1/\varepsilon))^d}.
 \end{displaymath}
\end{lem}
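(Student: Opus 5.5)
The plan is to mimic the proof of Lemma~\ref{lem:Poisson_bracketing_entropy}, replacing the one-dimensional truncation set $\{0,\dots,T\}$ by a $d$-dimensional box or simplex of lattice points. The starting point is the pointwise envelope
\begin{displaymath}
p_G(\boldsymbol y)=\frac{1}{\boldsymbol y!}\int_{[h_0,h]^d}e^{-|\boldsymbol\theta|}\boldsymbol\theta^{\boldsymbol y}\,G(d\boldsymbol\theta)\le \frac{h^{|\boldsymbol y|}}{\boldsymbol y!}=:b_{h,d}(\boldsymbol y),
\end{displaymath}
which holds uniformly over $G\in\mathcal P([h_0,h]^d)$. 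By the multinomial identity used in the proof of Lemma~\ref{lem:Poisson_moment_matching_d}, $\sum_{|\boldsymbol y|=s}b_{h,d}(\boldsymbol y)=(dh)^s/s!$. We then choose $T=T_\varepsilon$ with $\sum_{s>T}(dh)^s/s!\le\varepsilon^2/8$. Stirling's formula, exactly as in the one-dimensional case, gives $T_\varepsilon\lesssim\log(1/\varepsilon)/\log\log(1/\varepsilon)$, with constants depending on $(d,h)$.

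Next we bracket the finitely many coordinates $\boldsymbol y$ with $|\boldsymbol y|\le T$. There are $m:=\binom{T+d}{d}\lesssim T^d$ such lattice points. Set $\delta:=\varepsilon^2/(8m)$ and discretize each coordinate value $p_G(\boldsymbol y)\in[0,1]$ to the grid $\delta\mathbb N_0$. This defines $l(\boldsymbol y)=a_{\boldsymbol y}$ and $u(\boldsymbol y)=a_{\boldsymbol y}+\delta$ for $|\boldsymbol y|\le T$. For $|\boldsymbol y|>T$ we set $l=0$ and $u=b_{h,d}$. The inequality $(\sqrt u-\sqrt l)^2\le u-l$ then gives $d_{\mathrm H}^2(l,u)\le m\delta+\varepsilon^2/8=\varepsilon^2/4$, so these are $\varepsilon$-brackets. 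The number of brackets is at most $(1/\delta+1)^m$, hence
\begin{displaymath}
\log N_{[]}(\varepsilon,\mathcal G_{h,d},d_{\mathrm H})\le m\log\Bigl(\frac{2}{\delta}\Bigr)\lesssim T_\varepsilon^d\,\log\Bigl(\frac{T_\varepsilon^d}{\varepsilon^2}\Bigr).
\end{displaymath}

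Finally we plug in the bound on $T_\varepsilon$. Since $\log(T_\varepsilon^d/\varepsilon^2)\lesssim\log(1/\varepsilon)$, we get
\begin{displaymath}
\log N_{[]}\lesssim\Bigl(\frac{\log(1/\varepsilon)}{\log\log(1/\varepsilon)}\Bigr)^d\log(1/\varepsilon)=\frac{(\log(1/\varepsilon))^{d+1}}{(\log\log(1/\varepsilon))^d},
\end{displaymath}
which is the claim. The only step needing care is the tail estimate in $d$ dimensions. Summing over the shells $|\boldsymbol y|=s$ reduces it to the scalar Poisson-type tail with parameter $dh$, so the Stirling bound $\sum_{s>T}(dh)^s/s!\le C(d,h)\,(edh/(T+1))^{T+1}$ yields the stated order of $T_\varepsilon$. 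The counting $m\lesssim T^d$ is the source of the extra power $d$ in the exponents of the bound.
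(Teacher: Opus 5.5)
Your proposal is correct and follows essentially the same route as the paper's proof. Both use the envelope $h^{|\boldsymbol y|}/\boldsymbol y!$, truncate to the simplex $\{|\boldsymbol y|\le T_\varepsilon\}$ via the shell identity $\sum_{|\boldsymbol y|=s}h^s/\boldsymbol y!=(dh)^s/s!$ and Stirling, take coordinatewise $\delta$-grids with $\delta=\varepsilon^2/(8m_T)$ and the envelope bracket $[0,b_h]$ on the tail, and finish with the count $m_T\log(2/\delta)\lesssim T_\varepsilon^d\log(1/\varepsilon)$.
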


\begin{proof}
For every $G\in\mathcal P([h_{0},h]^d)$ and every $\boldsymbol y\in\mathbb N_0^d$,
\begin{displaymath}
p_G(\boldsymbol y)=\int_{[h_{0},h]^d}q_{\boldsymbol\theta}(\boldsymbol y)\,G(d\boldsymbol\theta)\le\frac{h^{|\boldsymbol y|}}{\boldsymbol y!}=:b_h(\boldsymbol y).
\end{displaymath}
Choose $T=T_\varepsilon$ such that $\sum_{|\boldsymbol y|>T} b_h(\boldsymbol y)\le\frac{\varepsilon^2}{8}$.  Since $\sum_{|\boldsymbol y|=s}\frac{h^s}{\boldsymbol y!}
=\frac{(dh)^s}{s!}$, Stirling's formula implies that one can take
\begin{displaymath}
T_\varepsilon\lesssim\frac{\log(1/\varepsilon)}{\log\log(1/\varepsilon)}.
\end{displaymath}
Let $\mathcal I_T:=\{\boldsymbol y\in\mathbb N_0^d:\ |\boldsymbol y|\le T\}$ and $m_T:=|\mathcal I_T|$. Then $m_T\lesssim T^d$. Set $\delta:=\frac{\varepsilon^2}{8m_T}$. For each $\boldsymbol y\in\mathcal I_T$, bracket the coordinate
$p_G(\boldsymbol y)\in[0,1]$ by intervals of length $\delta$. The number of such coordinatewise
brackets is at most
\begin{displaymath}
\left(\frac{1}{\delta}+1\right)^{m_T}.
\end{displaymath}
For $\boldsymbol y\notin\mathcal I_T$, use the bracket $[0,b_h(\boldsymbol y)]$. For every resulting bracket $[l,u]$, where $l$ and $u$ are nonnegative summable functions on $\mathbb N_0^d$, we have
\begin{displaymath}
(\sqrt{u(\boldsymbol y)}-\sqrt{l(\boldsymbol y)})^2\le u(\boldsymbol y)-l(\boldsymbol y).
\end{displaymath}
Therefore
\begin{displaymath}
d_{\mathrm H}^2(l,u)\le\sum_{\boldsymbol y\in\mathcal I_T}\{u(\boldsymbol y)-l(\boldsymbol y)\}+\sum_{|\boldsymbol y|>T}b_h(\boldsymbol y)\le\frac{\varepsilon^2}{8}+\frac{\varepsilon^2}{8}=\frac{\varepsilon^2}{4}.
\end{displaymath}
Hence $d_{\mathrm H}(l,u)\le\varepsilon/2$, and the constructed brackets are in particular $\varepsilon$-brackets for $\mathcal G_{h,d}$. Thus
\begin{displaymath}
\log N_{[]}(\varepsilon,\mathcal G_{h,d},d_{\mathrm H})\lesssim m_T\log\left(\frac{2}{\delta}\right)\lesssim m_T\log\left(\frac{m_T}{\varepsilon^2}\right).
\end{displaymath}
Using $m_T\lesssim T_\varepsilon^d$ and
\begin{displaymath}
T_\varepsilon\lesssim\frac{\log(1/\varepsilon)}{\log\log(1/\varepsilon)},
\end{displaymath}
we obtain
\begin{displaymath}
\log N_{[]}(\varepsilon,\mathcal G_{h,d},d_{\mathrm H})\le C\frac{(\log(1/\varepsilon))^{d+1}}{(\log\log(1/\varepsilon))^d}.
\end{displaymath}
This proves the lemma.
\end{proof}

\begin{lem}\label{lem:Poisson_local_sets_d}
Let $0<h_0\le h_0^\ast<h^\ast\le h<+\infty$, and assume that $G^\ast([h_0^\ast,h^\ast]^d)=1$. Let $\Pi$ denote the law of a Dirichlet process $\operatorname{DP}(1,H)$, where $H$ is a finite measure on $[h_0,h]^d$ with density $a$ satisfying $0<m_H\le a(\boldsymbol\theta)\le M_H<+\infty, \qquad \boldsymbol\theta\in[h_0,h]^d$. Let $\varepsilon_{n,d}:=\frac{(\log n)^{(d+1)/2}}{\sqrt n}$. Then there exist constants $L>0$ and $C_{\mathrm{KL}}>0$, depending only on $(d,h_0,h,h_0^\ast,h^\ast,G^\ast,\alpha)$, and sets $B_n\subset\mathcal P([h_0,h]^d)$ such that 
\begin{displaymath}
\sup_{G\in B_n}K(p_{G^\ast},p_G)\le L^2\varepsilon_{n,d}^2, \qquad \sup_{G\in B_n}V(p_{G^\ast},p_G)\le L^2\varepsilon_{n,d}^2,
\end{displaymath}
and 
\begin{displaymath}
\Pi(B_n)\ge\exp(-C_{\mathrm{KL}}n\varepsilon_{n,d}^2)
\end{displaymath}
for all sufficiently large $n$.
\end{lem}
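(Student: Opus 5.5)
We follow the proof of Lemma~\ref{lem:Poisson_local_sets} step by step, replacing each one-dimensional ingredient by its $d$-dimensional counterpart. Fix $b>0$ with $3b<h_0^\ast$ and $\alpha_0\in(0,1/4)$, so that $G^\ast([3b,h^\ast]^d)=1>4\alpha_0$. For small $\varepsilon>0$ choose $K=K_\varepsilon$ with $2\sum_{s\ge K+1}(2dh)^s/s!\le\varepsilon/2$, so that $K\lesssim\log(1/\varepsilon)/\log\log(1/\varepsilon)$.

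\textbf{Discrete approximation of $G^\ast$.} Apply \citet[Lemma A.1]{Gho(01)} on $[h_0^\ast,h^\ast]^d$ to the monomials $\boldsymbol\theta^{\boldsymbol r}$ with $1\le|\boldsymbol r|\le K$, together with a continuous function $\varphi\equiv 1$ on the cube. This yields a discrete $\tilde{\tilde G}^\ast$ supported in $[h_0^\ast,h^\ast]^d$ with at most $\binom{K+d}{d}+1\lesssim K^d$ atoms. The argument of Lemma~\ref{lem:Poisson_moment_matching_d} then gives $\|p_{G^\ast}-p_{\tilde{\tilde G}^\ast}\|_1\le\varepsilon/2$.

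Next, move each atom to the nearest point of a grid in the componentwise $\sqrt{\boldsymbol\theta}$-scale with mesh of order $\varepsilon/\sqrt d$, chosen so that each atom moves by at most $\varepsilon/2$ in Euclidean norm and distinct grid points are more than $\varepsilon/2$ apart. Merging duplicate atoms gives $\tilde G^\ast=\sum_{j\le N_\varepsilon}p_j\delta_{\boldsymbol\theta_j}$ with $N_\varepsilon\lesssim(\log(1/\varepsilon)/\log\log(1/\varepsilon))^d$. By Lemma~\ref{lem:Poisson_kernel_d}, $\|p_{G^\ast}-p_{\tilde G^\ast}\|_1\le 3\varepsilon/2$. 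Moreover $\tilde G^\ast([b,h]^d)>4\alpha_0$ for small $\varepsilon$, because each atom moves only slightly.

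\textbf{The set $\mathcal E_\varepsilon$.} Define the Euclidean balls $U_j$ of radius $\varepsilon/4$ in the $\sqrt{\boldsymbol\theta}$-scale; they are pairwise disjoint by the separation of the atoms. Let $\mathcal E_\varepsilon$ be the set of $G$ with $\sum_j|G(U_j)-p_j|\le\varepsilon$ and $G(U^c)\le\varepsilon$. Lemma~\ref{lem:Poisson_L1_local_d} gives $\|p_G-p_{G^\ast}\|_1\le4\varepsilon$ on $\mathcal E_\varepsilon$. The same computation as in the one-dimensional case gives $G([b/2,h]^d)\ge\alpha_0$. Lemma~\ref{lem:Poisson_WS_d} then places $p_G$ in $B(A_1\varepsilon^{1/2}\log(1/\varepsilon),p_{G^\ast})$.

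\textbf{Choice of $\varepsilon$ and the prior mass bound.} Set $B_n:=\mathcal E_{1/n}$. Then $n^{-1/2}\log n\le\varepsilon_{n,d}$ for $d\ge1$, so the bounds on $K$ and $V$ hold with $L=A_1$.

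For the prior mass, $H(U_j)\gtrsim m_H\,n^{-d}$, since each $U_j$ contains a cube of side of order $1/n$ and the map $\sqrt{\cdot}$ is bi-Lipschitz on $[h_0,h]$ because $h_0>0$. Apply \citet[Lemma A.2]{Gho(01)} to the partition $U_1,\dots,U_{N},U^c$, subdividing $U^c$ so that every Dirichlet parameter is at most $1$. This gives $\Pi(B_n)\ge\exp\{-c_1N_{1/n}\log n\}$, with the constant absorbing $d$. Since $N_{1/n}\log n\lesssim(\log n)^{d+1}/(\log\log n)^d\le C(\log n)^{d+1}=Cn\varepsilon_{n,d}^2$, the required lower bound on $\Pi(B_n)$ follows.

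\textbf{Main obstacle.} The delicate step is the Lemma A.2 bound. Its exponent must scale as $N\log(1/\varepsilon)$ rather than carry an extra factor of $d$ or $N$. We also need the subdivision of $U^c$ to use only $O(N)$ pieces, which holds because $H(U^c)\le H([h_0,h]^d)$ is bounded. We would verify both points by following Ghosal--van der Vaart's proof with $\varepsilon=1/n$ and checking that only the number of cells, here of order $N$, enters.
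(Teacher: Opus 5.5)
Your proposal is correct and follows essentially the same route as the paper's proof. The shared steps are moment matching via \citet[Lemma A.1]{Gho(01)}, discretization in the $\sqrt{\boldsymbol\theta}$-scale, the set $\mathcal E_{1/n}$ controlled through Lemmas~\ref{lem:Poisson_L1_local_d} and~\ref{lem:Poisson_WS_d}, and the prior-mass bound from \citet[Lemma A.2]{Gho(01)}; your bound $H(U_j)\gtrsim n^{-d}$ is actually sharper than the paper's $n^{-2d}$, and either suffices. One small fix: a cubic grid cannot give both displacement $\le\varepsilon/2$ and separation $>\varepsilon/2$ once $d\ge 4$, since it would need $\varepsilon/2<\text{mesh}\le\varepsilon/\sqrt d$, so use instead a maximal $\varepsilon/2$-separated subset of $[\sqrt{h_0},\sqrt h]^d$, which is automatically an $\varepsilon/2$-net, as the paper's construction implicitly does.
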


\begin{proof}
Choose $b>0$ such that $3b<h_0^\ast$, and choose $\alpha_0\in(0,1/4)$. Then $G^\ast([3b,h^\ast]^d)=1>4\alpha_0$. Fix a sufficiently small $\varepsilon>0$. Choose $K=K_\varepsilon$ such that
\begin{displaymath}
2\sum_{s\geq K+1}\frac{(2dh)^s}{s!}\le\frac{\varepsilon}{2}.
\end{displaymath}
Apply \citet[Lemma A.1]{Gho(01)} on the compact set $[h_0^\ast,h^\ast]^d$ to all monomials $\boldsymbol\theta^{\boldsymbol r}$ with $1\le |\boldsymbol r|\le K$. Then there exists a discrete probability measure
\begin{displaymath}
\tilde{\tilde G}^\ast=\sum_{j=1}^{M_\varepsilon}\widetilde p_j\delta_{\widetilde{\boldsymbol\theta}_j},\qquad\widetilde{\boldsymbol\theta}_j\in[h_0^\ast,h^\ast]^d,
\end{displaymath}
with
\begin{displaymath}
M_\varepsilon\lesssim K^d\lesssim\left(\frac{\log(1/\varepsilon)}{\log\log(1/\varepsilon)}\right)^d,
\end{displaymath}
such that
\begin{displaymath}
\int_{[h_0^\ast,h^\ast]^d}\boldsymbol\theta^{\boldsymbol r}\,\tilde{\tilde G}^\ast(d\boldsymbol\theta)=\int_{[h_0^\ast,h^\ast]^d}\boldsymbol\theta^{\boldsymbol r}\,
G^\ast(d\boldsymbol\theta),\qquad 1\le |\boldsymbol r|\le K.
\end{displaymath}
Since both measures are probability measures, the identity also holds for $\boldsymbol r=\boldsymbol 0$. Moreover, $\tilde{\tilde G}^\ast([3b,h^\ast]^d)=1>4\alpha_0$. Therefore, by the proof of Lemma~\ref{lem:Poisson_moment_matching_d},
\begin{displaymath}
\|p_{G^\ast}-p_{\tilde{\tilde G}^\ast}\|_1\le\frac{\varepsilon}{2}.
\end{displaymath}
Now move each support point of $\tilde{\tilde G}^\ast$ to a point of an $\varepsilon$-net in the $\sqrt{\boldsymbol\theta}$-scale, chosen so that every point of $[\sqrt{h_{0}},\sqrt h]^d$ is within Euclidean distance $\varepsilon/2$ of the net and distinct retained net points are separated by more than $\varepsilon/2$. Merge duplicate atoms. The resulting
measure
\begin{displaymath}
\tilde G^\ast=\sum_{j=1}^{N_\varepsilon}p_j\delta_{\boldsymbol\theta_j}
\end{displaymath}
satisfies
\begin{displaymath}
N_\varepsilon\lesssim\left(\frac{\log(1/\varepsilon)}{\log\log(1/\varepsilon)}\right)^d,
\end{displaymath}
and
\begin{displaymath}
\|\sqrt{\boldsymbol\theta_j}-\sqrt{\boldsymbol\theta_k}\|>\frac{\varepsilon}{2},\qquad j\neq k.
\end{displaymath}
Moreover, before merging duplicate atoms, each support point has been moved by at most $\varepsilon/2$ in the $\sqrt{\boldsymbol\theta}$-scale. Hence, by Lemma~\ref{lem:Poisson_kernel_d},
\begin{displaymath}
\|p_{\tilde{\tilde G}^\ast}-p_{\tilde G^\ast}\|_1\le\varepsilon.
\end{displaymath}
Therefore
\begin{displaymath}
\|p_{G^\ast}-p_{\tilde G^\ast}\|_1\le\frac{3\varepsilon}{2}.
\end{displaymath}
Furthermore, since $\tilde{\tilde G}^\ast([3b,h^\ast]^d)=1>4\alpha_0$, and since each support point is moved by at most $\varepsilon/2$ in the $\sqrt{\boldsymbol\theta}$-scale, every support point of $\tilde{\tilde G}^\ast$ is moved into $[b,h]^d$ for all sufficiently small $\varepsilon$. Merging duplicate atoms does not change the total mass.
Thus $\tilde G^\ast([b,h]^d)>4\alpha_0$. Define
\begin{displaymath}
U_j:=\left\{\boldsymbol\theta\in[h_{0},h]^d:\|\sqrt{\boldsymbol\theta}-\sqrt{\boldsymbol\theta_j}\|\le\frac{\varepsilon}{4}\right\},\qquad U:=\bigcup_{j=1}^{N_\varepsilon}U_j,
\end{displaymath}
and let
\begin{displaymath}
\mathcal E_\varepsilon:=\left\{G\in\mathcal P([h_{0},h]^d):\sum_{j=1}^{N_\varepsilon}|G(U_j)-p_j|\le\varepsilon,\quad G(U^c)\le\varepsilon\right\}.
\end{displaymath}
By Lemma~\ref{lem:Poisson_L1_local_d}, every $G\in\mathcal E_\varepsilon$ satisfies
\begin{displaymath}
\|p_G-p_{\tilde G^\ast}\|_1\le 2(\varepsilon/4)+\varepsilon+\varepsilon=\frac{5\varepsilon}{2}.
\end{displaymath}
Therefore
\begin{displaymath}
\|p_G-p_{G^\ast}\|_1\le 4\varepsilon.
\end{displaymath}
We next show that $G\in\mathcal E_\varepsilon$ implies $G([b/2,h]^d)\ge\alpha_0$ for all sufficiently small $\varepsilon$. If $\boldsymbol\theta_j\in[b,h]^d$, then $U_j\subset [b/2,h]^d$ for all sufficiently small $\varepsilon$. Hence, for $G\in\mathcal E_\varepsilon$,
\begin{displaymath}
G([b/2,h]^d)\ge\sum_{\boldsymbol\theta_j\in[b,h]^d}G(U_j)\ge\sum_{\boldsymbol\theta_j\in[b,h]^d}p_j-\varepsilon=\tilde G^\ast([b,h]^d)-\varepsilon>4\alpha_0-\varepsilon>\alpha_0,
\end{displaymath}
provided $\varepsilon$ is sufficiently small. Applying Lemma~\ref{lem:Poisson_WS_d} with $b/2$ in place of $b$ and $\alpha_0$ in place of
$\alpha$ gives
\begin{displaymath}
\mathcal E_\varepsilon\subset B\!\left(A\varepsilon^{1/2}\log\frac{1}{\varepsilon},p_{G^\ast}\right)
\end{displaymath}
for a constant $A>0$ depending only on $(d,h_0,h,h_0^\ast,h^\ast,b,\alpha_0)$. Now set $B_n:=\mathcal E_{1/n}$. Applying Lemma~\ref{lem:Poisson_WS_d} with $\delta=4/n$ gives $\mathcal{E}_{1/n}\subset B\!\left(A(4/n)^{1/2}\log(n/4),\,p_{G^\ast}\right)$. The resulting radius satisfies
\begin{displaymath}
(4/n)^{1/2}\log(n/4)\asymp \frac{\log n}{\sqrt{n}}=\frac{(\log n)^1}{n^{1/2}}\le\frac{(\log n)^{(d+1)/2}}{n^{1/2}}=\varepsilon_{n,d},
\end{displaymath}
where the inequality holds for all $d\ge 1$ and all $n\ge 3$. Hence, for a suitable constant $A_1>0$, $B_n\subset B(A_1\varepsilon_{n,d},p_{G^\ast})$. It remains to lower bound $\Pi(B_n)$. Since $H$ has density bounded below on $[h_{0},h]^d$, there exists $c>0$ such that, for $\varepsilon=1/n$,
\begin{displaymath}
H(U_j)\ge c n^{-2d},\qquad j=1,\ldots,N_{1/n},
\end{displaymath}
for all sufficiently large $n$. Indeed, in each coordinate, a ball of radius $1/(4n)$ in the $\sqrt{\theta_\ell}$-scale contains an interval in the $\theta_\ell$-scale of length at least of order $n^{-2}$, uniformly over centers in $[h_{0},h]$. Hence each $U_j$ has Lebesgue measure at least of order $n^{-2d}$. Applying \citet[Lemma A.2]{Gho(01)} to the partition $U_1,\ldots,U_{N_{1/n}},U^c$, after subdividing $U^c$ if necessary, yields
\begin{displaymath}
\Pi(B_n)\ge\exp\{-c_1N_{1/n}\log n\}.
\end{displaymath}
Since
\begin{displaymath}
N_{1/n}\lesssim\left(\frac{\log n}{\log\log n}\right)^d,
\end{displaymath}
we have
\begin{displaymath}
N_{1/n}\log n\lesssim\frac{(\log n)^{d+1}}{(\log\log n)^d}\le C n\varepsilon_{n,d}^2.
\end{displaymath}
Thus
\begin{displaymath}
\Pi(B_n)\ge \exp(-C_{\mathrm{KL},1}n\varepsilon_{n,d}^2).
\end{displaymath}
Taking $L:=A_1$, $C_{\mathrm{KL}}:=C_{\mathrm{KL},1}$ gives
\begin{displaymath}
\sup_{G\in B_n}K(p_{G^\ast},p_G)\le L^2\varepsilon_{n,d}^2,\qquad \sup_{G\in B_n}V(p_{G^\ast},p_G)\le L^2\varepsilon_{n,d}^2,
\end{displaymath}
and
\begin{displaymath}
\Pi(B_n)\ge \exp(-C_{\mathrm{KL}}n\varepsilon_{n,d}^2)
\end{displaymath}
for all sufficiently large $n$. This proves the lemma.
\end{proof}

\begin{lem}\label{lem:Poisson_HP_exp_moment_d}
Assume the setup of Lemma~\ref{lem:Poisson_local_sets_d}, and let $B_n$ be the local set constructed therein. Then there exist constants $\lambda_0>0$ and $C_0<+\infty$, depending only on $(d,h_0,h,h_0^\ast,h^\ast,G^\ast)$, such that, for all sufficiently large $n$, 
\begin{displaymath}
\sup_{G\in B_n} \E_{G^\ast}\!\left[ \exp\!\left\{ \lambda_0 \left| \log\frac{p_G(\boldsymbol Y_1)}{p_{G^\ast}(\boldsymbol Y_1)} \right| \right\} \right] \le C_0.
\end{displaymath}
\end{lem}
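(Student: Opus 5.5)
The plan is to mirror the proof of Lemma~\ref{lem:Poisson_HP_exp_moment}. The key fact is that for $G\in B_n$ the log-likelihood ratio is bounded by an affine function of $|\boldsymbol y|$, and $|\boldsymbol Y_1|$ has exponential moments of every order under $p_{G^\ast}$.

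\emph{Step 1 (mass away from the boundary).} Fix $b>0$ with $3b<h_0^\ast$ and $\alpha_0\in(0,1/4)$, as in the proof of Lemma~\ref{lem:Poisson_local_sets_d}. That proof shows that every $G\in\mathcal E_{1/n}=B_n$ satisfies $G([b/2,h]^d)\ge\alpha_0$ for all sufficiently large $n$. Hence, as in the proof of Lemma~\ref{lem:Poisson_WS_d} with $b/2$ in place of $b$,
\begin{displaymath}
p_G(\boldsymbol y)\ge \alpha_0e^{-dh}\frac{(b/2)^{|\boldsymbol y|}}{\boldsymbol y!},\qquad\text{so}\qquad \frac{p_{G^\ast}(\boldsymbol y)}{p_G(\boldsymbol y)}\le \alpha_0^{-1}e^{dh}\Big(\frac{2h^\ast}{b}\Big)^{|\boldsymbol y|},
\end{displaymath}
using $p_{G^\ast}(\boldsymbol y)\le (h^\ast)^{|\boldsymbol y|}/\boldsymbol y!$. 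This bound holds uniformly over $G\in B_n$.

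\emph{Step 2 (reverse ratio).} Since $G^\ast$ is supported on $[h_0^\ast,h^\ast]^d\subset[3b,h^\ast]^d$, we have $p_{G^\ast}(\boldsymbol y)\ge e^{-dh^\ast}(3b)^{|\boldsymbol y|}/\boldsymbol y!$. For any $G\in\mathcal P([h_0,h]^d)$ we have $p_G(\boldsymbol y)\le h^{|\boldsymbol y|}/\boldsymbol y!$. Therefore $p_G(\boldsymbol y)/p_{G^\ast}(\boldsymbol y)\le e^{dh^\ast}(h/(3b))^{|\boldsymbol y|}$. Combining this with Step~1 gives constants $a_0,a_1>0$, depending only on $(d,h_0,h,h_0^\ast,h^\ast)$ once $b$ and $\alpha_0$ are fixed, such that
\begin{displaymath}
\left|\log\frac{p_G(\boldsymbol y)}{p_{G^\ast}(\boldsymbol y)}\right|\le a_0+a_1|\boldsymbol y|,\qquad \boldsymbol y\in\mathbb N_0^d,
\end{displaymath}
uniformly over $G\in B_n$ for $n$ large.

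\emph{Step 3 (exponential moment).} Under $p_{G^\ast}$, conditionally on $\boldsymbol\theta$, $|\boldsymbol Y_1|$ is Poisson with mean $|\boldsymbol\theta|\le dh^\ast$. Hence $\E_{G^\ast}(e^{t|\boldsymbol Y_1|})\le\exp\{dh^\ast(e^t-1)\}$ for every $t>0$. Taking any fixed $\lambda_0>0$ (say $\lambda_0=1$) gives
\begin{displaymath}
\sup_{G\in B_n}\E_{G^\ast}\Big[e^{\lambda_0|\log(p_G(\boldsymbol Y_1)/p_{G^\ast}(\boldsymbol Y_1))|}\Big]\le e^{\lambda_0a_0}\exp\{dh^\ast(e^{\lambda_0a_1}-1)\}=:C_0<+\infty.
\end{displaymath}

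The only delicate point is Step~1. We must check that the uniform lower bound on $G([b/2,h]^d)$ really holds for all $G\in B_n$ simultaneously once $n$ is large, with the threshold on $n$ independent of $G$. This follows because the containment $U_j\subset[b/2,h]^d$ for atoms $\boldsymbol\theta_j\in[b,h]^d$ depends only on $\varepsilon=1/n$ through the radius $1/(4n)$ in the $\sqrt{\boldsymbol\theta}$-scale. Everything else is a direct product-kernel analogue of the one-dimensional argument.
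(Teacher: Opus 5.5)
Your proposal is correct and follows the paper's proof essentially step by step. Both use the mass bound $G([b/2,h]^d)\ge\alpha_0$ inherited from the construction of $B_n$, the two-sided ratio bounds giving $|\log\{p_G(\boldsymbol y)/p_{G^\ast}(\boldsymbol y)\}|\le a_0+a_1|\boldsymbol y|$, and the exponential moment bound $\exp\{dh^\ast(e^t-1)\}$ for $|\boldsymbol Y_1|$.

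As a side remark, the uniformity issue you flag in Step~1 does not actually arise. Every $G\in\mathcal P([h_0,h]^d)$ with $h_0>0$ already satisfies $p_G(\boldsymbol y)\ge e^{-dh}h_0^{|\boldsymbol y|}/\boldsymbol y!$, so the affine bound, and hence the lemma, holds uniformly over the whole model and not just over $B_n$.
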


\begin{proof}
Choose $b>0$ such that $3b<h_0^\ast$, and choose $\alpha_0\in(0,1/4)$. Then $G^\ast([3b,h^\ast]^d)=1>4\alpha_0$. By the construction in the proof of Lemma~\ref{lem:Poisson_local_sets_d}, every $G\in B_n$ satisfies
\begin{displaymath}
G([b/2,h]^d)\ge \alpha_0
\end{displaymath}
for all sufficiently large $n$. Hence, as in the proof of Lemma~\ref{lem:Poisson_WS_d}, there exist constants $C_1,C_2>0$, depending only on $(d,h_0,h,h_0^\ast,h^\ast,b,\alpha_0)$, such that
\begin{displaymath}
\frac{p_{G^\ast}(\boldsymbol y)}{p_G(\boldsymbol y)}\le C_1 C_2^{|\boldsymbol y|},\qquad\boldsymbol y\in\mathbb N_0^d,
\end{displaymath}
uniformly over $G\in B_n$. Moreover, since $G^\ast([3b,h^\ast]^d)=1$, for every $\boldsymbol y\in\mathbb N_0^d$,
\begin{displaymath}
p_{G^\ast}(\boldsymbol y)=\frac{1}{\boldsymbol y!}\int_{[h_0^\ast,h^\ast]^d}e^{-|\boldsymbol\theta|}\boldsymbol\theta^{\boldsymbol y}\,G^\ast(d\boldsymbol\theta)\ge
e^{-dh^\ast}\frac{(3b)^{|\boldsymbol y|}}{\boldsymbol y!}.
\end{displaymath}
On the other hand, for every $G\in\mathcal P([h_0,h]^d)$,
\begin{displaymath}
p_G(\boldsymbol y)\le\frac{h^{|\boldsymbol y|}}{\boldsymbol y!}.
\end{displaymath}
Therefore
\begin{displaymath}
\frac{p_G(\boldsymbol y)}{p_{G^\ast}(\boldsymbol y)}\le e^{dh^\ast}\left(\frac{h}{3b}\right)^{|\boldsymbol y|},\qquad\boldsymbol y\in\mathbb N_0^d .
\end{displaymath}
Combining the two ratio bounds, there exist constants $a_0,a_1>0$, depending only on $(d,h_0,h,h_0^\ast,h^\ast)$, such that
\begin{displaymath}
\left|\log\frac{p_G(\boldsymbol y)}{p_{G^\ast}(\boldsymbol y)}\right|\le a_0+a_1|\boldsymbol y|,\qquad\boldsymbol y\in\mathbb N_0^d,
\end{displaymath}
uniformly over $G\in B_n$ and all sufficiently large $n$. Since $\boldsymbol Y_1\sim p_{G^\ast}$, for every $t>0$,
\begin{displaymath}
\E_{G^\ast}(e^{t|\boldsymbol Y_1|})=\int_{[h_0^\ast,h^\ast]^d}\exp\{(e^t-1)|\boldsymbol\theta|\}\,G^\ast(d\boldsymbol\theta)\le\exp\{dh^\ast(e^t-1)\}<+\infty.
\end{displaymath}
Thus, for any fixed $\lambda_0>0$,
\begin{displaymath}
\sup_{G\in B_n}\E_{G^\ast}\!\left[\exp\!\left\{\lambda_0\left|\log\frac{p_G(\boldsymbol Y_1)}{p_{G^\ast}(\boldsymbol Y_1)}\right|\right\}\right]\le e^{\lambda_0 a_0}
\E_{G^\ast}(e^{\lambda_0 a_1|\boldsymbol Y_1|})<+\infty .
\end{displaymath}
This proves the lemma.
\end{proof}

The numerator-event lemma and the denominator-event lemma used in the $1$-dimensional proof remain valid verbatim in the present $d$-dimensional setting, after replacing $\mathbb N_0$ by $\mathbb N_0^d$, $\mathcal P([h_0,h])$ by $\mathcal P([h_0,h]^d)$, $Y_i$ by $\boldsymbol Y_i$, and $\varepsilon_n$ by $\varepsilon_{n,d}$. Their proofs
use only the bracketing entropy bound, the KL and $V$ bounds on the local set, and the uniform exponential moment condition. The key analytic condition required by both lemmas is that the $\delta_n$-bracketing number satisfies $\log N_n=o(n\varepsilon_{n,d}^2)$ for $\delta_n\asymp\varepsilon_{n,d}^2$. By Lemma~\ref{lem:Poisson_bracketing_entropy_d},
\begin{displaymath}
\log N_n\lesssim\frac{(\log(1/\delta_n))^{d+1}}{(\log\log(1/\delta_n))^d}\asymp\frac{(\log n)^{d+1}}{(\log\log n)^d}=\frac{n\varepsilon_{n,d}^2}{(\log\log n)^d}=o(n\varepsilon_{n,d}^2),
\end{displaymath}
since $(\log\log n)^{-d}\to0$. This verifies the condition and confirms that the lemmas apply with $\varepsilon_{n,d}$ in place of $\varepsilon_n$.

\subsection{Proof of Proposition~\ref{pcrdp_d}}

For notational simplicity, write $\varepsilon_{n,d}:=\frac{(\log n)^{(d+1)/2}}{\sqrt n}$. For a constant $C>0$ to be chosen later, set $U_n:=\{G\in\mathcal P([h_0,h]^d):
d_{\mathrm H}(p_G,p_{G^\ast})\ge C\varepsilon_{n,d}\}$, and define $R_n(G):=\prod_{i=1}^n\frac{p_G(\boldsymbol Y_i)}{p_{G^\ast}(\boldsymbol Y_i)}$. Then
\begin{displaymath}
\Pi(U_n\mid \boldsymbol Y_{1:n})=\frac{\int_{U_n}R_n(G)\,\Pi(dG)}{\int_{\mathcal P([h_0,h]^d)}R_n(G)\,\Pi(dG)}.
\end{displaymath}
By Lemma~\ref{lem:Poisson_local_sets_d}, there exists a local set $B_n\subset\mathcal P([h_0,h]^d)$ and constants $L>0$ and $C_{\mathrm{KL}}>0$, depending only on
$(d,h_0,h,h_0^\ast,h^\ast,G^\ast,\alpha)$, such that, for all sufficiently large $n$,
\begin{displaymath}
\sup_{G\in B_n}K(p_{G^\ast},p_G)\le L^2\varepsilon_{n,d}^2,\qquad\sup_{G\in B_n}V(p_{G^\ast},p_G)\le L^2\varepsilon_{n,d}^2,
\end{displaymath}
and
\begin{displaymath}
\Pi(B_n)\ge\exp(-C_{\mathrm{KL}}n\varepsilon_{n,d}^2).
\end{displaymath}
Moreover, by Lemma~\ref{lem:Poisson_HP_exp_moment_d}, the sets $B_n$ satisfy the uniform exponential moment condition required in the denominator-event argument. Fix, for instance, $D_0=1$ and define
\begin{displaymath}
D_n:=\left\{\int_{\mathcal P([h_0,h]^d)}R_n(G)\,\Pi(dG)\ge\exp\{-(L^2+D_0)n\varepsilon_{n,d}^2\}\Pi(B_n)\right\}.
\end{displaymath}
By the $d$-dimensional analogue of Lemma~\ref{lem:Poisson_HP_Dn}, there exists a constant $C_{\mathrm{den}}>0$, depending only on $(d,h_0,h,h_0^\ast,h^\ast,G^\ast,\alpha)$, such that, for all sufficiently large $n$,
\begin{displaymath}
\P_{G^\ast}^n(D_n^c)\le \frac{1}{n^2},
\end{displaymath}
and, on $D_n$,
\begin{displaymath}
\int_{\mathcal P([h_0,h]^d)}R_n(G)\,\Pi(dG)\ge\exp(-C_{\mathrm{den}}n\varepsilon_{n,d}^2).
\end{displaymath}
Now choose $D_1>C_{\mathrm{den}}$. By the $d$-dimensional analogue of Lemma~\ref{lem:Poisson_HP_En}, there exists a sufficiently large constant $C>0$, depending only on $(d,h_0,h,h_0^\ast,h^\ast,G^\ast,\alpha)$, and, for all sufficiently large $n$, an event $E_n$ such that
\begin{displaymath}
\P_{G^\ast}^n(E_n^c)\le \frac{1}{n^2},
\end{displaymath}
and, on $E_n$,
\begin{displaymath}
\int_{U_n}R_n(G)\,\Pi(dG)\le\exp(-D_1n\varepsilon_{n,d}^2).
\end{displaymath}
Therefore, on $D_n\cap E_n$,
\begin{displaymath}
\Pi(U_n\mid \boldsymbol Y_{1:n})=\frac{\int_{U_n}R_n(G)\,\Pi(dG)}{\int_{\mathcal P([h_0,h]^d)}R_n(G)\,\Pi(dG)}\le\exp\{(C_{\mathrm{den}}-D_1)n\varepsilon_{n,d}^2\}.
\end{displaymath}
Let $c:=D_1-C_{\mathrm{den}}>0$. Since $n\varepsilon_{n,d}^2=(\log n)^{d+1}$, on $D_n\cap E_n$ we obtain $\Pi(U_n\mid \boldsymbol Y_{1:n})\le\exp\{-c(\log n)^{d+1}\}$.
Hence, for all sufficiently large $n$,
\begin{displaymath}
\P_{G^\ast}^n\!\left[\Pi(U_n\mid \boldsymbol Y_{1:n})>\exp\{-c(\log n)^{d+1}\}\right]\le\P_{G^\ast}^n(D_n^c)+\P_{G^\ast}^n(E_n^c)\le\frac{2}{n^2}\le\frac{1}{n}.
\end{displaymath}
This proves the first assertion \eqref{pcr_dp_d_1}. Now define the posterior mean marginal probability mass function
\begin{displaymath}
\bar p_n(\boldsymbol y):=
\int_{\mathcal P([h_0,h]^d)}
p_G(\boldsymbol y)\,\Pi(dG\mid \boldsymbol Y_{1:n}),\qquad\boldsymbol y\in\mathbb N_0^d.
\end{displaymath}
Let $A_n:=\{\Pi(U_n\mid \boldsymbol Y_{1:n})\le\exp\{-c(\log n)^{d+1}\}\}$. By the first part,
\begin{displaymath}
\P_{G^\ast}^n(A_n)\ge 1-\frac{1}{n}
\end{displaymath}
for all sufficiently large $n$. On $A_n$, by convexity of squared Hellinger distance in its first argument,
\begin{displaymath}
d_{\mathrm H}^2(\bar p_n,p_{G^\ast})\le\int_{\mathcal P([h_0,h]^d)}d_{\mathrm H}^2(p_G,p_{G^\ast})\,\Pi(dG\mid \boldsymbol Y_{1:n}).
\end{displaymath}
Splitting the integral over $U_n^c$ and $U_n$, and using $d_{\mathrm H}^2(p,q)\le 2$ for probability mass functions under the present convention, we obtain
\begin{displaymath}
d_{\mathrm H}^2(\bar p_n,p_{G^\ast})\le C^2\varepsilon_{n,d}^2+2\,\Pi(U_n\mid \boldsymbol Y_{1:n})\le C^2\varepsilon_{n,d}^2+2e^{-c(\log n)^{d+1}}.
\end{displaymath}
Since
\begin{displaymath}
e^{-c(\log n)^{d+1}}=o(\varepsilon_{n,d}^2),
\end{displaymath}
there exists a constant $C'>0$ such that, on $A_n$,
\begin{displaymath}
d_{\mathrm H}(\bar p_n,p_{G^\ast})\le C'\varepsilon_{n,d}.
\end{displaymath}
Therefore
\begin{displaymath}
\P_{G^\ast}^n\!\left[d_{\mathrm H}(\bar p_n,p_{G^\ast})\ge C'\varepsilon_{n,d}\right]\le\P_{G^\ast}^n(A_n^c)\le\frac{1}{n}
\end{displaymath}
for all sufficiently large $n$. This proves the second assertion \eqref{pcr_dp_d_2}.

\subsection{Proof of Proposition \ref{pcrnew_d}}\label{proof:pcrnew_d}
%%%%%%%%%%%%%%%%%%%%%%%%%%%%%%%%%%%%

Since $\Theta=[h_0,h]^d$ is a compact subset of $(0,+\infty)^d$, assumptions A1--A6 of \cite{MarTok(09)} hold. Following \cite{MarTok(09)}, we denote by $\mathbb F$ the class of probability measures on $[h_0,h]^d$ that are absolutely continuous with respect to Lebesgue measure, and by $\overline{\mathbb F}$ its closure in the weak topology. Then $G^*\in\mathbb F$ and
$$
G^\ast=\text{argmin}_{G\in \overline {\mathbb F}}KL(p_{G^\ast},p_G),
$$
where $KL$ denotes the Kullback--Leibler divergence. The claim for $\gamma\in (2/3,1)$ is then a direct consequence of Corollary 4.10 in \cite{MarTok(09)}. For $\gamma=1$, apply Theorem 4.8 in \cite{MarTok(09)}, with $a_n=\sum_{k=1}^n\alpha_k\asymp \log n$. Then
$\log n \;KL(p_{G^\ast},p_{G_{1,n}^{\tiny{[Q-B]}}} )$ converges to zero almost surely with respect to the probability measure under which the $\boldsymbol{Y}_i$'s are i.i.d. according to $p_{G^\ast}$. Since the squared Hellinger distance is dominated by the Kullback--Leibler divergence, the claim follows.

%%%%%%%%%%%%%%%%%%%%%%%%%%%%%%%%%%%%%

\subsection{Multivariate extension of Lemma 4 in \cite{Jan(24)}.}\label{dextension_jana}

The results in Section \ref{sec:merging_d} rely on the following
inequality, which extends Lemma~4 of \cite{Jan(24)} to the $d$-dimensional
setting. The constants are slightly sharper than those in Lemma~4 of
\cite{Jan(24)}, since in the present compactly supported setting no
truncation step is required. The proof follows essentially the
same argument as in \cite{Jan(24)}; we include the details for convenience.

%%%%%%%%%%%%%%%%%%%%%%%%%%%%%%%%

%%%%%%
\begin{lem}
\label{lem:regret-density-d}
Assume that $G^\ast([0,h^\ast]^d)=1$ for some $0<h^\ast<+\infty$,
and let $\widehat G$ be any probability measure such that
$\widehat G([0,h]^d)=1$ for some $h$ satisfying
$h^\ast\leq h<+\infty$. Then, for every integer $K\geq 1$,
\[
\operatorname{Regret}(\widehat G,G^\ast)
\leq
d\Big\{
6(h^2+(h^\ast)^2)+24(h+h^\ast)K
\Big\}
d_H^2(p_{\widehat G},p_{G^\ast})
+
(h+h^\ast)^2
\sum_{\ell=1}^d
\sum_{\boldsymbol y\in\mathbb N_0^d:\,y_\ell\geq K}
p_{G^\ast}(\boldsymbol y).
\]
\end{lem}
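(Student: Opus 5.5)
Since the regret in \eqref{reg_d} is a sum over coordinates, $\operatorname{Regret}(\widehat G,G^\ast)=\sum_{\ell=1}^d R_\ell$ with $R_\ell:=\sum_{\boldsymbol y}(\hat\theta_{\widehat G,\ell}(\boldsymbol y)-\hat\theta^\ast_\ell(\boldsymbol y))^2p_{G^\ast}(\boldsymbol y)$, it suffices to prove, for each fixed $\ell$,
\[
R_\ell\le\Big\{6(h^2+(h^\ast)^2)+24(h+h^\ast)K\Big\}d_H^2(p_{\widehat G},p_{G^\ast})+(h+h^\ast)^2\sum_{\boldsymbol y:\,y_\ell\ge K}p_{G^\ast}(\boldsymbol y),
\]
and then sum over $\ell$. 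The plan is to rerun the one-dimensional argument of \citet[Lemma 4]{Jan(24)} along the direction $\boldsymbol e_\ell$, treating the other coordinates as passive. Write $f=p_{G^\ast}$, $g=p_{\widehat G}$, $a(\boldsymbol y)=\hat\theta^\ast_\ell(\boldsymbol y)=(y_\ell+1)f(\boldsymbol y+\boldsymbol e_\ell)/f(\boldsymbol y)$, and $b(\boldsymbol y)=\hat\theta_{\widehat G,\ell}(\boldsymbol y)$. Both are posterior means of $\theta_\ell$ under priors on $[0,h]^d$, so $0\le a\le h^\ast$ and $0\le b\le h$. Hence $(a-b)^2\le(h+h^\ast)^2$ pointwise, and the tail region $\{y_\ell\ge K\}$ contributes at most the second term directly.

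\textbf{Bulk region $y_\ell<K$.} First introduce the mixture $m=(f+g)/2$. Then express $a-b$ through $m$ by adding and subtracting $(y_\ell+1)\,[f(\boldsymbol y+\boldsymbol e_\ell)-g(\boldsymbol y+\boldsymbol e_\ell)]/(2m(\boldsymbol y))$. Following Jana et al., this gives a decomposition of the form
\[
a-b=\frac{(y_\ell+1)\big(f(\boldsymbol y+\boldsymbol e_\ell)-g(\boldsymbol y+\boldsymbol e_\ell)\big)}{2m(\boldsymbol y)}+\frac{(a+b)\big(g(\boldsymbol y)-f(\boldsymbol y)\big)}{4m(\boldsymbol y)}-\ldots,
\]
that is, $a\,f/m-b\,g/m$ rewritten using $f/m+g/m=2$. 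Next use $(u+v+w)^2\le3(u^2+v^2+w^2)$. The terms involving $(g-f)(\boldsymbol y)$ carry factors bounded by $h$ or $h^\ast$, and they are controlled by
\[
\sum_{\boldsymbol y}\frac{(f-g)^2(\boldsymbol y)}{m(\boldsymbol y)}\le 4\,d_H^2(f,g),
\]
using $(\sqrt f+\sqrt g)^2\le4m$. This produces the $6(h^2+(h^\ast)^2)$ coefficient; I would weight these terms by $f\le2m$ to remove $p_{G^\ast}$.

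\textbf{Shifted difference term (the main obstacle).} The remaining term involves $(y_\ell+1)^2(f-g)^2(\boldsymbol y+\boldsymbol e_\ell)/m(\boldsymbol y)$. Here the denominator sits at $\boldsymbol y$ but the numerator at $\boldsymbol y+\boldsymbol e_\ell$. The trick from \citet{Jan(24)} is to write $(f-g)^2\le(\sqrt f-\sqrt g)^2(\sqrt f+\sqrt g)^2$ at $\boldsymbol y+\boldsymbol e_\ell$. Then bound $(y_\ell+1)(\sqrt f+\sqrt g)^2(\boldsymbol y+\boldsymbol e_\ell)/m(\boldsymbol y)$ by using $(y_\ell+1)f(\boldsymbol y+\boldsymbol e_\ell)/f(\boldsymbol y)=a\le h^\ast$ and the analogous bound $\le h$ for $g$. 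This gives
\[
(y_\ell+1)\frac{(\sqrt f+\sqrt g)^2(\boldsymbol y+\boldsymbol e_\ell)}{m(\boldsymbol y)}\le 4(h+h^\ast).
\]
The remaining factor $(y_\ell+1)\le K$ on the bulk region. Summing over $\boldsymbol y$ with the shift $\boldsymbol y\mapsto\boldsymbol y+\boldsymbol e_\ell$ is injective, so the sum of $(\sqrt f-\sqrt g)^2(\boldsymbol y+\boldsymbol e_\ell)$ is at most $d_H^2$. This yields the $24(h+h^\ast)K$ coefficient after the factor 3, with constants bookkept loosely.

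Since $\boldsymbol y\mapsto\boldsymbol y+\boldsymbol e_\ell$ is injective on $\mathbb N_0^d$, the other coordinates only enter as passive summation indices, so nothing dimension-specific arises. The delicate part is purely the careful tracking of constants in the shifted term so that they match the stated ones.
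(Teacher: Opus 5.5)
Your proposal is correct and is essentially the paper's proof: the same split into $\{y_\ell<K\}$ and $\{y_\ell\ge K\}$, the same three-term decomposition followed by $(u+v+w)^2\le 3(u^2+v^2+w^2)$, the same Hellinger bounds, and the same handling of the shifted term via the Robbins-ratio bound and $y_\ell+1\le K$. Using $m=(f+g)/2$ in place of the paper's $f+g$ is cosmetic. The only slip is in your displayed identity, whose coefficients are halved; the exact identity is $a-b=(y_\ell+1)\{f(\boldsymbol y+\boldsymbol e_\ell)-g(\boldsymbol y+\boldsymbol e_\ell)\}/m(\boldsymbol y)+(a+b)\{g(\boldsymbol y)-f(\boldsymbol y)\}/(2m(\boldsymbol y))$, and with it your remaining steps (using $f\le 2m$) give exactly the constants $6(h^2+(h^\ast)^2)$ and $24(h+h^\ast)K$.
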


\begin{proof}
By definition,
\begin{align*}
    \operatorname{Regret}(\widehat G,G^\ast)
&=
\sum_{\boldsymbol y\in\mathbb N_0^d}
\left\|
\hat{\boldsymbol\theta}_{\widehat G}(\boldsymbol y)
-
\hat{\boldsymbol\theta}^\ast(\boldsymbol y)
\right\|_2^2
p_{G^\ast}(\boldsymbol y)\\
&=
\sum_{\ell=1}^d
\sum_{\boldsymbol y\in\mathbb N_0^d}
\left\{
\hat\theta_{\widehat G,\ell}(\boldsymbol y)
-
\hat\theta_\ell^\ast(\boldsymbol y)
\right\}^2
p_{G^\ast}(\boldsymbol y)\\
&=A_K+B_K,
\end{align*}
where
\[
A_K
=
\sum_{\ell=1}^d
\sum_{\boldsymbol y\in\mathbb N_0^d:\,y_\ell<K}
\left\{
\hat\theta_{\widehat G,\ell}(\boldsymbol y)
-
\hat\theta_\ell^\ast(\boldsymbol y)
\right\}^2
p_{G^\ast}(\boldsymbol y)
,\quad
B_K
=
\sum_{\ell=1}^d
\sum_{\boldsymbol y\in\mathbb N_0^d:\,y_\ell\geq K}
\left\{
\hat\theta_{\widehat G,\ell}(\boldsymbol y)
-
\hat\theta_\ell^\ast(\boldsymbol y)
\right\}^2
p_{G^\ast}(\boldsymbol y).
\]
Since $\widehat G([0,h]^d)=1$ and $G^\ast([0,h^\ast]^d)=1$, then
\[
0\leq \hat\theta_{\widehat G,\ell}(\boldsymbol y)\leq h,
\qquad
0\leq \hat\theta_\ell^\ast(\boldsymbol y)\leq h^\ast,
\qquad
\ell=1,\ldots,d,\quad \boldsymbol y\in\mathbb N_0^d.
\]
which entails 
\[
B_K
\leq
(h+h^\ast)^2
\sum_{\ell=1}^d
\sum_{\boldsymbol y\in\mathbb N_0^d:\,y_\ell\geq K}
p_{G^\ast}(\boldsymbol y).
\]
It remains to bound $A_K$. Fix $\ell\in\{1,\ldots,d\}$ and
$\boldsymbol y\in\mathbb N_0^d$ with $y_\ell<K$. By the multidimensional
Robbins formula,
\[
\hat\theta_{\widehat G,\ell}(\boldsymbol y)
-
\hat\theta_\ell^\ast(\boldsymbol y)
=
(y_\ell+1)
\left\{
\frac{p_{\widehat G}(\boldsymbol y+\boldsymbol e_\ell)}
     {p_{\widehat G}(\boldsymbol y)}
-
\frac{p_{G^\ast}(\boldsymbol y+\boldsymbol e_\ell)}
     {p_{G^\ast}(\boldsymbol y)}
\right\}.
\]
By $(a+b+c)^2\leq 3(a^2+b^2+c^2)$, we can write

\begin{align*}
\left\{
\hat\theta_{\widehat G,\ell}(\boldsymbol y)
-
\hat\theta_\ell^\ast(\boldsymbol y)
\right\}^2
p_{G^\ast}(\boldsymbol y)
&\leq
3\{h^2+(h^\ast)^2\}
\frac{
\{p_{\widehat G}(\boldsymbol y)-p_{G^\ast}(\boldsymbol y)\}^2
}{
p_{\widehat G}(\boldsymbol y)+p_{G^\ast}(\boldsymbol y)
}
\\
&\quad
+
12(y_\ell+1)^2
\frac{
\{p_{\widehat G}(\boldsymbol y+\boldsymbol e_\ell)
-
p_{G^\ast}(\boldsymbol y+\boldsymbol e_\ell)\}^2
}{
p_{\widehat G}(\boldsymbol y)+p_{G^\ast}(\boldsymbol y)
}.
\end{align*}
Summing over $\boldsymbol y\in\mathbb N_0^d$ such that $y_\ell<K$ gives
\[
\begin{aligned}
\sum_{\boldsymbol y\in\mathbb N_0^d:\,y_\ell<K}
\left\{
\hat\theta_{\widehat G,\ell}(\boldsymbol y)
-
\hat\theta_\ell^\ast(\boldsymbol y)
\right\}^2
p_{G^\ast}(\boldsymbol y)
&\leq
3\{h^2+(h^\ast)^2\}
\sum_{\boldsymbol y\in\mathbb N_0^d:\,y_\ell<K}
\frac{
\{p_{\widehat G}(\boldsymbol y)-p_{G^\ast}(\boldsymbol y)\}^2
}{
p_{\widehat G}(\boldsymbol y)+p_{G^\ast}(\boldsymbol y)
}
\\
&\quad
+
12
\sum_{\boldsymbol y\in\mathbb N_0^d:\,y_\ell<K}
(y_\ell+1)^2
\frac{
\{p_{\widehat G}(\boldsymbol y+\boldsymbol e_\ell)
-
p_{G^\ast}(\boldsymbol y+\boldsymbol e_\ell)\}^2
}{
p_{\widehat G}(\boldsymbol y)+p_{G^\ast}(\boldsymbol y)
}.
\end{aligned}
\]
We now control the two sums. Since, for $a,b\geq0$,
\[
\frac{(a-b)^2}{a+b}
=
\frac{(\sqrt a-\sqrt b)^2(\sqrt a+\sqrt b)^2}{a+b}
\leq
2(\sqrt a-\sqrt b)^2,
\]
we have
\begin{align*}
\sum_{\boldsymbol y\in\mathbb N_0^d:\,y_\ell<K}
\frac{
\{p_{\widehat G}(\boldsymbol y)-p_{G^\ast}(\boldsymbol y)\}^2
}{
p_{\widehat G}(\boldsymbol y)+p_{G^\ast}(\boldsymbol y)
}
\leq
2
\sum_{\boldsymbol y\in\mathbb N_0^d:\,y_\ell<K}
\left\{
\sqrt{p_{\widehat G}(\boldsymbol y)}
-
\sqrt{p_{G^\ast}(\boldsymbol y)}
\right\}^2
\leq
2d_H^2(p_{\widehat G},p_{G^\ast}).
\end{align*}
For the second sum, observe that
\[
(y_\ell+1)
\frac{
p_{\widehat G}(\boldsymbol y+\boldsymbol e_\ell)
+
p_{G^\ast}(\boldsymbol y+\boldsymbol e_\ell)
}{
p_{\widehat G}(\boldsymbol y)+p_{G^\ast}(\boldsymbol y)
}
\leq \hat\theta_{\widehat G,\ell}(\boldsymbol y)+ \hat\theta_\ell^\ast(\boldsymbol y)\leq 
h+h^\ast.
\]
Therefore, using also $y_\ell+1\leq K$ on the set $\{y_\ell<K\}$,
\begin{align*}
&\sum_{\boldsymbol y\in\mathbb N_0^d:\,y_\ell<K}
(y_\ell+1)^2
\frac{
\{p_{\widehat G}(\boldsymbol y+\boldsymbol e_\ell)
-
p_{G^\ast}(\boldsymbol y+\boldsymbol e_\ell)\}^2
}{
p_{\widehat G}(\boldsymbol y)+p_{G^\ast}(\boldsymbol y)
}
\\
&\leq
2(h+h^\ast)K
\sum_{\boldsymbol y\in\mathbb N_0^d:\,y_\ell<K}
\left\{
\sqrt{p_{\widehat G}(\boldsymbol y+\boldsymbol e_\ell)}
-
\sqrt{p_{G^\ast}(\boldsymbol y+\boldsymbol e_\ell)}
\right\}^2\\
&\leq 2(h+h^\ast)K
d_H^2(p_{\widehat G},p_{G^\ast}).
\end{align*}
Summing over $\ell=1,\ldots,d$ gives
\[
A_K
\leq
d\Big\{
6(h^2+(h^\ast)^2)
+
24(h+h^\ast)K
\Big\}
d_H^2(p_{\widehat G},p_{G^\ast}).
\]
Finally, combining the bounds for $A_K$ and $B_K$ yields
\[
\operatorname{Regret}(\widehat G,G^\ast)
\leq
d\Big\{
6(h^2+(h^\ast)^2)+24(h+h^\ast)K
\Big\}
d_H^2(p_{\widehat G},p_{G^\ast})
+
(h+h^\ast)^2
\sum_{\ell=1}^d
\sum_{\boldsymbol y\in\mathbb N_0^d:\,y_\ell\geq K}
p_{G^\ast}(\boldsymbol y).
\]
\end{proof}

%%%%%%

%%%%%%%%%%%%%%%%%%%%%%%%%%%%%%%%
%%%%%%%%%%%%%%%%%%%%%%%%%%%%%%%%
%%%%%%%%%%%%%%%%%%%%%%%%%%%%%%%%

\section{Additional synthetic-data illustrations: $1$-dimensional setting}\label{app3}

\subsection{Preliminaries}
We generate synthetic data from a Poisson mixture model, for various choices of the prior (mixing) distribution $G$. For $n\geq1$ and let $(Y_{1},\theta_{1}),\ldots,(Y_{n},\theta_{n})$, with $Y_{i}\in\mathbb{N}_{0}$ and $\theta_{i}\in\mathbb{R}_{+}$, for $i=1,\ldots,n$, be distributed as follows:
\begin{equation}\label{eq:mixture_model_poisson}
\begin{aligned}
Y_i \mid \theta_i &\quad \simind \quad \text{Poisson}(\cdot \mid \theta_i) \qquad i=1,\ldots,n,\\[0.2cm]
\theta_i &\quad \simiid \quad G.
\end{aligned}
\end{equation}
First, we assume $G$ to be a Uniform distribution on $[a,b]$; precisely, we set $a=0$ and $b=3$. Then, we consider an examples of $G$ belonging to the class $\mathcal{G}$ of sub-exponential distribution of order $s$, which is defined as
\begin{displaymath}
\mathcal{G}=\left\{G\text{ on }\mathbb{R}_{+}:\;G([t,\infty)) \leq 2e^{-t/s}\ \text{for all } t>0\right\},\qquad s>0.
\end{displaymath}
In particular, we assume $G$ to be a half-Gaussian distribution, namely the distribution of the positive part of a Gaussian random variable with mean $0$ and variance $\sigma^{2}$, which belongs $\mathcal{G}$ for $\sigma>0$; precisely, we set $\sigma=1$. The Weibull distribution considered in Section~\ref{sec3} also belongs to the class $\mathcal G$; moreover, its tail is lighter than that of the half-Gaussian distribution. Finally, we consider an example of $G$ belonging to the moment class of distributions $\mathcal{M}$ defined, which is defined, for any real $M_{p}$, as
\begin{displaymath}
\mathcal{M}=\left\{G\text{ on }\mathbb{R}_{+}:\;\int_{\mathbb{R}_{+}}\theta^{p} G(d\theta)<M_{p}\right\},\qquad p>0;
\end{displaymath}
see \citet[Section 1]{She(24)}. In particular, we assume $G$ to be square-root of half-Cauchy distribution, namely the distribution of the square-root of the positive part of a standard Cauchy random variable. This distribution has heavier tail than the Weibull and half-Gaussian distributions.

\subsection{Uniform prior}

For sample sizes $n\in\{50,\,100,\,200,\,400,\,1,000,\,2,000,\,4,000,\,8,000\}$, we generate i.i.d. data $Y_{1:n}=(Y_{1},\ldots,Y_{n})$ from a Poisson mixture model \eqref{eq:mixture_model_poisson} with a Uniform prior $G$ on the set $[0,3]$. We compare the quasi-Bayes estimate $\hat{\theta}^{\text{\tiny{[Q-B]}}}_{\gamma,n}$ and the Bayes estimate $\hat{\theta}^{\text{\tiny{[B]}}}_{n}$ with the oracle Bayes estimate $\hat{\theta}^{\ast}$. In particular, the oracle $\hat{\theta}^{\ast}$ is obtained from \eqref{eq:oracle} with $G^{\ast}$ being the Uniform prior distribution that generates the $\theta_{i}$'s, and evaluating the marginal likelihood $p_{G^\ast}$ numerically through the trapezoidal rule.

With regards to the quasi-Bayes estimate $\hat{\theta}^{\text{\tiny{[Q-B]}}}_{\gamma,n}$,  the implementation of Newton's algorithm is the same as in the synthetic-data analysis with the Weibull prior: i) the density function of $G_{n}$ is represented through its values on a fixed uniform grid of $d\in\{5,000,\,1,000,\,500,\,100,\,50,\,10\}$ quadrature points over $\Theta=(0,U_{\Theta})$, where $U_\Theta=\max\{\max\{Y_{1:n}\},\lceil Q_{n,0.99}+4\sqrt{\max\{Q_{n,0.99},1\}}\rceil\}$, with $Q_{n,0.99}=\text{Quantile}(Y_{1:n};0.99)$; ii) $G_{0}$ is Uniform over $\Theta$; iii) the learning rate is $\alpha_{n}=(1+n)^{-0.99}$. Table \ref{unif_tab_sens} reports the $\mathrm{E\text{-}mse}(\hat{G}_{\gamma,n}^{\text{\tiny{[Q-B]}}})$ and $\mathrm{E\text{-}regret}(\hat{G}_{\gamma,n}^{\text{\tiny{[Q-B]}}})$ as the sample size $n$ and the grid resolution $d$ vary.

\begin{table}[ht]
\centering
\caption{\footnotesize{Uniform prior: $\mathrm{E\text{-}mse}(\hat{G}_{\gamma,n}^{\text{\tiny{[Q-B]}}})$ and $\mathrm{E\text{-}regret}(G_{\gamma,n}^{\text{\tiny{[Q-B]}}})$ as $n$ and $d$ vary.}}
{
\setlength{\tabcolsep}{0pt}
\begin{tabular}{@{}l@{\hspace{0.5cm}}*{6}{>{\centering\arraybackslash}p{2.1cm}}@{}}
\hline
\hline
 & $d=5{,}000$ & $d=1{,}000$ & $d=500$ & $d=100$ & $d=50$ & $d=10$ \\[0.1cm]
\hline
\multicolumn{7}{@{}l}{\underline{$n=50$}} \\[0.05cm]
$\mathrm{E\text{-}mse}(\hat{G}_{\gamma,n}^{\text{\tiny{[Q-B]}}})$   & 0.515 & 0.515 & 0.515 & 0.517 & 0.522 & 0.719 \\
$\mathrm{E\text{-}regret}(\hat{G}_{\gamma,n}^{\text{\tiny{[Q-B]}}})$ & 0.045 & 0.045 & 0.045 & 0.047 & 0.052 & 0.249 \\[0.4cm]

\multicolumn{7}{@{}l}{\underline{$n=100$}} \\[0.05cm]
$\mathrm{E\text{-}mse}(\hat{G}_{\gamma,n}^{\text{\tiny{[Q-B]}}})$   & 1.045 & 1.045 & 1.045 & 1.045 & 1.047 & 1.637 \\
$\mathrm{E\text{-}regret}(\hat{G}_{\gamma,n}^{\text{\tiny{[Q-B]}}})$ & 0.539 & 0.539 & 0.539 & 0.539 & 0.541 & 1.131 \\[0.4cm]

\multicolumn{7}{@{}l}{\underline{$n=200$}} \\[0.05cm]
$\mathrm{E\text{-}mse}(\hat{G}_{\gamma,n}^{\text{\tiny{[Q-B]}}})$    & 0.597 & 0.596 & 0.595 & 0.591 & 0.589 & 0.841 \\
$\mathrm{E\text{-}regret}(\hat{G}_{\gamma,n}^{\text{\tiny{[Q-B]}}})$ & 0.113 & 0.113 & 0.112 & 0.108 & 0.106 & 0.357 \\[0.4cm]

\multicolumn{7}{@{}l}{\underline{$n=400$}} \\[0.05cm]
$\mathrm{E\text{-}mse}(\hat{G}_{\gamma,n}^{\text{\tiny{[Q-B]}}})$    & 0.646 & 0.646 & 0.646 & 0.646 & 0.647 & 0.961 \\
$\mathrm{E\text{-}regret}(\hat{G}_{\gamma,n}^{\text{\tiny{[Q-B]}}})$ & 0.159 & 0.159 & 0.159 & 0.159 & 0.160 & 0.474 \\[0.1cm]
\hline
\hline
\end{tabular}
}
\label{unif_tab_sens}
\end{table}

With regards to the Bayes estimate $\hat{\theta}^{\text{\tiny{[B]}}}_{n}$, the implementation of Algorithm 8 is the same as in the synthetic-data analysis with the Weibull prior: we set the strength parameter $c=1$, use the same Gamma base probability measure, take $m=5$ auxiliary components, and use the same MCMC settings.

Table \ref{unif_tab_sens} shows that the empirical performance of Newton's algorithm is robust to the choice of the grid resolution $d\in\{5,000,\,1,000,\,500,\,100,\,50,\,10\}$. Thus, for the evaluation of $\hat{\theta}^{\text{\tiny{[Q-B]}}}_{\gamma,n}$ we set $d=1,000$. Figure \ref{unif_compare1}-\ref{unif_compare2} display the quasi-Bayes, Bayes and oracle Bayes estimates. Figure \ref{unif_cpu} compares the quasi-Bayes and Bayes estimates in terms of both empirical performance, measured by the E-regret, and computational cost, measured by the number of computational units and by CPU time. Finally, Figure \ref{unif_regret_comparison_1d} reports the empirical regret incurred by using the quasi-Bayes estimate $\hat{\theta}^{\text{\tiny{[Q-B]}}}_{\gamma,n}$ in place of the Bayes estimate $\hat{\theta}^{\text{\tiny{[B]}}}_{n}$, providing an empirical validation of the merging as the sample size $n$ grows.

\begin{figure}
\centering
\includegraphics[width=.75\textwidth,height=.55\textheight,keepaspectratio]{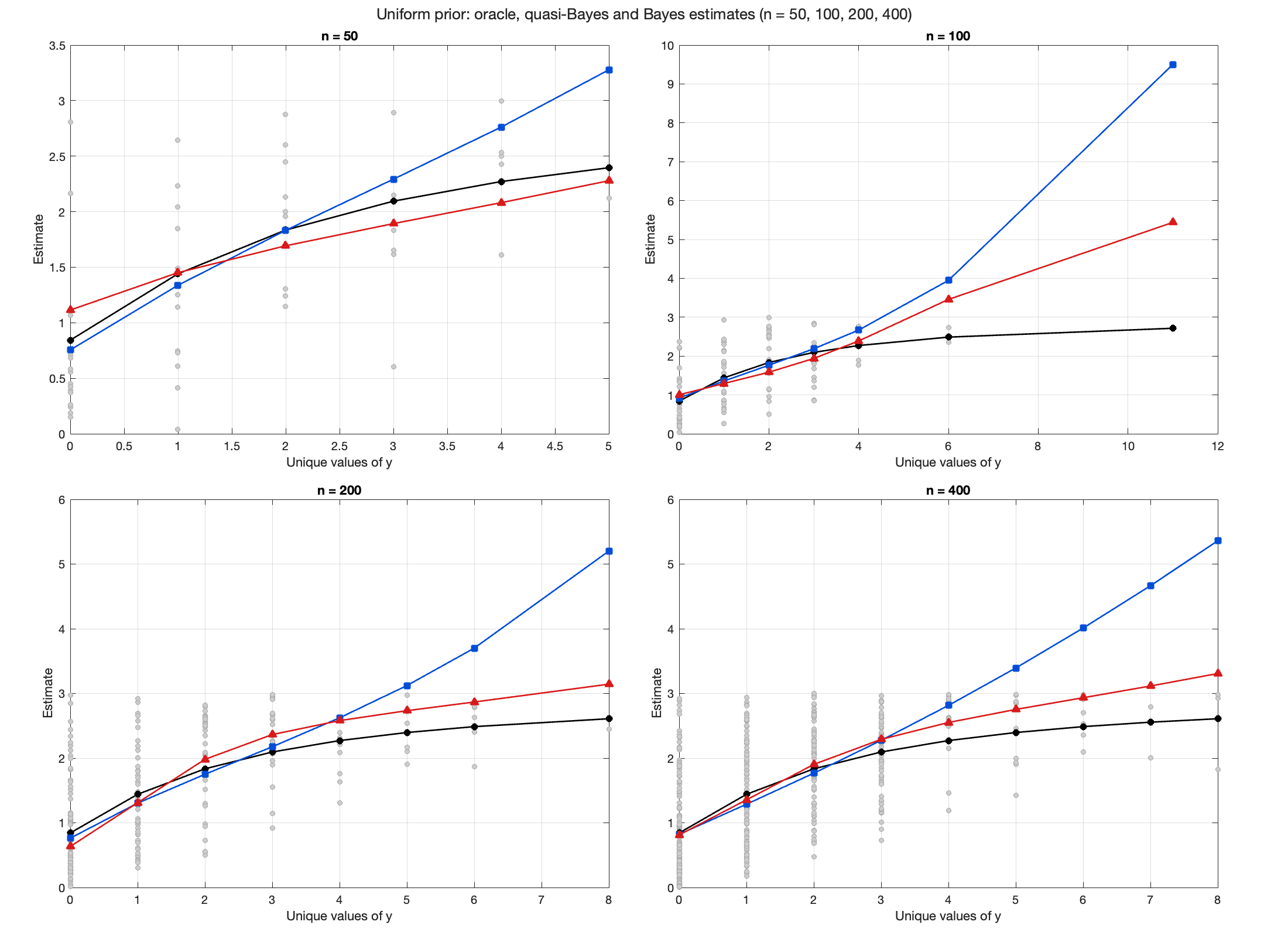}
\caption{\footnotesize{Uniform prior, $n\in\{50,\,100,\,200,\,400\}$: data points plotted against the ``true'' parameters (grey), together with the corresponding oracle Bayes (black), Bayes (red), and quasi-Bayes (blue) estimates.}}
\label{unif_compare1}
\end{figure}

\begin{figure}
\centering
\includegraphics[width=.75\textwidth,height=.55\textheight,keepaspectratio]{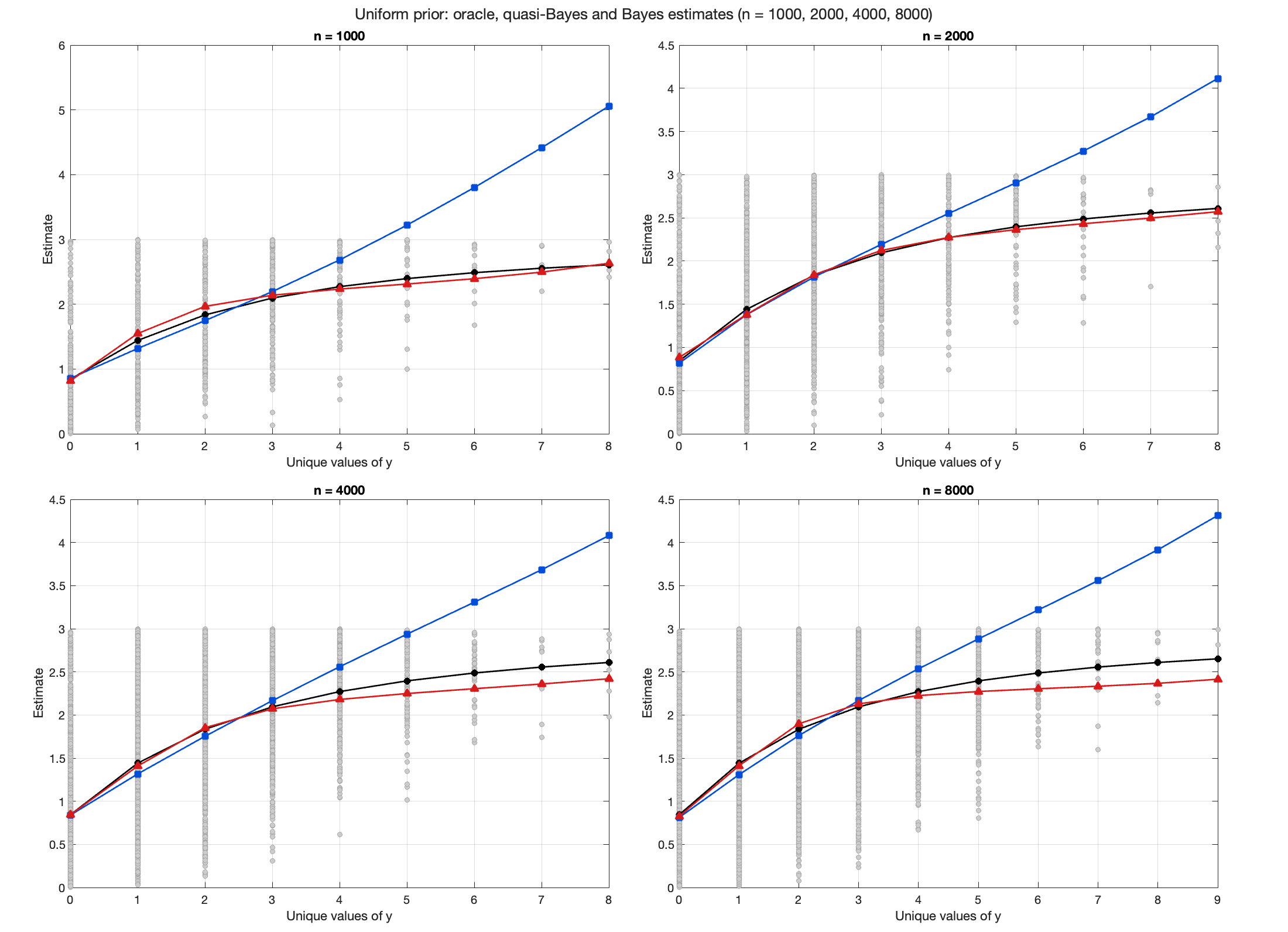}
\caption{\footnotesize{Uniform prior, $n\in\{1,000,\,2,000,\,4,000,\,8,000\}$: data points plotted against the ``true'' parameters (grey), together with the corresponding oracle Bayes (black), Bayes (red), and quasi-Bayes (blue) estimates.}}
\label{unif_compare2}
\end{figure}

\begin{figure}
\centering
\includegraphics[width=.95\textwidth,height=.85\textheight,keepaspectratio]{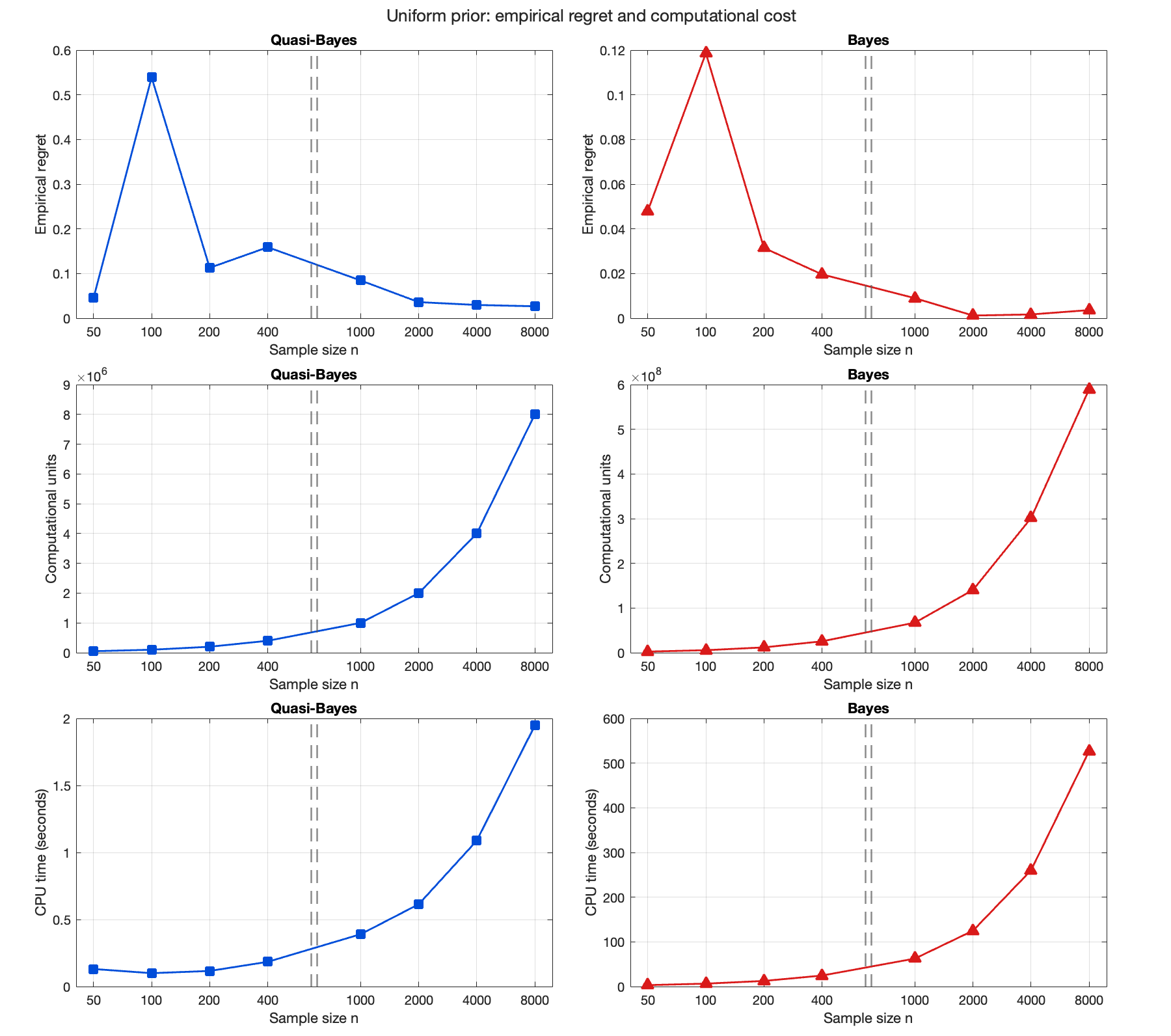}
\caption{\footnotesize{Uniform prior: quasi-Bayes (blue) and Bayes (red) estimates compared by E-regret (top panels), computational units (middle panels), and CPU time (bottom panels).}}
\label{unif_cpu}
\end{figure}

\begin{figure}
\centering
\includegraphics[width=.95\textwidth,height=.85\textheight,keepaspectratio]{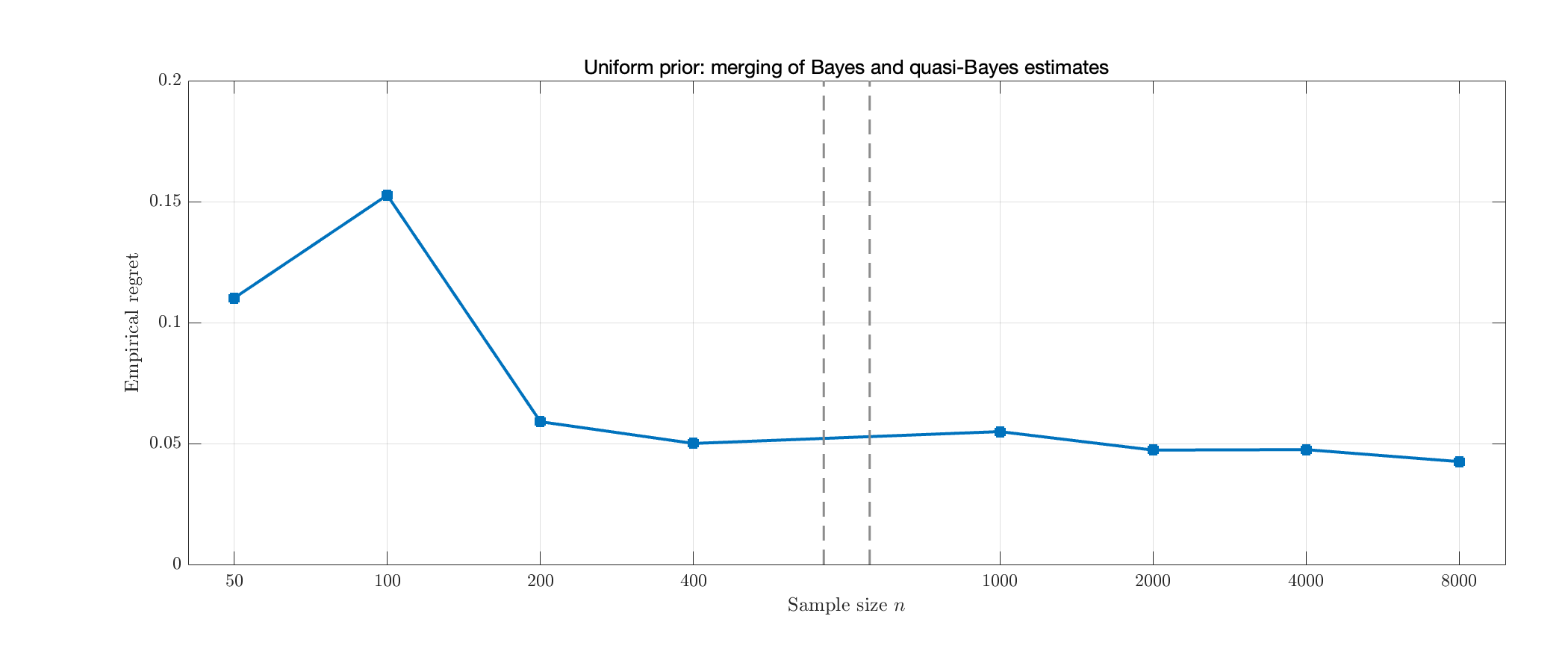}
\caption{\footnotesize{Uniform prior: E-regret incurred by using the quasi-Bayes estimate in place of the Bayes estimate.}}
\label{unif_regret_comparison_1d}
\end{figure}

\subsection{Half-Gaussian prior}

For sample sizes $n\in\{50,\,100,\,200,\,400,\,1,000,\,2,000,\,4,000,\,8,000\}$, we generate i.i.d. data $Y_{1:n}=(Y_{1},\ldots,Y_{n})$ from a Poisson mixture model \eqref{eq:mixture_model_poisson} with a half-Gaussian prior $G$ with $\sigma=1$. We compare the quasi-Bayes estimate $\hat{\theta}^{\text{\tiny{[Q-B]}}}_{\gamma,n}$ and the Bayes estimate $\hat{\theta}^{\text{\tiny{[B]}}}_{n}$ with the oracle Bayes estimate $\hat{\theta}^{\ast}$. In particular, the oracle $\hat{\theta}^{\ast}$ is obtained from \eqref{eq:oracle} with $G^{\ast}$ being the half-Gaussian prior distribution that generates the $\theta_{i}$'s, and evaluating the marginal likelihood $p_{G^\ast}$ numerically through the trapezoidal rule.

With regards to the quasi-Bayes estimate $\hat{\theta}^{\text{\tiny{[Q-B]}}}_{\gamma,n}$,  the implementation of Newton's algorithm is the same as in the synthetic-data analysis with the Weibull prior: i) the density function of $G_{n}$ is represented through its values on a fixed uniform grid of $d\in\{5,000,\,1,000,\,500,\,100,\,50,\,10\}$ quadrature points over $\Theta=(0,U_{\Theta})$, where $U_\Theta=\max\{\max\{Y_{1:n}\},\lceil Q_{n,0.99}+4\sqrt{\max\{Q_{n,0.99},1\}}\rceil\}$, with $Q_{n,0.99}=\text{Quantile}(Y_{1:n};0.99)$; ii) $G_{0}$ is Uniform over $\Theta$; iii) the learning rate is $\alpha_{n}=(1+n)^{-0.99}$. Table \ref{gauss_tab_sens} reports the $\mathrm{E\text{-}mse}(\hat{G}_{\gamma,n}^{\text{\tiny{[Q-B]}}})$ and $\mathrm{E\text{-}regret}(\hat{G}_{\gamma,n}^{\text{\tiny{[Q-B]}}})$ as the sample size $n$ and the grid resolution $d$ vary.

\begin{table}[ht]
\centering
\caption{\footnotesize{Half-Gaussian prior: $\mathrm{E\text{-}mse}(\hat{G}_{\gamma,n}^{\text{\tiny{[Q-B]}}})$ and $\mathrm{E\text{-}regret}(\hat{G}_{\gamma,n}^{\text{\tiny{[Q-B]}}})$ as $n$ and $d$ vary.}}
{
\setlength{\tabcolsep}{0pt}
\begin{tabular}{@{}l@{\hspace{0.5cm}}*{6}{>{\centering\arraybackslash}p{2.1cm}}@{}}
\hline
\hline
 & $d=5{,}000$ & $d=1{,}000$ & $d=500$ & $d=100$ & $d=50$ & $d=10$ \\[0.1cm]
\hline
\multicolumn{7}{@{}l}{\underline{$n=50$}} \\[0.05cm]
$\mathrm{E\text{-}mse}(\hat{G}_{\gamma,n}^{\text{\tiny{[Q-B]}}})$   & 0.280 & 0.281 & 0.281 & 0.286 & 0.294 & 0.651 \\
 $\mathrm{E\text{-}regret}(\hat{G}_{\gamma,n}^{\text{\tiny{[Q-B]}}})$ & 0.108 & 0.109 & 0.109 & 0.114 & 0.122 & 0.478 \\[0.4cm]

\multicolumn{7}{@{}l}{\underline{$n=100$}} \\[0.05cm]
$\mathrm{E\text{-}mse}(\hat{G}_{\gamma,n}^{\text{\tiny{[Q-B]}}})$    & 0.361 & 0.362 & 0.363 & 0.376 & 0.398 & 1.030 \\
 $\mathrm{E\text{-}regret}(\hat{G}_{\gamma,n}^{\text{\tiny{[Q-B]}}})$ & 0.117 & 0.118 & 0.120 & 0.132 & 0.154 & 0.786 \\[0.4cm]

\multicolumn{7}{@{}l}{\underline{$n=200$}} \\[0.05cm]
$\mathrm{E\text{-}mse}(\hat{G}_{\gamma,n}^{\text{\tiny{[Q-B]}}})$    & 0.296 & 0.295 & 0.295 & 0.293 & 0.291 & 0.492 \\
 $\mathrm{E\text{-}regret}(\hat{G}_{\gamma,n}^{\text{\tiny{[Q-B]}}})$& 0.004 & 0.004 & 0.003 & 0.001 & -0.001 & 0.200 \\[0.4cm]

\multicolumn{7}{@{}l}{\underline{$n=400$}} \\[0.05cm]
$\mathrm{E\text{-}mse}(\hat{G}_{\gamma,n}^{\text{\tiny{[Q-B]}}})$   & 0.280 & 0.281 & 0.281 & 0.285 & 0.291 & 0.597 \\
 $\mathrm{E\text{-}regret}(\hat{G}_{\gamma,n}^{\text{\tiny{[Q-B]}}})$ & 0.010 & 0.010 & 0.011 & 0.014 & 0.020 & 0.327 \\[0.1cm]
\hline
\hline
\end{tabular}
}
\label{gauss_tab_sens}
\end{table}

With regards to the Bayes estimate $\hat{\theta}^{\text{\tiny{[B]}}}_{n}$, the implementation of Algorithm 8 is the same as in the synthetic-data analysis with the Weibull prior: we set the strength parameter $c=1$, use the same Gamma base probability measure, take $m=5$ auxiliary components, and use the same MCMC settings.

Table \ref{gauss_tab_sens} shows that the empirical performance of Newton's algorithm is robust to the choice of the grid resolution $d\in\{5,000,\,1,000,\,500,\,100,\,50,\,10\}$. Thus, for the evaluation of $\hat{\theta}^{\text{\tiny{[Q-B]}}}_{n}$ we set $d=1,000$. Figure \ref{gauss_compare1}-\ref{gauss_compare2} display the quasi-Bayes, Bayes and oracle Bayes estimates. Figure \ref{gauss_cpu} compares the quasi-Bayes and Bayes estimates in terms of both empirical performance, measured by the E-regret, and  computational cost, measured by the number of computational units and by CPU time. Finally, Figure \ref{gauss_regret_comparison_1d} reports the empirical regret incurred by using the quasi-Bayes estimate $\hat{\theta}^{\text{\tiny{[Q-B]}}}_{\gamma,n}$ in place of the Bayes estimate $\hat{\theta}^{\text{\tiny{[B]}}}_{n}$, providing an empirical validation of the merging as the sample size $n$ grows.

\begin{figure}
\centering
\includegraphics[width=.75\textwidth,height=.55\textheight,keepaspectratio]{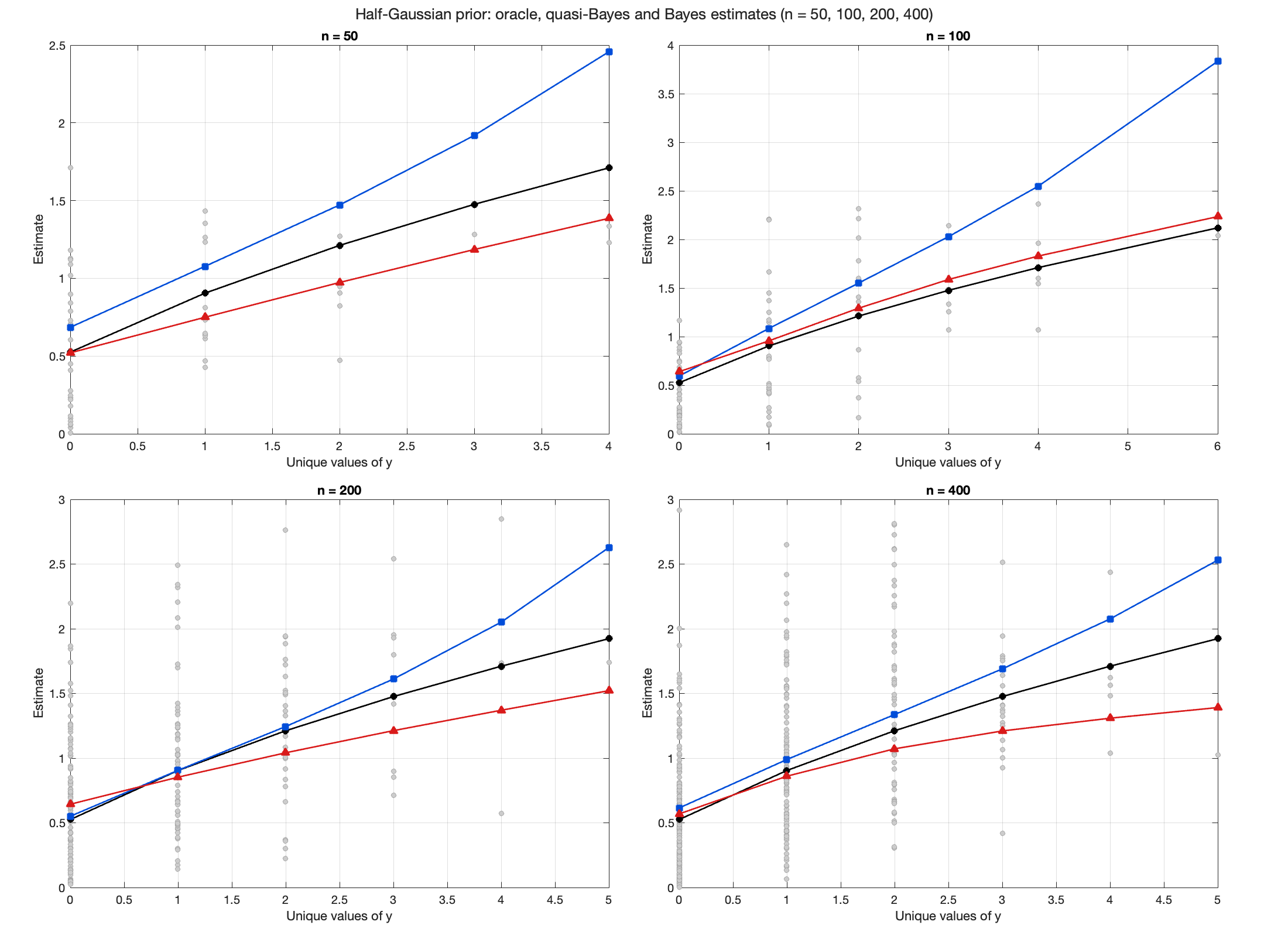}
\caption{\footnotesize{Half-Gaussian prior, $n\in\{50,\,100,\,200,\,400\}$: data points plotted against the ``true'' parameters (grey), together with the corresponding oracle Bayes (black), Bayes (red), and quasi-Bayes (blue) estimates.}}
\label{gauss_compare1}
\end{figure}

\begin{figure}
\centering
\includegraphics[width=.75\textwidth,height=.55\textheight,keepaspectratio]{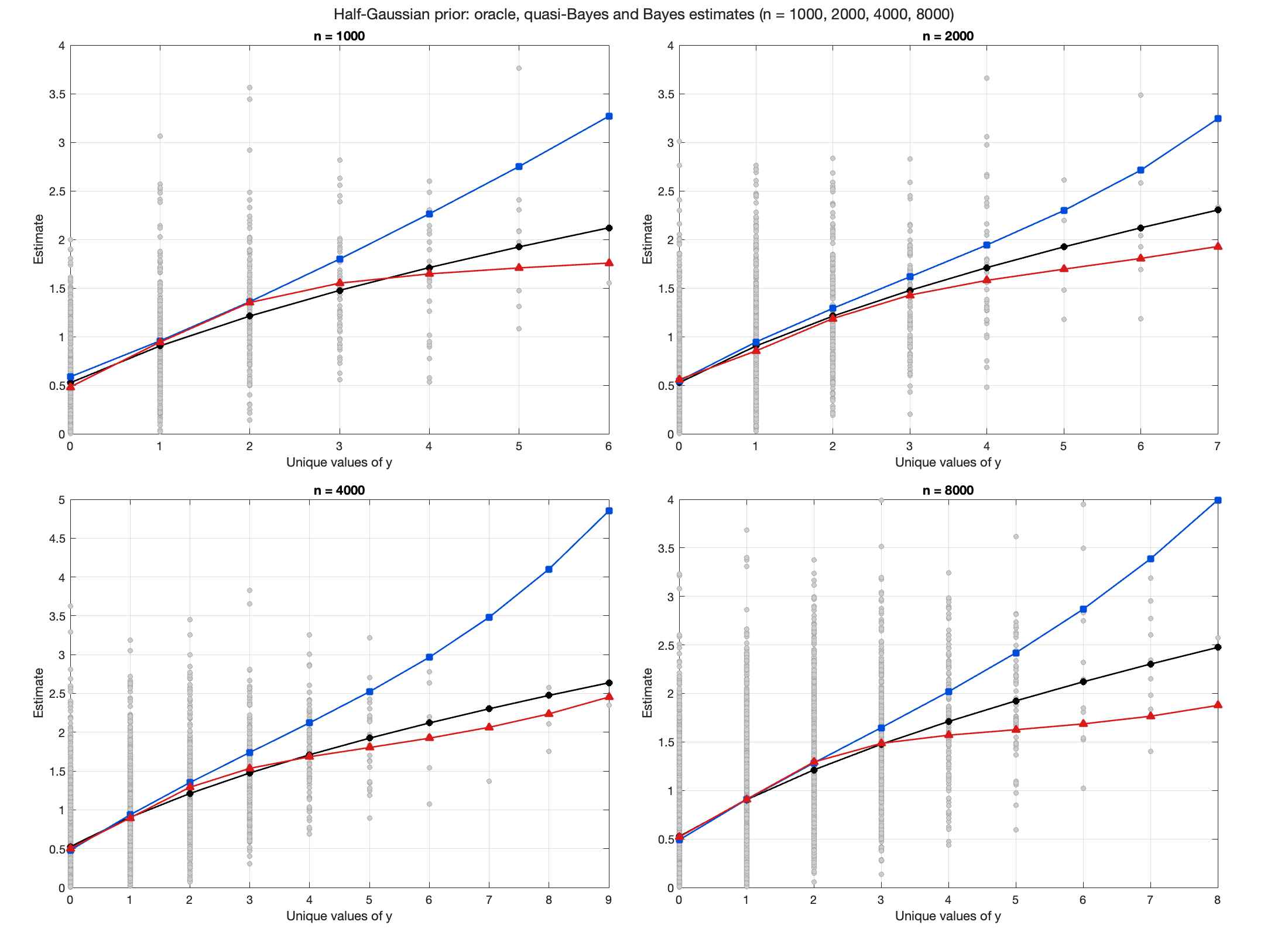}
\caption{\footnotesize{Half-Gaussian prior, $n\in\{1,000,\,2,000,\,4,000,\,8,000\}$: data points plotted against the ``true'' parameters (grey), together with the corresponding oracle Bayes (black), Bayes (red), and quasi-Bayes (blue) estimates.}}
\label{gauss_compare2}
\end{figure}

\begin{figure}
\centering
\includegraphics[width=.95\textwidth,height=.85\textheight,keepaspectratio]{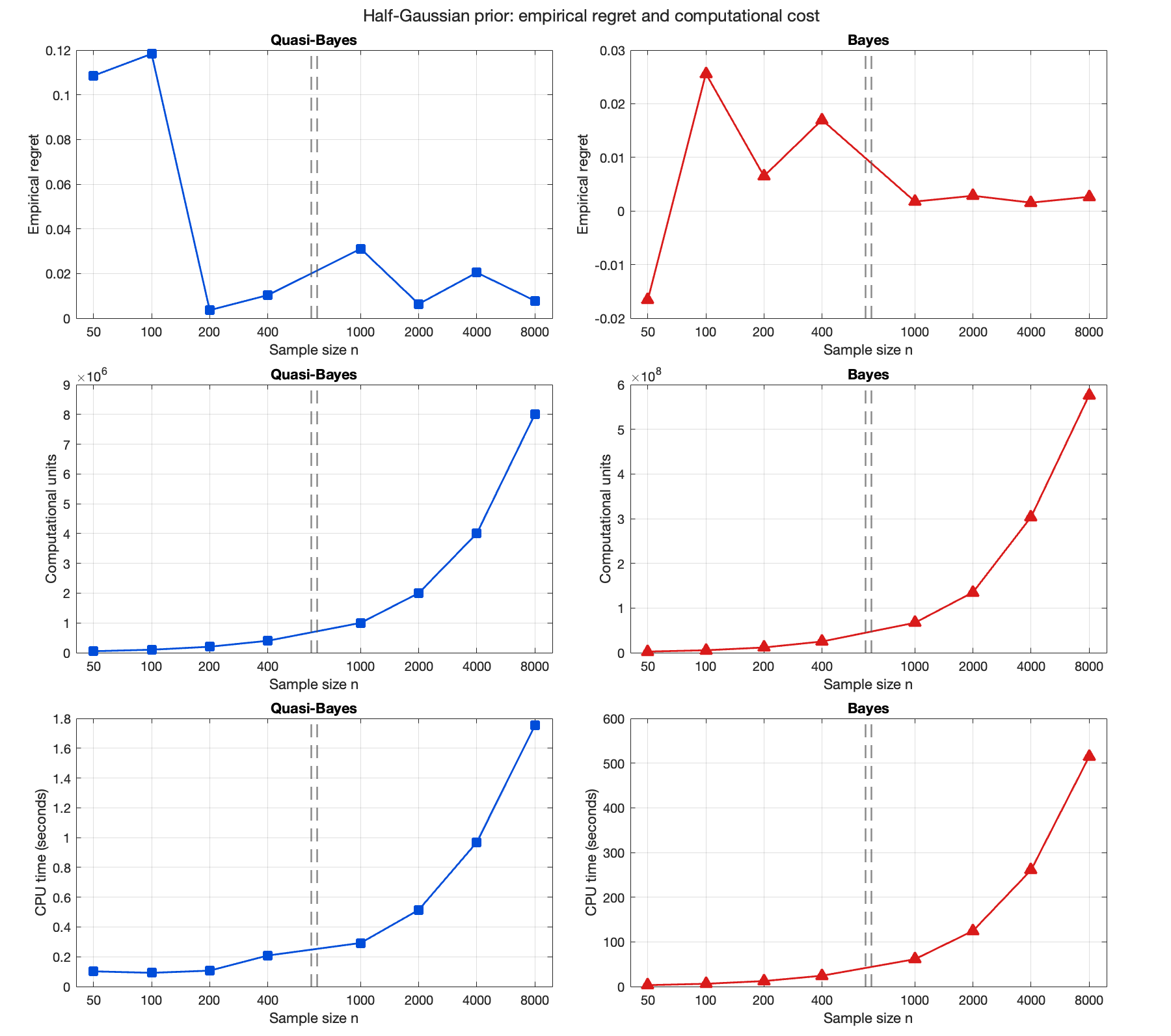}
\caption{\footnotesize{Half-Gaussian prior: quasi-Bayes (blue) and Bayes (red) estimates compared by E-regret (top panels), computational units (middle panels), and CPU time (bottom panels).}}
\label{gauss_cpu}
\end{figure}

\begin{figure}
\centering
\includegraphics[width=.95\textwidth,height=.85\textheight,keepaspectratio]{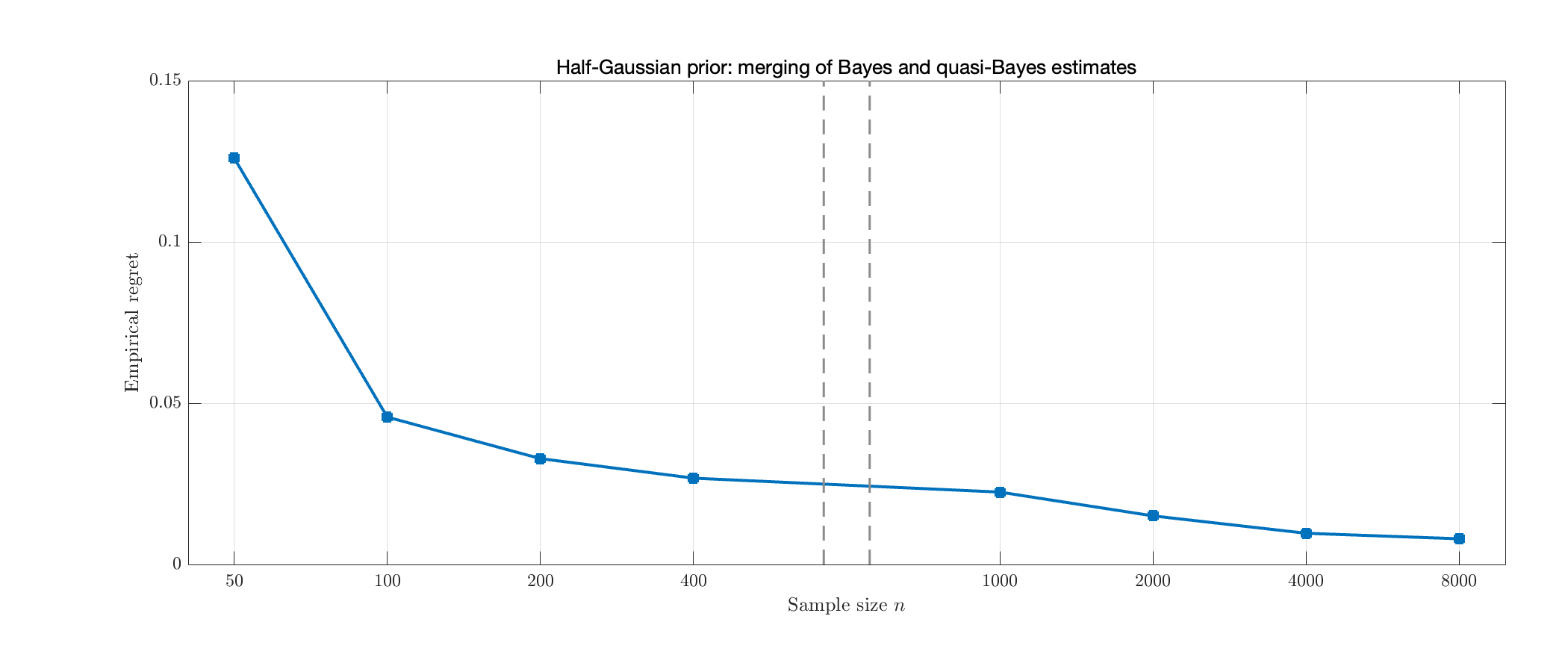}
\caption{\footnotesize{Half-Gaussian prior: E-regret incurred by using the quasi-Bayes estimate in place of the Bayes estimate.}}
\label{gauss_regret_comparison_1d}
\end{figure}

\subsection{Square-root of half-Cauchy prior}

For sample sizes $n\in\{50,\,100,\,200,\,400,\,1,000,\,2,000,\,4,000,\,8,000\}$, we generate i.i.d. data $Y_{1:n}=(Y_{1},\ldots,Y_{n})$ from a Poisson mixture model \eqref{eq:mixture_model_poisson} with a square-root of half-Cauchy prior $G$. We compare the quasi-Bayes estimate $\hat{\theta}^{\text{\tiny{[Q-B]}}}_{n}$ and the Bayes estimate $\hat{\theta}^{\text{\tiny{[B]}}}_{n}$ with the oracle Bayes estimate $\hat{\theta}^{\ast}$. In particular, the oracle $\hat{\theta}^{\ast}$ is obtained from \eqref{eq:oracle} with $G^{\ast}$ being the square-root of half-Cauchy prior distribution that generates the $\theta_{i}$'s, and evaluating the marginal likelihood $p_{G^\ast}$ numerically through the trapezoidal rule.

With regards to the quasi-Bayes estimate $\hat{\theta}^{\text{\tiny{[Q-B]}}}_{n}$,  the implementation of Newton's algorithm is the same as in the synthetic-data analysis with the Weibull prior: i) the density function of $G_{n}$ is represented through its values on a fixed uniform grid of $d\in\{5,000,\,1,000,\,500,\,100,\,50,\,10\}$ quadrature points over $\Theta=(0,U_{\Theta})$, where $U_\Theta=\max\{\max\{Y_{1:n}\},\lceil Q_{n,0.99}+4\sqrt{\max\{Q_{n,0.99},1\}}\rceil\}$, with $Q_{n,0.99}=\text{Quantile}(Y_{1:n};0.99)$; ii) $G_{0}$ is Uniform over $\Theta$; iii) the learning rate is $\alpha_{n}=(1+n)^{-0.99}$. Table \ref{cau_tab_sens} reports the $\mathrm{E\text{-}mse}(\hat{G}_{\gamma,n}^{\text{\tiny{[Q-B]}}})$ and $\mathrm{E\text{-}regret}(G_{\gamma,n}^{\text{\tiny{[Q-B]}}})$ as the sample size $n$ and the grid resolution $d$ vary.

\begin{table}[ht]
\centering
\caption{\footnotesize{Square-root of half-Cauchy prior: $\mathrm{E\text{-}mse}(\hat{G}_{\gamma,n}^{\text{\tiny{[Q-B]}}})$ and $\mathrm{E\text{-}regret}(G_{\gamma,n}^{\text{\tiny{[Q-B]}}})$ as $n$ and $d$ vary.}}
{
\setlength{\tabcolsep}{0pt}
\begin{tabular}{@{}l@{\hspace{0.5cm}}*{6}{>{\centering\arraybackslash}p{2.1cm}}@{}}
\hline
\hline
 & $d=5{,}000$ & $d=1{,}000$ & $d=500$ & $d=100$ & $d=50$ & $d=10$ \\[0.1cm]
\hline
\multicolumn{7}{@{}l}{\underline{$n=50$}} \\[0.05cm]
$\mathrm{E\text{-}mse}(\hat{G}_{\gamma,n}^{\text{\tiny{[Q-B]}}})$     & 0.773 & 0.773 & 0.773 & 0.774 & 0.775 & 1.136 \\
$\mathrm{E\text{-}regret}(G_{\gamma,n}^{\text{\tiny{[Q-B]}}})$ & 0.108 & 0.108 & 0.108 & 0.108 & 0.109 & 0.470 \\[0.4cm]

\multicolumn{7}{@{}l}{\underline{$n=100$}} \\[0.05cm]
$\mathrm{E\text{-}mse}(\hat{G}_{\gamma,n}^{\text{\tiny{[Q-B]}}})$     & 0.515 & 0.514 & 0.513 & 0.509 & 0.545 & 10.613 \\
$\mathrm{E\text{-}regret}(G_{\gamma,n}^{\text{\tiny{[Q-B]}}})$ & 0.105 & 0.104 & 0.103 & 0.099 & 0.135 & 10.203 \\[0.4cm]

\multicolumn{7}{@{}l}{\underline{$n=200$}} \\[0.05cm]
$\mathrm{E\text{-}mse}(\hat{G}_{\gamma,n}^{\text{\tiny{[Q-B]}}})$     & 0.585 & 0.585 & 0.585 & 0.586 & 0.588 & 1.062 \\
$\mathrm{E\text{-}regret}(G_{\gamma,n}^{\text{\tiny{[Q-B]}}})$ & -0.014 & -0.014 & -0.014 & -0.013 & -0.011 & 0.463 \\[0.4cm]

\multicolumn{7}{@{}l}{\underline{$n=400$}} \\[0.05cm]
$\mathrm{E\text{-}mse}(\hat{G}_{\gamma,n}^{\text{\tiny{[Q-B]}}})$     & 0.594 & 0.594 & 0.594 & 0.593 & 0.592 & 1.798 \\
$\mathrm{E\text{-}regret}(G_{\gamma,n}^{\text{\tiny{[Q-B]}}})$ & 0.015 & 0.015 & 0.015 & 0.014 & 0.013 & 1.218 \\[0.1cm]
\hline
\hline
\end{tabular}
}
\label{cau_tab_sens}
\end{table}

With regards to the Bayes estimate $\hat{\theta}^{\text{\tiny{[B]}}}_{n}$, the implementation of Algorithm 8 is the same as in the synthetic-data analysis with the Weibull prior: we set the strength parameter $c=1$, use the same Gamma base probability measure, take $m=5$ auxiliary components, and use the same MCMC settings.

Table \ref{cau_tab_sens} shows that the empirical performance of Newton's algorithm is robust to the choice of the grid resolution $d\in\{5,000,\,1,000,\,500,\,100,\,50,\,10\}$. Thus, for the evaluation of $\hat{\theta}^{\text{\tiny{[Q-B]}}}_{n}$ we set $d=1,000$. Figure \ref{cau_compare1}-\ref{cau_compare2} display the quasi-Bayes, Bayes and oracle Bayes estimates. Figure \ref{cau_cpu} compares the quasi-Bayes and Bayes estimates in terms of both empirical performance, measured by the E-regret, and  computational cost, measured by the number of computational units and by CPU time. Finally, Figure \ref{cau_regret_comparison_1d} reports the empirical regret of the quasi-Bayes estimate $\hat{\theta}^{\text{\tiny{[Q-B]}}}_{\gamma,n}$ relative to the Bayes estimate $\hat{\theta}^{\text{\tiny{[B]}}}_{n}$, providing an empirical validation of the merging as the sample size $n$ grows.

\begin{figure}
\centering
\includegraphics[width=.75\textwidth,height=.55\textheight,keepaspectratio]{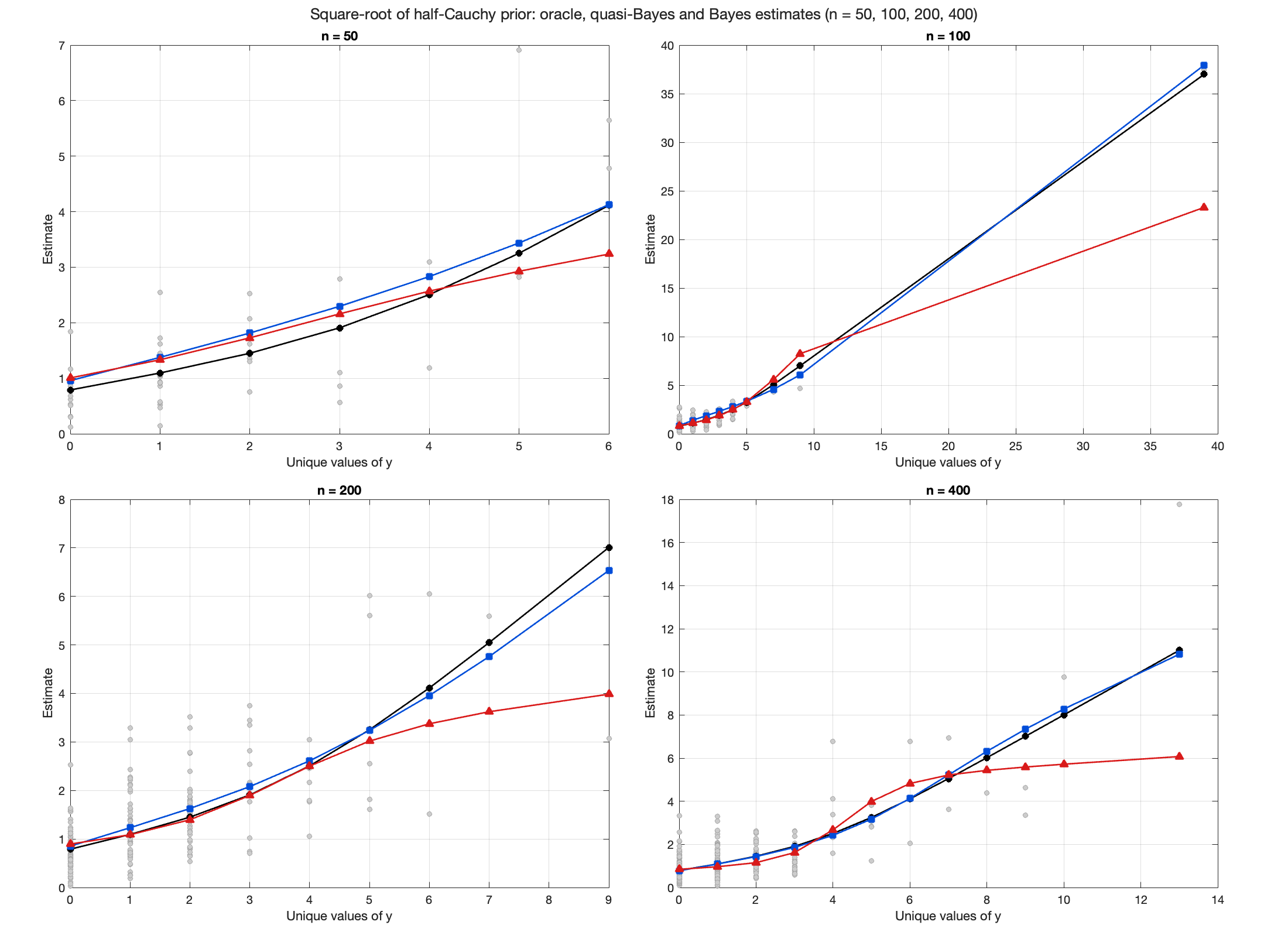}
\caption{\footnotesize{Square-root of half-Cauchy prior, $n\in\{50,\,100,\,200,\,400\}$: data points plotted against the ``true'' parameters (grey), together with the corresponding oracle Bayes (black), Bayes (red), and quasi-Bayes (blue) estimates.}}
\label{cau_compare1}
\end{figure}

\begin{figure}
\centering
\includegraphics[width=.75\textwidth,height=.55\textheight,keepaspectratio]{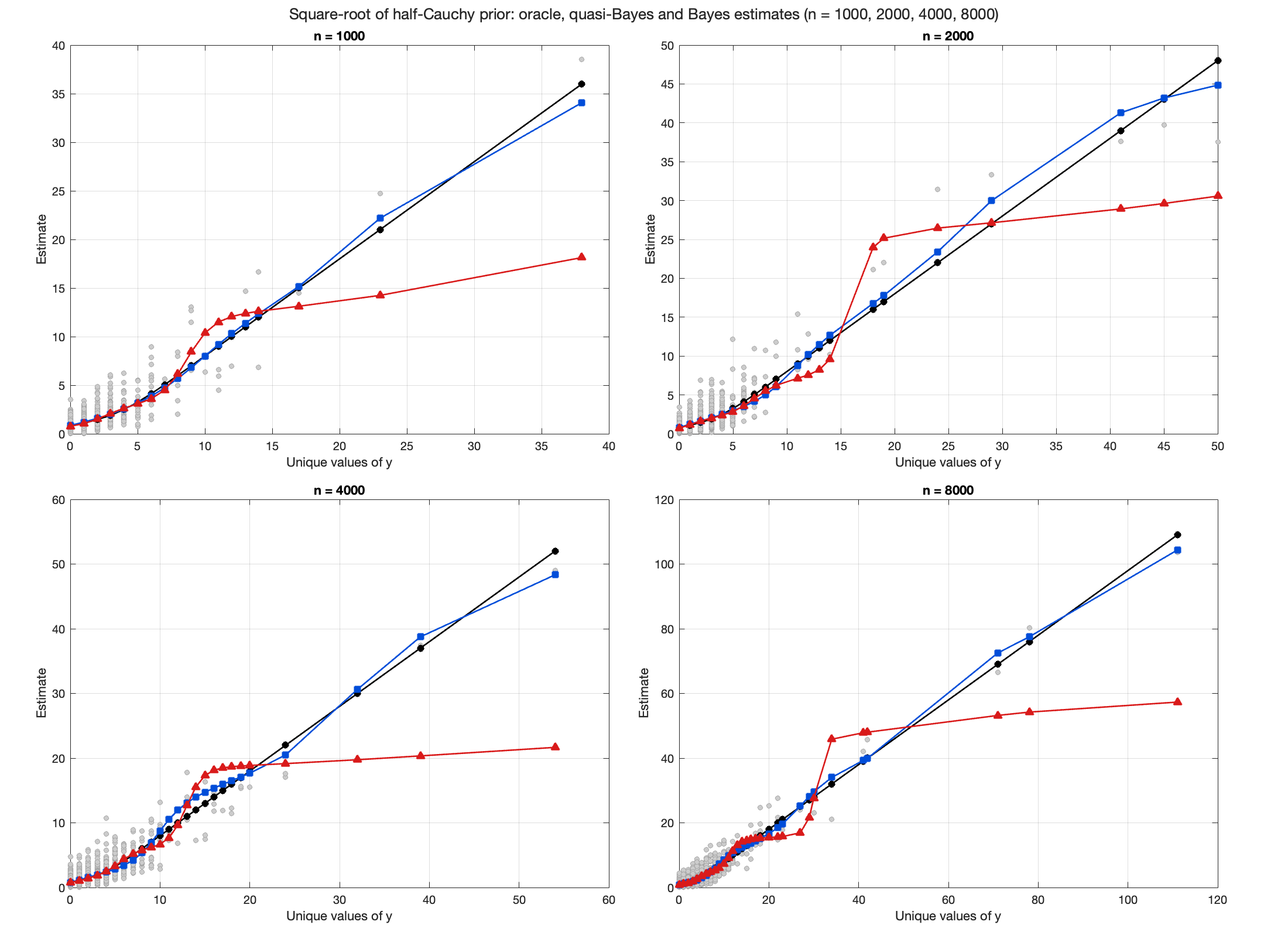}
\caption{\footnotesize{Square-root of half-Cauchy prior, $n\in\{1,000,\,2,000,\,4,000,\,8,000\}$: data points plotted against the ``true'' parameters (grey), together with the corresponding oracle Bayes (black), Bayes (red), and quasi-Bayes (blue) estimates.}}
\label{cau_compare2}
\end{figure}

\begin{figure}
\centering
\includegraphics[width=.95\textwidth,height=.85\textheight,keepaspectratio]{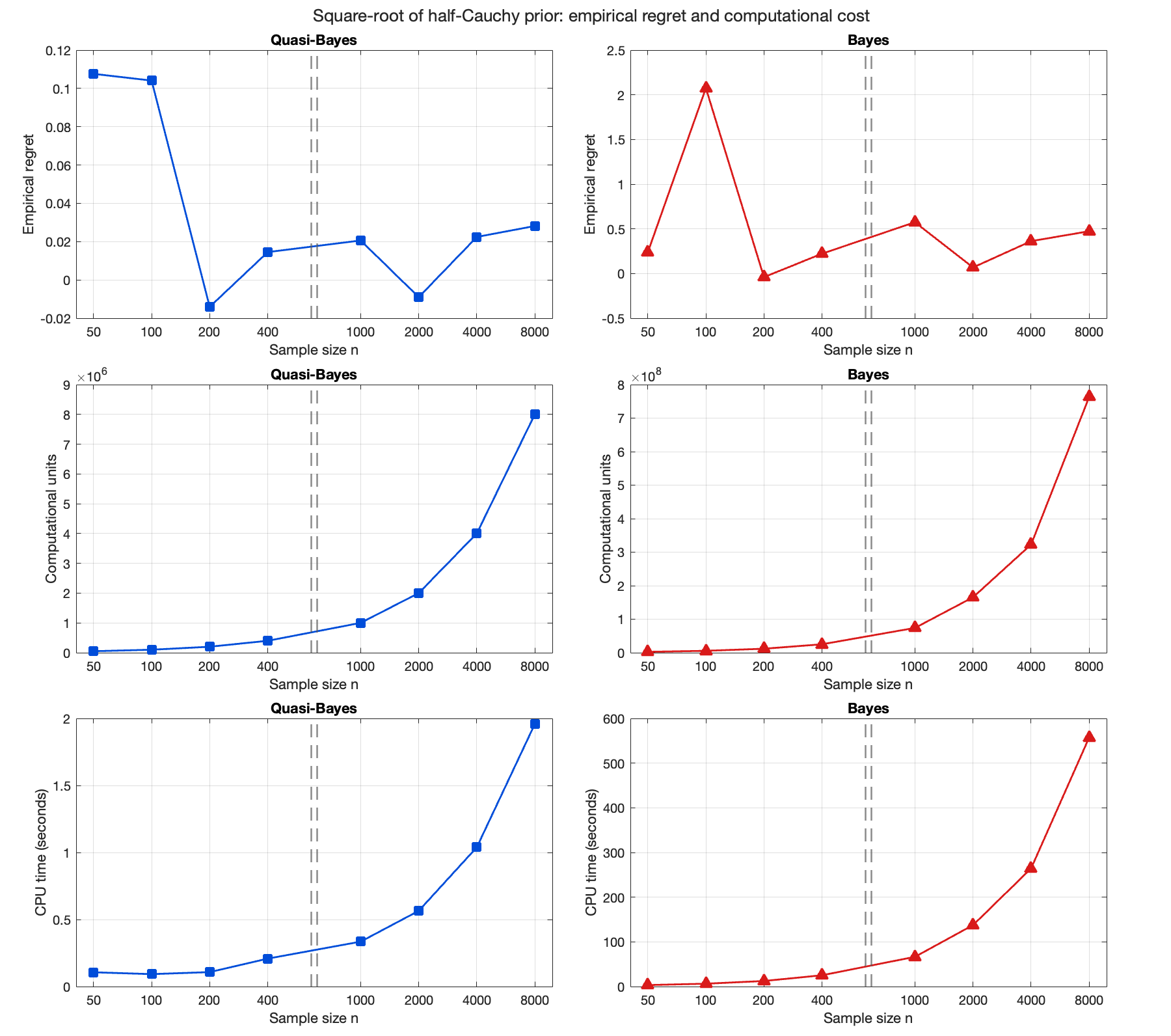}
\caption{\footnotesize{Square-root of half-Cauchy prior: quasi-Bayes (blue) and Bayes (red) estimates compared by E-regret (top panels), computational units (middle panels), and CPU time (bottom panels).}}
\label{cau_cpu}
\end{figure}

\begin{figure}
\centering
\includegraphics[width=.95\textwidth,height=.85\textheight,keepaspectratio]{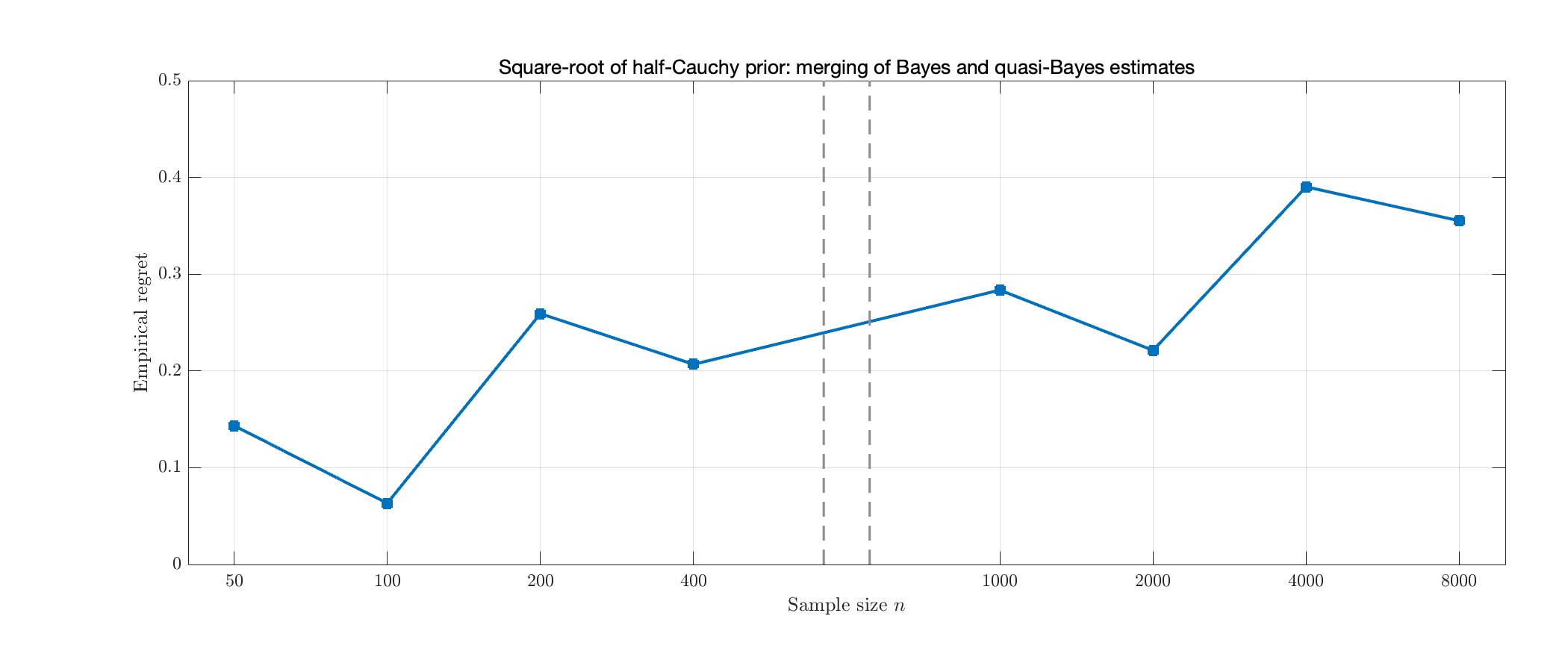}
\caption{\footnotesize{Square-root of half-Cauchy prior: E-regret incurred by using the quasi-Bayes estimate in place of the Bayes estimate.}}
\label{cau_regret_comparison_1d}
\end{figure}

\section{Additional synthetic-data illustrations: $d$-dimensional setting, $d>1$}\label{app4}

\subsection{Preliminaries}

We generate synthetic data from a $d$-dimensional Poisson mixture model, with dimension $d=2$, for various choices of the product prior (mixing) distribution $G=G_{1}\otimes G_{2}$. For $n\geq1$ let  $(\boldsymbol Y_1,\boldsymbol\theta_1),\ldots,(\boldsymbol Y_n,\boldsymbol\theta_n)$, with $\boldsymbol Y_i =(Y_{i,1},Y_{i,2}) \in\mathbb N_0^2$ and $\boldsymbol{\theta}_{i}=(\theta_{i,1},\theta_{i,2}))\in\mathbb{R}_{+}^{2}$, for $i=1,\ldots,n$, be distributed as follows:
\begin{equation}\label{eq:mixture_model_poisson_d}
\begin{aligned}
Y_{i,\ell}\mid \boldsymbol\theta_i
&\quad \simind \quad
\operatorname{Poisson}(\cdot\mid \theta_{i,\ell}),
\qquad \ell=1,2,\quad i=1,\ldots,n,\\[0.2cm]
\boldsymbol\theta_i
&\quad \simiid \quad G.
\end{aligned}
\end{equation}
where $G$ has independent and identically specified marginals $G_{\ell}$, $\ell=1,2$. We assume the $G_{\ell}$'s to be Uniform, Half-Cauchy and square-root of half-Cauchy distributions, as in the Poisson mixture model \eqref{eq:mixture_model_poisson}.

\subsection{Uniform product prior}

For sample sizes $n\in\{50,\,100,\,200,\,400,\,1,000,\,2,000,\,4,000,\,8,000\}$, we generate i.i.d. data $\boldsymbol Y_{1:n}=(\boldsymbol Y_1,\ldots,\boldsymbol Y_n)$, with $\boldsymbol Y_i=(Y_{i,1},Y_{i,2})\in\mathbb N_0^2$ from a $2$-dimensional Poisson mixture model \eqref{eq:mixture_model_poisson_d} with a product Uniform prior $G=G_1\otimes G_2$,  where $G_{\ell}$ is the Uniform distribution on $[0,3]$. We compare the quasi-Bayes estimate $\hat{\boldsymbol{\theta}}^{\text{\tiny{[Q-B]}}}_{\gamma,n}$ and the Bayes estimate $\hat{\boldsymbol{\theta}}^{\text{\tiny{[B]}}}_{n}$ with the oracle Bayes estimate $\hat{\boldsymbol{\theta}}^{\ast}$. In particular, the oracle $\hat{\boldsymbol{\theta}}^{\ast}$ is obtained from \eqref{eq:oracle_d} with $G^{\ast}=G_1^{\ast}\otimes G_2^{\ast}$ being the product Uniform prior distribution that generates the $\boldsymbol{\theta}_{i}$'s, and evaluating the marginal likelihood $p_{G^\ast}$ numerically through the trapezoidal rule.

With regards to the quasi-Bayes estimate $\hat{\boldsymbol\theta}^{\text{\tiny{[Q-B]}}}_{\gamma,n}$, the implementation of the $2$-dimensional Newton's algorithm is the same as in the synthetic-data analysis with the product Weibull prior: i) the density function of $G_n$ is represented through its values on a fixed tensor-product grid over $\boldsymbol{\Theta}=(0,U_{\Theta_{1}})\times(0,U_{\Theta_{2}})$ with $201$ quadrature points per coordinate, yielding $201^2$ grid points in total; ii) $G_0$ is Uniform over $\boldsymbol{\Theta}$; iii) the learning rate is $\alpha_{n}=(1+n)^{-0.99}$. With regards to the Bayes estimate  $\hat{\boldsymbol{\theta}}^{\text{\tiny{[B]}}}_{n}$, the implementation of the $2$-dimensional Algorithm 8 is the same as in the synthetic-data analysis with the product Weibull prior: we set the strength parameter $c=1$, use the same product Gamma base probability measure, take $m=5$ auxiliary components, and use the same MCMC settings.

Figure \ref{unif_compare1_d}-\ref{unif_compare2_d} display the quasi-Bayes, Bayes and oracle Bayes estimates. Figure \ref{unif_cpu_d} compares the quasi-Bayes and Bayes estimates in terms of empirical performance, measured by the E-regret, and computational cost, measured by the number of computational units and CPU time. Finally, Figure \ref{unif_regret_comparison_d} reportsthe empirical regret of the quasi-Bayes estimate $\hat{\boldsymbol{\theta}}^{\text{\tiny{[Q-B]}}}_{\gamma,n}$ relative to the Bayes estimate $\hat{\boldsymbol{\theta}}^{\text{\tiny{[B]}}}_{n}$, providing an empirical validation of the merging as the sample size $n$ grows.

\begin{figure}
\centering
\includegraphics[width=.75\textwidth,height=.55\textheight,keepaspectratio]{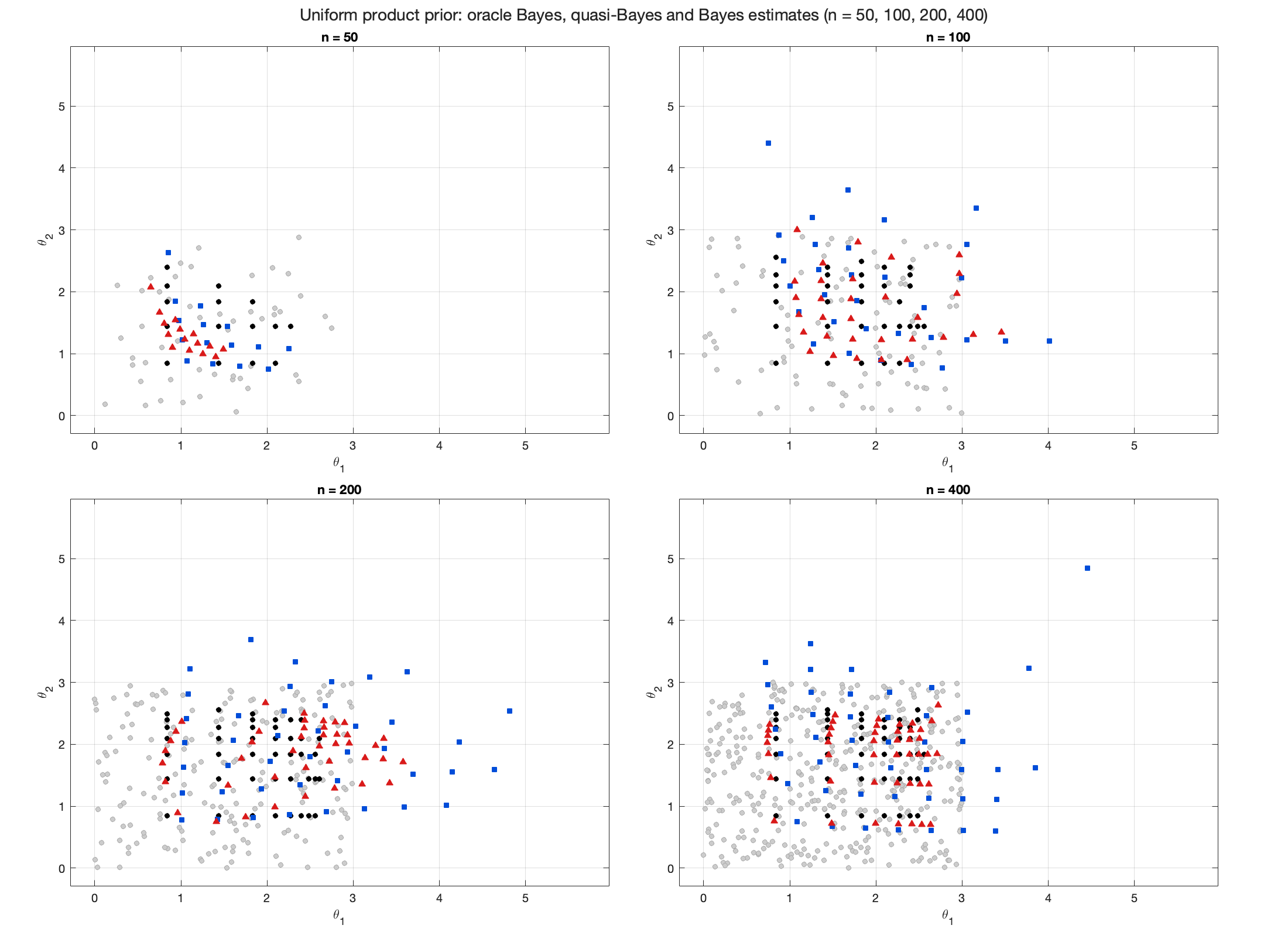}
\caption{\footnotesize{Uniform product prior, $n\in\{50,\,100,\,200,\,400\}$: data points plotted against the ``true'' parameters (grey), together with the corresponding oracle Bayes (black), Bayes (red), and quasi-Bayes (blue) estimates.}}
\label{unif_compare1_d}
\end{figure}

\begin{figure}
\centering
\includegraphics[width=.75\textwidth,height=.55\textheight,keepaspectratio]{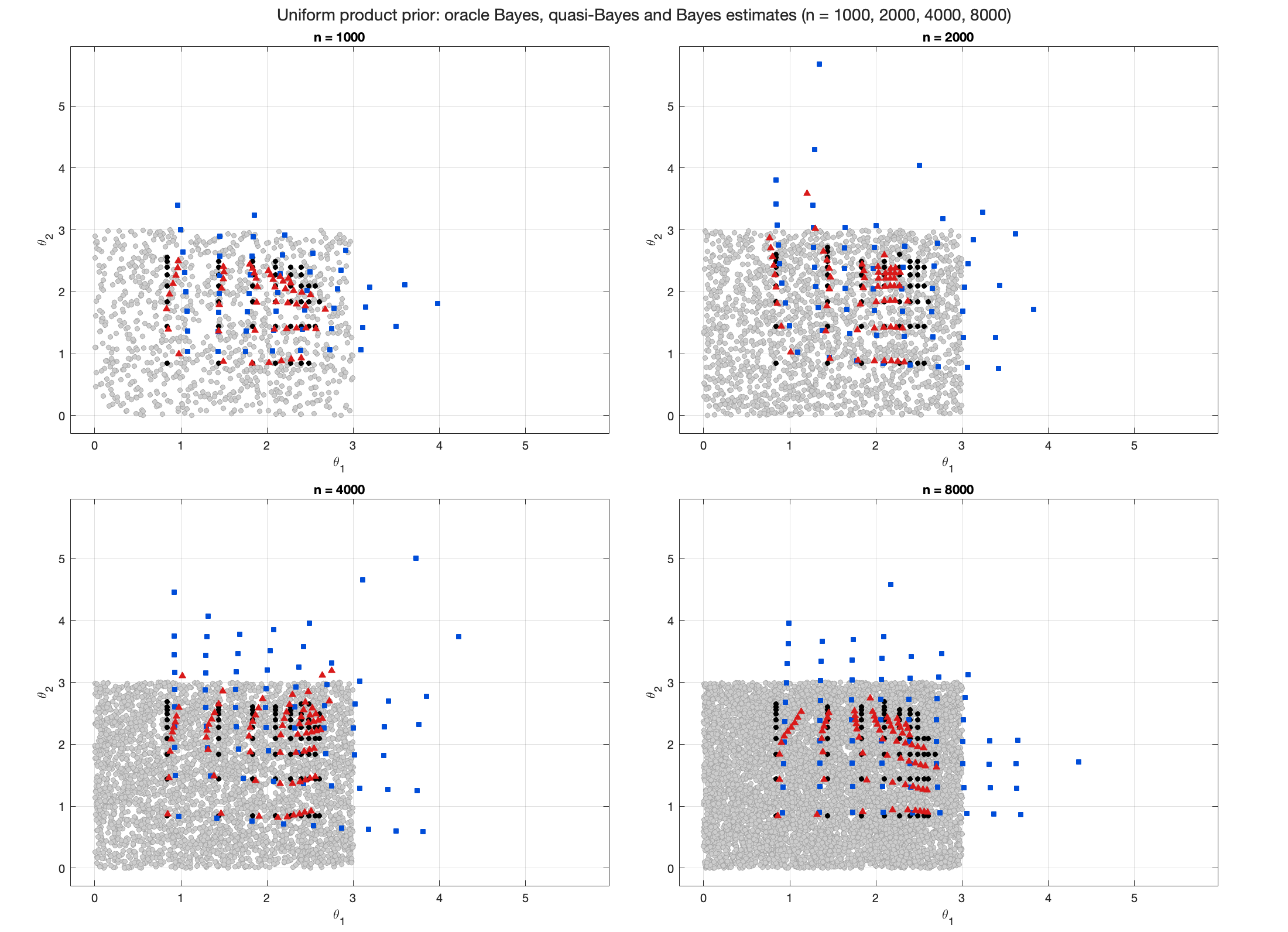}
\caption{\footnotesize{Uniform product prior, $n\in\{1,000,\,2,000,\,4,000,\,8,000\}$: data points plotted against the ``true'' parameters (grey), together with the corresponding oracle Bayes (black), Bayes (red), and quasi-Bayes (blue) estimates.}}
\label{unif_compare2_d}
\end{figure}

\begin{figure}
\centering
\includegraphics[width=.95\textwidth,height=.85\textheight,keepaspectratio]{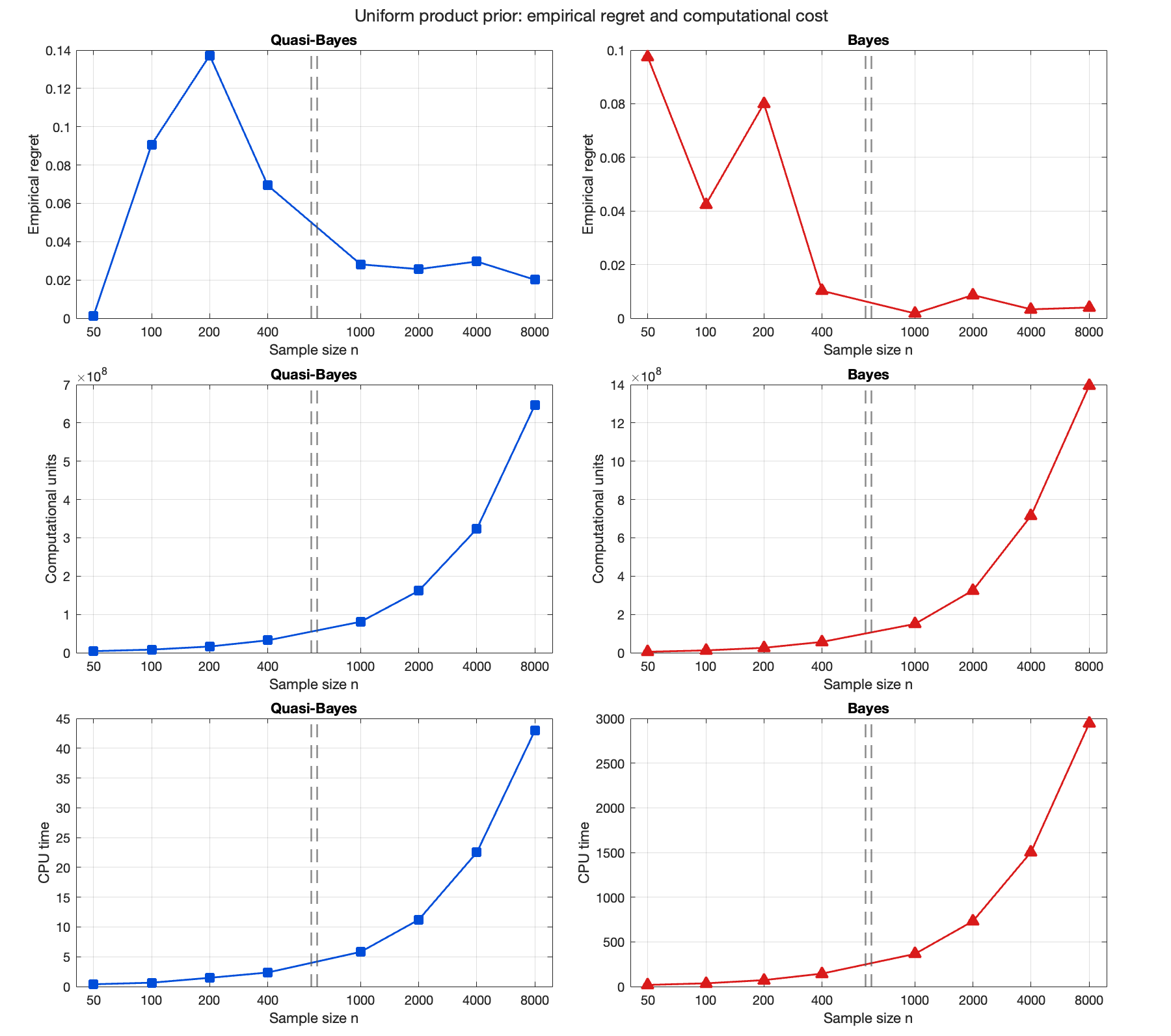}
\caption{\footnotesize{Uniform product prior: quasi-Bayes (blue) and Bayes (red) estimates compared by E-regret (top panels), computational units (middle panels), and CPU time (bottom panels).}}
\label{unif_cpu_d}
\end{figure}

\begin{figure}
\centering
\includegraphics[width=.95\textwidth,height=.85\textheight,keepaspectratio]{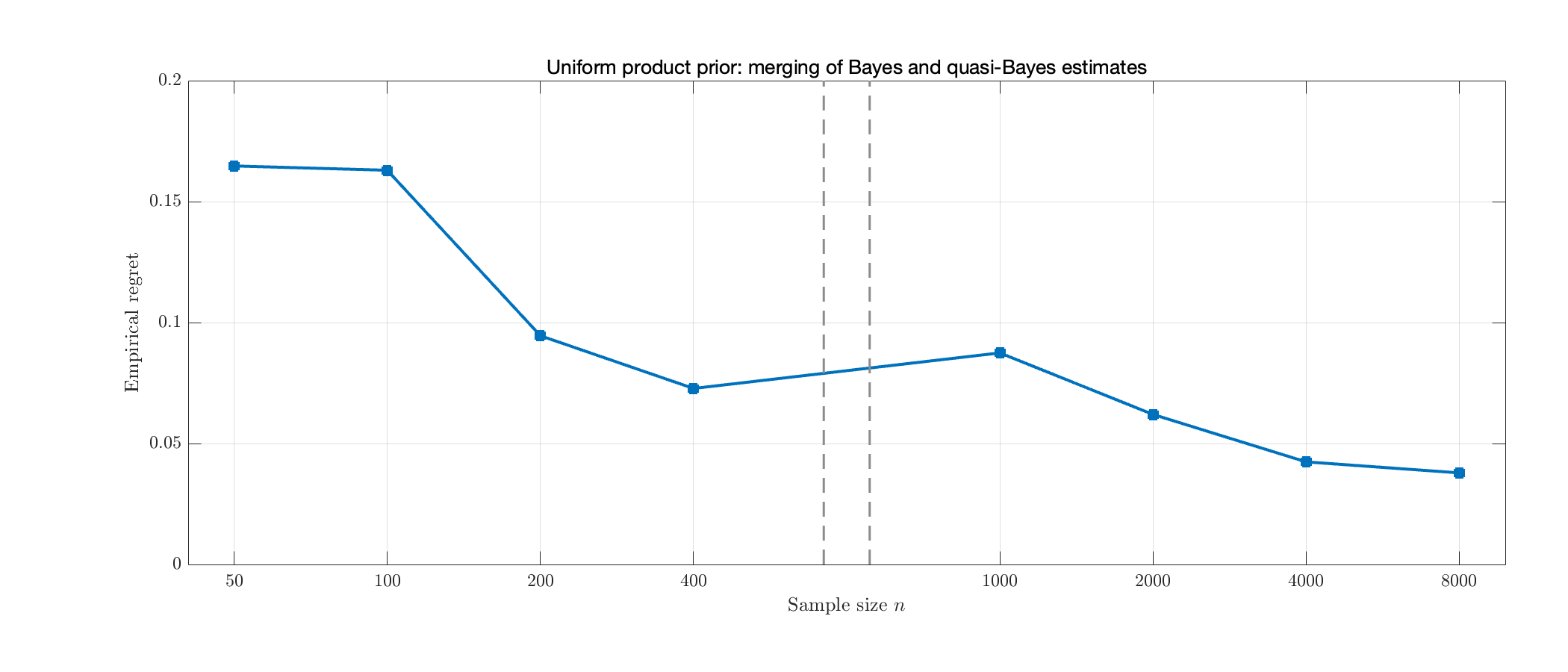}
\caption{\footnotesize{Uniform product prior: E-regret incurred by using the quasi-Bayes estimate in place of the Bayes estimate.}}
\label{unif_regret_comparison_d}
\end{figure}

\subsection{Half-Gaussian product prior}

For sample sizes $n\in\{50,\,100,\,200,\,400,\,1,000,\,2,000,\,4,000,\,8,000\}$, we generate i.i.d. data $\boldsymbol Y_{1:n}=(\boldsymbol Y_1,\ldots,\boldsymbol Y_n)$, with $\boldsymbol Y_i=(Y_{i,1},Y_{i,2})\in\mathbb N_0^2$ from a $2$-dimensional Poisson mixture model \eqref{eq:mixture_model_poisson_d} with a product half-Gaussian prior $G=G_1\otimes G_2$,  where $G_{\ell}$ is the half-Gaussian distribution with $\sigma=1$. We compare the quasi-Bayes estimate $\hat{\boldsymbol{\theta}}^{\text{\tiny{[Q-B]}}}_{\gamma,n}$ and the Bayes estimate $\hat{\boldsymbol{\theta}}^{\text{\tiny{[B]}}}_{n}$ with the oracle Bayes estimate $\hat{\boldsymbol{\theta}}^{\ast}$. In particular, the oracle $\hat{\boldsymbol{\theta}}^{\ast}$ is obtained from \eqref{eq:oracle_d} with $G^{\ast}=G_1^{\ast}\otimes G_2^{\ast}$ being the product half-Gaussian  prior distribution that generates the $\boldsymbol{\theta}_{i}$'s, and evaluating the marginal likelihood $p_{G^\ast}$ numerically through the trapezoidal rule.

With regards to the quasi-Bayes estimate $\hat{\boldsymbol\theta}^{\text{\tiny{[Q-B]}}}_{\gamma,n}$, the implementation of the $2$-dimensional Newton's algorithm is the same as in the synthetic-data analysis with the product Weibull prior: i) the density function of $G_n$ is represented through its values on a fixed tensor-product grid over $\boldsymbol{\Theta}=(0,U_{\Theta_{1}})\times(0,U_{\Theta_{2}})$ with $201$ quadrature points per coordinate, yielding $201^2$ grid points in total; ii) $G_0$ is Uniform over $\boldsymbol{\Theta}$; iii) the learning rate is $\alpha_{n}=(1+n)^{-0.99}$. With regards to the Bayes estimate  $\hat{\boldsymbol{\theta}}^{\text{\tiny{[B]}}}_{n}$, the implementation of the $2$-dimensional Algorithm 8 is the same as in the synthetic-data analysis with the product Weibull prior: we set the strength parameter $c=1$, use the same product Gamma base probability measure, take $m=5$ auxiliary components, and use the same MCMC settings.

Figure \ref{gauss_compare1_d}-\ref{gauss_compare2_d} display the quasi-Bayes, Bayes and oracle Bayes estimates. Figure \ref{gauss_cpu_d} compares the quasi-Bayes and Bayes estimates in terms of empirical performance, measured by the E-regret, and computational cost, measured by the number of computational units and CPU time. Finally, Figure \ref{gauss_regret_comparison_d} reportsthe empirical regret of the quasi-Bayes estimate $\hat{\boldsymbol{\theta}}^{\text{\tiny{[Q-B]}}}_{\gamma,n}$ relative to the Bayes estimate $\hat{\boldsymbol{\theta}}^{\text{\tiny{[B]}}}_{n}$, providing an empirical validation of the merging as the sample size $n$ grows.

\begin{figure}
\centering
\includegraphics[width=.75\textwidth,height=.55\textheight,keepaspectratio]{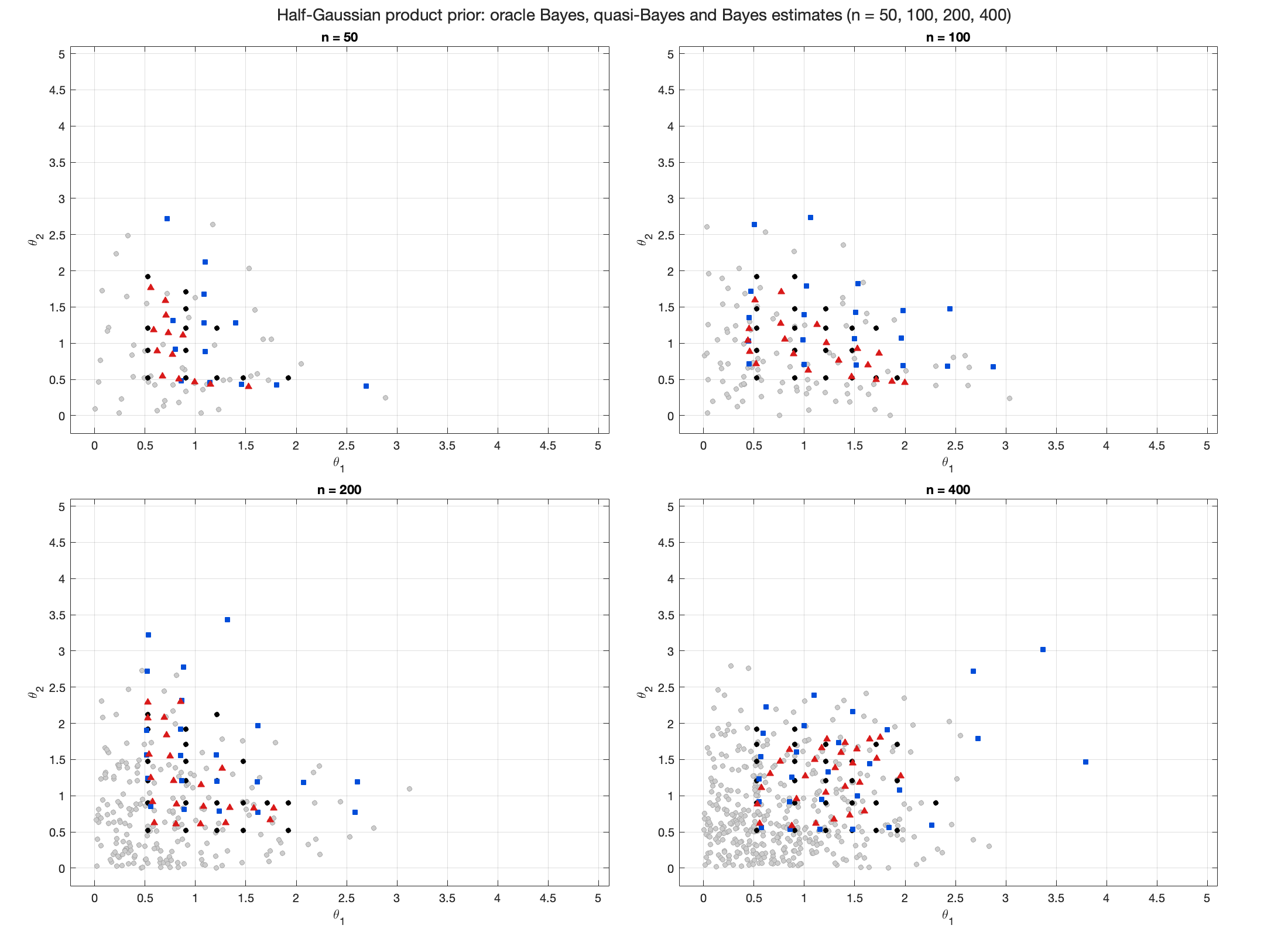}
\caption{\footnotesize{Half-Gaussian product prior, $n\in\{50,\,100,\,200,\,400\}$: data points plotted against the ``true'' parameters (grey), together with the corresponding oracle Bayes (black), Bayes (red), and quasi-Bayes (blue) estimates.}}
\label{gauss_compare1_d}
\end{figure}

\begin{figure}
\centering
\includegraphics[width=.75\textwidth,height=.55\textheight,keepaspectratio]{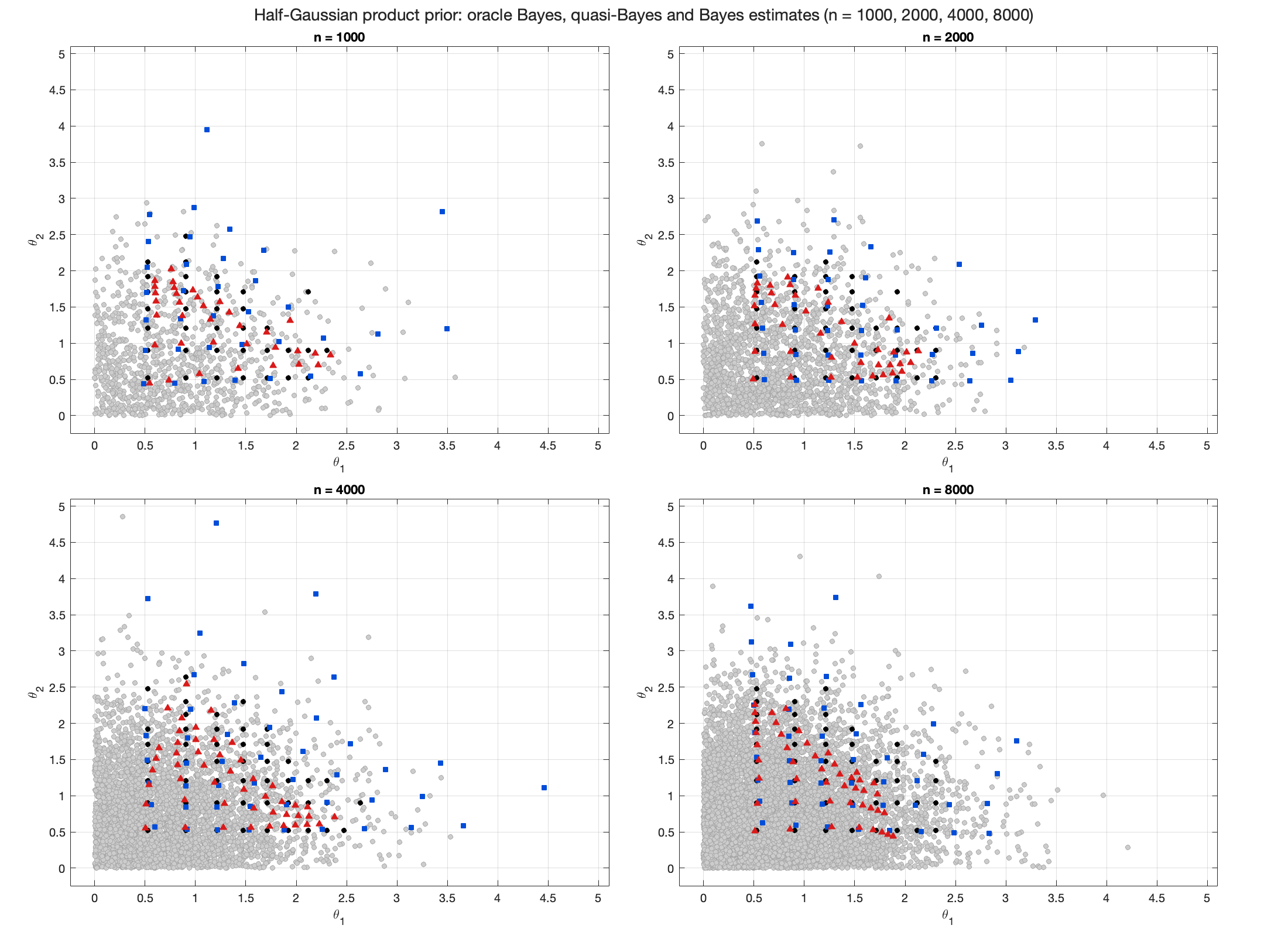}
\caption{\footnotesize{Half-Gaussian product prior, $n\in\{1,000,\,2,000,\,4,000,\,8,000\}$: data points plotted against the ``true'' parameters (grey), together with the corresponding oracle Bayes (black), Bayes (red), and quasi-Bayes (blue) estimates.}}
\label{gauss_compare2_d}
\end{figure}

\begin{figure}
\centering
\includegraphics[width=.95\textwidth,height=.85\textheight,keepaspectratio]{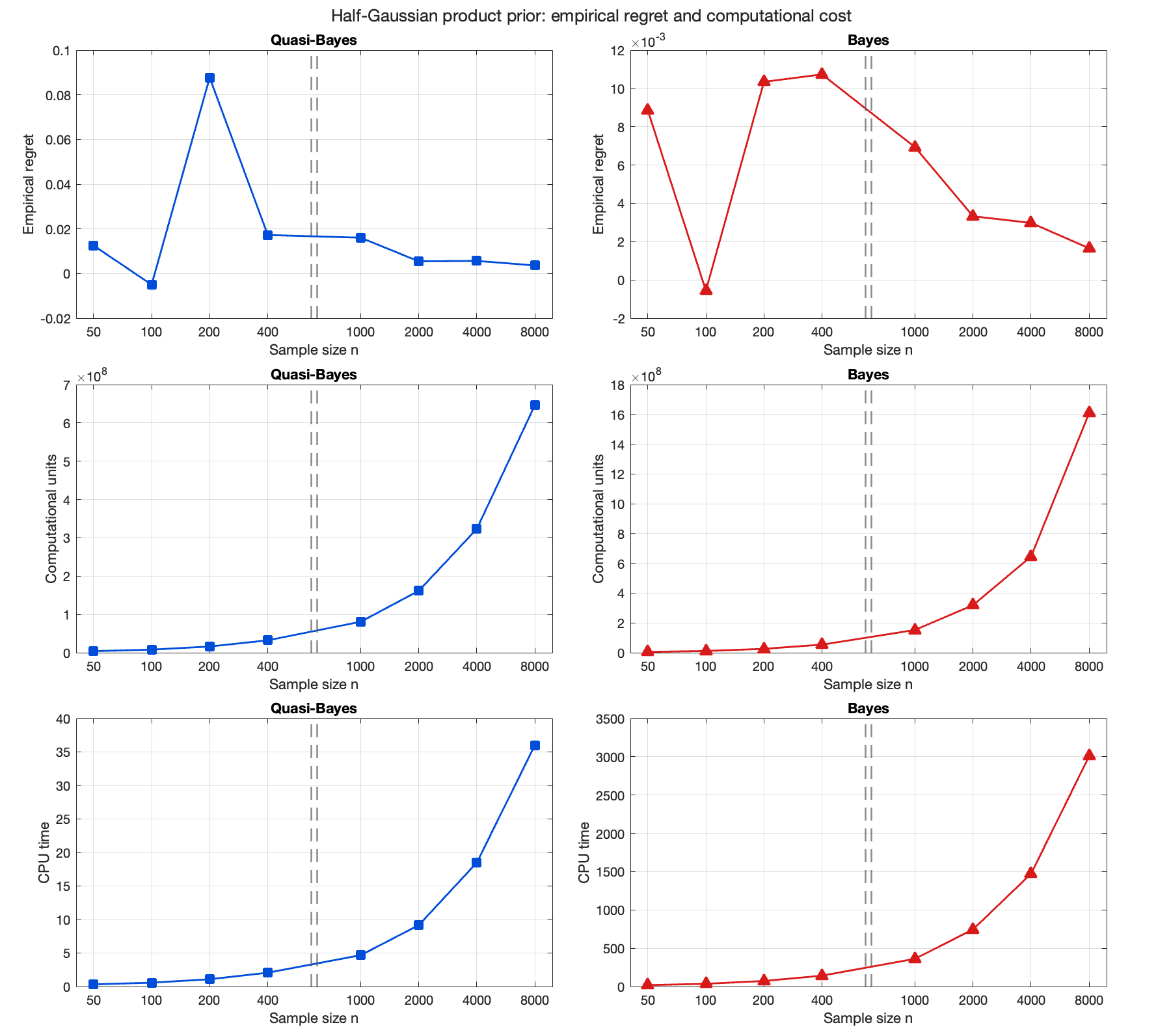}
\caption{\footnotesize{Half-Gaussian product prior: quasi-Bayes (blue) and Bayes (red) estimates compared by E-regret (top panels), computational units (middle panels), and CPU time (bottom panels).}}
\label{gauss_cpu_d}
\end{figure}

\begin{figure}
\centering
\includegraphics[width=.95\textwidth,height=.85\textheight,keepaspectratio]{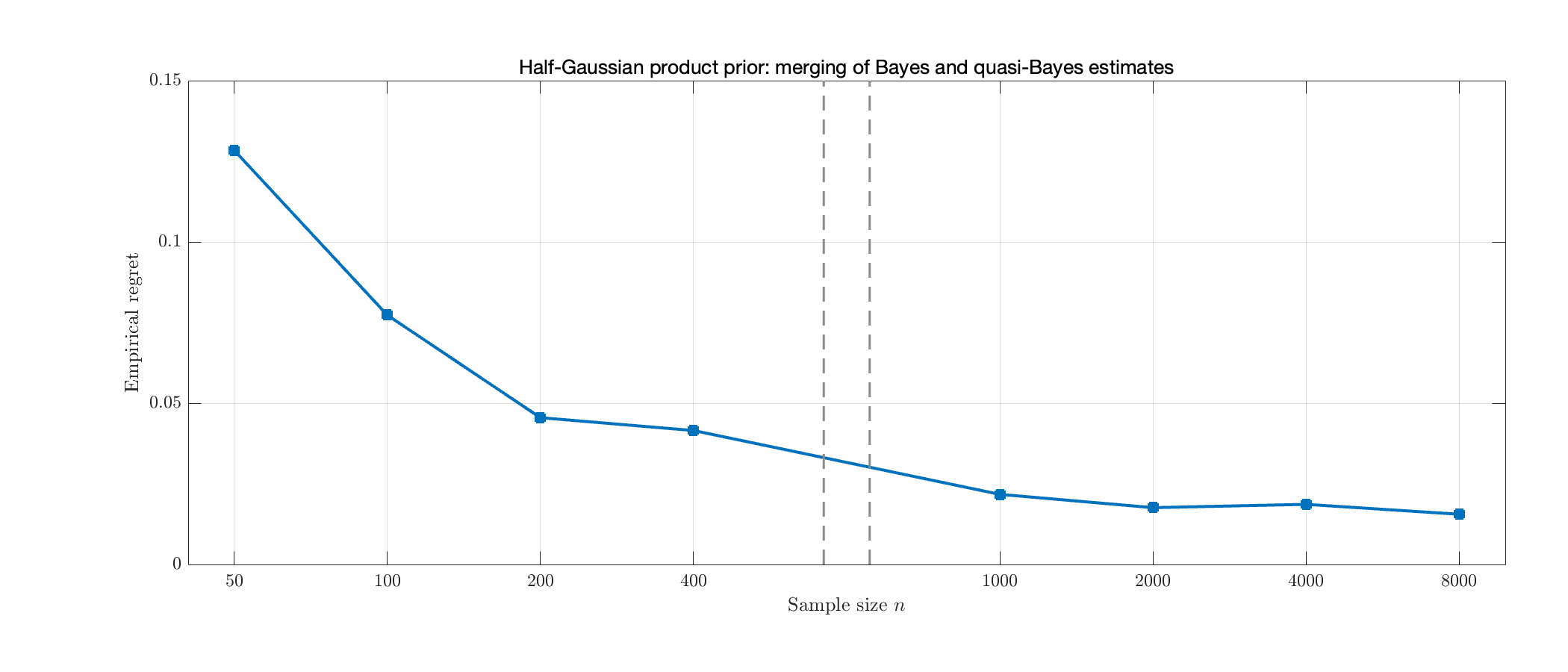}
\caption{\footnotesize{Half-Gaussian product prior: E-regret incurred by using the quasi-Bayes estimate in place of the Bayes estimate.}}
\label{gauss_regret_comparison_d}
\end{figure}

\subsection{Square-root of half-Cauchy product prior}

For sample sizes $n\in\{50,\,100,\,200,\,400,\,1,000,\,2,000,\,4,000,\,8,000\}$, we generate i.i.d. data $\boldsymbol Y_{1:n}=(\boldsymbol Y_1,\ldots,\boldsymbol Y_n)$, with $\boldsymbol Y_i=(Y_{i,1},Y_{i,2})\in\mathbb N_0^2$ from a $2$-dimensional Poisson mixture model \eqref{eq:mixture_model_poisson_d} with a product square-root of half-Cauchy prior $G=G_1\otimes G_2$, where $G_{\ell}$ is the square-root of half-Cauchy prior. We compare the quasi-Bayes estimate $\hat{\boldsymbol{\theta}}^{\text{\tiny{[Q-B]}}}_{\gamma,n}$ and the Bayes estimate $\hat{\boldsymbol{\theta}}^{\text{\tiny{[B]}}}_{n}$ with the oracle Bayes estimate $\hat{\boldsymbol{\theta}}^{\ast}$. In particular, the oracle $\hat{\boldsymbol{\theta}}^{\ast}$ is obtained from \eqref{eq:oracle_d} with $G^{\ast}=G_1^{\ast}\otimes G_2^{\ast}$ being the product square-root of half-Cauchy prior distribution that generates the $\boldsymbol{\theta}_{i}$'s, and evaluating the marginal likelihood $p_{G^\ast}$ numerically through the trapezoidal rule.
 
With regards to the quasi-Bayes estimate $\hat{\boldsymbol\theta}^{\text{\tiny{[Q-B]}}}_{\gamma,n}$, the implementation of the $2$-dimensional Newton's algorithm is the same as in the synthetic-data analysis with the product Weibull prior: i) the density function of $G_n$ is represented through its values on a fixed tensor-product grid over $\boldsymbol{\Theta}=(0,U_{\Theta_{1}})\times(0,U_{\Theta_{2}})$ with $201$ quadrature points per coordinate, yielding $201^2$ grid points in total; ii) $G_0$ is Uniform over $\boldsymbol{\Theta}$; iii) the learning rate is $\alpha_{n}=(1+n)^{-0.99}$. With regards to the Bayes estimate  $\hat{\boldsymbol{\theta}}^{\text{\tiny{[B]}}}_{n}$, the implementation of the $2$-dimensional Algorithm 8 is the same as in the synthetic-data analysis with the product Weibull prior: we set the strength parameter $c=1$, use the same product Gamma base probability measure, take $m=5$ auxiliary components, and use the same MCMC settings.

Figure \ref{cau_compare1_d}-\ref{cau_compare2_d} display the quasi-Bayes, Bayes and oracle Bayes estimates. Figure \ref{cau_cpu_d} compares the quasi-Bayes and Bayes estimates in terms of empirical performance, measured by the E-regret, and computational cost, measured by the number of computational units and CPU time. Finally, Figure \ref{cau_regret_comparison_d} reportsthe empirical regret of the quasi-Bayes estimate $\hat{\boldsymbol{\theta}}^{\text{\tiny{[Q-B]}}}_{\gamma,n}$ relative to the Bayes estimate $\hat{\boldsymbol{\theta}}^{\text{\tiny{[B]}}}_{n}$, providing an empirical validation of the merging as the sample size $n$ grows.

\begin{figure}
\centering
\includegraphics[width=.75\textwidth,height=.55\textheight,keepaspectratio]{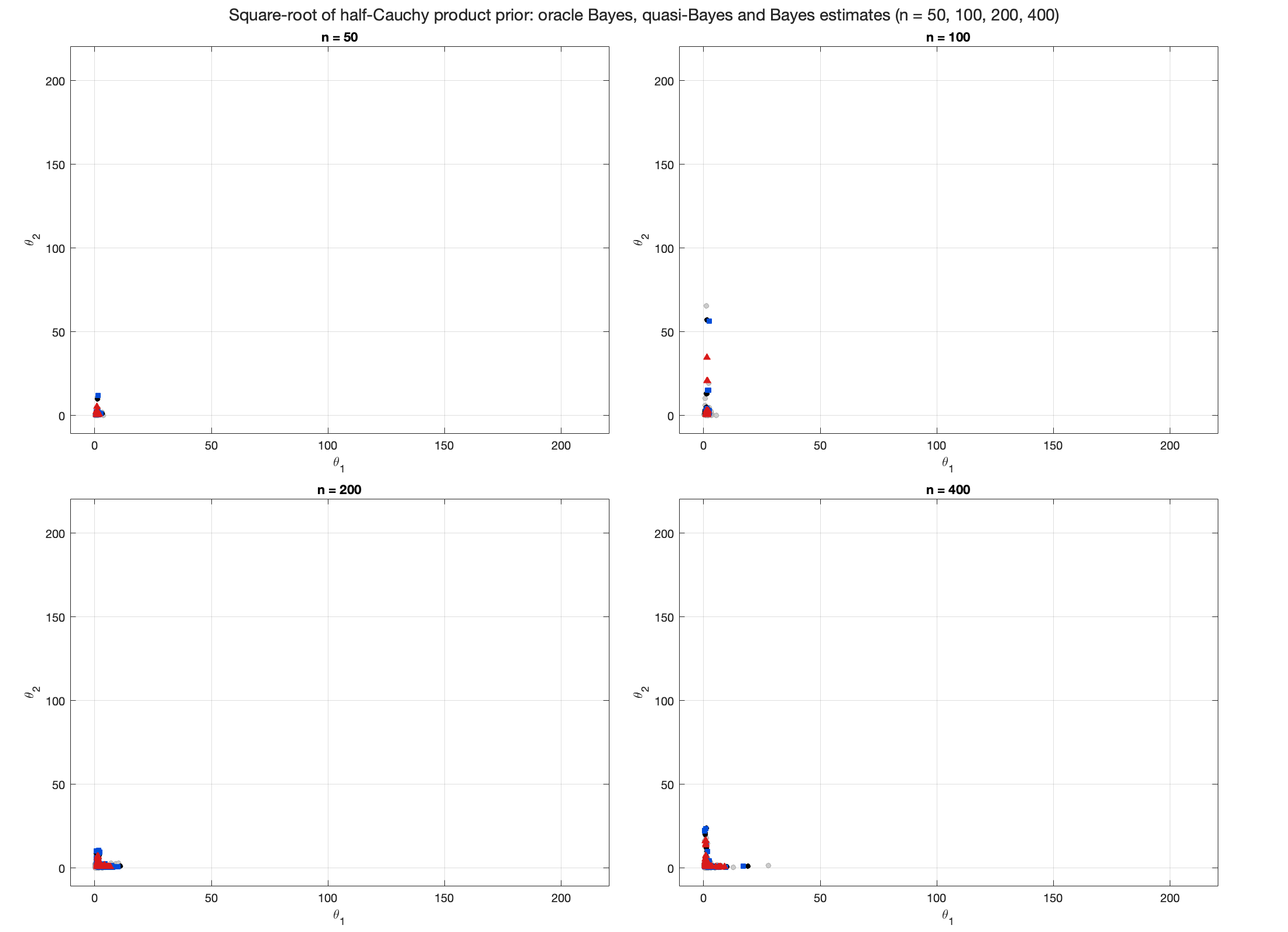}
\caption{\footnotesize{Square-root of half-Cauchy product prior, $n\in\{50,\,100,\,200,\,400\}$: data points plotted against the ``true'' parameters (grey), together with the corresponding oracle Bayes (black), Bayes (red), and quasi-Bayes (blue) estimates.}}
\label{cau_compare1_d}
\end{figure}

\begin{figure}
\centering
\includegraphics[width=.75\textwidth,height=.55\textheight,keepaspectratio]{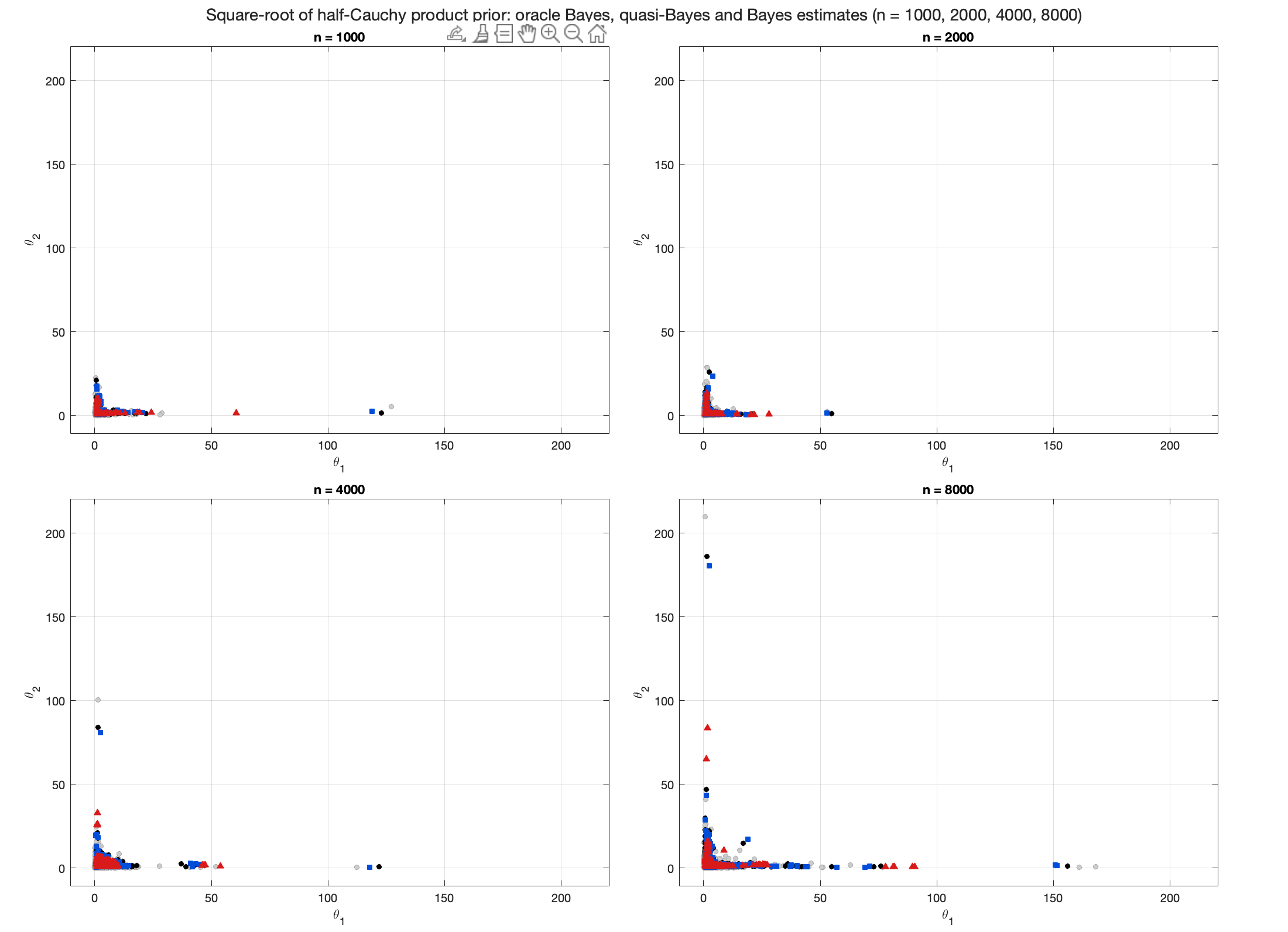}
\caption{\footnotesize{Square-root of half-Cauchy product prior, $n\in\{1,000,\,2,000,\,4,000,\,8,000\}$: data points plotted against the ``true'' parameters (grey), together with the corresponding oracle Bayes (black), Bayes (red), and quasi-Bayes (blue) estimates.}}
\label{cau_compare2_d}
\end{figure}

\begin{figure}
\centering
\includegraphics[width=.95\textwidth,height=.85\textheight,keepaspectratio]{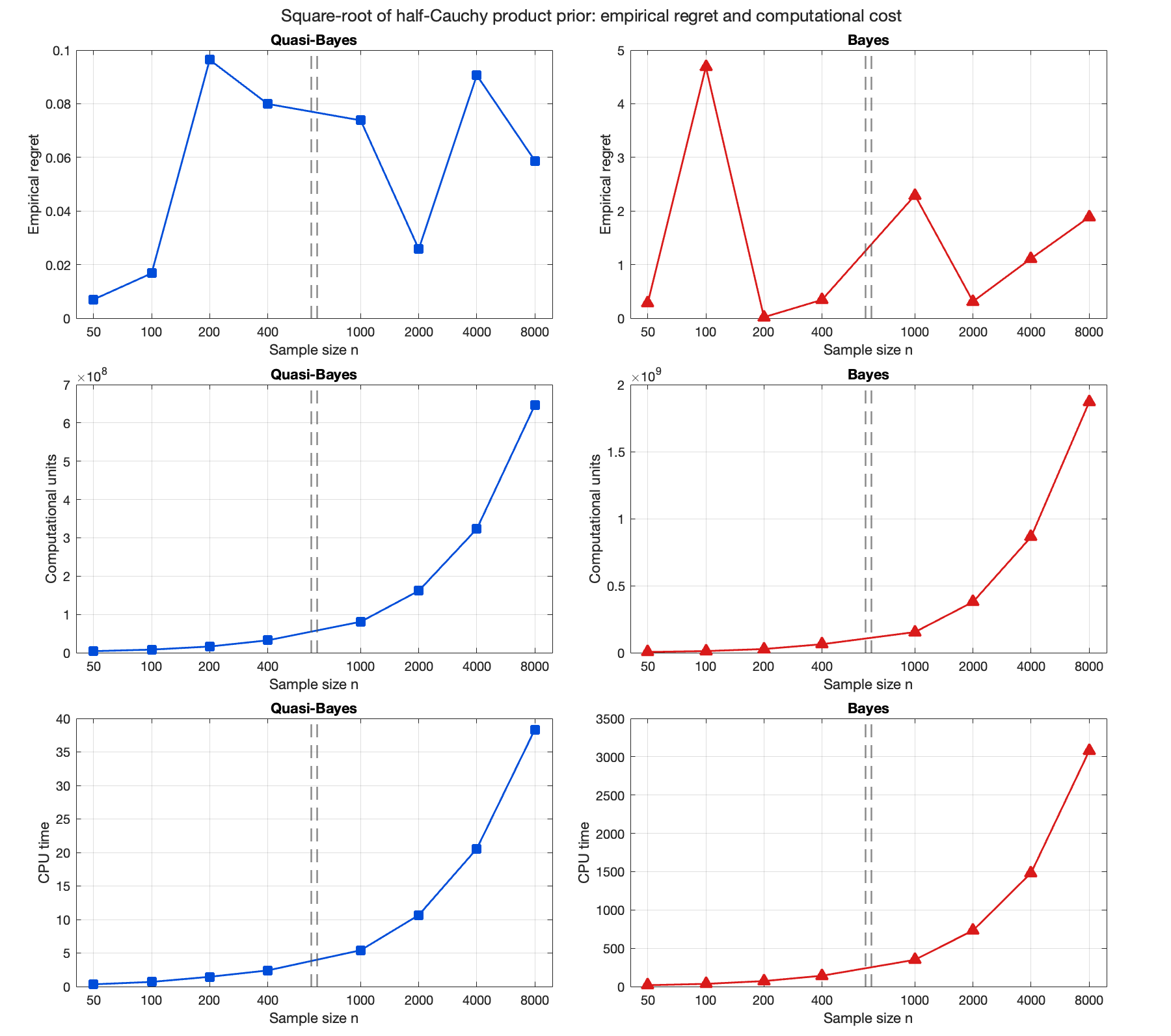}
\caption{\footnotesize{Square-root of half-Cauchy product prior: quasi-Bayes (blue) and Bayes (red) estimates compared by E-regret (top panels), computational units (middle panels), and CPU time (bottom panels).}}
\label{cau_cpu_d}
\end{figure}

\begin{figure}
\centering
\includegraphics[width=.95\textwidth,height=.85\textheight,keepaspectratio]{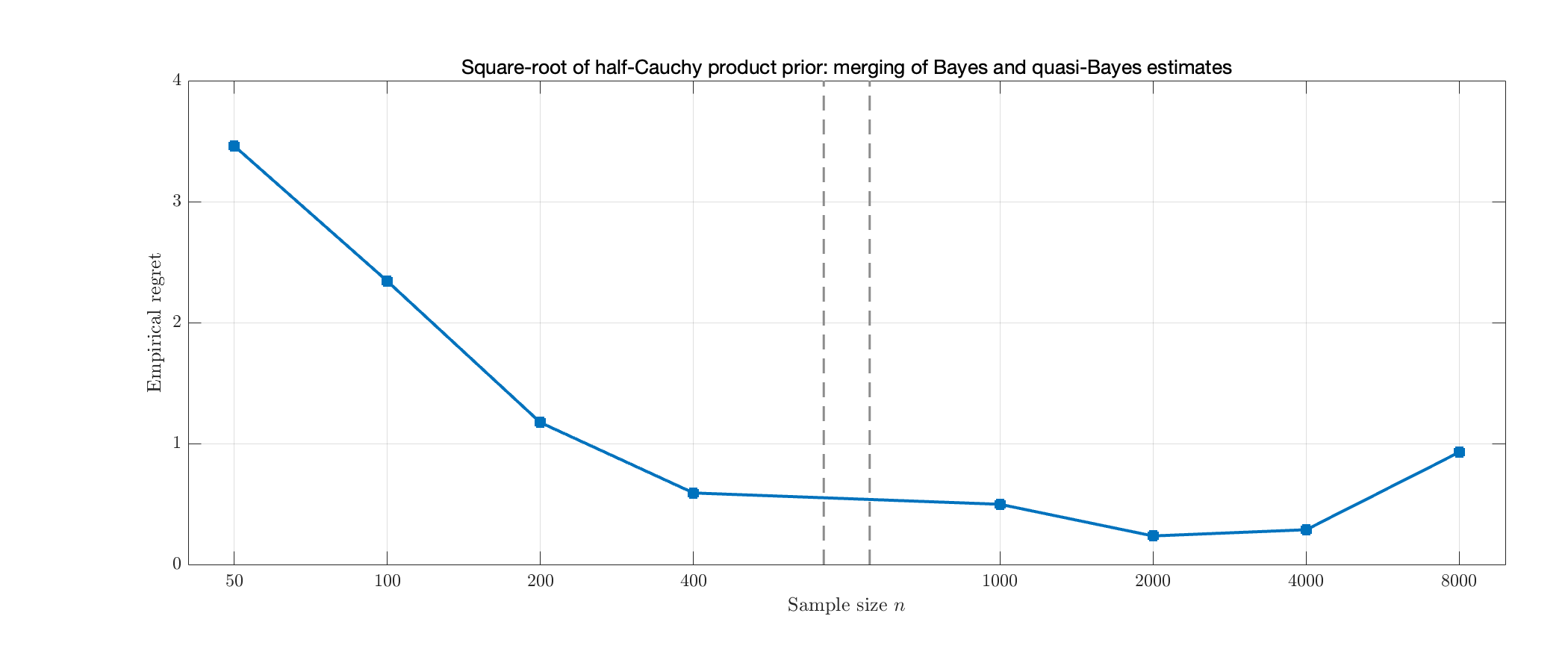}
\caption{\footnotesize{Square-root of half-Cauchy product prior: E-regret incurred by using the quasi-Bayes estimate in place of the Bayes estimate.}}
\label{cau_regret_comparison_d}
\end{figure}

%%%%%%%%%%%%%%%%%%%%%%%%%%%%%%%%
%%%%%%%%%%%%%%%%%%%%%%%%%%%%%%%%
%%%%%%%%%%%%%%%%%%%%%%%%%%%%%%%%
%%%%%%%%%%%%%%%%%%%%%%%%%%%%%%%%

\phantomsection


\addcontentsline{toc}{section}{References}
\begin{thebibliography}{9}

\bibitem[Blei and Jordan(2006)]{BleJor(06)} 
\textsc{Blei, D.M. and Jordan, M.I.} (2006). Variational inference for Dirichlet process mixtures. \textit{Bayesian Anal.} \textbf{1}, 121--144.

\bibitem[Brown and Farrell(1985)]{Bro(85)}
\textsc{Brown, L.D. and Farrell, R.H.} (1985). Complete class theorems for estimation of multivariate Poisson means and related problems. \textit{Ann. Statist.} \textbf{13}, 706--726.

\bibitem[Brown et al.(2013)]{Bro(13)}
\textsc{Brown, L.D., Greenshtein, E. and Ritov, Y.} (2013). The Poisson compound decision problem revisited. \textit{J. Am. Statist. Assoc.} \textbf{108}, 741--749.

\bibitem[Cannella et al.(2026)]{Can(26)}
\textsc{Cannella, N., Teh, A., Han Y. and Polyanskiy, Y.} (2026). Universal priors: solving empirical Bayes via Bayesian inference and pretraining. \textit{Preprint arXiv:2602.15136}.

\bibitem[Deely and Lindley(1981)]{Dee(81)}
\textsc{Deely, J.J. and Lindley, D.V.} (1981). Bayes empirical Bayes. \textit{J. Am. Statist. Assoc.} \textbf{76}, 833--841.

\bibitem[Efron(2014)]{Efr(14)}
\textsc{Efron, B.} (2014). Two modeling strategies for empirical Bayes estimation. \textit{Statist. Sci.} \textbf{29}, 285--301.

\bibitem[Efron(2019)]{Efr(19)}
\textsc{Efron, B.} (2019). Bayes, oracle Bayes and empirical Bayes. \textit{Statist. Sci.} \textbf{34}, 177--201.

\bibitem[Favaro and Fortini(2026)]{Fav(24)}
\textsc{Favaro, S. and Fortini, S.} (2026). Quasi-Bayes empirical Bayes: a sequential approach to the Poisson compound decision problem. \textit{Biometrika}, to appear.

\bibitem[Ferguson(1973)]{Fer(73)}
\textsc{Ferguson, T.S.} (1973). A Bayesian analysis of some nonparametric problems. \textit{Ann. Statist.} \textbf{1}, 209--230.

\bibitem[Fortini and Petrone(2020)]{For(20)}
\textsc{Fortini, S. and Petrone, S.} (2020). Quasi-Bayesian properties of a procedure for sequential learning in mixture models. \textit{J. R. Statist. Soc. B} \textbf{82}, 1087--1114.

\bibitem[Ghosal and van der Vaart(2001)]{Gho(01)}
\textsc{Ghosal, S. and van der Vaart, A.W.} (2001). Entropies and rates of convergence for maximum likelihood and Bayes estimation for mixtures of normal densities. \textit{Ann. Statist.} \textbf{29}, 1233--1263.

\bibitem[Ignatiadis and Kankanala(2026)]{Ign(26)}
\textsc{Ignatiadis, N. and Kankanala, S.} (2026). Compound decisions and empirical Bayes via Bayesian nonparametrics. \textit{Preprint arXiv:2602.20115}.

\bibitem[Ishwaran and James(2001)]{IshJam(01)} 
\textsc{Ishwaran, H. and James, L.F.} (2001). Gibbs sampling methods for stick-breaking priors. \textit{J. Amer. Statist. Assoc.} \textbf{96}, 161--173.

\bibitem[Jana et al.(2023)]{Jan(23)}
\textsc{Jana, S., Polyanskiy, Y., Teh, A. and Wu, Y.} (2023). Empirical Bayes via ERM and Rademacher complexities: the Poisson model. \textit{P. Mach. Learn. Res.} \textbf{195}, 1--37.

\bibitem[Jana et al.(2025)]{Jan(24)}
\textsc{Jana, S., Polyanskiy, Y. and Wu, Y.} (2025). Optimal empirical Bayes estimation for the Poisson model via minimum-distance methods. \textit{Inf. Inference} \textbf{14}, 1--42.

\bibitem[Johnstone(1986)]{Joh(86)}
\textsc{Johnstone, I.} (1986). Admissible estimation, Dirichlet principles and recurrence of birth-death chains on $\mathbb{Z}_{+}^{p}$.  \textit{Probab. Theory Related Fields} \textbf{71}, 231--269.

\bibitem[Lo(1984)]{Lo(84)}   
\textsc{Lo, A.Y.} (1984). On a class of Bayesian nonparametric estimates. I. Density estimates \textit{Ann. Statist.} \textbf{12}, 351--357.

\bibitem[Martin and Ghosh(2008)]{Mar(08)}
\textsc{Martin, R. and Ghosh, J.K.} (2008). Stochastic approximation and Newton’s estimate of a mixing distribution. \textit{Statist. Sci.} \textbf{23}, 365--382.

\bibitem[Martin and Tokdar(2009)]{MarTok(09)}
\textsc{Martin, R. and  Tokdar, S.T.} (2009) Asymptotic properties of predictive recursion: robustness and rate of convergence. \textit{Electron. J. Stat.} \textbf{3}, 1455--1472.

\bibitem[Neal(2000)]{Nea(00)} 
\textsc{Neal, R. M.} (2000). Markov chain sampling methods for Dirichlet process mixture models. \textit{J. Comput. Graph. Statist.} \textbf{9}, 249--265.

\bibitem[Newton et al.(1998)]{New(98)}
\textsc{Newton, M.A., Quintana, F.A. and Zhang, Y.} (1998). Nonparametric Bayes methods using predictive updating. In \textit{Practical Nonparametric and Semiparametric Bayesian Statistics}, Springer.

\bibitem[Papaspiliopoulos and Roberts(2008)]{PapRob(08)} 
\textsc{Papaspiliopoulos, O. and Roberts, G.O.} (2008). Retrospective Markov chain Monte Carlo methods for Dirichlet process hierarchical models. \textit{Biometrika} \textbf{95}, 169--186.

\bibitem[Polyanskiy and Wu(2021)]{Pol(21)}
\textsc{Polyanskiy, Y. and Wu, Y} (2021). Sharp regret bounds for empirical Bayes and compound decision problems. \textit{Preprint arXiv: 2109.03943}.

\bibitem[Robbins(1951)]{Rob(51)}
\textsc{Robbins, H.} (1951). Asymptotically subminimax solutions of compound decision problems. In \textit{Proc. Second Berkeley Symp} \textbf{2}, 131--148.

\bibitem[Robbins(1956)]{Rob(56)}
\textsc{Robbins, H.} (1956). An empirical Bayes approach to statistics. In \textit{Proc. Third Berkeley Symp.} \textbf{3}, 157--164.

\bibitem[Shen and Wu(2026)]{She(24)}
\textsc{Shen, Y. and Wu, Y.} (2026). Poisson empirical Bayes estimation: When does $g$-modeling beat $f$-modeling in theory (and in practice)? \textit{Ann. Statist.} \textbf{54}, 146--175.

\bibitem[Smith and Makov(1978)]{Smi(78)}
\textsc{Smith, A.F.M. and Makov, U.E.} (1978). A quasi-Bayes sequential procedure for mixtures. \textit{J. R. Statist. Soc. B} \textbf{40}, 106--112.

\bibitem[Teh et al.(2025)]{Teh(25)}
\textsc{Teh, A., Jabbour, M. and Polyanskiy, Y.} (2025). Solving empirical Bayes via transformers. \textit{Preprint arXiv:2502.09844}.

\bibitem[Walker(2007)]{Wal(07)}
\textsc{Walker, S.G.} (2007). Sampling the Dirichlet mixture model with slices. \textit{Comm. Statist. Simulation Comput.} \textbf{36}, 45--54.

\bibitem[Zhang(2003)]{Zha(03)}
\textsc{Zhang, C.-H.} (2003). Compound decision theory and empirical Bayes methods. \textit{Ann. Statist.} \textbf{31}, 379--390.

\bibitem[Zhang(2005)]{Zha(05)}
\textsc{Zhang, C.-H.} (2005). Estimation of sums of random variables: examples and information bounds. \textit{Ann. Statist.} \textbf{33}, 2022--2041.

\end{thebibliography}

\addcontentsline{toc}{section}{References}
\begin{thebibliography}{9}

\bibitem[Ghosal and van der Vaart(2001)]{Gho(01)}
\textsc{Ghosal, S. and van der Vaart, A.W.} (2001). Entropies and rates of convergence for maximum likelihood and Bayes estimation for mixtures of normal densities. \textit{Ann. Statist.} \textbf{29}, 1233--1263.

\bibitem[Neal(2000)]{Nea(00)} 
\textsc{Neal, R. M.} (2000). Markov chain sampling methods for Dirichlet process mixture models. \textit{J. Comput. Graph. Statist.} \textbf{9}, 249--265.

\bibitem[Shen and Wu(2026)]{She(24)}
\textsc{Shen, Y. and Wu, Y.} (2026). Poisson empirical Bayes estimation: When does $g$-modeling beat $f$-modeling in theory (and in practice)? \textit{Ann. Statist.} \textbf{54}, 146--175.

\bibitem[Wong and Shen(1995))]{Won(95)}
\textsc{Wong, H.W. and Shen, X.} (1995). Probability inequalities for likelihood ratios and convergence rates of sieve MLES \textit{Ann. Statist.} \textbf{23}, 339--362.

\end{thebibliography}
\end{document}